\newcommand{\cont}[1]{\ensuremath{\mathit{alph}(#1)}\xspace}
\newcommand{\wh}[1]{\ensuremath{\widehat{#1}}\xspace}
\newcommand{\fupp}[2]{\ensuremath{\lceil #1 \rceil_{#2}}\xspace}
\newcommand{\eupp}[1]{\fupp{#1}{\eta}}
\newcommand{\nat}{\mathbb{N}\xspace}
\newcommand{\veps}{\ensuremath{\varepsilon}\xspace}
\newcommand{\Opo}{\ensuremath{\textup{Op}}\xspace}
\newcommand{\Op}[1]{\ensuremath{\mathop{\textup{Op}(#1)}}}
\newcommand{\bool}[1]{\ensuremath{\textup{Bool}(#1)}\xspace}
\newcommand{\pol}[1]{\ensuremath{\textup{Pol}(#1)}\xspace}
\newcommand{\bpol}[1]{\ensuremath{\textup{BPol}(#1)}\xspace}
\newcommand{\polo}{\ensuremath{\textup{Pol}}\xspace}
\newcommand{\bpolo}{\ensuremath{\textup{BPol}}\xspace}
\newcommand{\polp}[2]{\ensuremath{\textup{Pol}_{#1}(#2)}\xspace}
\newcommand{\copolp}[2]{\ensuremath{\textit{co\textup{-}}\!\textup{Pol}_{#1}(#2)}\xspace}
\newcommand{\capolp}[2]{\ensuremath{\polp{#1}{#2} \cap \copolp{#1}{#2}}\xspace}
\newcommand{\bpolp}[2]{\ensuremath{\textup{BPol}_{#1}(#2)}\xspace}
\newcommand{\Bs}{\ensuremath{\mathscr{B}}\xspace}
\newcommand{\Cs}{\ensuremath{\mathscr{C}}\xspace}
\newcommand{\Ds}{\ensuremath{\mathscr{D}}\xspace}
\newcommand{\Gs}{\ensuremath{\mathscr{G}}\xspace}
\newcommand{\Hb}{\ensuremath{\mathbf{H}}\xspace}
\newcommand{\Kb}{\ensuremath{\mathbf{K}}\xspace}
\newcommand{\Lb}{\ensuremath{\mathbf{L}}\xspace}
\newcommand{\Vb}{\ensuremath{\mathbf{V}}\xspace}
\newcommand{\inv}{\ensuremath{^{-1}}}
\newcommand{\at}{\ensuremath{\textup{AT}}\xspace}
\newcommand{\stzer}{\ensuremath{\textup{ST}}\xspace}
\newcommand{\dotzer}{\ensuremath{\textup{DD}}\xspace}
\newcommand{\fo}{\ensuremath{\textup{FO}}\xspace}
\newcommand{\sic}[1]{\ensuremath{\Sigma_{#1}}\xspace}
\newcommand{\bsc}[1]{\ensuremath{\Bs\Sigma_{#1}}\xspace}
\newcommand{\md}{\ensuremath{\textup{MOD}}\xspace}
\newcommand{\grp}{\ensuremath{\textup{GR}}\xspace}
\newcommand{\vari}{prevariety\xspace}
\newcommand{\varis}{prevarieties\xspace}
\newcommand{\Varis}{Prevarieties\xspace}
\newcommand{\pvari}{positive prevariety\xspace}
\DeclareMathOperator{\uclos}{\uparrow}
\newcommand{\Jrel}{\ensuremath{\mathrel{\mathscr{J}}}\xspace}
\newcommand{\Hrel}{\ensuremath{\mathrel{\mathscr{H}}}\xspace}
\newcommand{\Rrel}{\ensuremath{\mathrel{\mathscr{R}}}\xspace}
\newcommand{\Lrel}{\ensuremath{\mathrel{\mathscr{L}}}\xspace}
\newcommand{\Jord}{\ensuremath{\leqslant_{\mathscr{J}}}\xspace}
\newcommand{\Hord}{\ensuremath{\leqslant_{\mathscr{H}}}\xspace}
\newcommand{\Rord}{\ensuremath{\leqslant_{\mathscr{R}}}\xspace}
\newcommand{\Lord}{\ensuremath{\leqslant_{\mathscr{L}}}\xspace}
\newcommand{\Jords}{\ensuremath{<_{\mathscr{J}}}\xspace}
\newcommand{\Hords}{\ensuremath{<_{\mathscr{H}}}\xspace}
\newcommand{\Rords}{\ensuremath{<_{\mathscr{R}}}\xspace}
\newcommand{\Lords}{\ensuremath{<_{\mathscr{L}}}\xspace}
\newcommand{\frI}{\ensuremath{\mathbb{I}}\xspace}
\newcommand{\frB}{\ensuremath{\mathbb{B}}\xspace}
\newcommand{\smarrow}[1]{\ensuremath{\xrightarrow{\smash{#1}}}\xspace}
\newcommand{\infsig}[1]{\ensuremath{\frI_{#1}}\xspace}
\newcommand{\infsigc}{\infsig{\Cs}}
\tikzset{every state/.style={draw=blue!50!green,very thick,fill=blue!50!green!20}}
\tikzset{statesub/.style={state,minimum size=1.3cm,inner sep=1pt}}
\tikzset{pattstate/.style={state,draw=red!50!yellow,line width=2pt,fill=red!50!yellow!20}}
\tikzset{pdotstate/.style={state,minimum size=0.75cm,inner sep=0.5pt,draw=red!50!yellow,line
		width=2pt,dashed,fill=red!50!yellow!20}}
\tikzset{lstate/.style={state,minimum size=0.5cm,inner sep=0.5pt}}
\tikzstyle{trans}=[shorten >= 1pt,thick,->]
\tikzstyle{initial by arrow}=   [after node path=
\tikzstyle{accepting by arrow}=   [after node path=
\theoremstyle{plain}
\newtheorem{theorem}{Theorem}
\newtheorem{corollary}[theorem]{Corollary}
\newtheorem{fact}[theorem]{Fact}
\newtheorem{lemma}[theorem]{Lemma}
\newtheorem{remark}[theorem]{Remark}
\begin{document}

\title[Dot-depth three]{Dot-depth three, return of the \Jrel-class}
\author{Thomas Place \and Marc Zeitoun}
\email{firstname.name@labri.fr}
\address{Univ. Bordeaux, CNRS, Bordeaux INP, LaBRI, UMR 5800, F-33400 Talence, France}

\begin{abstract}
  We look at \emph{concatenation hierarchies} of classes of regular languages. Each such hierarchy is determined by a single  class, its basis: level $n$ is built by applying the Boolean polynomial closure operator (BPol), $n$ times to the basis. A prominent and difficult open question in automata theory is to decide membership of a regular language in a given level. For instance, for the historical dot-depth hierarchy, the decidability of membership is only known at levels \emph{one} and \emph{two}.

  We give a generic algebraic characterization of the operator~\bpolo. This characterization implies that for any concatenation hierarchy, if $n$ is at least two, membership at level $n$ reduces to a more complex problem, called \emph{covering}, for the previous level, $n-1$. Combined with earlier results on covering, this implies that \mbox{membership} is decidable for dot-depth \emph{three} and for level \emph{two} in most of the prominent hierarchies in the literature. For instance, we obtain that the levels \emph{two} in both the modulo hierarchy and the group hierarchy have decidable membership.
\end{abstract}
\maketitle

\section{Introduction}
\label{sec:intro}
\noindent\textbf{Context.}
This article addresses one of the most fundamental questions in automata theory, which emerged in the~1970s: the dot-depth problem. It originates from an algorithm by Schützenberger~\cite{schutzsf} that determines whether a regular language is \emph{star-free}. Star-free languages are specific regular languages obtained from singletons using three regularity-preserving operations: union, concatenation, and complement (on the other hand, star-free expressions disallow Kleene star). Schützenberger's result initiated a line of research aiming to understand subclasses of regular languages defined by high-level formalisms (\emph{e.g.}, a restricted set of regular expressions). Moreover, it highlighted a natural parameter of star-free languages, which measures their complexity: the number of alternations between complement and concatenation required to express~them.

Schützenberger's work led Brzozowski and Cohen~\cite{BrzoDot} to define a hierarchy of classes of languages known as the \emph{dot-depth}. Roughly speaking, languages at level $n$ in this hierarchy are those that can be defined by a star-free expression with at most $n-1$ alternations between complement and concatenation. Brzozowski and Knast showed that the dot-depth is infinite (\emph{i.e.}, it does not collapse)~\cite{BroKnaStrict}. The significance of the dot-depth hierarchy is further evidenced by its logical formulation, discovered by Thomas~\cite{ThomEqu}: level $n$ in the dot-depth corresponds to the class $\bsc{n}$ of all languages that can be defined in first-order logic under an appropriate signature, with at most $n$ blocks of quantifiers of the form $\exists^*$ or~$\forall^*$.

\smallskip
The dot-depth actually inspired a general family of hierarchies called the \emph{concatenation hierarchies}. The simplest way to define them is through \emph{operators}. An operator \Opo takes as input a class \Cs of regular languages and produces as output a larger class $\Op{\Cs}$. In order to build a concatenation hierarchy, we require two of them. First, the \emph{polynomial closure}  \pol{\Cs} of a class~\Cs consists of all finite unions of concatenation products $L_1a_1L_1\cdots a_nL_n$ where the languages $L_i$ belong to \Cs and the $a_i$'s are letters. Second, the \emph{Boolean closure} \bool{\Cs} of a class \Cs is its closure under all Boolean operations. Finally, ``\bpolo'' is the composition of these two operators: $\bpol\Cs=\bool{\pol\Cs}$ for each class \Cs. All concatenation hierarchies follow the same construction pattern: it starts from a base class, level~zero; then, level $n+1$ is built from level~$n$ by applying the ``\bpolo'' operator once. Thus, each hierarchy is entirely determined by its level zero. For the dot-depth hierarchy, level zero consists of four languages: the empty one, the singleton consisting of the empty word, and their complements. Other prominent hierarchies include the \emph{Straubing-Thérien hierarchy}~\cite{StrauConcat,TheConcat} whose basis consists of only the empty and universal languages, the \emph{group hierarchy}~\cite{MargolisP85} whose basis consists of the languages recognized by a permutation automaton~\cite{permauto} or the \emph{modulo hierarchy}~\cite{ChaubardPS06} whose basis consist of the ``modulo languages'', which are those for which membership of a word is determined by its length modulo some integer.

The logical characterization by quantifier alternation hierarchies is also a generic feature of concatenation hierarchies. Each such hierarchy corresponds to the alternation hierarchy in first-order logic equipped with a signature determined by its basis~\cite{PZ:generic18}. For instance, the Straubing-Thérien hierarchy~\cite{PPOrder} is associated with the signature consisting only of the linear order ``$<$'' on positions and the unary alphabetical predicates $a(x)$, $b(x)$,\ldots\ In order to capture the dot-depth, one needs to add the successor predicate ``$+1$''~\cite{ThomEqu}. The modulo hierarchy~\cite{ChaubardPS06} is also a natural example. Its signature consists of the linear order, the alphabetical predicates and the modular predicates, which test position's \mbox{values modulo an integer}.

\smallskip In summary, the levels of concatenation hierarchies have a dual characterization: by operators on the one hand, by logic on the~other. This underscores  that the minimum level containing a language is a \emph{robust} parameter. Intuitively, it measures the complexity required to express a language. This intuition is formalized by the fact that the non-elementary lower bound for checking satisfiability in first-order logic comes from quantifier alternation~\cite{stockphd}. This has motivated researchers to tackle the original Schützenberger membership problem for individual level of concatenation hierarchies: do we have an algorithm to test whether an input language belongs to such a level? If so, we say that the level is \emph{decidable}.

\smallskip\noindent\textbf{State of the art.}
Historically, this question has drawn the attention of numerous researchers, see~\cite{jep-dd45} for a survey. Initially, it was mostly addressed for \emph{specific levels} in the dot-depth or Straubing-Thérien hierarchies. Nevertheless, due to its inherent complexity, advances have been gradual. Simon~\cite{simonphd,simonthm} proposed an algorithm for level~one in the Straubing-Thérien hierarchy. This is an influential result and there are many alternate proofs~\cite{pinvarbook,stbc1proof,almeidabc1proof,higginsbc1proof,klimabc1proof}. Knast~\cite{knast83} also tackled level~one, but for the dot-depth hierarchy (see also~\cite{Therien88,MR1647225,KufleitnerL12} for alternate proofs). Straubing then designed a key reduction~\cite{StrauVD} that shows that a level $n$ is decidable in the Straubing-Thérien hierarchy if and only if the same holds for the dot-depth. Additionally, level~one has been proven to be decidable in the group hierarchy. This breakthrough was achieved~\cite{henckell:hal-00019815,krpgbg} in a purely algebraic context, leveraging a challenging result by Ash~\cite{Ash91} (see \cite{jeppgbg} for a survey). Finally, there are also several results concerning the lower levels of the modulo hierarchy~\cite{ChaubardPS06,KufleitnerW15}

\smallskip
Following these first results, the decidability of dot-depth \emph{two} became a prominent open problem in automata theory. Many partial results, conjectures and over-approximations were proposed~\cite{conjdd2-blanchet1,conjdd2-blanchet2,phdpw,Cowan_1993,StrauDD2Conf,StraubingDD2Journal,Weil_1989,Straubing_1992,Weil_1993,pwdelta2,Pin_2001,AK2009,AK2010} until decidability was finally proved in~\cite{pzqalt,pzjacm19}. The algorithms presented in \cite{pzqalt} and~\cite{pzjacm19} differ slightly, but they both rely on a more general problem than membership, called \emph{separation}, which is decidable for a subclass of dot-depth two. Intuitively, solving \Cs-separation for a class \Cs, brings information that can be exploited for studying classes built on top of \Cs (typically through operators). A striking example of this claim is the \emph{generic} reduction~\cite{PZ:generic18} from \pol{\Cs}-\emph{membership} to \Cs-\emph{separation} (when \Cs has mild properties). It is based on a generic characterization of \pol{\Cs} on the \emph{syntactic monoid} of the input language (a canonical algebraic recognizer that can be computed from the input). The class \Cs defines a binary relation on the syntactic monoid whose computation boils down to \Cs-separation. The characterization states that the input language belongs to \pol{\Cs} if and only if every pair of this relation satisfies a generic equation depending only on \polo (and not on \Cs).

\medskip\noindent\textbf{Contributions.} We investigate the \bpolo \emph{operator}, as opposed to focusing on \bpol\Cs for a \emph{specific} class~\Cs. This is desirable in order to obtain generic results and to pinpoint which hypotheses on the input class \Cs are necessary for deciding \bpol{\Cs}-membership (thus abstracting away the irrelevant peculiarities of~\Cs). This approach is not new, see for instance~\cite{Pin_2001}. Our aim is to obtain a transfer result from  \Cs-separation to \bpol{\Cs}-membership, similar to the one existing for \polo. Unfortunately, \bpolo is much harder to deal with than \polo. Indeed, in practice, the classes which are closed under \emph{language concatenation} (such as \pol{\Cs}) are simpler to handle than those which are not (such as \bpol{\Cs} in general).

\smallskip
Our first contribution is a \emph{generic} algebraic characterization of \bpolo (for input classes satisfying mild properties). Its presentation is similar to the aforementioned characterization of \polo. It uses two equations that the syntactic monoid of a language satisfies if and only if the language in is \bpol{\Cs}. Each one parameterized by a relation defined from \Cs. The first of these relations is the one used in the characterization of \pol{\Cs}: its computation boils down to \Cs-separation. Unfortunately, the second relation is based on a new \emph{ad hoc} problem for~\Cs whose decidability status is not clear in general. Consequently, our first characterization does not immediately yield a \bpol{\Cs}-membership procedure for most classes \Cs. Nonetheless, we do get an interesting simple corollary: a new effective characterization of level \emph{two} in the Straubing-Thérien hierarchy. While such a characterization was known~\cite{pzjacm19}, this new one is much simpler. In particular, this yields a new proof for the decidability of dot-depth \emph{two} by the aforementioned reduction of Straubing~\cite{StrauVD}.

\smallskip
Our second contribution is a specialized characterization of \bpolo for the particular inputs \Cs which are themselves of the form \bpol{\Ds}, for some other class \Ds. In other words, \Cs is a \emph{non-zero} level in an arbitrary concatenation hierarchy and \bpol{\Cs} is a level above \emph{two} in the same hierarchy. In this case, we are able to reformulate the second equation of our first generic characterization. We end-up with a new equation parameterized by a relation based on the \Cs-\emph{covering} problem (a generalization of separation to more than just two input languages). Therefore, the main corollary of this second characterization is that \bpol{\bpol{\Ds}}-\emph{membership} reduces to \bpol{\Ds}-\emph{covering}. As the latter problem has already been extensively studied for many classes \bpol{\Ds}, we obtain as a corollary new positive decidability results of membership for several~levels.

\smallskip\noindent\textbf{Applications.} When \Gs consists only of \emph{group languages} and has decidable \emph{separation}, it is shown in~\cite{pzconcagroup, PlaceZ22} that level \emph{one} in the hierarchies of bases \Gs and $\Gs^+$ has decidable \emph{covering}. Here, $\Gs^+$ consists of all languages of the form $L\setminus\{\varepsilon\}$ and $L\cup\{\varepsilon\}$, with $L\in \Gs$. Combined with our results, this implies that for such classes \Gs, level \emph{two} in the hierarchies of bases \Gs and $\Gs^+$ has decidable \emph{membership}. Most of the bases that are considered in the literature are of this kind. In particular, since their bases have decidable separation~\cite{Ash91,pzgr}, we obtain that level \emph{two} in both the \emph{modulo hierarchy} and the \emph{group hierarchy} has decidable membership.

Our other main applications concern the Straubing-Thérien hierarchy and the dot-depth. It was shown in~\cite{pzbpolcj} that level \emph{two} has decidable \emph{covering} in both hierarchies. Combined with our results, this implies that level \emph{three} in both the Straubing-Thérien hierarchy and the dot-depth has decidable membership.

\smallskip\noindent\textbf{Organization of the paper.}  In Section~\ref{sec:prelims}, we introduce preliminary definitions. We define concatenation hierarchies in Section~\ref{sec:hiera}. Section~\ref{sec:pairs} is devoted to relations that we use in our characterizations. The two characterizations are presented in Section~\ref{sec:bpolgen} and Section~\ref{sec:uptwo}.

\section{Preliminaries}
\label{sec:prelims}

\subsection{Words, classes and regular languages}

We fix an arbitrary finite alphabet $A$ for the whole paper. As usual, $A^*$ denotes the set of all words over $A$, including the empty word~\veps. We let $A^{+}=A^{*}\setminus\{\veps\}$. For $u,v \in A^*$, we write $uv$ the word obtained by concatenating $u$ and~$v$. A \emph{language} is a subset of $A^*$. Abusing terminology, we often write $u$ for the singleton $\{u\}$. We lift concatenation to languages:  for $K,L\subseteq A^*$, we let $KL=\{uv \mid u \in K \text{\;and\;} v \in L\}$.

A \emph{class of languages} \Cs is a set of languages. We say that \Cs is a \emph{lattice} when it is closed under union and intersection, $\emptyset \in \Cs$ and $A^* \in \Cs$. Also, a \emph{Boolean algebra} is a lattice closed under complement. Finally, if for all $L \in \Cs$ and $w \in A^*$, the languages $w\inv L =  \{u \in A^* \mid wu \in L\}$ and $Lw\inv =  \{u \in A^* \mid uw \in L\}$ both belong to~\Cs, we say that \Cs is closed under  \emph{quotients}.

A \emph{\pvari} (resp.\ \emph{\vari}) is a lattice (resp.\ Boolean algebra) closed under quotients, containing only \emph{regular languages}. These are the languages that can be equivalently defined by finite automata or finite monoids. We use the latter definition.

\smallskip
\noindent
{\bf Monoids.} A \emph{monoid} is a set $M$  endowed with an associative multiplication $(s,t)\mapsto st$ which has an identity  $1_M$, \emph{i.e.}, such that ${1_M}\cdot s=s\cdot {1_M}=s$ for every~$s \in M$. A \emph{group} is a monoid $G$ such that for all $g\in G$, there exists $g^{-1}$ such that $gg^{-1}=g^{-1}g = 1_G$.

An \emph{idempotent} of a monoid $M$ is an element $e \in M$ such that $ee = e$. We write $E(M) \subseteq M$ for the set of all idempotents in $M$. It is folklore that for each \emph{finite} semigroup~$M$, there is a natural number $\omega(M)$ (denoted by $\omega$ when $M$ is understood) such that for every $s \in M$, the element $s^\omega$ is an idempotent.

An \emph{ordered monoid} is a pair $(M,\leq)$ where $M$ is a monoid and $\leq$ is a partial order on $M$, compatible with multiplication: for all $s,s',t,t' \in M$, if $s \leq t$ and $s' \leq t'$, then $ss'\leq tt'$. A subset $F \subseteq M$ is an upper set (for~$\leq$) if it is upward closed for $\leq$: for all $s,t\in M$, if $s \in F$ and $s \leq t$, then $t \in F$.   By convention, we view every \emph{unordered} monoid $M$ as the ordered one $(M,=)$ whose ordering is \emph{equality}. In this case, \emph{all} subsets of $M$ are upper sets. Thus, a definition for ordered monoids makes sense for unordered ones.

Finally, we use the Green relations~\cite{green} defined on monoids. We briefly recall them. Given a monoid $M$ and $s,t \in M$,
\[
  \begin{array}{ll@{\ }l}
    s \Jord t & \text{when} & \text{there exist $x,y\in M$ such that $s=xty$}, \\
    s \Lord t & \text{when} & \text{there exists $x \in M$ such that $s = xt$}, \\
    s \Rord t & \text{when} & \text{there exists $y \in M$ such that $s = ty$}, \\
    s \Hord t & \text{when} & \text{$s \Lord t$ and $s \Rord t$}.
  \end{array}
\]
These relations are preorders (\emph{i.e.}, reflexive and transitive). We write \Jords, \Lords, \Rords and \Hords for their strict variants (\emph{e.g.}, $s \Jords t$ when $s \Jord t$ but $t \not\Jord s$). We write \Jrel, \Lrel, \Rrel and \Hrel for the associated equivalences (\emph{e.g.}, $s \Jrel t$ when $s \Jord t$ and $t \Jord s$). The following lemmas are standard (see \emph{e.g.}, \cite{pingoodref}).

\begin{restatable}{lemma}{jlr} \label{lem:jlr}
  Let $M$ be a finite monoid and $s,t \in M$ such that $t \Jord s$. If $s \Rord t$, then $s \Rrel t$. Moreover, if $s \Lord t$ then $s \Lrel t$.
\end{restatable}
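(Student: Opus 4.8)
The plan is to prove the first assertion — that $t \Jord s$ together with $s \Rord t$ forces $t \Rord s$, whence $s \Rrel t$ — by the standard stabilization argument that uses finiteness of $M$ only through the uniform idempotent power $\omega = \omega(M)$; the second assertion will then follow by the left-right dual computation. First I would unpack the hypotheses: $t \Jord s$ gives $u,v \in M$ with $t = usv$, and $s \Rord t$ gives $y \in M$ with $s = ty$. Substituting the second equality into the first yields $t = u(ty)v = ut(yv)$, and iterating this identity $n$ times gives $t = u^{n}\,t\,(yv)^{n}$ for every $n \geq 1$. Instantiating at $n = \omega$, so that $(yv)^{\omega}$ is idempotent, we obtain $t = u^{\omega}\,t\,(yv)^{\omega}$.

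The crux is then to eliminate the prefix $u^{\omega}$. Multiplying $t = u^{\omega} t (yv)^{\omega}$ on the right by $(yv)^{\omega}$ and using idempotency of $(yv)^{\omega}$ gives $t(yv)^{\omega} = u^{\omega} t (yv)^{\omega} = t$. From $t = t(yv)^{\omega}$, writing $(yv)^{\omega} = (yv)(yv)^{\omega-1}$ (with the convention $(yv)^{0} = 1_M$ when $\omega = 1$) and recalling $ty = s$, we get $t = (ty)\,v\,(yv)^{\omega-1} = s\cdot\bigl(v(yv)^{\omega-1}\bigr)$, which witnesses $t \Rord s$. Combined with the hypothesis $s \Rord t$, this yields $s \Rrel t$. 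For the ``moreover'' part, the dual computation starts from $s = xt$ (the hypothesis $s \Lord t$), substitutes into $t = usv$ to get $t = (ux)\,t\,v$, iterates to $t = (ux)^{\omega}\,t\,v^{\omega}$, deduces $(ux)^{\omega} t = t$ by left-multiplication and idempotency, and finally rewrites $t = (ux)^{\omega-1}u\cdot(xt) = \bigl((ux)^{\omega-1}u\bigr)\cdot s$, giving $t \Lord s$ and hence $s \Lrel t$.

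I do not anticipate a real obstacle: the argument is entirely elementary once the right substitution is made. The only points requiring care are using the \emph{uniform} power $\omega$ (the same $\omega$ idempotentizes every element of $M$, as recalled in the excerpt) and treating the degenerate case $\omega = 1$ cleanly, which the convention $(yv)^{0} = 1_M$ handles. Finiteness of $M$ is invoked exactly once, to guarantee the existence of $\omega$.
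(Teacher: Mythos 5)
Your argument is correct. The paper does not give a proof of this lemma at all---it merely observes that it is standard and points to a reference---so there is no in-paper argument to compare against, but your stabilization computation is exactly the classical proof (as found, e.g., in Pin's treatment of Green's relations): from $t = usv$ and $s = ty$ you obtain $t = ut(yv)$, iterate to $t = u^\omega t(yv)^\omega$, kill the left factor by right-absorbing $(yv)^\omega$ to get $t(yv)^\omega = t$, and peel off one $y$ to exhibit $t = s\cdot v(yv)^{\omega-1}$, hence $t \Rord s$; the $\Lrel$ case is the mirror image. The only caveat worth noting is that the substitution step implicitly uses that $u^\omega t(yv)^\omega = t$ can be reused after right-multiplying by $(yv)^\omega$, which you handle correctly, and the edge case $\omega = 1$ is covered by the $(yv)^0 = 1_M$ convention as you observe. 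No gaps.
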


\begin{restatable}{lemma}{htogroup} \label{lem:htogroup}
	Let $M$ be a finite monoid, $e \in E(M)$ and $s \in M$ be such that $e \Hrel s$. Then, $s^\omega = e$.
\end{restatable}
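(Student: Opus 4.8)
The plan is to exploit the fact that $e \Hrel s$ forces $e$ to act as a two-sided identity on the powers of $s$, and then to amplify a single divisibility witness until $s^\omega$ itself is pinned down. First I would unfold the hypothesis $e \Hrel s$ into Green witnesses: from $s \Lord e$ there is $x$ with $s = xe$, and from $e \Rord s$ there is $y$ with $e = sy$ (the two remaining witnesses, from $e \Lord s$ and $s \Rord e$, are not needed). Since $e$ is idempotent, $s = xe$ gives $se = xee = xe = s$, so $e$ is a right identity for $s$; an immediate induction then yields $s^n e = s^n$ for all $n \ge 1$, and in particular $s^\omega e = s^\omega$.

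Next I would boost the single equation $e = sy$. Multiplying it on the left by $s$ and using $se = s$ gives $s = se = s^2 y$. Multiplying $s = s^2 y$ on the left by $s^{k-1}$ gives $s^k = s^{k+1} y$ for all $k \ge 1$, and a straightforward induction on $k$ then gives $s = s^{k+1} y^k$ for every $k \ge 0$. Consequently $e = sy = s^{k+1} y^{k+1}$ for all $k \ge 0$, i.e. $e = s^m y^m$ for every $m \ge 1$; in particular, for $m = \omega$, we get $e = s^\omega y^\omega$.

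To conclude, write $g = s^\omega$, which is idempotent by the basic property of finite monoids recalled in the preliminaries, so $g^2 = g$. From $e = g y^\omega$ we obtain $g e = g(g y^\omega) = g^2 y^\omega = g y^\omega = e$, while from $s^\omega e = s^\omega$ we obtain $g e = g$. Hence $g = e$, which is exactly $s^\omega = e$.

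The only step that is not pure Green's-relations bookkeeping is the amplification: the point is that the length-one factorization $e = sy$ witnessing $e \Rord s$ can be upgraded, using the local identity $se = s$, to the family $e = s^m y^m$, so that $s^\omega$ (and not merely some power of $s$) right-divides $e$. This is the computational core of the classical fact that the $\Hrel$-class of an idempotent is a group with that idempotent as identity; the argument sketched above avoids invoking this structure theorem.
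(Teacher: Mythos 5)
Your proof is correct. The paper does not actually spell out an argument for this lemma—it labels it ``standard'' and defers to a reference (Pin's book)—so there is no in-paper proof to compare against. The usual route, which the cited reference would take, is via Green's structure theorem: the $\Hrel$-class of an idempotent is a group with that idempotent as neutral element, so from $s\Hrel e$ one gets $s^\omega\Hrel e$, and since a group contains exactly one idempotent, $s^\omega=e$. You deliberately avoid the structure theorem and instead extract the two needed Green witnesses ($s=xe$, $e=sy$), bootstrap $se=s$ from idempotence of $e$, amplify $e=sy$ to $e=s^m y^m$ for all $m\ge1$, and then sandwich $g=s^\omega$ between $ge=e$ and $ge=g$. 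All the intermediate identities check out, including the inductive step $s=s^{k+1}y^k\Rightarrow s=s^{k+2}y^{k+1}$ via $s^{k+1}=s^{k+2}y$. The payoff of your elementary version is that it is entirely self-contained equational bookkeeping; the cost is that it is longer than a one-line invocation of the group-$\Hrel$-class theorem, which the paper evidently prefers to treat as a black box.
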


\noindent
{\bf Regular languages.} Note that $A^*$ is a monoid: its multiplication is word concatenation, and its identity is \veps. We consider morphisms $\alpha: A^* \to (M,\leq)$ where $(M,\leq)$ is an arbitrary ordered monoid. That is, $\alpha:A^*\to M$ is a map such that $\alpha(\varepsilon)=1_M$ and $\alpha(uv)=\alpha(u)\alpha(v)$ for all $u,v\in A^*$. We say that $L \subseteq A^*$ is \emph{recognized} by the morphism $\alpha$ when there is an \emph{upper set} $F \subseteq M$ for $\leq$ such that $L= \alpha\inv(F)$.

\begin{restatable}{remark}{convequ}
  Recall that we view any unordered monoid $M$ as the ordered one $(M,=)$. Thus, a language $L \subseteq A^*$ is recognized by a morphism $\alpha: A^* \to M$ if there exists an \emph{arbitrary} set $F \subseteq M$ (all subsets are upper sets for ``$=$'') such that $L = \alpha\inv(F)$.
\end{restatable}

It is standard that the regular languages are those which can be recognized by a morphism into a \emph{finite} monoid.

\begin{restatable}{remark}{finmono} \label{rem:finmono}
	The only infinite monoid that we consider is~$A^*$. We shall implicitly assume that every other monoid $M,N,\dots$ is finite.
\end{restatable}

\subsection{Membership, separation and covering}

We define three decision problems. Each of them depends on an arbitrary fixed class~\Cs and take regular languages as input. We use them as mathematical tools for investigating~\Cs.

The first problem is \emph{\Cs-membership}. It takes a regular language $L$ as input and asks whether $L$ belongs to \Cs.

The second problem is \emph{\Cs-separation}. It takes as input two regular languages $L_1$ and $L_2$ and asks if $L_1$  is \emph{\Cs-separable} from $L_2$: does there exist $K \in \Cs$ such that $L_1 \subseteq K$ and $L_2 \cap K = \emptyset$?

\begin{restatable}{remark}{sepgenmemb} \label{rem:sepgenmemb}
	Separation generalizes membership. Indeed, for any regular language $L$, we have $L \in \Cs$ if and only if
	$L$ is \Cs-separable from $A^* \setminus L$ (which is also regular).
\end{restatable}

The third problem is \emph{\Cs-covering}. It takes as input a regular language~$L_1$ and a finite set  $\Lb_2$ of regular languages, and asks whether the pair $(L_1,\Lb_2)$ is \emph{\Cs-coverable}: does there exist a \Cs-cover \Kb of $L_1$ (\emph{i.e.}, \Kb is a \emph{finite} set of languages in \Cs such that $L \subseteq \bigcup_{K \in \Kb} K$) such that for~all $K\in\Kb$, there exists $L \in \Lb_2$ satisfying $K \cap L = \emptyset$?

\begin{restatable}{remark}{covgensep} \label{rem:covgensep}
  When \Cs is a lattice, this generalizes separation: $L_1$ is \Cs-separable from $L_2$ if and only if $(L_1,\{L_2\})$ is \Cs-coverable.
\end{restatable}

Finally, the definition of \Cs-covering is simplified when~\Cs is a \emph{Boolean algebra}. Given a finite set of languages~\Lb, we say that \Lb is \emph{\Cs-coverable} to mean that $(A^*,\Lb)$ is \Cs-coverable. The following simple lemma is proved in~\cite{pzcovering2}.

\begin{restatable}{lemma}{cobool}\label{lem:covbopl}
  Let \Cs be a Boolean algebra, $L_1 \subseteq A^*$ be a language and $\Lb_2$ be a finite set of languages. Then, $(L_1,\Lb_2)$ is \Cs-coverable if and only if $\{L_1\} \cup \Lb_2$ is \Cs-coverable.
\end{restatable}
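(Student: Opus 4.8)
The plan is to establish both implications directly by elementary manipulation of covers, using only that $\Cs$ is closed under finite unions and under complement (the lattice part gives $A^* \in \Cs$, which we also use in the degenerate case). Throughout, recall that ``$\{L_1\} \cup \Lb_2$ is \Cs-coverable'' is by definition the assertion that $(A^*,\{L_1\}\cup\Lb_2)$ is \Cs-coverable.

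For the forward implication, I would assume $(L_1,\Lb_2)$ is \Cs-coverable and fix a witnessing \Cs-cover \Kb of $L_1$, so each $K \in \Kb$ is disjoint from some member of $\Lb_2$. Set $U = \bigcup_{K \in \Kb} K$. Since \Kb is a finite subset of \Cs and \Cs is a Boolean algebra, the language $A^* \setminus U$ also lies in \Cs. Then $\Kb' = \Kb \cup \{A^* \setminus U\}$ is a finite family of languages of \Cs whose union is $A^*$, i.e.\ a \Cs-cover of $A^*$. Every $K \in \Kb$ is disjoint from some $L \in \Lb_2 \subseteq \{L_1\}\cup\Lb_2$, and $A^* \setminus U$ is disjoint from $L_1 \in \{L_1\}\cup\Lb_2$ because $L_1 \subseteq U$. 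Hence $\Kb'$ witnesses that $(A^*,\{L_1\}\cup\Lb_2)$ is \Cs-coverable.

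For the converse, I would assume $\{L_1\}\cup\Lb_2$ is \Cs-coverable and fix a \Cs-cover $\Kb'$ of $A^*$ such that each $K' \in \Kb'$ is disjoint from some $L \in \{L_1\}\cup\Lb_2$. The idea is to discard the pieces that do not meet $L_1$: let $\Kb = \{K' \in \Kb' \mid K' \cap L_1 \neq \emptyset\}$, a finite subfamily of \Cs. It covers $L_1$: any $w \in L_1$ lies in some $K' \in \Kb'$ (as $\bigcup \Kb' = A^*$), whence $w \in K' \cap L_1$ and $K' \in \Kb$. Moreover, if $K' \in \Kb$ then $K'$ is not disjoint from $L_1$, so the element of $\{L_1\}\cup\Lb_2$ disjoint from $K'$ must belong to $\Lb_2$. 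Thus \Kb witnesses that $(L_1,\Lb_2)$ is \Cs-coverable.

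The whole argument is bookkeeping rather than a genuine difficulty. The only point that needs care is in the forward direction, where closure of \Cs under finite union and complement is invoked to guarantee $A^*\setminus U \in \Cs$; this is exactly where the ``Boolean algebra'' hypothesis is used. One should also quickly check the degenerate cases ($L_1 = \emptyset$, or $\Lb_2 = \emptyset$), but the two arguments above apply verbatim, so no separate treatment is really needed.
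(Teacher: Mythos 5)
Your proof is correct. The paper itself does not include an argument for this lemma (it defers to the cited reference), so there is nothing to compare against in-text, but your two-step construction is the natural and complete elementary proof: in the forward direction you pad the cover with $A^*\setminus U$, which lies in $\Cs$ precisely by closure under finite union and complement, and this new piece is disjoint from $L_1$ since $L_1\subseteq U$; in the converse you prune away the pieces that miss $L_1$, observing that any retained piece, being non-disjoint from $L_1$, must owe its separating witness to some member of $\Lb_2$. Your remark that the degenerate cases ($L_1=\emptyset$ or $\Lb_2=\emptyset$) need no separate treatment is also accurate: if $\Lb_2=\emptyset$ then both conditions force $L_1=\emptyset$, and the empty cover handles it on either side.
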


\subsection{\Cs-morphisms}

We turn to a key mathematical tool (see~\cite{pzupol,pzupol2} for the proofs). Given a \pvari \Cs, a morphism $\eta: A^*\to (N,\leq)$ into a finite ordered monoid
is a \emph{\Cs-morphism} if it is \emph{surjective} and all languages recognized by $\eta$ belong to \Cs. If \Cs is a \vari (\emph{i.e.}, \Cs is also closed under complement), it suffices to consider \emph{unordered} monoids (by convention, we view them as ordered by equality).

\begin{restatable}{lemma}{cmorphbool} \label{lem:cmorphbool}
	Let \Cs be a \vari and let $\eta: A^* \to (N,\leq)$ be a morphism. The two following conditions are equivalent:
	\begin{enumerate}
		\item $\eta: A^* \to (N,\leq)$ is a \Cs-morphism.
		\item $\eta: A^* \to (N,=)$ is a \Cs-morphism.
	\end{enumerate}
\end{restatable}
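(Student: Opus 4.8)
The plan is to unwind what ``recognized'' means for the two orderings. A language is recognized by $\eta\colon A^*\to(N,\leq)$ precisely when it equals $\eta\inv(F)$ for some \emph{upper set} $F$ of $(N,\leq)$, whereas (since every subset of $N$ is an upper set for equality) the languages recognized by $\eta\colon A^*\to(N,=)$ are exactly the sets $\eta\inv(F)$ with $F\subseteq N$ arbitrary. As every upper set is in particular a subset, and as surjectivity of $\eta$ does not refer to the order, the implication $(2)\Rightarrow(1)$ is immediate: if all $\eta\inv(F)$ with $F\subseteq N$ lie in $\Cs$, then so do those with $F$ an upper set.

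For the converse $(1)\Rightarrow(2)$, surjectivity is again free, so it remains only to show $\eta\inv(F)\in\Cs$ for every $F\subseteq N$. Since $N$ is finite, $\Cs$ is closed under finite union and $\emptyset\in\Cs$, it suffices to prove $\eta\inv(s)\in\Cs$ for each $s\in N$ and then write $\eta\inv(F)=\bigcup_{s\in F}\eta\inv(s)$. The key observation is that in the finite poset $(N,\leq)$ each singleton is a Boolean combination of principal upper sets: writing $U_t=\{u\in N\mid t\leq u\}$, the identity
\[
  \{s\}=U_s\setminus\bigcup_{t\,:\,s<t}U_t
\]
holds, and is checked directly from reflexivity and antisymmetry of $\leq$. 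Each $U_t$ is an upper set, so by hypothesis~(1) every language $\eta\inv(U_t)$ belongs to $\Cs$; since $\Cs$ is a Boolean algebra (hence closed under complement, finite intersection and finite union) and the union above is finite, $\eta\inv(s)=\eta\inv(U_s)\setminus\bigcup_{t\,:\,s<t}\eta\inv(U_t)\in\Cs$, which concludes the argument.

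There is essentially no obstacle here: the whole proof is a routine Boolean manipulation once one notices the displayed identity for $\{s\}$. The only points to keep track of are that $\Cs$, being a \vari, is genuinely a Boolean algebra, so the finite intersections, unions and complements used above remain inside $\Cs$, and that the poset $(N,\leq)$ is finite, so that the union over $t$ with $s<t$ is finite.
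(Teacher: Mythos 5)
Your proof is correct, and since the paper only states this lemma and refers to~\cite{pzupol,pzupol2} for proofs of its \Cs-morphism toolkit, there is no in-text argument to compare against. Your argument is the standard one: the direction $(2)\Rightarrow(1)$ is immediate because every upper set is a subset, and for $(1)\Rightarrow(2)$ you use the fact that in a finite poset each singleton $\{s\}$ is a finite Boolean combination of principal upper sets $U_t$, so closure of \Cs under complement and finite union (available since a \vari is a Boolean algebra) lets you pass from recognizing all upper sets to recognizing all subsets. All the ingredients you invoke — antisymmetry of the partial order, finiteness of $N$, and the Boolean-algebra closure properties of \Cs — are correctly identified and genuinely used; nothing is missing.
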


\begin{restatable}{remark}{cmbool} \label{rem:cmbool}
	Another formulation of Lemma~\ref{lem:cmorphbool} is that when \Cs is a \vari, whether a morphism $\eta: A^* \to (N,\leq)$ is a \Cs-morphism does not depend on the ordering $\leq$. Thus, we shall implicitly use Lemma~\ref{lem:cmorphbool} and drop the ordering when dealing with \varis.
\end{restatable}

We use \Cs-morphisms as mathematical tools. In particular, we need the following result.

\begin{restatable}{proposition}{genocm}\label{prop:genocm}
  Let \Cs be a \pvari and let $L_1,\dots,L_k \in \Cs$. There exists a \Cs-morphism $\eta: A^* \to (N,\leq)$ recognizing $L_1,\dots,L_k$.
\end{restatable}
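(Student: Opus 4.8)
The plan is to construct a single finite ordered monoid directly from the lattice generated, using quotients, by $L_1,\dots,L_k$, rather than combining separate recognizers of the individual~$L_i$. First I would isolate the relevant languages. Since each $L_i$ is regular, the set $\mathcal Q$ consisting of $\emptyset$, $A^*$, and all biquotients $x\inv L_i y\inv$ (for $x,y\in A^*$ and $1\le i\le k$) is finite; it is closed under quotients, since $w\inv(x\inv L y\inv)=(xw)\inv L y\inv$ and $(x\inv L y\inv)w\inv = x\inv L (wy)\inv$, and it is contained in \Cs because \Cs contains $\emptyset$, $A^*$, $L_1,\dots,L_k$ and is closed under quotients. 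Let $\mathcal F$ be the closure of $\mathcal Q$ under finite unions and intersections. This is a \emph{finite} sublattice of \Cs, and it is still closed under quotients, because $w\inv(\,\cdot\,)$ and $(\,\cdot\,)w\inv$ distribute over $\cup$ and~$\cap$.

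Next I would turn $\mathcal F$ into an ordered monoid. Put $u\sim v$ when $u$ and $v$ belong to exactly the same members of $\mathcal F$. Since $\mathcal F$ is closed under quotients, $\sim$ is a congruence of finite index: for any $x$, $xu\in K \iff u\in x\inv K \iff v\in x\inv K \iff xv\in K$, using $x\inv K\in\mathcal F$, and symmetrically on the right. Let $N=A^*/{\sim}$ with quotient morphism $\eta$, ordered by $[u]\le[v]$ iff every $K\in\mathcal F$ containing $u$ also contains~$v$. A routine verification (the same quotient computation) shows that $\le$ is a well-defined partial order compatible with multiplication; antisymmetry is precisely the definition of~$\sim$. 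Moreover, for every $K\in\mathcal F$ one checks $\eta\inv(\eta(K))=K$ and that $\eta(K)$ is an upper set, so $K$ is recognized by~$\eta$. In particular $\eta$ is a surjective morphism into a finite ordered monoid recognizing all of $L_1,\dots,L_k$, so it is a candidate \Cs-morphism.

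The main step, and the only genuine obstacle, is to show that $\eta$ is a \Cs-morphism, i.e.\ that \emph{every} language it recognizes lies in \Cs---not merely the~$L_i$. Such a language has the form $\eta\inv(F)$ for some upper set $F\subseteq N$; writing $F=\bigcup_{s\in F}\mathord{\uparrow}s$ as a union of principal upper sets, $\eta\inv(F)$ becomes a finite union of the languages $\eta\inv(\mathord{\uparrow}s)$, so since \Cs is a lattice it suffices to treat one principal upper set. Here the construction pays off: directly from the definition of $\le$, a word $w$ lies in $\eta\inv(\mathord{\uparrow}[u])$ iff $[u]\le[w]$ iff $w$ belongs to every $K\in\mathcal F$ that contains $u$, whence
\[
\eta\inv(\mathord{\uparrow}[u]) \;=\; \bigcap\{K\in\mathcal F : u\in K\}.
\]
This is a nonempty finite intersection (the term $A^*\in\mathcal F$ always occurs) of members of $\mathcal F\subseteq\Cs$, and therefore lies in \Cs. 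That completes the argument.

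An alternative route, with essentially the same bookkeeping, is to take the product of the syntactic \emph{ordered} morphisms of $L_1,\dots,L_k$, restricted to its image: for a single language in a \pvari the syntactic ordered morphism is already a \Cs-morphism, because the languages it recognizes are exactly the finite unions of finite intersections of biquotients of that language, all of which lie in \Cs; and $\eta\inv$ of a principal upper set of the product is then a finite intersection of languages recognized by the individual syntactic morphisms, hence again in \Cs. Either way, the crux is the reduction of an arbitrary recognized language to principal upper sets followed by the identification of those preimages with finite intersections of quotients of the $L_i$, which is exactly where closure of \Cs under both quotients and finite lattice operations is used.
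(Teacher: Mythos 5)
Your proof is correct, and the paper itself offers none (it defers to~\cite{pzupol,pzupol2}), so there is nothing to compare against internally. The construction you give — form the finite quotient-closed sublattice $\mathcal F$ of $\Cs$ generated by $\emptyset$, $A^*$ and all biquotients $x\inv L_i y\inv$; quotient $A^*$ by the congruence ``belongs to the same members of $\mathcal F$''; order by set-theoretic refinement; and then identify $\eta\inv(\mathord{\uparrow}[u])$ with a finite intersection of members of $\mathcal F$ — is precisely the standard argument and establishes the statement cleanly. The key steps all hold: compatibility of $\le$ with multiplication follows from closure of $\mathcal F$ under left/right quotients, antisymmetry is the definition of the congruence, and the reduction of an arbitrary upper set to a finite union of principal upper sets combined with the identity $\eta\inv(\mathord{\uparrow}[u]) = \bigcap\{K\in\mathcal F : u\in K\}$ is exactly where the lattice and quotient-closure hypotheses on $\Cs$ are needed. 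The alternative you sketch (product of the syntactic ordered morphisms of the $L_i$, restricted to its image) is equally valid and is the variant more commonly seen in the cited references.
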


We also use \Cs-morphisms to tackle membership. It is known that every regular language $L$ is~recognized by a canonical morphism $\alpha_L: A^*  \to (M_L,\leq_L)$, called its \emph{syntactic morphism}, which can be computed from any representation of $L$ (see \emph{e.g.},~\cite{pingoodref}). It is standard that for every \pvari \Cs, we have $L \in \Cs$ if and only if $\alpha_L$ is a \Cs-morphism. Altogether, this yields the following proposition.

\begin{restatable}{proposition}{synmemb} \label{prop:synmemb}
	Let \Cs be a \pvari. There is~an effective reduction from \Cs-membership to the problem of deciding whether a morphism $\alpha: A^* \to (M,\leq)$ is a \Cs-morphism.
\end{restatable}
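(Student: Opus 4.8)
The plan is to make the reduction completely explicit: on input a regular language $L$ (presented by any of the usual finite representations, say a finite automaton or a morphism into a finite monoid that recognizes it), the reduction outputs the syntactic morphism $\alpha_L : A^* \to (M_L,\leq_L)$ of $L$. First I would recall that $\alpha_L$ is computable from any such representation --- minimize the automaton, take $M_L$ to be its transition monoid, order it by inclusion of the sets of states from which an accepting state can be reached, and let $\alpha_L$ be the induced canonical surjection --- and that this is a classical effective procedure. In particular $M_L$ is finite (Remark~\ref{rem:finmono}) and $\alpha_L$ is surjective onto $M_L$, so the output is a legitimate instance of the target problem.

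Correctness is then immediate from the equivalence recalled just before the statement: for every \pvari \Cs and every regular language $L$, one has $L \in \Cs$ if and only if $\alpha_L$ is a \Cs-morphism. Consequently $L \mapsto \alpha_L$ is an effective many-one reduction from \Cs-membership to the problem of deciding whether a morphism into a finite ordered monoid is a \Cs-morphism, which is exactly the assertion. For completeness I would also indicate why that equivalence holds, since this is where the hypotheses on \Cs are used. As $\alpha_L$ is surjective by construction, it is a \Cs-morphism precisely when every language it recognizes belongs to \Cs. One direction is trivial: if $\alpha_L$ is a \Cs-morphism then, since $L$ is recognized by its own syntactic morphism, $L \in \Cs$. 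For the converse, every language recognized by $\alpha_L$ (via an upper set of $(M_L,\leq_L)$) is obtained from the quotients $u\inv L v\inv$ of $L$ by finitely many unions and intersections; since \Cs is a \pvari it is a lattice closed under quotients, so from $L \in \Cs$ we get that all such languages lie in \Cs, i.e.\ $\alpha_L$ is a \Cs-morphism.

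There is no genuinely hard step here: the proposition is bookkeeping on top of two standard facts --- the effectiveness of the syntactic morphism construction, and the characterization of \Cs-membership through $\alpha_L$. The only points that deserve a little care are checking that the syntactic morphism, as usually defined, is genuinely surjective (so that the surjectivity clause in the definition of a \Cs-morphism comes for free) and that the target problem is phrased for morphisms into \emph{finite} ordered monoids, which $M_L$ always is.
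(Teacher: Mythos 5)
Your proposal is correct and follows exactly the paper's route: compute the syntactic morphism $\alpha_L$ and invoke the standard equivalence $L\in\Cs \Leftrightarrow \alpha_L$ is a \Cs-morphism, which the paper states without proof just before the proposition. Your justification of that equivalence (every language recognized by $\alpha_L$ via an upper set is a finite union of finite intersections of quotients of $L$, hence lies in \Cs when $L$ does) is the right argument; the only quibble is that your one-line operational description of the syntactic order on the transition monoid of the minimal DFA is a bit loose, but the computability of $\alpha_L$ is classical and the paper itself only cites it.
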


\section{Concatenation hierarchies}
\label{sec:hiera}
Concatenation hierarchies are built uniformly from a single input class using two operators, which we first present.

\subsection{Polynomial closure and Boolean closure}

The \emph{polynomial closure of a class \Cs} is the class \pol{\Cs} consisting of all \emph{finite unions} of \emph{marked products} $L_0a_1L_1 \cdots a_nL_n$ where $n \in \nat$ $a_1,\dots,a_n \in A$ and $L_0,\dots,L_n \in \Cs$. It is not clear from the definition whether the class \pol{\Cs} has robust properties, even when \Cs does. It was shown by Arfi~\cite{arfi87,arfi91} that if~\Cs is a \vari, then \pol{\Cs} is a \pvari. This result was later strengthened by Pin~\cite{jep-intersectPOL}: if \Cs is a \pvari, then so is \pol{\Cs} (see also~\cite{PZ:generic18}).

\begin{restatable}{theorem}{polc}\label{thm:polc}
  Let \Cs be a \pvari. Then, \pol{\Cs} is a \pvari closed under language concatenation.
\end{restatable}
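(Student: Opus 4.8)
The plan is to verify each required closure property of \pol{\Cs} in turn, postponing closure under intersection, which is the only step needing real work. Closure under union is immediate, since \pol{\Cs} is by definition the set of finite unions of marked products. Taking $n=0$ with $L_0 \in \{\emptyset, A^*\}$ (both in \Cs, as \Cs is a lattice) gives $\emptyset, A^* \in \pol{\Cs}$. A marked product of regular languages is a concatenation of regular languages (the $L_i$'s and singletons $\{a_i\}$), hence regular, and finite unions preserve regularity; so \pol{\Cs} contains only regular languages. For closure under quotients, one checks, for $a \in A$,
\[
  a\inv(L_0 a_1 L_1 \cdots a_n L_n) = (a\inv L_0)\,a_1 L_1 \cdots a_n L_n \ \cup\ R,
\]
where $R = L_1 a_2 L_2 \cdots a_n L_n$ if $\veps \in L_0$ and $a=a_1$, and $R=\emptyset$ otherwise; both terms are marked products over \Cs, using that \Cs is closed under quotients. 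An induction on word length, together with distribution of $w\inv(\cdot)$ over finite unions, then gives $w\inv K \in \pol{\Cs}$ for all $K \in \pol{\Cs}$ and $w \in A^*$; right quotients are symmetric.

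Next, closure under language concatenation. By distributivity it suffices to prove $PQ \in \pol{\Cs}$ for two marked products $P = L_0 a_1 L_1 \cdots a_n L_n$ and $Q = M_0 b_1 M_1 \cdots b_m M_m$. The only obstruction is the \emph{unmarked} junction $L_n M_0$ in the middle of $PQ$; it is resolved using the identity, valid for $L,L' \in \Cs$,
\[
  L L' = R' \ \cup\ \bigcup_{a \in A} L\,a\,(a\inv L'),
\]
with $R' = L$ if $\veps \in L'$ and $R'=\emptyset$ otherwise — again a finite union of marked products over \Cs by closure of \Cs under quotients. Substituting this expansion of $L_n M_0$ into $PQ$ and redistributing, every term becomes a marked product over \Cs (a typical one being $L_0 a_1 \cdots a_n L_n\, a\, (a\inv M_0)\, b_1 M_1 \cdots b_m M_m$, or $L_0 a_1 \cdots a_n L_n\, b_1 M_1 \cdots b_m M_m$ when $\veps \in M_0$); the degenerate cases $n=0$, $m=0$, $n=m=0$ are handled identically. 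Hence $PQ \in \pol{\Cs}$.

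The remaining, and main, point is closure under intersection; again by distributivity it reduces to $P \cap Q \in \pol{\Cs}$ for marked products $P = L_0 a_1 L_1 \cdots a_n L_n$ and $Q = M_0 b_1 M_1 \cdots b_m M_m$. Using Proposition~\ref{prop:genocm}, I would fix a single \Cs-morphism $\varphi: A^* \to (N,\leq)$ recognizing \emph{all} of $L_0,\dots,L_n,M_0,\dots,M_m$, so that $L_i = \varphi\inv(F_i)$ and $M_j = \varphi\inv(G_j)$ for suitable upper sets $F_i, G_j \subseteq N$. A word $w$ lies in $P \cap Q$ exactly when it admits a factorization $w = w_0 c_1 w_1 \cdots c_k w_k$ into letters $c_i$ and words $w_i$, equipped with the \emph{discrete data} of which $c_i$'s serve as the successive marks $a_1,\dots,a_n$ of $P$ and which as the successive marks $b_1,\dots,b_m$ of $Q$ (a single $c_i$ may serve both, forcing the corresponding $a$ and $b$ to coincide), such that the $\varphi$-image of the factor of $w$ delimited by two consecutive $P$-marks lies in the appropriate $F_i$, and likewise for $Q$ and the $G_j$. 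There are finitely many such discrete data; for each, the condition on the $w_i$'s says that, reading along each $P$-factor (resp.\ $Q$-factor), the product in $N$ of the relevant $\varphi(w_i)$'s and $\varphi(c_i)$'s lies in the corresponding $F_i$ (resp.\ $G_j$). For each tuple $(h_0,\dots,h_k) \in N^{k+1}$ for which these finitely many product conditions hold, the set of words realizing the discrete data with $\varphi(w_i) \geq h_i$ for every $i$ is exactly the marked product $\varphi\inv(\uclos h_0)\,c_1\,\varphi\inv(\uclos h_1) \cdots c_k\,\varphi\inv(\uclos h_k)$, which is over \Cs since each $\uclos h_i$ is an upper set. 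Since multiplication in $N$ is monotone and each $F_i,G_j$ is an upper set, the union of all these marked products, over all valid discrete data and all valid tuples $(h_i)$, is precisely $P \cap Q$; being a finite union, it lies in \pol{\Cs}.

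The hard part is this last step: setting up the interleaving combinatorics of the two marked factorizations correctly and checking that the finite union above really equals $P \cap Q$ (both inclusions). Two features make it work. Passing to a common \Cs-morphism via Proposition~\ref{prop:genocm} is what lets the inter-mark factors be described by a \emph{finite} ordered monoid; and using the upper sets $\uclos h_i$ rather than exact $\varphi$-fibers — legitimate precisely because the $F_i, G_j$ are upper sets and $N$ is ordered — is what keeps every piece inside the lattice \Cs, which is essential since \Cs need not be closed under complement. Everything else is routine manipulation of marked products using only that \Cs is a lattice closed under quotients.
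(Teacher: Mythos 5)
The paper itself does not prove Theorem~\ref{thm:polc}; it cites it (Arfi~\cite{arfi87,arfi91}, Pin~\cite{jep-intersectPOL}, see also~\cite{PZ:generic18}), so there is no internal proof to compare against. Your argument is a correct self-contained proof and follows what is essentially the standard route. The routine closures (union, the presence of $\emptyset$ and $A^*$, regularity, quotients, concatenation) are all handled correctly; in particular, the expansion $LL' = R' \cup \bigcup_{a\in A} L\,a\,(a\inv L')$ is the right way to convert an unmarked junction into marked form, and your $a\inv$-formula for a marked product is exact. The substantive step is intersection, and you have the right mechanism: pass via Proposition~\ref{prop:genocm} to a single \Cs-morphism $\varphi:A^*\to(N,\leq)$ recognizing all the $L_i$ and $M_j$, refine the two marked factorizations of a word in $P\cap Q$ to one shared factorization indexed by finite interleaving data, and rebuild $P\cap Q$ as a finite union of marked products $\varphi\inv(\uclos h_0)\,c_1\cdots c_k\,\varphi\inv(\uclos h_k)$ over valid interleavings and tuples $(h_i)\in N^{k+1}$. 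The use of the upper sets $\uclos h_i$ rather than exact $\varphi$-fibers is precisely what is needed when \Cs is only a lattice and not a Boolean algebra, and monotonicity of multiplication in $(N,\leq)$ together with the fact that the accepting sets $F_i,G_j$ are upper sets is what makes the forward inclusion of the union identity go through; the reverse inclusion comes from choosing $h_i=\varphi(w_i)$. What remains, if you were to write this out in full, is the bookkeeping of the interleaving (two strictly increasing maps into $\{1,\dots,k\}$ with covering images and letter compatibility at coincidences) and a clean statement of both inclusions, but these are genuine details rather than gaps.
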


In general, the classes \pol{\Cs} are \emph{not} closed under complement. Hence, it is natural to combine polynomial closure with another operator. Given a class \Ds, the \emph{Boolean closure} of \Ds, written \bool{\Ds}, is the \emph{least} Boolean algebra containing~\Ds. For each input class \Cs, we write \bpol{\Cs} for \bool{\pol{\Cs}}. Since quotients commute with Boolean operations, we have the following corollary of Theorem~\ref{thm:polc}.

\begin{restatable}{corollary}{bpolc}\label{cor:bpolc}
  If \Cs is a \pvari, \bpol{\Cs} is a \vari.
\end{restatable}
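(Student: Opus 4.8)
The plan is to obtain the statement as a direct consequence of Theorem~\ref{thm:polc} together with the fact that taking the Boolean closure preserves both closure under quotients and regularity. First I would apply Theorem~\ref{thm:polc}: since \Cs is a \pvari, the class \pol{\Cs} is a \pvari, hence it consists only of regular languages and it is closed under quotients. By definition $\bpol{\Cs} = \bool{\pol{\Cs}}$ is the least Boolean algebra containing \pol{\Cs}, so it is \emph{a priori} a Boolean algebra; it remains to check that every language it contains is regular and that it is closed under quotients. Regularity is immediate: the family of regular languages is itself a Boolean algebra and contains \pol{\Cs}, so by minimality it contains $\bool{\pol{\Cs}}$ as well.

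For closure under quotients, the one ingredient to spell out is that left and right quotients commute with Boolean operations: for every $w \in A^*$ and all languages $K, L$, one has $w\inv(K \cup L) = w\inv K \cup w\inv L$, $w\inv(A^* \setminus L) = A^* \setminus w\inv L$, and symmetrically for the right quotients $Kw\inv$. Granting this, I would let $\Es$ be the set of all $L \in \bpol{\Cs}$ such that $w\inv L \in \bpol{\Cs}$ and $Lw\inv \in \bpol{\Cs}$ for every $w \in A^*$. The commutation identities show that $\Es$ is closed under complement, finite unions and finite intersections, and it contains $\emptyset$ and $A^*$, so $\Es$ is a Boolean algebra; moreover $\pol{\Cs} \subseteq \Es$ because \pol{\Cs} is closed under quotients and is contained in $\bpol{\Cs}$. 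By minimality of $\bool{\pol{\Cs}}$, we get $\Es = \bpol{\Cs}$, which means exactly that \bpol{\Cs} is closed under quotients. Putting the three observations together, \bpol{\Cs} is a Boolean algebra closed under quotients containing only regular languages, i.e.\ a \vari.

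I do not anticipate any real difficulty: the statement is essentially bookkeeping on top of Theorem~\ref{thm:polc}, and the only non-formal point is the elementary distributivity of quotients over Boolean operations --- precisely the observation flagged by the phrase ``since quotients commute with Boolean operations'' in the text --- packaged through the standard trick of exhibiting the largest sub-Boolean-algebra of $\bpol{\Cs}$ that is stable under quotients.
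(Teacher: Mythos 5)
Your proof is correct and takes essentially the same route as the paper, which disposes of the corollary with the single remark that quotients commute with Boolean operations and then invokes Theorem~\ref{thm:polc}. You have simply unpacked that remark into the standard minimality argument (the sub-Boolean-algebra of $\bpol{\Cs}$ stable under quotients contains $\pol{\Cs}$, hence equals $\bpol{\Cs}$), together with the analogous argument for regularity.
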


Observe that by definition, a language in \bool{\Cs} is built from finitely many languages in \Cs. This yields the following lemma for \bool{\Cs}-morphisms.

\begin{restatable}{lemma}{bpolm} \label{lem:bpolm}
  Let \Cs be a \pvari and $\alpha: A^* \to M$ be a \bool{\Cs}-morphism. There exists a morphism $\beta: Q \to M$  and a \Cs-morphism $\gamma: A^* \to (Q,\leq)$ such that $\alpha = \beta \circ \gamma$.
\end{restatable}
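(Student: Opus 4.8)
The statement asks, given a \bool{\Cs}-morphism $\alpha : A^* \to M$, to factor it as $\alpha = \beta \circ \gamma$ where $\gamma : A^* \to (Q,\leq)$ is a \Cs-morphism and $\beta : Q \to M$ is a morphism. The natural strategy is to build $\gamma$ as a ``\Cs-morphism recognizing all the ingredients of $\alpha$''. Since $\alpha$ is a \bool{\Cs}-morphism, each language $\alpha^{-1}(m)$ for $m \in M$ lies in \bool{\Cs}, hence is a Boolean combination of finitely many languages of \Cs. Collecting these (finitely many) languages over all $m \in M$, call them $L_1,\dots,L_k \in \Cs$, we invoke Proposition~\ref{prop:genocm} to obtain a \Cs-morphism $\gamma : A^* \to (Q,\leq)$ recognizing $L_1,\dots,L_k$. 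The point of this choice is that $\gamma$ then also recognizes every Boolean combination of the $L_i$, and in particular every language $\alpha^{-1}(m)$; equivalently, the partition of $A^*$ induced by $\gamma$ refines the one induced by $\alpha$.

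The second step is to produce $\beta : Q \to M$ with $\alpha = \beta \circ \gamma$. Since $\gamma$ is surjective, for each $q \in Q$ pick a word $w_q \in A^*$ with $\gamma(w_q) = q$ and set $\beta(q) = \alpha(w_q)$. I must check this is well-defined, i.e. independent of the representative: if $\gamma(u) = \gamma(v) = q$, then $u$ and $v$ lie in the same $\gamma$-class, hence (by the refinement established above) in the same $\alpha$-class, so $\alpha(u) = \alpha(v)$. This gives a well-defined map $\beta$ with $\beta \circ \gamma = \alpha$ on all of $A^*$. It remains to verify $\beta$ is a monoid morphism: $\beta(1_Q) = 1_M$ follows by taking the representative $\varepsilon$ for $1_Q = \gamma(\varepsilon)$; and $\beta(qq') = \beta(q)\beta(q')$ follows because $\gamma(w_q w_{q'}) = q q'$, so $\beta(qq') = \alpha(w_q w_{q'}) = \alpha(w_q)\alpha(w_{q'}) = \beta(q)\beta(q')$, using that $\alpha$ is a morphism. (Note $\beta$ need not respect any order structure, and the lemma does not ask it to — it is just a monoid morphism.)

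The only genuinely delicate point is the very first reduction: asserting that a \bool{\Cs}-morphism has each fiber $\alpha^{-1}(m)$ in \bool{\Cs}, and that \bool{\Cs} is closed under the finitely many Boolean operations needed, so that \Cs itself supplies a finite generating family $L_1,\dots,L_k$. This is essentially the remark made just before the lemma in the text (``a language in \bool{\Cs} is built from finitely many languages in \Cs''), combined with the definition of a \bool{\Cs}-morphism as one all of whose recognized languages lie in \bool{\Cs}; since $\alpha^{-1}(m) = \alpha^{-1}(\{m\})$ is recognized by $\alpha$, it lies in \bool{\Cs}. Everything else is a routine verification. I would present the argument in the order above: (1) extract $L_1,\dots,L_k$ and apply Proposition~\ref{prop:genocm}; (2) observe $\gamma$ refines $\alpha$; (3) define $\beta$ on representatives, check well-definedness from the refinement, and check the morphism axioms.
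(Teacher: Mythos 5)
Your proof takes essentially the same route as the paper's: extract a finite family of \Cs-languages generating all $\alpha$-fibers via Boolean combinations, apply Proposition~\ref{prop:genocm} to get $\gamma$, observe that $\gamma(u)=\gamma(v)$ implies $\alpha(u)=\alpha(v)$, and define $\beta$ on representatives. The organization differs slightly (you separate well-definedness from the morphism axioms, the paper folds both into the single implication $\gamma(u)=\gamma(v)\Rightarrow\alpha(u)=\alpha(v)$), but the argument is the same and correct.
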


\begin{proof}
  By hypothesis on $\alpha$, there exists a finite set \Lb of languages in \Cs such that every language recognized by $\alpha$ is a Boolean combination of languages in \Lb. Since \Cs is a \pvari, Proposition~\ref{prop:genocm} yields a \Cs-morphism $\gamma: A^* \to (Q,\leq)$ recognizing every language $L \in \Lb$. It remains to construct $\beta: Q \to M$. For every $q \in Q$, we fix $w_q \in \gamma\inv(q)$ and define $\beta(q) = \alpha(w_q)$. We prove that $\beta$ is a morphism. This boils down to proving that $\gamma(u) = \gamma(v) \Rightarrow \alpha(u) = \alpha(v)$ for all $u,v \in A^*$. Indeed, since $\gamma(w_{1_Q}) = 1_Q = \gamma(\veps)$, we would get $\alpha(w_{1_Q}) = \alpha(\veps)$, and therefore $\beta(1_Q) = 1_M$. Similarly, for $q,r \in Q$, since $\gamma(w_{qr}) = qr = \gamma(w_qw_r)$, we would get $\alpha(w_{qr}) = \alpha(w_qw_r)$, which yields $\beta(qr) =   \beta(q)\beta(r)$.

  Let now $u,v \in A^*$ such that $\gamma(u) = \gamma(v)$, and let us prove that $\alpha(u) = \alpha(v)$. Since $\gamma$ recognizes every language $L\in\Lb$, we have $u \in L \Leftrightarrow v \in L$ for all $L \in \Lb$. Hence, for every Boolean combination $H$ of languages of \Lb, we have $u \in H \Leftrightarrow v \in H$. Since every language recognized by $\alpha$ is such a Boolean combination, we obtain $\alpha(u) = \alpha(v)$.

  It remains to prove that $\alpha=\beta\circ\gamma$. Let $w\in A^*$ and $q=\gamma(w)$. We have $\gamma(w)=\gamma(w_q)$. Therefore, $\alpha(w)=\alpha(w_q)=\beta(q)=\beta(\gamma(w))$, as desired.
\end{proof}

The classes \bpol{\Cs}, which are central to this paper, are difficult to handle. This is because they are \emph{not} closed under concatenation in general. This makes it hard to specify languages in \bpol{\Cs}.  Nonetheless, we shall use the following weak concatenation principle based on \bpol{\Cs}-covers, proved in~\cite[Lemma~3.4]{pzbpolcj}.

\begin{lemma}\label{lem:bpconcat}
  Let \Cs be a \vari and $L_0,\dots,L_n \in \pol{\Cs}$. For $i \leq n$, let $\Kb_i$ be a \bpol{\Cs}-cover of\/ $L_i$. Then, there exists a \bpol{\Cs}-cover \Hb of\/ $L_0 \cdots L_n$ such that for every $H \in \Hb$, there exists $K_i \in \Kb_i$ for all $i \leq n$, such that $H \subseteq K_0 \cdots K_n$.
\end{lemma}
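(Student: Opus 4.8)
The plan is to reduce the statement to its case $n = 1$ and then prove that case by a structural analysis of the two $\pol{\Cs}$-languages being multiplied. The reduction is a straightforward induction on $n$: the case $n = 0$ is trivial (take $\Hb = \Kb_0$), and for $n \geq 2$ one sets $L' = L_0 \cdots L_{n-1}$, which lies in $\pol{\Cs}$ by Theorem~\ref{thm:polc}; the induction hypothesis then gives a $\bpol{\Cs}$-cover $\Hb'$ of $L'$ with every $H' \in \Hb'$ contained in some product $K_0 \cdots K_{n-1}$ ($K_i \in \Kb_i$), and applying the case $n = 1$ to the two $\pol{\Cs}$-languages $L'$ and $L_n$ with the covers $\Hb'$ and $\Kb_n$ yields a $\bpol{\Cs}$-cover $\Hb$ of $L'L_n = L_0 \cdots L_n$ with every $H \in \Hb$ contained in some $H'K_n \subseteq K_0 \cdots K_n$. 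So everything rests on the case $n = 1$.

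For the case $n = 1$, I would first make two simplifications. Since a $\bpol{\Cs}$-cover of a language is also a $\bpol{\Cs}$-cover of each of its sublanguages, and since the union of suitable covers of all products $PP'$---with $P, P'$ ranging over the marked products occurring in fixed decompositions of $L_0, L_1$ as finite unions of marked products over \Cs---is a suitable cover of $L_0 L_1 = \bigcup PP'$, one may assume that $L_0 = C_0 a_1 C_1 \cdots a_m C_m$ and $L_1 = C'_0 b_1 C'_1 \cdots b_l C'_l$ are single marked products with all factors in \Cs. Then, using Lemma~\ref{lem:bpolm}, Proposition~\ref{prop:genocm} and Corollary~\ref{cor:bpolc}, I would fix a single \Cs-morphism $\gamma : A^* \to Q$ refining this data, so that $L_0, L_1$ become finite unions of marked products of the form $\gamma\inv(q_0) c_1 \gamma\inv(q_1) \cdots c_t \gamma\inv(q_t)$ (with $c_i\in A$) and every $K \in \Kb_0 \cup \Kb_1$ becomes a Boolean combination of such languages. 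At this point one already knows $L_0 L_1 \in \pol{\Cs} \subseteq \bpol{\Cs}$ by Theorem~\ref{thm:polc}, but the real task remains: the pieces of $\Hb$ must each sit inside a single product $K_0 K_1$. The core is to show that for a word $w = w_0 w_1 \in L_0 L_1$ with $w_0 \in L_0 \cap K_0$ and $w_1 \in L_1 \cap K_1$, one can build $H \in \bpol{\Cs}$ with $w \in H \subseteq K_0 K_1$; here one combines---via Boolean operations and quotients, which is legitimate since $\bpol{\Cs}$ is a \vari (Corollary~\ref{cor:bpolc})---the constraint imposed by $K_0$ on a prefix, the one imposed by $K_1$ on a suffix, and the $\gamma$-images along the marked-product skeleton, proceeding by induction on $m + l$ and peeling off boundary factors. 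The unknown position of the $L_0$--$L_1$ cut is absorbed into the ``gluing factor'' $C_m C'_0 \in \pol{\Cs}$, which, unlike a factor in \Cs, is controlled through its decomposition into marked products over \Cs, each of which already lies in $\bpol{\Cs}$.

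The hard part is precisely this last construction---controlling the concatenation. Because $\bpol{\Cs}$ is \emph{not} closed under concatenation, one cannot take $\Hb = \{K_0 K_1 \mid K_i \in \Kb_i\}$, and one cannot even use the pieces $(L_0 \cap K_0)(L_1 \cap K_1)$: such a product need not lie in $\bpol{\Cs}$, and it may even fail to be $\bpol{\Cs}$-coverable by languages contained in it---already for the basis $\Cs = \{\emptyset, A^*\}$ one can choose $L_0, L_1 \in \pol{\Cs}$ and $K_0, K_1 \in \bpol{\Cs}$ so that $(L_0 \cap K_0)(L_1 \cap K_1)$ is the set of words ending with a fixed letter, which is not in $\bpol{\Cs}$. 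So the argument must genuinely exploit that $L_0, L_1 \in \pol{\Cs}$---whence $L_0 L_1 \in \pol{\Cs}$---in order to localize the cut position inside a $\bpol{\Cs}$-definable language; the delicate point is to do so while keeping the resulting cover \emph{finite} and certifying that each of its members is contained in a single product $K_0 K_1$.
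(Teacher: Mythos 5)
The paper itself does not prove this lemma---it cites an external reference for it---so there is no in-paper proof to compare against. Evaluating your proposal on its own merits: the reduction to the case $n=1$ by induction (using closure of \pol{\Cs} under concatenation, Theorem~\ref{thm:polc}) is sound, as is the further reduction to single marked products and the choice of a uniform \Cs-morphism $\gamma$ refining all the data. You also correctly diagnose where the obstruction lies: \bpol{\Cs} is not closed under concatenation, so one cannot take $\Hb = \{K_0K_1 \mid K_i \in \Kb_i\}$, and your observation that $(L_0\cap K_0)(L_1\cap K_1)$ can fall outside \bpol{\Cs} is essentially right (for instance $L_0=K_0=A^*$, $L_1=A^*aA^*$, $K_1=\{a\}$ give $(L_0\cap K_0)(L_1\cap K_1)=A^*a$, which is not in \bpol{\stzer}).

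However, the core construction for $n=1$ is not carried out, and this is precisely where the content of the lemma lies. You describe the plan as ``peeling off boundary factors'' and ``absorbing the cut position into the gluing factor $C_mC'_0$'', but this does not resolve the two difficulties you yourself name: guaranteeing that the cover is \emph{finite}, and certifying that each piece sits inside a \emph{single} product $K_0K_1$. Building one language $H_w$ per word $w\in L_0L_1$ gives an infinite family; indexing instead by $\gamma$-data along the marked-product skeleton is the right instinct but leaves open how to localize the cut position---which governs \emph{which} pair $(K_0,K_1)$ can contain a given piece---inside a \bpol{\Cs}-definable region. An induction on $m+l$ does not obviously discharge this either: the base case $m=l=0$ already presents the full problem of covering $C_0C'_0$ with \bpol{\Cs}-pieces each contained in a single product. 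Closing this gap typically requires a well-quasi-ordering or factorization argument (compare the Nash--Williams-style proof of Proposition~\ref{prop:core} in this paper); without a development of that kind, your proposal is a correct reduction and a sharp diagnosis of the obstruction, but not yet a proof.
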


We shall also need the following variant of Lemma~\ref{lem:bpconcat}, which considers \emph{marked} concatenation. It is presented in~\cite[Lemma~3.6]{pzbpolcj}.

\begin{lemma}\label{lem:bpconcatm}
  Let \Cs be a \vari, $L_0,\dots,L_n \in \pol{\Cs}$ and let $a_1,\dots,a_n \in A$. For $i \leq n$, let $\Kb_i$ be a \bpol{\Cs}-cover of\/ $L_i$. Then, there is a \bpol{\Cs}-cover \Hb of\/ $L_0a_1L_1 \cdots a_nL_n$ such that for every $H \in \Hb$, there exists $K_i \in \Kb_i$ for all $i \leq n$, such that $H \subseteq K_0a_1K_1 \cdots a_nK_n$.
\end{lemma}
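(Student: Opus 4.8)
The plan is to reduce the statement to its unmarked counterpart, Lemma~\ref{lem:bpconcat}, by an induction on $n$ peeling off one factor (and one marked letter) at a time. The key elementary observation is that \pol{\Cs} is closed under marked concatenation by a single letter: if $K, K' \in \pol{\Cs}$ and $a \in A$, then, writing $K$ and $K'$ as finite unions of marked products over \Cs and distributing, $K a K'$ is again a finite union of marked products over \Cs; in particular $L_1 a_2 L_2 \cdots a_n L_n \in \pol{\Cs}$. The case $n = 0$ is trivial (take $\Hb = \Kb_0$). For $n \geq 1$, set $L = L_1 a_2 L_2 \cdots a_n L_n \in \pol{\Cs}$; applying the lemma inductively to $L_1, \dots, L_n$ with the covers $\Kb_1, \dots, \Kb_n$ produces a \bpol{\Cs}-cover \Gb of $L$ such that every $G \in \Gb$ is contained in some $K_1 a_2 K_2 \cdots a_n K_n$ with $K_i \in \Kb_i$. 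It then suffices to treat the case $n = 1$, applied to $L_0$ and $L$ with the covers $\Kb_0$ and \Gb and the letter $a_1$: the resulting \bpol{\Cs}-cover \Hb of $L_0 a_1 L = L_0 a_1 L_1 \cdots a_n L_n$ has every member contained in some $K_0 a_1 G$, hence in $K_0 a_1 K_1 a_2 \cdots a_n K_n$.

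It remains to handle the case $n = 1$. Here the same distributivity over finite unions reduces us to the situation where $L_0$ and $L_1$ are \emph{single} marked products over \Cs: a \bpol{\Cs}-cover of $L_i$ is \emph{a fortiori} a \bpol{\Cs}-cover of each marked product occurring in a fixed decomposition of $L_i$, and one recovers a cover of $L_0 a_1 L_1$ by taking, over all pairs of such products, the union of the corresponding covers. When $L_0$ and $L_1$ are single marked products over \Cs, so is $L_0 a_1 L_1$, say $L_0 a_1 L_1 = C_0 d_1 C_1 \cdots d_m C_m$ with $C_j \in \Cs$ and with the distinguished letter $a_1$ being one of the interior letters $d_j$. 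This single-marked-product case is settled by the argument used to prove Lemma~\ref{lem:bpconcat}, with $a_1$ treated simply as one more marked position interleaved between the \Cs-factors.

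The two reduction steps are routine bookkeeping; the genuine content, and the main obstacle, is the core case: the part carrying out the argument of Lemma~\ref{lem:bpconcat}. Two natural shortcuts both fail, which is why that argument cannot be circumvented. First, one cannot absorb the marked letter into a \pol{\Cs} factor, since $aL$ need not belong to \pol{\Cs}---already $aA^* \notin \pol{\{\emptyset,A^*\}}$. Second, one cannot take the cover of $L_0 a_1 L_1$ to be the family of fibres of a single \bpol{\Cs}-morphism $\eta$: from $\eta(w) = \eta(w_0)\,\eta(a_1)\,\eta(w_1)$ one cannot recover a factorization $w = w_0' a_1 w_1'$ with $\eta(w_0') = \eta(w_0)$ and $\eta(w_1') = \eta(w_1)$, because the image of $w$ does not locate any occurrence of $a_1$; recording that information would push the construction up to \bpol{\bpol{\Cs}} rather than keeping it inside \bpol{\Cs}. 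The required factorization must therefore be extracted from the marked-product structure of $L_0 a_1 L_1 \in \pol{\Cs}$, as in the proof of Lemma~\ref{lem:bpconcat}.
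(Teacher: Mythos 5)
The paper does not prove this lemma; it cites it as Lemma~3.6 of~\cite{pzbpolcj}, so there is no internal proof to compare against. Assessing your proposal on its own terms, the reductions you set up are correct: \pol{\Cs} is indeed closed under marked concatenation by a single letter (distribute over the finite unions and splice the marked products), so $L_1a_2L_2\cdots a_nL_n\in\pol{\Cs}$, the induction on $n$ peels off to $n=1$, and restricting further to $L_0,L_1$ being single marked products over \Cs is sound because $\Kb_0,\Kb_1$ are \emph{a fortiori} covers of each marked product in a fixed decomposition, and unioning the resulting covers over all pairs works. You also correctly diagnose why the two tempting shortcuts fail: $aK\notin\pol{\Cs}$ in general (your $aA^*\notin\pol{\{\emptyset,A^*\}}$ example is right), and a single \bpol{\Cs}-morphism cannot locate the marked occurrence of~$a_1$.

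The gap is exactly where you flag it. After the reductions, the single-marked-product case is the entire content of the lemma, and you dispose of it by appeal to ``the argument used to prove Lemma~\ref{lem:bpconcat}, with $a_1$ treated as one more marked position.'' But Lemma~\ref{lem:bpconcat} is itself a black-box citation in this paper, you have not reproduced its proof, and --- as your own analysis shows --- the marked statement is \emph{not} a corollary of the unmarked statement used as a black box (one cannot feed $\{a_1\}$ or $a_1L_1$ into Lemma~\ref{lem:bpconcat} as an extra \pol{\Cs}-factor). Asserting that the proof technique transfers ``with $a_1$ interleaved'' is plausible, but it is a claim about an argument you have not seen; the place where the split of a word in $B_0b_1\cdots b_kB_k\,a_1\,D_0c_1\cdots c_\ell D_\ell$ has to be coordinated with the covers $\Kb_0$ and $\Kb_1$ of the two halves is precisely the part that needs a construction, and none is given. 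So what you have is a correct and moderately useful normalization ($n$ arbitrary $\Rightarrow$ $n=1$ $\Rightarrow$ single marked products), not a proof: the step you label as the ``genuine content'' is left open.
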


Lemmas~\ref{lem:bpconcat} and ~\ref{lem:bpconcatm} only apply to very specific languages to be covered: such a language must be a concatenation of languages in \pol{\Cs} that are themselves already covered. To build these specific languages, we shall use another result which we present now.

\smallskip

We start with definitions. Let $\eta: A^* \to N$ be a morphism (such as a \Cs-morphism) and $n \in \nat$. An \emph{$\eta$-pattern of length $n$} is a tuple $\widebar{p}=(w_1,e_1,\dots,w_n,e_n,w_{n+1})$ of $(A^+ \times E(N))^n \times A^*$ satisfying the three following conditions:
\begin{enumerate}
  \item\label{itm:pat1} If $n \geq 1$, then $w_1, \dots,w_{n+1} \in A^+$ are \emph{all nonempty},
  \item\label{itm:pat2} $e_i \Jrel \eta(w_1 \cdots w_{n+1})$ for $1 \leq i \leq n$,
  \item\label{itm:pat3} $\eta(w_i)e_i\! =\! \eta(w_i)$ and $e_i\eta(w_{i+1})\! =\! \eta(w_{i+1})$ for $1\! \leq\! i\! \leq\! n$.
\end{enumerate}
If $\alpha: A^* \to M$ is another morphism, we say that $\widebar{p}$ is \emph{$\alpha$-guarded} if for all $1 \leq i \leq n$, there exists $g_i \in E(M)$ such that $\eta\inv(e_i) \cap \alpha\inv(g_i) \neq \emptyset$, $\alpha(w_i)g_i=\alpha(w_i)$ and $g_i\alpha(w_{i+1})=\alpha(w_{i+1})$. Note that $\widebar{p}$ is $\eta$-guarded if $\eta$ is surjective. Note also that when $n = 0$, Conditions~\eqref{itm:pat1} to~\eqref{itm:pat3} hold trivially. Therefore, $(w)$ is an $\eta$-pattern for every $w \in A^*$.

We now introduce a notation. Let $\widebar{p} = (w_1,e_1,\dots,w_n,e_n,w_{n+1})$ be an  $\eta$-pattern. We write $\pi(\widebar{p})=w_1\cdots w_{n+1}\in A^*$. Also, we let $\uclos_{\eta} \widebar{p}\subseteq A^*$ be the following language:
\[
  \uclos_{\eta} \widebar{p} = w_1\ \eta\inv(e_1)\ \cdots\ w_n\ \eta\inv(e_n)\ w_{n+1}.
\]
When $\widebar{p}$ is of length $n = 0$, we let $\uclos_{\eta} \widebar{p}  = \{w_1\}$.

We turn to the statement. Given a morphism $\eta: A^* \to N$ and a \Jrel-class $J \subseteq N$, it can be used to build a cover of a language included in $\eta\inv(J)$. Yet, we need a strong hypothesis on $J$. We say that a \Jrel-class $J \subseteq N$ is $\eta$-alphabetic to indicate that for every $a \in A$, if $J \Jord \eta(a)$, then $\eta(a) \in J$.

\begin{restatable}{proposition}{core}\label{prop:core}
  Let $\eta: A^*\to N$ and $\alpha: A^*\to M$ be two morphisms, $J \subseteq N$ be an $\eta$-alphabetic \Jrel-class and $H \subseteq \eta\inv(J)$. Then, there exists a finite set $P$ of $\alpha$-guarded $\eta$-patterns such that $\pi(\widebar{p}) \in H$ for all $\widebar{p} \in P$ and $\{\uclos_{\eta} \widebar{p} \mid \widebar{p} \in P\}$ is a cover of~$H$.
\end{restatable}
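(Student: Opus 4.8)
The plan is to proceed by induction on $|H|$ restricted to the "length" structure that an $\eta$-pattern can witness, but more naturally by induction on a well-founded parameter attached to the \Jrel-class $J$, namely the number of \Jrel-classes lying strictly below $J$. The base case is when no letter $a \in A$ satisfies $J \Jord \eta(a)$ with $\eta(a) \notin J$; combined with $\eta$-alphabeticity this forces a rigid structure on words mapped into $J$. First I would handle the trivial sub-case: if $H = \emptyset$ we take $P = \emptyset$, and if some word $w \in H$ has $\eta(w) \in J$ but admits no nontrivial factorization through an idempotent of $J$, the length-$0$ pattern $(w)$ covers $\{w\}$ and is $\alpha$-guarded automatically (the guarding conditions are vacuous for $n = 0$). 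So the real content is extracting, for each $w \in H$, a factorization $w = w_1 u_1 \cdots w_n u_n w_{n+1}$ where each $u_i \in \eta\inv(e_i)$ for a suitable idempotent $e_i \Jrel \eta(w)$, and where the factorization can be chosen so that the collection of resulting pattern-languages still covers $H$.

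The key step is the \textbf{factorization lemma}: given $w$ with $\eta(w) \in J$, I would look at how $\eta$ evaluates on prefixes/suffixes of $w$ and locate positions where the value is \Jrel-equivalent to $\eta(w)$ and idempotent. The standard tool is that in a finite monoid, for any $s$ there is $\omega$ with $s^\omega$ idempotent, together with Lemma~\ref{lem:jlr}: if $t \Jord s$ and $s \Rord t$ then $s \Rrel t$. Concretely, write $w = x_1 \cdots x_m$ as a product of letters; track the prefix images $\eta(x_1 \cdots x_j)$. Either all these stay in $J$ — in which case, by pigeonhole, some repeated value yields an idempotent factor $u$ with $\eta(u) \Jrel \eta(w)$ and we can split $w$ around it and apply induction on the shorter pieces (their $\eta$-images are no longer forced to be in $J$, reducing the parameter) — or some prefix drops below $J$, and the first letter $a$ causing the drop has $J \Jords \eta(a)$... but wait, $\eta$-alphabeticity says precisely that if $J \Jord \eta(a)$ then $\eta(a) \in J$, so in fact the prefix image can only drop by going \emph{strictly} below $J$ via a letter whose own image is already strictly below $J$, which is impossible once we are inside a word whose total image is in $J$ unless the whole word's image is $< J$. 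I would use this to argue that every factor of $w$ between the "entry" and "exit" of the $J$-region has image in $J$, pinning down where the idempotents $e_i$ can be placed and establishing Conditions~\eqref{itm:pat1}–\eqref{itm:pat3} (the condition $\eta(w_i)e_i = \eta(w_i)$ comes from $e_i$ being a left-stabilizer, derived via $\eta(w_i) \Rrel \eta(w_i)e_i$ using Lemma~\ref{lem:jlr} and idempotency).

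For the $\alpha$-guarding, after fixing the $\eta$-pattern combinatorially I would pass to $\alpha$: for each $i$, the factor $u_i$ lies in $\eta\inv(e_i)$ and in $\alpha\inv(\alpha(u_i))$; replacing $u_i$ by $u_i^{\omega(M)}$ — which is still in $\eta\inv(e_i)$ since $e_i$ is idempotent, hence $\eta(u_i^{\omega(M)}) = e_i^{\omega(M)} = e_i$ — makes $g_i := \alpha(u_i^{\omega(M)})$ idempotent and satisfies $\alpha(w_i)g_i = \alpha(w_i)$, $g_i\alpha(w_{i+1}) = \alpha(w_{i+1})$ after we further absorb $\omega(M)$-th powers of adjacent segments into the $w_i$'s, or more cleanly by noting $\eta\inv(e_i) \cap \alpha\inv(g_i) \ni u_i^{\omega(M)}$ directly gives the nonemptiness clause, and the stabilization identities follow by choosing the representatives so that $\alpha(w_i)$ is already right-stabilized by $g_i$ (absorb an extra $g_i$ on the appropriate side — this is where one needs $w_i$ nonempty and may need to pad, but since each surrounding block has image \Jrel-equal to $\eta(w)$ one gets the stabilization from Green's-relations bookkeeping analogous to the $\eta$ case). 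The cover property is inherited: each $w \in H$ lies in $\uclos_\eta \widebar{p}$ for the pattern $\widebar p$ we extracted from it, and $\pi(\widebar p) = w_1 \cdots w_{n+1}$ is either $w$ itself or, after the inductive splitting, a word in $H$ produced by the sub-calls. Taking the union of all finitely many patterns over the finitely many $\eta$-/$\alpha$-classes involved gives the finite set $P$.

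The \textbf{main obstacle} I anticipate is not the existence of \emph{some} factorization — that is routine Green's-relations pumping — but arranging simultaneously that (a) $\pi(\widebar p) \in H$ (so the factorization does not leave $H$), (b) the pattern-languages \emph{cover all of} $H$ and not just the $w$'s we started with, and (c) the induction parameter genuinely decreases, i.e. that the sub-pieces $w_1$, $w_{n+1}$ (and interior sub-factors handed to recursive calls) are mapped by $\eta$ into \Jrel-classes strictly below $J$ or are short enough to be base cases. Reconciling (a) and (b) likely requires choosing the factorization \emph{uniformly} — e.g. always splitting at the first (leftmost) position where a qualifying idempotent of $J$ appears, and symmetrically on the right — so that the induction is applied to a bounded family of residual languages rather than word-by-word, keeping $P$ finite; ensuring the $\eta$-alphabetic hypothesis is exactly what prevents a recursive call from sliding back \emph{up} to $J$ is the delicate point that makes the parameter strictly decrease.
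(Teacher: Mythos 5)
Your pigeonhole/pumping factorization --- using $\eta$-alphabeticity to keep prefix images of $w$ inside $J$ and locating idempotent factors $\mathscr{J}$-equivalent to $\eta(w)$ --- is essentially how the paper proves that \emph{some} $\alpha$-guarded $\eta$-pattern $\bar p$ with $w = \pi(\bar p)$ exists for each $w \in \eta^{-1}(J)$ (this is Lemma~\ref{lem:wformed}, done by induction on $|w|$ and yielding bounded block lengths). But your proposal leaves the real crux of the statement --- the \emph{finiteness} of $P$ --- unaddressed, and the mechanism you gesture at does not close the gap. The language $\uclos_\eta \bar p = w_1\,\eta^{-1}(e_1)\,\cdots\,w_n\,\eta^{-1}(e_n)\,w_{n+1}$ pins down the actual word content of each block $w_i$, so two words with the same $\eta$- and $\alpha$-images generically lie in disjoint pattern-languages; grouping words by ``$\eta$-/$\alpha$-class'' is not a cover-preserving reduction, and a leftmost uniform choice of split only normalizes the pattern attached to each $w$ without bounding the resulting set of patterns (both the number $n$ of blocks and the block words $w_i$ still vary unboundedly as $w$ ranges over $H$). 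A secondary issue: your intended induction on the number of $\mathscr{J}$-classes strictly below $J$ cannot be driven by this proposition, because after a split the blocks $w_i$ need not lie in $\eta^{-1}(J')$ for any $\eta$-alphabetic $\mathscr{J}$-class $J'$, so there is no well-typed recursive call to make.

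The missing idea --- and the actual content of the paper's proof --- is a well-quasi-ordering argument. The paper first restricts attention to patterns whose blocks have length at most $2(1 + |M\times N|^2)$ (still infinitely many, since $n$ is unbounded), defines a partial order $\preceq$ on this set by inserting idempotent-image pattern blocks into a split of a smaller pattern, proves that $\preceq$ is a well-quasi-order via a Nash-Williams minimal-bad-sequence argument, and takes $P$ to be the $\preceq$-minimal patterns $\bar p$ with $\pi(\bar p) \in H$. Finiteness of $P$ is then exactly the WQO property (minimal elements form an antichain), and covering holds because $\bar p \preceq \bar q$ forces the pattern-language of $\bar q$ to sit inside that of $\bar p$. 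Without this WQO machinery, or some genuine substitute for it, your construction produces one pattern per word of $H$ --- infinitely many --- and no principled way to select a finite subfamily that still covers $H$.
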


\begin{remark}
  The proof of Proposition~\ref{prop:core} is not immediate. The core idea is to define an ordering on a suitable set of $\eta$-patterns and to show that this ordering is well. The proof is inspired by the ``minimal bad sequence'' argument of Nash-Williams~\cite{nash-williams63}. The set $P$ is then chosen as a set of minimal patterns for this ordering.
\end{remark}

Before proving Proposition~\ref{prop:core}, we introduce some terminology on $\eta$-patterns. Let $\eta: A^* \to N$ be a morphism and $\widebar{p}$ be an $\eta$-pattern. For $n \in \nat$, we define the \emph{$n$-splits} of $\widebar{p}$. Intuitively, they are tuples $(\widebar{p}_1,e_1,\dots,\widebar{p}_n,e_n,\widebar{p}_{n+1})$ where $e_1,\dots,e_n \in E(N)$ and $\widebar{p}_1,\dots,\widebar{p}_n$ are $\eta$-patterns, which are such that ``flattening'' $(\widebar{p}_1,e_1,\dots,\widebar{p}_n,e_n,\widebar{p}_{n+1})$ yields $\widebar{p}$. Let us formally define $n$-splits, by induction on $n$. If $n = 0$, $(\widebar{p})$ is the only $0$-split of $\widebar{p}$. If $n = 1$, we say that $(\widebar{p}_1,e_1,\widebar{p}_2)$ is a $1$-split of $\widebar{p}$ if $\widebar{p}_1 = (u_1,f_1,\dots,u_k,f_k,u_{k+1})$, $\widebar{p}_2= (v_1,g_1,\dots,v_\ell,g_\ell,v_{\ell+1})$ and,
\[
  \widebar{p} = (u_1,f_1,\dots,u_k,f_k,u_{k+1},e_1,v_1,g_1,\dots,v_\ell,g_\ell,v_{\ell+1}).
\]
Finally, when $n \geq 2$, we say that $(\widebar{p}_1,e_1,\dots,\widebar{p}_n,e_n,\widebar{p}_{n+1})$ is an $n$-split of $\widebar{p}$ when there exists an $\eta$-pattern $\widebar{q}$ such that $(\widebar{q},e_n,\widebar{p}_{n+1})$ is a $1$-split of $\widebar{p}$ and $(\widebar{p}_1,e_1,\dots,\widebar{p}_{n-1},e_{n-1},\widebar{p}_n)$ is an $(n-1)$-split of $\widebar{q}$.

Finally, a \emph{split} of $\widebar{p}$ is an $n$-split of $\widebar{p}$ for some $n \in \nat$. Abusing terminology, we often write $\widebar{p} = (\widebar{p}_1,e_1,\dots,\widebar{p}_n,e_n,\widebar{p}_{n+1})$ to indicate that $(\widebar{p}_1,e_1,\dots,\widebar{p}_n,e_n,\widebar{p}_{n+1})$ is a split of $\widebar{p}$. We may now recall the statement of Proposition~\ref{prop:core}.

When $\widebar{p}$ is an $\eta$-pattern, we abuse notation and write $\eta(\widebar{p})$ instead of $\eta(\pi(\widebar{p}))$. Clearly, when  $(\widebar{p}_1,e_1,\dots,\widebar{p}_n,e_n,\widebar{p}_{n+1})$ is a split of $\widebar{p}$, we have $\eta(\widebar{p})=\eta(\widebar{p}_1)\cdots\eta(\widebar{p}_n)\eta(\widebar{p}_{n+1})$.

\begin{proof}
  We work with a slightly different property than the one of the statement. Observe that given an $\eta$-pattern $\widebar{p}$, it may happen that $\pi(\widebar{p}) \not\in \uclos_{\eta} \widebar{p}$. We associate with $\widebar{p}$ an alternate language $\uclos_* \widebar{p}$ containing $\pi(\widebar{p})$. It will be easier to manipulate in the proof. Let $\widebar{p} = (w_1,e_1,\dots,w_n,e_n,w_{n+1})$ be an $\eta$-pattern. We write $\uclos_* \widebar{p} \subseteq A^*$ for the following language:
  \[
    \uclos_* \widebar{p} = w_1 (\eta\inv(e_1) \cup \{\veps\})\ \cdots\ w_n(\eta\inv(e_n) \cup \{\veps\})w_{n+1}.
  \]
  This definition ensures that $\pi(\widebar{p}) \in \uclos_* \widebar{p}$ for every $\eta$-pattern~$\widebar{p}$. We connect this notion to the original languages $\uclos_{\eta} \widebar{p}$ in the following fact.

  \begin{fact} \label{fct:concpat}
    Let $\widebar{p}$ be an $\alpha$-guarded $\eta$-pattern. There exists finitely many $\alpha$-guarded $\eta$-patterns $\widebar{p}_1,\dots,\widebar{p}_k$ such that $\pi(\widebar{p}_i) = \pi(\widebar{p})$ for every $i \leq k$ and $\uclos_* \widebar{p} = \bigcup_{0 \leq i \leq k} \uclos_{\eta} \widebar{p}_i$.
  \end{fact}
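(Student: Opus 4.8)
The plan is to write $\uclos_* \widebar{p}$ as a finite union of languages of the form $\uclos_{\eta} \widebar{q}$, indexed by the subsets of $\{1,\dots,n\}$. Write $\widebar{p} = (w_1, e_1, \dots, w_n, e_n, w_{n+1})$. Distributing the union in
\[
  \uclos_* \widebar{p} = w_1(\eta\inv(e_1) \cup \{\veps\})\ \cdots\ w_n(\eta\inv(e_n) \cup \{\veps\})\, w_{n+1}
\]
over the $n$ slots, each term is obtained by choosing, in slot $i$, either $\eta\inv(e_i)$ or $\{\veps\}$. Such a choice is recorded by the set $S = \{i_1 < \cdots < i_s\} \subseteq \{1,\dots,n\}$ of slots where $\eta\inv(e_i)$ is kept. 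For fixed $S$, the other slots merge consecutive $w_j$'s; so with the convention $i_0 = 0$ and $i_{s+1} = n+1$, I set $w'_t = w_{i_{t-1}+1} \cdots w_{i_t}$ for $1 \leq t \leq s+1$ and $e'_t = e_{i_t}$ for $1 \leq t \leq s$, and I let $\widebar{p}_S = (w'_1, e'_1, \dots, w'_s, e'_s, w'_{s+1})$. By construction, the $S$-term of the union above is exactly $\uclos_{\eta} \widebar{p}_S$ (when $S = \emptyset$, it is $\{w_1 \cdots w_{n+1}\} = \uclos_{\eta} \widebar{p}_S$ for the length-$0$ pattern $\widebar{p}_S = (w_1 \cdots w_{n+1})$). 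Hence $\uclos_* \widebar{p} = \bigcup_{S \subseteq \{1,\dots,n\}} \uclos_{\eta} \widebar{p}_S$, a union of at most $2^n$ languages, and it remains to check that each $\widebar{p}_S$ is an $\alpha$-guarded $\eta$-pattern with $\pi(\widebar{p}_S) = \pi(\widebar{p})$.

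The equality $\pi(\widebar{p}_S) = w'_1 \cdots w'_{s+1} = w_1 \cdots w_{n+1} = \pi(\widebar{p})$ is immediate, so in particular $\eta(\widebar{p}_S) = \eta(\widebar{p})$. To see that $\widebar{p}_S$ is an $\eta$-pattern I verify the three defining conditions, each of which reduces to the corresponding condition for $\widebar{p}$. For condition~\eqref{itm:pat1}: if $s \geq 1$ then $n \geq 1$, so all $w_j$ are nonempty, whence each $w'_t$ (a nonempty concatenation of $w_j$'s) is nonempty. For condition~\eqref{itm:pat2}: $e'_t = e_{i_t}$ with $i_t \in \{1,\dots,n\}$, so $e'_t \Jrel \eta(\widebar{p}) = \eta(\widebar{p}_S)$. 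For condition~\eqref{itm:pat3}: the key observation is that $w'_t$ \emph{ends} with $w_{i_t}$ and $w'_{t+1}$ \emph{begins} with $w_{i_t+1}$; hence $\eta(w'_t)e'_t = \eta(w'_t)$ follows from $\eta(w_{i_t})e_{i_t} = \eta(w_{i_t})$, and $e'_t \eta(w'_{t+1}) = \eta(w'_{t+1})$ follows from $e_{i_t}\eta(w_{i_t+1}) = \eta(w_{i_t+1})$.

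For $\alpha$-guardedness I reuse the witnesses coming from $\widebar{p}$: since $\widebar{p}$ is $\alpha$-guarded, for each $i_t \in S$ there is $g_{i_t} \in E(M)$ with $\eta\inv(e_{i_t}) \cap \alpha\inv(g_{i_t}) \neq \emptyset$, $\alpha(w_{i_t})g_{i_t} = \alpha(w_{i_t})$ and $g_{i_t}\alpha(w_{i_t+1}) = \alpha(w_{i_t+1})$. I take $g'_t = g_{i_t}$. As $\eta\inv(e'_t) = \eta\inv(e_{i_t})$, the nonempty-intersection requirement is inherited; and using once more that $w'_t$ ends with $w_{i_t}$ and $w'_{t+1}$ begins with $w_{i_t+1}$, the identities $\alpha(w'_t)g'_t = \alpha(w'_t)$ and $g'_t\alpha(w'_{t+1}) = \alpha(w'_{t+1})$ follow. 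Thus every $\widebar{p}_S$ is an $\alpha$-guarded $\eta$-pattern with $\pi(\widebar{p}_S) = \pi(\widebar{p})$, and the finite family $\{\widebar{p}_S \mid S \subseteq \{1,\dots,n\}\}$ witnesses the fact.

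I do not expect a real obstacle here: the statement is combinatorial and every verification is a one-line consequence of the hypotheses on $\widebar{p}$. The only point requiring attention is the bookkeeping of the block decomposition induced by $S$ — namely, correctly identifying which original block $w_j$ lies at the right end of $w'_t$ and at the left end of $w'_{t+1}$ — since condition~\eqref{itm:pat3} and $\alpha$-guardedness are both ``boundary'' conditions that depend on this identification.
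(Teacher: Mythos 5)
Your proof is correct and follows essentially the same approach as the paper's: you distribute the union slot by slot, index the resulting terms by subsets $S \subseteq \{1,\dots,n\}$, and regroup the $w_j$'s into blocks $w'_t = w_{i_{t-1}+1}\cdots w_{i_t}$ to form $\widebar{p}_S$ — exactly the paper's construction of $\widebar{p}_I$. The only difference is that you spell out the verifications that each $\widebar{p}_S$ is an $\alpha$-guarded $\eta$-pattern, which the paper dismisses as "easy to check."
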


  \begin{proof}
    We write $\widebar{p} = (w_1,e_1,\dots,w_n,e_n,w_{n+1})$. If $n = 0$, then $\uclos_* \widebar{p} = \uclos_{\eta} \widebar{p} = \{w_1\}$. Hence, it suffices to chose $k = 1$ and $\widebar{p}_1 = \widebar{p}$. Assume now that $n \geq 1$. Let $I \subseteq \{1,\dots,n\}$. We associate an $\eta$-pattern $\widebar{p}_I$ that we build from $\widebar{p}$. Let $I = \{i_1,\dots,i_h\}$ with $i_1 < \cdots < i_h$. Moreover, let $i_0 = 0$ and $i_{h+1} = n+1$. For every $j$ such that $1\leq j\leq h+1$, we let $v_j = w_{i_{j-1}+1} \cdots w_{i_j}$. Finally, we define $\widebar{p}_I = (v_1,e_{i_1}, \dots,v_h,e_{i_h},v_{h+1})$. It is easy to check that $\pi(\widebar{p}_I)$ is an $\alpha$-guarded $\eta$-pattern such that $\pi(\widebar{p}_I) = \pi(\widebar{p})$. Moreover, since concatenation distributes over union, we obtain $\uclos_* \widebar{p} = \bigcup_{I \subseteq \{1,\dots,n\}} \uclos_{\eta} \widebar{p}_I$.
  \end{proof}

  We construct a finite set $P$ of $\alpha$-guarded $\eta$-patterns $\widebar{p}$ such that $\pi(\widebar{p}) \in H$ and $\{\uclos_* \widebar{p} \mid \widebar{p} \in P\}$ is a cover of $H$. It will then be straightforward to use Fact~\ref{fct:concpat} to construct the set described in Proposition~\ref{prop:core}, which will complete the proof.

  \smallskip

  Let us focus on the construction of~$P$. First, we identify a special \emph{infinite} set $Q$ of $\eta$-patterns. We shall then define the desired set $P$ as a subset of $Q$.  Let $k = 1+|M \times N|^2$. Then, $Q$ is the set of all $\alpha$-guarded $\eta$-patterns $(w_1,e_1,\dots,w_n,e_n,w_{n+1})$ such that $|w_i|\leq 2k$ for every $i\leq n+1$. We complete the definition with a key lemma. It implies that the $\eta$-patterns in $Q$ suffice to cover a language contained in $\eta\inv(J)$, such as $H$.

  \begin{lemma} \label{lem:wformed}
    Let $w \in \eta\inv(J)$. There is $\widebar{p} \in Q$ such that $w = \pi(\widebar{p})$.
  \end{lemma}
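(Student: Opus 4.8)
The plan is to argue by induction on $|w|$, after recording two structural facts and writing $R := (\alpha,\eta)(A^*)$ for the submonoid of $M\times N$ generated by the images of the letters. \textbf{(A)} Every nonempty factor $u$ of $w$ satisfies $\eta(u)\in J$: for each letter $a$ occurring in $w$ we have $\eta(w)=x\,\eta(a)\,y$ for suitable $x,y\in N$, hence $J\Jord\eta(a)$, and since $J$ is $\eta$-alphabetic this gives $\eta(a)\in J$; for a nonempty factor $u=a_i\cdots a_j$ we then get $\eta(u)\Jord\eta(a_i)\in J$ and $\eta(w)\Jord\eta(u)$, so $\eta(u)\Jrel\eta(w)$, i.e.\ $\eta(u)\in J$. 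In particular every factor of $w$ has its $(\alpha,\eta)$-image in $R$. \textbf{(B)} If $x$ is a nonempty factor of $w$ and $v$ a nonempty factor with $(\alpha,\eta)(xv)=(\alpha,\eta)(x)$, then $\mu:=(\alpha,\eta)(v)^{\omega}$ is an idempotent of $R$ whose $N$-component $e:=\eta(v)^{\omega}$ lies in $J$ and whose $M$-component $g:=\alpha(v)^{\omega}$ lies in $E(M)$: indeed $\eta(x)\eta(v)=\eta(x)$ iterates to $\eta(x)\eta(v)^{\omega}=\eta(x)$, so $\eta(x)\Jord\eta(v)^{\omega}\Jord\eta(v)\Jord\eta(x)$ (the last relation by (A)), whence $\eta(v)^{\omega}\Jrel\eta(w)\in J$; and $\mu\in R$ means exactly $\eta\inv(e)\cap\alpha\inv(g)\neq\emptyset$. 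So any such $\mu$ can serve as a pattern idempotent $e_i$ together with its guard $g_i$.

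The base case is immediate: if $|w|\le 2k$ then $\widebar{p}=(w)$ is a length-$0$ $\eta$-pattern, vacuously $\alpha$-guarded, lies in $Q$, and $\pi(\widebar{p})=w$. For the inductive step ($|w|>2k$, say $w=a_1\cdots a_m$) I would prove the slightly stronger statement: for every suffix $u$ of $w$ \emph{and} every idempotent $\mu=(g,e)\in R$ with $e\in J$ and $\mu\,(\alpha,\eta)(u)=(\alpha,\eta)(u)$ --- as well as for $u=w$ with no $\mu$ --- there is a factorization $u=w_1\cdots w_{n+1}$ with all $|w_i|\le 2k$ (all $w_i$ nonempty when $n\ge1$) and idempotents $(g_1,e_1),\dots,(g_n,e_n)\in R$ with each $e_i\in J$ such that $(w_1,e_1,\dots,w_{n+1})$ is an $\alpha$-guarded $\eta$-pattern and, when $\mu$ is given, $\mu\,(\alpha,\eta)(w_1)=(\alpha,\eta)(w_1)$. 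The recursive cut is found by pigeonhole: the map $j\mapsto\bigl((\alpha,\eta)(a_1\cdots a_j),(\alpha,\eta)(a_{j+1}\cdots a_{|u|})\bigr)$ from positions of $u$ to $(M\times N)^2$ has image of size $|M\times N|^2=k-1$, so on the positions $1,\dots,k$ it takes equal values at some $c<q$. Putting $w_1:=a_1\cdots a_c$ (nonempty, of length $\le k\le 2k$), $v:=a_{c+1}\cdots a_q$ and $u':=a_{c+1}\cdots a_{|u|}$, the coincidence gives $(\alpha,\eta)(w_1v)=(\alpha,\eta)(w_1)$ and $(\alpha,\eta)(v)(\alpha,\eta)(u')=(\alpha,\eta)(u')$, so by (B) the idempotent $\mu_1:=(\alpha,\eta)(v)^{\omega}\in R$ has $e$-component in $J$, right-stabilizes $(\alpha,\eta)(w_1)$, and left-stabilizes $(\alpha,\eta)(u')$. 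Recursing on $(u',\mu_1)$ and prepending $w_1$ then yields the pattern, the equations of $\mu_1$ matching the $\eta$-pattern conditions and $\alpha$-guardedness at the junction between $w_1$ and the first factor of the pattern obtained for $u'$.

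The main obstacle --- and the reason the constant is $k=1+|M\times N|^2$ rather than something smaller --- is the final clause ``$\mu\,(\alpha,\eta)(w_1)=(\alpha,\eta)(w_1)$'' of the induction hypothesis. The pigeonhole above only delivers stabilizations of the \emph{global} prefix and suffix $(\alpha,\eta)$-values along $u$, whereas an $\eta$-pattern needs the incoming idempotent $\mu$ to left-stabilize the \emph{local} first factor $w_1$ (and, symmetrically, each $(g_i,e_i)$ to right-stabilize $w_i$ and left-stabilize $w_{i+1}$); a short prefix of $u$ need not have the $\Lrel$-class forced by $\mu$. Closing this gap --- by choosing the induction hypothesis and the short first factor carefully enough (running the pigeonhole not on $u$ alone but ``through'' a word $z$ with $(\alpha,\eta)(z)=\mu$, so that one context repetition simultaneously produces a compatible length-$\le 2k$ prefix $w_1$ and the loop $v$ giving the next idempotent $(\alpha,\eta)(v)^{\omega}$ with the correct local stabilizations) --- is the delicate part, and it is here that both the square and the factor $2$ in $2k$ are used. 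Granting this coupled invariant, the remaining verifications --- that $e_i\Jrel\eta(w)$ (from (A) and (B)), that $(g_i,e_i)\in R$ so that guardedness holds, and that the resulting factorization together with its idempotents is literally a tuple $\widebar{p}\in Q$ with $\pi(\widebar{p})=w$ --- are routine.
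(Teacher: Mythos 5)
There is a genuine gap here, and you have actually named it yourself without resolving it. Your induction hypothesis carries an incoming idempotent $\mu$ that left-stabilizes $(\alpha,\eta)(u)$ and requires the produced factorization to satisfy $\mu\,(\alpha,\eta)(w_1)=(\alpha,\eta)(w_1)$. But in your inductive step, the short prefix $w_1=a_1\cdots a_c$ is selected by a pigeonhole on $u$ alone, and stabilization of the \emph{whole} $u$ by $\mu$ does not transfer to stabilization of a short prefix: if $(\alpha,\eta)(a_1)$ lies strictly $\Lrel$-above $(\alpha,\eta)(u)$, you can have $\mu\,(\alpha,\eta)(u)=(\alpha,\eta)(u)$ while $\mu\,(\alpha,\eta)(a_1)\neq(\alpha,\eta)(a_1)$. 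The suggested fix of ``running the pigeonhole through a word $z$ with $(\alpha,\eta)(z)=\mu$'' does not produce the needed stabilization either: comparing prefixes of $zu$ only yields identities of the form $\mu\,(\alpha,\eta)(w_1)\,\mu_1 = \mu\,(\alpha,\eta)(w_1)$, whereas the $\eta$-pattern condition requires $\eta(w_1)e_1=\eta(w_1)$ \emph{without} the $\mu$ on the left. So the coupled invariant you propose is not established by the inductive step, and this is the crux, not a routine verification.

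The paper sidesteps this entirely by growing the pattern at the \emph{other} end. It strengthens the claim to say the \emph{last} piece $w_{n+1}$ has length at most $k$ (the set $R\subseteq Q$), writes $w=w'a_1\cdots a_k$, pigeonholes the last $k$ letters to obtain $w=w'xy$ with $|x|,|y|\leq k$ and an idempotent $(g,e)=\gamma(a_{i+1}\cdots a_j)^\omega$ satisfying $\gamma(x)(g,e)=\gamma(x)$ and $(g,e)\gamma(y)=\gamma(y)$, and then inductively obtains a pattern $(w_1,e_1,\dots,w_{n+1})$ for the shorter prefix $w'$. The new pattern $(w_1,e_1,\dots,w_{n+1}x,e,y)$ works because the freshly found idempotent $e$ sits between the two pieces $w_{n+1}x$ and $y$ that the pigeonhole itself produced, so it stabilizes both by construction; no incoming idempotent has to be threaded through the induction. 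If you want a suffix-to-prefix variant in the spirit of your sketch, the clean way is to mirror the paper's invariant (require $|w_1|\leq k$), pigeonhole the \emph{first} $k$ letters $w=a_1\cdots a_kw'$, and prepend $(x,e,yw'_1,\dots)$ to the pattern for $w'$; this also avoids carrying $\mu$ and closes the gap.
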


  \begin{proof}
    We prove will a stronger result. We define $R$ as the set of all $\eta$-patterns of the form $\widebar{p}=(w_1,e_1,\dots,w_n,e_n,w_{n+1}) \in Q$ that additionally satisfy $|w_{n+1}| \leq k$. Clearly, $R\subseteq Q$. We prove that for every $w\in\eta\inv(J)$, there exists $\widebar{p} \in R$ such that $w=\pi(\widebar{p})$. In the proof, we consider the monoid $M\times N$ equipped with the componentwise multiplication. We write $\gamma: A^*\to M\times N$ for the morphism defined by $\gamma(u) = (\alpha(u),\eta(u))$ for all $u \in A^*$.

    We proceed by induction on the length of $w$. If $|w| \leq k$, it suffices to choose $\widebar{p} = (w) \in R$. Assume now that $|w| > k$. We get $a_1,\dots,a_k \in A$ and $w' \in A^+$ such that $w = w'a_1 \cdots a_k$. Since $k > |M \times N|^2$, the pigeonhole principle yields $i,j$ such that $1 \leq i<j\leq k$, we have $\gamma(a_1\cdots a_i) =  \gamma(a_1\cdots a_j)$ and $\gamma(a_{i+1}\cdots a_k)=\gamma(a_{j+1}\cdots a_k)$. We now define $x=a_1 \cdots a_i$ and $y = a_{i+1} \cdots a_k$. By definition $|x|\leq k$ and $|y| \leq k$. Moreover, we have $w = w'xy$. Consider $(g,e) = (\gamma(a_{i+1} \cdots a_j))^\omega\in E(M \times N)$. By definition of~$i$~and~$j$, we have $\gamma(x)(g,e) = \gamma(x)$  and $(g,e)\gamma(y)=\gamma(y)$. Clearly, $w' \in \eta\inv(J)$ since $w'$ is a nonempty prefix of $w\in\eta\inv(J)$ and we know that $J$ is $\eta$-alphabetic. Since $|w'| < |w|$, induction yields an $\alpha$-guarded $\eta$-pattern $\widebar{p}' = (w_1,e_1,\dots,w_n,e_n,w_{n+1}) \in R$ such that $w' = \pi(\widebar{p}')$. Let $\widebar{p} =  (w_1,e_1,\dots,w_n,e_n,w_{n+1}x,e,y)$. Now, the tuple $\widebar{p}$ is an $\alpha$-guarded $\eta$-pattern. Indeed, Condition~\eqref{itm:pat1} holds by definition. Condition~\eqref{itm:pat2} holds because $w \in \eta\inv(J)$ and $J$ is $\eta$-alphabetic. Condition~\eqref{itm:pat3} and the fact that $\widebar{p}$ is $\alpha$-guarded hold by definition of the idempotent $(g,e) \in E(M \times N)$. Moreover, since $w = w'xy$ and $w' = \pi(\widebar{p}')$, it is immediate that $w = \pi(\widebar{p})$. Finally, $|w_i| \leq 2k$ for every $i \leq n$ and $|w_{n+1}| \leq k$ since $\widebar{p}'\in R$. Since $|x| \leq k$, we deduce that $|w_{n+1}x| \leq 2k$. Finally, we have $|y|\leq k$ by construction. Hence, $\widebar{p}\in R$, by definition of~$R$.
  \end{proof}

  We now define an \emph{ordering} on $Q$. Its definition is based on splits, introduced above. For $\widebar{p},\widebar{q} \in Q$, we write $\widebar{p} \preceq \widebar{q}$ when there exist $\ell \in \nat$, $\widebar{p}_1,\dots,\widebar{p}_{\ell+1} \in Q$,  $\widebar{q}_{1},\dots,\widebar{q}_{\ell} \in Q$ and $f_1,\dots,f_\ell \in E(N)$ satisfying the three following conditions:
  \begin{enumerate}
    \item for $1 \leq i \leq \ell$, we have $\eta(\widebar{q}_{i}) = f_i$,
    \item $\widebar{p} = (\widebar{p}_1,f_1,\ \dots\ ,\widebar{p}_{\ell},f_\ell,\widebar{p}_{\ell+1})$,
    \item $\widebar{q} = (\widebar{p}_1, f_1,\widebar{q}_{1},f_1,\ \dots\ ,\widebar{p}_{\ell},f_{\ell},\widebar{q}_{\ell},f_{\ell},\widebar{p}_{\ell+1})$.
  \end{enumerate}
  When $\ell = 0$, the definition simply states that $\widebar{p} = \widebar{p}_1 = \widebar{q}$. It can be verified that $\preceq$ is a partial order.

  \begin{fact} \label{fct:preord}
    Let $\widebar{p},\widebar{q} \in Q$ such that $\widebar{p} \preceq \widebar{q}$. Then, $\uclos_* \widebar{q} \subseteq \uclos_* \widebar{p}$.
  \end{fact}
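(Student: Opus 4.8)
The plan is to reduce Fact~\ref{fct:preord} to two elementary observations about the auxiliary languages $\uclos_* \widebar{r}$ and then to combine them via monotonicity of language concatenation.

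First, I would record how $\uclos_*$ behaves under splits: if $(\widebar{r}_1,e_1,\dots,\widebar{r}_n,e_n,\widebar{r}_{n+1})$ is an $n$-split of an $\eta$-pattern $\widebar{r}$, then
\[
\uclos_* \widebar{r} \;=\; \uclos_* \widebar{r}_1 \cdot (\eta\inv(e_1) \cup \{\veps\}) \cdot \uclos_* \widebar{r}_2 \cdots (\eta\inv(e_n) \cup \{\veps\}) \cdot \uclos_* \widebar{r}_{n+1}.
\]
For $n=1$ this is immediate from the definition of $\uclos_*$: flattening $\widebar{r}_1$ and $\widebar{r}_2$ and inserting the factor $\eta\inv(e_1)\cup\{\veps\}$ between them is exactly the defining product of $\uclos_*$ of the flattened pattern. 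The general case then follows by a straightforward induction on $n$ using the recursive definition of $n$-splits.

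Second, I would check that $\uclos_* \widebar{s} \subseteq \eta\inv(\eta(\widebar{s}))$ for every $\eta$-pattern $\widebar{s}$. Writing $\widebar{s}=(v_1,g_1,\dots,v_m,g_m,v_{m+1})$, a word of $\uclos_* \widebar{s}$ has the form $v_1x_1\cdots v_mx_mv_{m+1}$ with $x_j\in\eta\inv(g_j)\cup\{\veps\}$, so $\eta(x_j)\in\{g_j,1_N\}$; Condition~\eqref{itm:pat3} then yields $\eta(v_j)\eta(x_j)=\eta(v_j)$ in both cases, so the $\eta$-image of the whole word collapses to $\eta(v_1\cdots v_{m+1})=\eta(\widebar{s})$.

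Finally, I would unfold the hypothesis $\widebar{p}\preceq\widebar{q}$. With the notation from the definition of $\preceq$, both $\widebar{p}=(\widebar{p}_1,f_1,\dots,\widebar{p}_\ell,f_\ell,\widebar{p}_{\ell+1})$ and $\widebar{q}=(\widebar{p}_1,f_1,\widebar{q}_1,f_1,\dots,\widebar{p}_\ell,f_\ell,\widebar{q}_\ell,f_\ell,\widebar{p}_{\ell+1})$ are splits, with $\eta(\widebar{q}_i)=f_i$ for all $i$. Applying the first observation to each of them, $\uclos_* \widebar{p}$ and $\uclos_* \widebar{q}$ are products with the same factors $\uclos_* \widebar{p}_1,\dots,\uclos_* \widebar{p}_{\ell+1}$, while every connector $\eta\inv(f_i)\cup\{\veps\}$ of $\uclos_* \widebar{p}$ is replaced in $\uclos_* \widebar{q}$ by $(\eta\inv(f_i)\cup\{\veps\})\cdot\uclos_* \widebar{q}_i\cdot(\eta\inv(f_i)\cup\{\veps\})$. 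By the second observation $\uclos_* \widebar{q}_i\subseteq\eta\inv(f_i)$, so it suffices to note that $(\eta\inv(f_i)\cup\{\veps\})\cdot\eta\inv(f_i)\cdot(\eta\inv(f_i)\cup\{\veps\})\subseteq\eta\inv(f_i)\subseteq\eta\inv(f_i)\cup\{\veps\}$, which holds because $f_i$ is idempotent; hence each new connector is contained in the corresponding old one. Monotonicity of language concatenation then gives $\uclos_* \widebar{q}\subseteq\uclos_* \widebar{p}$. I do not anticipate a genuine obstacle here: the only slightly delicate points are the index bookkeeping in the split notation and the degenerate case $\ell=0$, where $\widebar{p}=\widebar{p}_1=\widebar{q}$ and the inclusion is an equality.
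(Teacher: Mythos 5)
Your proof is correct and follows essentially the same route as the paper's: decompose $\uclos_*$ along the split, observe that every word in $\uclos_* \widebar{q}_i$ has $\eta$-image $\eta(\widebar{q}_i)=f_i$, and use idempotency of $f_i$ to absorb the surrounding factors $\eta\inv(f_i)\cup\{\veps\}$. The one place where you are more careful is in explicitly justifying (via Condition~\eqref{itm:pat3}) that $\uclos_* \widebar{s}\subseteq\eta\inv(\eta(\widebar{s}))$, a step the paper compresses into the phrase ``we have $\eta(v_i)=\eta(\widebar{q}_i)=f_i$ by hypothesis''; filling this in is a genuine, if small, improvement in rigour.
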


  \begin{proof}
    Let $\ell \in \nat$, $\widebar{p}_1,\dots,\widebar{p}_{\ell+1} \in Q$,  $\widebar{q}_{1},\dots,\widebar{q}_{\ell} \in Q$ and $f_1,\dots,f_\ell$ be idempotents of $N$, as in the definition of $\widebar{p} \preceq \widebar{q}$. Now consider $w \in \uclos_* \widebar{q}$. By definition, this yields $u_i \in  \uclos_* \widebar{p}_i$ for  $i \leq \ell+1$, $v_i \in  \uclos_* \widebar{q}_{i}$ for $i \leq \ell$ and $x_i,y_i \in \eta\inv(f_i) \cup \{\veps\}$ for $i \leq \ell$ such that,
    \[
      w = u_1x_1v_1y_1 \cdots u_\ell x_\ell v_\ell y_\ell u_{\ell+1}.
    \]
    Since $v_i \in  \uclos_* \widebar{q}_{i}$ for all $i \leq \ell$, we have $\eta(v_i) = \eta(\widebar{q}_{i}) = f_i$ by hypothesis. It follows that $\eta(x_iv_iy_i) = f_i$. Therefore, we have $x_iv_iy_i \in \eta\inv(f_i)$. This implies that $w \in \uclos_* \widebar{p}$, since we also have $u_i \in  \uclos_* \widebar{p}_i$ for all $i \leq \ell+1$ and $\widebar{p} = (\widebar{p}_1,f_1,\ \dots\ , \widebar{p}_{\ell},f_\ell,\widebar{p}_{\ell+1})$.
  \end{proof}

  The proof of Proposition~\ref{prop:core} is based on the following lemma. It states that the ordering $\preceq$ on $Q$ is well.

  \begin{lemma}\label{lem:polcov:wellorder}
    Let $(\widebar{p}_i)_{i \in \nat}$ be an infinite sequence of $\eta$-patterns in $Q$. There exist $i,j \in \nat$ such that $i<j$ and $\widebar{p}_i \preceq \widebar{p}_j$.
  \end{lemma}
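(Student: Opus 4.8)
The plan is to prove Lemma~\ref{lem:polcov:wellorder} by a \emph{minimal bad sequence} argument in the style of Nash-Williams. Recall that a sequence $(\widebar{p}_i)_{i\in\nat}$ in $Q$ is \emph{bad} if there are no $i<j$ with $\widebar{p}_i \preceq \widebar{p}_j$; we must show no bad sequence exists. Since the set of ``blocks'' that can appear in an $\eta$-pattern of $Q$ is \emph{finite} (there are finitely many words of length $\leq 2k$ and finitely many idempotents of $N$), the natural first step is to set up a well-founded measure on $\eta$-patterns — for instance, the length $n$ of the pattern, or more precisely the sequence of its blocks — so that one can speak of a bad sequence that is ``lexicographically minimal''. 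Concretely: assuming a bad sequence exists, build one $(\widebar{p}_i)$ in which $\widebar{p}_0$ has minimal length among all first terms of bad sequences, then $\widebar{p}_1$ has minimal length among all second terms of bad sequences starting with that $\widebar{p}_0$, and so on.

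The second step is the standard extraction. Each $\widebar{p}_i$ in the minimal bad sequence must have length $\geq 1$ (a length-$0$ pattern $(w)$ with $|w|\leq 2k$ ranges over a finite set, so infinitely many would coincide and give $\widebar{p}_i\preceq\widebar{p}_j$ via $\ell=0$). Write each $\widebar{p}_i$ as a $1$-split $\widebar{p}_i = (\widebar{p}_i', f_i, \widebar{r}_i)$ isolating, say, its last idempotent $f_i$ and the trailing pattern $\widebar{r}_i$ (a single block $(w_{n_i+1})$), so that $\widebar{p}_i'$ is a strictly shorter $\eta$-pattern — but note $\widebar{p}_i'$ need not itself lie in $Q$ since its last word might be too long; this is a wrinkle to handle, perhaps by isolating a \emph{leading} block instead, or by choosing the split point so the remaining pattern stays in $Q$. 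Since there are only finitely many choices for $(f_i,\widebar{r}_i)$, pass to an infinite subsequence on which these are constant, equal to $(f,\widebar{r})$. The remaining sequence $(\widebar{p}_i')$ of shorter patterns cannot be bad by minimality, so there are $i<j$ with $\widebar{p}_i' \preceq \widebar{p}_j'$; one then has to check that appending the common tail $(f,\widebar{r})$ back on preserves the relation, i.e. that $\widebar{p}_i \preceq \widebar{p}_j$, contradicting badness. This last verification is essentially unwinding the three defining conditions of $\preceq$ and the definition of split, using that $\eta(\widebar{q}_m)=f_m$ is preserved.

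The main obstacle I anticipate is \textbf{making the induction/minimality argument well-founded while staying inside $Q$}. The relation $\preceq$ inserts whole sub-patterns $\widebar{q}_i$ (of bounded block-length but unbounded \emph{pattern}-length) between consecutive blocks of $\widebar{p}$, so it is genuinely a ``tree/sequence embedding'' type order rather than a plain subword order; the delicate point is that when we decompose $\widebar{p}_i$ into a shorter prefix-pattern and a bounded suffix, the prefix-pattern's final word can grow, pushing it outside $Q$. The clean fix is to enrich $Q$ (or the induction) so that the relevant shorter objects still belong to an ambient finite-block set — e.g. allow the final word of a pattern to have length $\leq 2k$ as already required, and observe that splitting off a \emph{trailing idempotent plus trailing word} leaves a pattern whose blocks all still have length $\leq 2k$, since we are only removing blocks, never merging them. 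With that observation the shorter patterns stay legitimate and the minimality induction goes through. A secondary (routine) obstacle is verifying that $\preceq$ is transitive and a partial order, and that appending a fixed tail is monotone for $\preceq$; both are direct from the definition of splits, so I would relegate them to short facts.

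Once Lemma~\ref{lem:polcov:wellorder} is established, the set $P$ in Proposition~\ref{prop:core} is obtained by taking, among all $\eta$-patterns $\widebar{p}\in Q$ with $\pi(\widebar{p})\in H$, a set of $\preceq$-minimal ones: Lemma~\ref{lem:wformed} ensures every $w\in H$ equals $\pi(\widebar{p})$ for some such $\widebar{p}$, wellness of $\preceq$ ensures each such $\widebar{p}$ dominates a minimal one, and Fact~\ref{fct:preord} ($\widebar{p}\preceq\widebar{q}\Rightarrow\uclos_*\widebar{q}\subseteq\uclos_*\widebar{p}$) together with $\pi(\widebar{q})\in\uclos_*\widebar{q}$ gives the covering property $\{\uclos_*\widebar{p}\mid \widebar{p}\in P\}$ covers $H$; finiteness of $P$ follows because minimal patterns cannot contain a sub-pattern $\widebar{q}_i$ with $\eta(\widebar{q}_i)$ idempotent in the relevant way, bounding their length. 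Fact~\ref{fct:concpat} then converts $\uclos_*$ back to $\uclos_\eta$ to finish.
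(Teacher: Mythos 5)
Your overall framing — Nash-Williams minimal bad sequence, transitivity of $\preceq$, monotonicity of appending a fixed tail, the way the lemma feeds into Proposition~\ref{prop:core} — is correct and matches the paper. But the core extraction step has a genuine gap, and the obstacle you flag (staying inside $Q$) is not the real one: dropping the last $(e_{n_i},w_{n_i+1})$ from $\widebar{p}_i$ does produce a legitimate $\alpha$-guarded $\eta$-pattern in $Q$ (condition~\eqref{itm:pat2} survives: from $\eta(w_1\cdots w_{n_i})e_{n_i}=\eta(w_1\cdots w_{n_i})$ one gets $\eta(w_1\cdots w_{n_i}) \Jord e_{n_i}$, and $\eta(w_1\cdots w_{n_i+1})\Jord \eta(w_1\cdots w_{n_i})$ gives the other direction, so $\eta(w_1\cdots w_{n_i}) \Jrel e_j$ for all $j\leq n_i-1$). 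The real problem is that your truncated prefix $\widebar{p}_i'$ is \emph{not} $\preceq$-below $\widebar{p}_i$. By definition, $\widebar{p} \preceq \widebar{q}$ means $\widebar{q}$ is obtained from $\widebar{p}$ by inserting sub-patterns $\widebar{q}_m$ with $\eta(\widebar{q}_m)=f_m$ at internal split points while keeping both the leading block $\widebar{p}_1$ and, crucially, the trailing block $\widebar{p}_{\ell+1}$; a pattern is never $\preceq$-above a proper prefix of itself, since that would force the last word blocks $w_{n_i}$ and $w_{n_i+1}$ to coincide. Consequently, when you invoke ``cannot be bad by minimality,'' the required badness of the spliced sequence $\widebar{p}_0,\dots,\widebar{p}_{i_1-1},\widebar{p}_{i_1}',\widebar{p}_{i_2}',\dots$ cannot be established: the step $\widebar{p}_a\not\preceq \widebar{p}_{i_k}'$ for $a<i_1$ is normally deduced from $\widebar{p}_{i_k}' \preceq \widebar{p}_{i_k}$ (this is exactly what works in Higman's lemma, where a suffix is a subword), and here that fails.

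The paper circumvents precisely this by doing the opposite decomposition and a different kind of shortening. It extracts a \emph{common prefix} $\widebar{q}$ of a deliberately chosen length $\ell=|E(N)\times E(M)|\times|N|+1$ together with a common idempotent $f$, so $\widebar{p}_{i_k}=(\widebar{q},f,\widebar{r}_{i_k})$, and then shrinks $\widebar{q}$ by deleting a \emph{middle} segment rather than a terminal one: pigeonhole on the pairs $(e_j,g_j)\in E(N)\times E(M)$ supplied by $\alpha$-guarding, followed by pigeonhole on the products $\eta(w_1\cdots w_h)$, yields positions $h_i<h_j$ at which the deletion is licit, and a Green's-relation argument via Lemma~\ref{lem:jlr} shows $\eta(w_{h_i+1}\cdots w_{h_j})=e_{h_i}$, which is exactly the condition making the deleted chunk a valid $\preceq$-insertion, hence $\widebar{q}'\preceq\widebar{q}$ and $\widebar{p}_{i_k}'=(\widebar{q}',f,\widebar{r}_{i_k})\preceq \widebar{p}_{i_k}$. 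This $\preceq$-comparability is the lever that makes the minimality contradiction go through, together with a nontrivial case analysis for the remaining assertion $\widebar{p}_{i_h}'\not\preceq\widebar{p}_{i_k}'$. In short, the $\alpha$-guards, the specific value of $\ell$, and the Green's theory are not bookkeeping; they are what furnishes a strictly shorter $\preceq$-\emph{smaller} pattern, and your suffix-dropping approach cannot produce one.
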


  Before we prove Lemma~\ref{lem:polcov:wellorder}, let us apply it to build our cover of the language $H$ and complete the main argument. Let $Q_H \subseteq Q$ be the set of all $\eta$-patterns $\widebar{p} \in Q$ such that $\pi(\widebar{p}) \in H$. We say that an $\eta$-pattern $\widebar{p} \in Q_H$ is \emph{minimal} if there exists no other $\eta$-pattern $\widebar{q} \in Q_H$ such that $\widebar{q} \preceq \widebar{p}$. Let then $F \subseteq Q_H$ be the set of all minimal $\eta$-patterns $\widebar{p} \in Q_H$. By definition, $\widebar{q} \not\preceq \widebar{p}$ for every $\widebar{p},\widebar{q} \in F$ such that $\widebar{p} \neq \widebar{q}$. Hence, Lemma~\ref{lem:polcov:wellorder} implies that $F$ is a finite set. We define $\Kb = \{\uclos_* \widebar{p} \mid \widebar{p} \in F\}$. Since $F \subseteq Q_H$, it is immediate by definition that $\pi(\widebar{p}) \in H$ for every $\widebar{p} \in F$. Therefore, it remains to prove that \Kb is a cover of $H$. We fix a word $w \in H$ and exhibit $K \in \Kb$ such that $w \in K$. Since $H \subseteq \eta\inv(J)$ by hypothesis, Lemma~\ref{lem:wformed} yields $\widebar{q} \in Q$ such that $\pi(\widebar{q})=w$. Moreover, since $w \in H$, we have $\widebar{q} \in Q_H$. Since $\preceq$ is a well-order, it has no strictly decreasing sequence. Therefore, there exists a minimal $\eta$-pattern $\widebar{p} \in Q_H$ such that $\widebar{p} \preceq \widebar{q}$. By definition, $\widebar{p} \in F$ and $\uclos_* \widebar{p} \in \Kb$. Since $\pi(\widebar{q}) = w$, we have $w \in \uclos_* \widebar{q}$. Finally, since $\widebar{p} \preceq \widebar{q}$, Fact~\ref{fct:preord} implies that $w \in \uclos_* \widebar{p}$ as well, which concludes the argument.

  \medskip It remains to prove Lemma~\ref{lem:polcov:wellorder} to finish the proof of Proposition~\ref{prop:core}. We do so by adapting the standard argument of Nash-Williams~\cite{nash-williams63}. We say that a sequence $(\widebar{p}_i)_{i \in \nat}$ of $\eta$-patterns is \emph{bad} if $\widebar{p}_i \in Q$ for every $i$ and $\widebar{p}_i \not\preceq \widebar{p}_j$ for every $i < j$. We have to prove that there exists no bad sequence. Proceeding by contradiction, assume that there exists a bad sequence. We first use this hypothesis to construct a specific one: we build by induction a specific sequence $(\widebar{p}_i)_{i \in \nat}$ such that for every $i \in \nat$, $\widebar{p}_0,\dots,\widebar{p}_i$ can be continued into a bad~sequence.

  Choose $\widebar{p}_0 \in Q$ such that $\widebar{p}_0$ can be continued into a bad sequence. Such an $\eta$-pattern must exist by the assumption that there exists a bad sequence. Assume now that $\widebar{p}_0,\dots,\widebar{p}_i$ have been chosen so that $\widebar{p}_0,\dots,\widebar{p}_i$ can be continued into a bad sequence. We define $\widebar{p}_{i+1} \in Q$ as an $\eta$-pattern of \emph{minimal length} such that $\widebar{p}_0,\dots,\widebar{p}_i,\widebar{p}_{i+1}$ can be continued into a bad sequence. This defines $(\widebar{p}_i)_{i \in \nat}$. Observe that it is necessarily bad. Indeed, otherwise, we would have $i<j$ such that $\widebar{p}_i \preceq \widebar{p}_j$, contradicting the hypothesis that $\widebar{p}_0,\dots,\widebar{p}_i$ can be continued into a bad sequence. In particular, observe that $(\widebar{p}_i)_{i \in \nat}$ contains $\eta$-patterns with arbitrarily large length (since, by definition of Q, there are only finitely many $\eta$-patterns of length $n$ in $Q$ for each $n \in \nat$).

  Let $\ell=|E(N) \times E(M)| \times |N|+1$. By the pigeonhole principle, there exist infinitely many indices $1 \leq i_1 < i_2 < \cdots$ such that $\widebar{p}_{i_k} = (\widebar{q},f,\widebar{r}_{i_k})$, where,
  \begin{itemize}
    \item $\widebar{q}\in Q$ is an $\eta$-pattern of length $\ell$,
    \item $f \in E(N)$ is an idempotent,
    \item $\widebar{r}_{i_k}$ is an $\eta$-pattern in $Q$  for each $k \geq 1$.
  \end{itemize}
  Let $\widebar{q} = (w_1,e_1,\dots,w_\ell,e_\ell,w_{\ell+1})$. Since $\widebar{q}$ belongs to $Q$, it is $\alpha$-guarded. Hence, we obtain idempotents $g_1,\dots,g_\ell \in E(M)$ such that for~all $1 \leq i \leq \ell$, we have $\eta\inv(e_i) \cap \alpha\inv(g_i) \neq \emptyset$, $\alpha(w_i)g_i = \alpha(w_i)$ and $g_i\alpha(w_{i+1}) = \alpha(w_{i+1})$. Now, since  $\ell=|E(N)\times E(M)|\times |N|+1$, the pigeonhole principle yields $|N|+1$ indices $h_1,\dots, h_{|N|+1} \leq \ell$ such that $e_{h_1} = \cdots = e_{h_{|N|+1}}$ and $g_{h_1} = \cdots = g_{h_{|N|+1}}$. We now use the pigeonhole principle again to get $i,j$ such that $1 \leq i < j \leq |N|+1$ and $\eta(w_1 \cdots w_{h_i}) = \eta(w_1 \cdots w_{h_j})$. We define,
  \[
    \widebar{q}' = (w_1,e_1,\dots,w_{h_i},e_{h_i},w_{h_j+1},e_{h_j+1},\dots,e_\ell,w_{\ell+1}).
  \]
  Since $\eta(w_1 \cdots w_{h_i}) = \eta(w_1 \cdots w_{h_j})$, $e_{h_i} = e_{h_j}$ and $g_{h_i} = g_{h_j}$, we get that $\widebar{q}'$ is an $\alpha$-guarded $\eta$-pattern in $Q$. Moreover $\eta(\widebar{q}) = \eta(\widebar{q}')$. Furthermore, it is clear that the length of $\widebar{q}'$ is strictly smaller than that of $\widebar{q}$ since $i < j$. Finally, observe that $\widebar{q}' \preceq \widebar{q}$. Indeed, consider $e = e_{h_i} = e_{h_j} \in E(N)$. We show that $\eta(w_{h_i+1} \cdots w_{h_j})= e$. This implies that $\widebar{q}' \preceq \widebar{q}$ by definition of $\preceq$. Let $s = \eta(w_1 \cdots w_{h_i})$ and $t = \eta(w_{h_i+1} \cdots w_{h_j})$. By definition of $i$ and $j$, we have $st = s$. The definition of $\eta$-patterns also yields $s \Jrel e$, $se = s$ and $et = t$.  The fact that $se = s$ also implies that $s \Lord e$ and we get $s \Lrel e$ from Lemma~\ref{lem:jlr}. This yields $x \in N$ such that $xs = e$. We get $t = et = xst = xs = e$, which completes the proof that $\widebar{q}' \preceq \widebar{q}$.

  For each $k \geq 1$, let $\widebar{p}_{i_k}' = (\widebar{q}',f,\widebar{r}_{i_k})$. One can check that $\widebar{p}_{i_k}'$ is indeed an $\eta$-pattern in~$Q$, because this is the case for $\widebar{p}_{i_k} = (\widebar{q},f,\widebar{r}_{i_k})$. We now prove that the infinite sequence  $\widebar{p}_0,\dots,\widebar{p}_{i_1-1},\widebar{p}_{i_1}',\widebar{p}_{i_2}',\widebar{p}_{i_3}',\dots$ is bad. In particular, this means that $\widebar{p}_0,\dots,\widebar{p}_{i_1-1},\widebar{p}_{i_1}'$ can be continued into a bad sequence. This is a contradiction: by definition of $\widebar{q}'$ from $\widebar{q}$, we know that the length of $\widebar{p}_{i_1}'$ is strictly smaller than the one of $\widebar{p}_{i_1}$. This is not possible since $\widebar{p}_{i_1} \in Q$ is defined as an $\eta$-pattern of \emph{minimal length} such that $\widebar{p}_0,\dots,\widebar{p}_{i_1-1},\widebar{p}_{i_1}$ can be continued into a bad sequence.

  It remains to prove that $\widebar{p}_0,\dots,\widebar{p}_{i_1-1},\widebar{p}_{i_1}',\widebar{p}_{i_2}',\widebar{p}_{i_3}',\dots$ is bad. Since $(\widebar{p}_i)_{i \in \nat}$ is bad, we know that for $i < j \leq i_1-1$, we have $\widebar{p}_i \not\preceq \widebar{p}_j$. We now prove that $\widebar{p}_i \not\preceq \widebar{p}_{i_k}'$ for $i \leq i_1-1$ and $k \geq 1$. By contradiction, assume that $\widebar{p}_i \preceq \widebar{p}_{i_k}'$. Since $\widebar{q}' \preceq \widebar{q}$, we have $\widebar{p}_{i_k}' \preceq \widebar{p}_{i_k}$. Hence, we get $\widebar{p}_i \preceq \widebar{p}_{i_k}$, contradicting the hypothesis that $(\widebar{p}_i)_{i \in \nat}$ is bad.  Finally, we show that given $h,k$ such that $1 \leq h < k$, we have $\widebar{p}_{i_h}' \not\preceq \widebar{p}_{i_k}'$. By contradiction assume that $\widebar{p}_{i_h}' \preceq\widebar{p}_{i_k}'$. By definition, this means that $(\widebar{q}',f,\widebar{r}_{i_h})\preceq(\widebar{q}',f,\widebar{r}_{i_k})$. Hence, the definition of $\preceq$ implies that there are two cases.
  \begin{itemize}[leftmargin=*]
    \item First, it may happen that $\widebar{r}_{i_h} \preceq  \widebar{r}_{i_k}$. Since $\widebar{p}_{i_h} = (\widebar{q},f,\widebar{r}_{i_h})$ and $\widebar{p}_{i_k} = (\widebar{q},f,\widebar{r}_{i_h})$, it follows that $\widebar{p}_{i_h} \preceq \widebar{p}_{i_k}$. Again, this contradicts the hypothesis that $(\widebar{p}_i)_{i \in \nat}$  is bad.
    \item Otherwise, there are $\widebar{u},\widebar{v} \in Q$ such that $\widebar{q}' \preceq (\widebar{q}',f,\widebar{u})$, $\widebar{r}_{i_h} \preceq \widebar{v}$ and $\widebar{r}_{i_k} = (\widebar{u},f,\widebar{v})$. Observe that $\eta(\widebar{u}) = f$. Indeed, we have $f \Jrel \eta(\widebar{q}')$,  $\eta(\widebar{q}') f = \eta(\widebar{q}')$ and $f \eta(\widebar{u}) = \eta(\widebar{u})$ by definition of $\eta$-patterns. The fact that $\eta(\widebar{q}') f = \eta(\widebar{q}')$ implies $\eta(\widebar{q}') \Lord f$ and Lemma~\ref{lem:jlr} yields $\eta(\widebar{q}') \Lrel f$. We get $s \in N$ such that $s\eta(\widebar{q}') = f$. Finally, since we have $\widebar{q}' \preceq (\widebar{q}',f,\widebar{u})$, we obtain $\eta(\widebar{q}') = \eta(\widebar{q}') \eta(\widebar{u})$. Altogether, we get $\eta(\widebar{u}) = f \eta(\widebar{u}) = s\eta(\widebar{q}')\eta(\widebar{u}) = s\eta(\widebar{q}') = f$.

          Since $f = \eta(\widebar{u})$ and $\widebar{r}_{i_h} \preceq \widebar{v}$, the definition of $\preceq$ implies that $(\widebar{q},f,\widebar{r}_{i_h}) \preceq (\widebar{q},f,\widebar{u},f,\widebar{v})$. Since $\widebar{r}_{i_k} = (\widebar{u},f,\widebar{v})$, this exactly says that $(\widebar{q},f,\widebar{r}_{i_h}) \preceq (\widebar{q},f,\widebar{r}_{i_k})$. That is, we have $\widebar{p}_{i_h} \preceq \widebar{p}_{i_k}$. Again, this contradicts the hypothesis that $(\widebar{p}_i)_{i \in \nat}$  is bad.
  \end{itemize}
  This completes the proof.
\end{proof}

\subsection{Concatenation hierarchies}

Let \Cs be a \vari. For all $n \in \nat$, we define two classes \polp{n}{\Cs} and \bpolp{n}{\Cs}. We let $\polp{0}{\Cs}\! =\! \bpolp{0}{\Cs}\! =\! \Cs$. Then, for every $n \geq 1$, we define inductively $\polp{n}{\Cs} = \pol{\bpolp{n-1}{\Cs}}$ and $\bpolp{n}{\Cs} = \bpol{\bpolp{n-1}{\Cs}} = \bool{\polp{n}{\Cs}}$. We say that the class \bpolp{n}{\Cs} is \emph{level $n$ in the concatenation hierarchy of basis \Cs}. Note that we have $\bpolp{n}{\Cs} \subseteq \polp{n+1}{\Cs} \subseteq \bpolp{n+1}{\Cs}$ for $n \in \nat$.

\begin{restatable}{remark}{hallevels} \label{rem:halflevels}
  We are mainly interested in ``full'' levels \bpolp{n}{\Cs}. However, the classes \polp{n}{\Cs} are important tools in the paper.
\end{restatable}

\noindent
Historically, the first concatenation hierarchy is the dot-depth (Brzozowski, Cohen~\cite{BrzoDot}) of basis $\dotzer = \{\emptyset,\{\veps\},A^+,A^*\}$. Thomas~\cite{ThomEqu} proved that the dot-depth corresponds exactly to the \emph{quantifier alternation hierarchy} of first-order logic with the linear order and the successor ($\fo(<,+1)$). More precisely, $\polp{n}{\dotzer}= \sic{n}(<,+1)$ and $\bpolp{n}{\dotzer} = \bsc{n}(<,+1)$ for all $n \in \nat$. This logical characterization extends to arbitrary concatenation hierarchies~\cite{PZ:generic18}: for every \vari~\Cs, there exists a set \infsigc of first-order predicates such that $\polp{n}{\Cs}= \sic{n}(\infsigc)$ and $\bpolp{n}{\Cs} = \bsc{n}(\infsigc)$ for all $n \in \nat$.

\smallskip
\noindent {\bf Group based hierarchies.} \emph{\Varis} consisting of group languages are prominent bases. \emph{Group languages} are those that can be recognized by a morphism into a finite \emph{group}. A \emph{group \vari} is a \vari \Gs that contains only group languages. We use bases that are group \varis as well as slight extensions thereof. For each class \Gs, we let $\Gs^+$ be the class consisting of all languages $\{\veps\} \cup L$ and $A^+ \cap L$ for $L \in \Gs$. One may verify that if \Gs is a \vari, then so is $\Gs^+$. Also, $\Gs^+$ contains the languages $\{\veps\}$ and $A^+$.

Let us present three standard group \varis. The trivial \vari $\stzer = \{\emptyset,A^*\}$ yields a hierarchy introduced by Straubing~\cite{StrauConcat} and Thérien~\cite{TheConcat}. It characterizes the alternation hierarchy of first-order logic equipped with only the linear order ($\fo(<)$). Also, $\stzer^+= \dotzer$: the hierarchy of basis $\stzer^+$ is the dot-depth. Another example is the \vari \md of \emph{modulo languages} (membership of a word in the language depends only on its length modulo some fixed integer). The hierarchies of bases \md and $\md^+$ characterize the alternation hierarchies of $\fo(<,MOD)$ and $\fo(<,+1,MOD)$, where ``\emph{MOD}'' is the set of modular predicates, see \emph{e.g.}, \cite{ChaubardPS06}. Finally, let us mention the \emph{group hierarchy} of Margolis and Pin~\cite{MargolisP85}. Its basis is the class \grp of \emph{all} group languages.

There are many papers investigating membership, separation and covering for particular hierarchies of this kind (see \emph{e.g.}, \cite{simonthm,knast83,henckell:hal-00019815,pwdelta2,ChaubardPS06,cmmptsep,pzjacm19,Zetzsche18}). Yet, proving that they are decidable for all levels remains a longstanding open question in automata theory. In the paper, we shall use the following generic theorem of~\cite{pzconcagroup,PlaceZ22}, which applies to level one in an arbitrary group based hierarchy.

\begin{restatable}{theorem}{grpgen}\label{thm:grpgen}
  Let \Gs be a group \vari with decidable separation. Then, \bpol{\Gs} and \bpol{\Gs^+} have decidable covering.
\end{restatable}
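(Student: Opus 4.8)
The plan is to follow the general Place--Zeitoun methodology for covering problems. Since $\Gs^+$ is a \vari whenever $\Gs$ is, Corollary~\ref{cor:bpolc} tells us that both $\bpol{\Gs}$ and $\bpol{\Gs^+}$ are \varis, so in each case the covering problem reduces, by the standard correspondence between covering of a \vari and the computation of pointlike sets, to the following purely algebraic task: given a morphism $\alpha : A^* \to M$ into a finite monoid that recognizes all the input languages, compute the finite set of \emph{$\bpol{\Gs}$-pointlike subsets} of $M$, namely those $S \subseteq M$ such that every $\bpol{\Gs}$-cover $\Kb$ of $A^*$ has a member $K$ with $S \subseteq \alpha(K)$. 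Once this object is available, Lemma~\ref{lem:covbopl} together with the usual optimal-cover argument turns an instance $(L_1,\Lb_2)$ recognized by $\alpha$ into a decidable question: is there a pointlike set meeting the accepting set of every language in $\{L_1\}\cup\Lb_2$? So everything comes down to an \emph{effective characterization of $\bpol{\Gs}$-pointlike sets} relying only on a $\Gs$-separation oracle.

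To obtain such a characterization I would describe the $\bpol{\Gs}$-pointlike subsets of $M$ as the least family $\Ss \subseteq 2^M$ closed under a short list of operations: it is downward closed and contains every singleton $\{\alpha(w)\}$; it is closed under pairwise product, $S_1,S_2 \mapsto S_1S_2$; and it is closed under a \emph{$\Gs$-enrichment} step which, around any idempotent $e$ occurring in a set of $\Ss$, inserts every element $s$ such that $\alpha\inv(s)$ is not $\Gs$-separable from $\alpha\inv(e)$ — that is, the step driven by the $\Gs$-pointlike-pair relation on $M$, which is exactly what a $\Gs$-separation oracle provides. Soundness of the product rule is the weak concatenation principle for $\bpol{\Gs}$-covers (Lemmas~\ref{lem:bpconcat} and~\ref{lem:bpconcatm}) combined with a pattern-extraction argument in the spirit of Proposition~\ref{prop:core}: from an idempotent-rich factorisation of a long word one produces, for any given $\bpol{\Gs}$-cover, a single block whose image contains the desired product. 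Soundness of the enrichment rule reflects that languages in $\Gs$ are blind to the substitution: inside any $\pol{\Gs}$-product, a factor realising $e$ may be replaced by one realising $s$ without leaving the product, since the two are $\Gs$-indistinguishable.

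The converse inclusion — that every $S \notin \Ss$ is genuinely \emph{avoidable} — is the technical heart and the step I expect to be the main obstacle, precisely because $\bpol{\Gs}$ is \emph{not} closed under concatenation, so one cannot build covers by naively concatenating $\Gs$-languages. The strategy is to assemble an explicit cover of $A^*$ from finitely many $\pol{\Gs}$-languages of the form $L_0a_1L_1\cdots a_nL_n$ whose factors $L_i$ are preimages under a fixed $\Gs$-morphism (hence describable from $\Gs$-separation data), then to close under Boolean operations so as to encode the pairwise non-pointlikeness obligations simultaneously, and finally to verify — by an induction mirroring the saturation above — that the image of each resulting block misses some element of $S$. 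A related difficulty is to ensure that only the binary $\Gs$-separation oracle is invoked and never higher-arity $\Gs$-pointlike information: it is the Boolean layer of $\bpol{\Gs}$ that lets pairwise obligations be combined, so no $\Gs$-pointlike set of arity larger than two is ever needed.

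Termination and effectiveness are then immediate: the saturation lives in the finite lattice $2^{2^M}$ and each step is computable given decidable $\Gs$-separation, so $\Ss$, and hence the answer to any covering instance, is computable. The case of $\Gs^+$ fits the same framework with only a marginal change: a $\Gs^+$-language differs from a $\Gs$-language only in its treatment of the empty word, so one works with $\alpha$ while tracking the value $\alpha(\veps)$ separately (equivalently, adjoining a fresh identity to $M$), and the $\Gs^+$-separation data needed is recovered from $\Gs$-separation. This gives decidable covering for both $\bpol{\Gs}$ and $\bpol{\Gs^+}$.
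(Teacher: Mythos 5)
The paper does not prove Theorem~\ref{thm:grpgen}; it is imported wholesale from the companion works~\cite{pzconcagroup,PlaceZ22} and used as a black box to derive Corollary~\ref{cor:cargrp}. So there is no ``paper's own proof'' to compare against here, and the fair standard is whether your sketch would stand on its own.

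It would not, and you yourself flag the fatal gap. Your plan is a saturation procedure on $2^M$ --- downward closure, singletons, pairwise product, and a $\Gs$-enrichment step around idempotents driven by $\Gs$-pairs --- followed by a completeness argument showing that any $S$ not generated by the saturation is avoidable by a concrete $\bpol\Gs$-cover. You then write that this converse inclusion ``is the technical heart and the step I expect to be the main obstacle.'' That step is not a detail: it is essentially the entire content of the cited theorem, and everything else in your outline (termination, effectiveness, the reduction of covering to pointlike computation via Lemma~\ref{lem:covbopl}) is routine once it is in hand. A proof that declares its main lemma an obstacle is not a proof.

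There is also a substantive risk that the saturation rule you propose is not even the right one. Your enrichment step inserts \emph{single} elements $s$ with $(e,s)$ a $\Gs$-pair; but $\bpol\Gs$-pointlike computation for a group \vari $\Gs$ must account for the full structure of $\Gs$-pointlike \emph{sets} at idempotents, not merely $\Gs$-pairs, and for how the $\omega$-power interacts with the group structure. Likewise your claim that ``no $\Gs$-pointlike set of arity larger than two is ever needed'' is asserted, not argued; the Boolean layer of $\bpol\Gs$ does not obviously reduce higher-arity obligations to binary ones, and the known algorithms (going back to Henckell--Rhodes and Ash for $\grp$) do manipulate pointlike sets of unbounded arity. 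In short: the reduction to pointlike sets, the use of Lemmas~\ref{lem:bpconcat}--\ref{lem:bpconcatm} and Proposition~\ref{prop:core} for soundness, and the $\Gs^+$ bookkeeping are all plausible and in the right spirit, but the characterization itself is unverified and the completeness direction --- the mathematically hard part --- is absent.
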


This result can be pushed further for the Straubing-Thérien hierarchy and the dot-depth. The following is proved in~\cite{pzbpolcj}.

\begin{restatable}{theorem}{stdot}\label{thm:stdot}
  \bpolp{2}{\stzer} and \bpolp{2}{\dotzer} have decidable covering.
\end{restatable}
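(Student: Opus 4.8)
The plan is to reduce $\bpol{\Cs}$-covering for the relevant classes to the computation of an algebraic invariant of a single morphism, and then compute that invariant by a saturation process built on the pattern machinery of Proposition~\ref{prop:core} and the concatenation lemmas.

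\textbf{Reduction to morphisms and imprints.} Write \Cs for either $\bpolp{1}{\stzer}$ (the piecewise testable languages, by Simon~\cite{simonthm}) or $\bpolp{1}{\dotzer}$, so that the two classes in the statement are $\bpol{\Cs}$, namely $\bpolp{2}{\stzer}$ and $\bpolp{2}{\dotzer}$; I treat both hierarchies uniformly. Since $\stzer$ is a group \vari with (trivially) decidable separation and $\dotzer=\stzer^+$, Theorem~\ref{thm:grpgen} already gives that \Cs has decidable \emph{covering}. Now, as $\bpol{\Cs}$ is a \vari (Corollary~\ref{cor:bpolc}), Lemma~\ref{lem:covbopl} reduces an arbitrary $\bpol{\Cs}$-covering instance to deciding whether a finite family $\Lb$ of regular languages is $\bpol{\Cs}$-coverable \emph{as a cover of $A^*$}. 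Fixing a surjective morphism $\alpha: A^*\to M$ recognizing every language of $\Lb$, this depends only on the \emph{$\bpol{\Cs}$-imprint} $\mathscr{I}(\alpha)$: the downward-closed family of those $T\subseteq M$ such that every $\bpol{\Cs}$-cover of $A^*$ has a member $K$ with $T\subseteq\alpha(K)$. Indeed, $\Lb$ is coverable iff each $T\in\mathscr{I}(\alpha)$ is disjoint from $\alpha(L)$ for some $L\in\Lb$. So it suffices to compute $\mathscr{I}(\alpha)$ effectively from $\alpha$.

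\textbf{Computing the imprint by saturation.} By Lemma~\ref{lem:bpolm}, $\alpha=\beta\circ\eta$ for a \Cs-morphism $\eta: A^*\to N$ and a morphism $\beta: N\to M$. The $\pol{\Cs}$-imprint of $\alpha$, which is computable because $\Cs$-covering is decidable (via the generic algebraic description of \polo), serves as the starting point; one then closes it under a single operation capturing the extra power of \bpolo over \polo. Concretely: for an $\eta$-alphabetic \Jrel-class $J$ of $N$ and a set $H\subseteq\eta\inv(J)$, Proposition~\ref{prop:core} covers $H$ by finitely many languages $\uclos_\eta\widebar{p}=w_1\,\eta\inv(e_1)\,\cdots\,w_n\,\eta\inv(e_n)\,w_{n+1}$ coming from $\alpha$-guarded $\eta$-patterns; applying Lemma~\ref{lem:bpconcatm} to $\bpol{\Cs}$-covers of the factors $\eta\inv(e_i)$ — available by the induction hypothesis on a strictly smaller instance — yields a $\bpol{\Cs}$-cover of $\uclos_\eta\widebar{p}$, and one reads off how its imprint is assembled from those of the factors, the $\alpha$-guarding idempotents $e_i$ providing the stabilisation that \bpolo needs. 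Ranging over all \Jrel-classes $J$ of $N$ and enlarging $\eta$ when necessary via Proposition~\ref{prop:genocm}, one saturates the $\pol{\Cs}$-imprint into a finite candidate family, which is the output of the algorithm. The recursion terminates by a well-founded measure combining the alphabet size and the \Jrel-structure of the image monoid: it drops to a subalphabet when $J$ is not $\eta$-alphabetic, and to the local monoid at $e_i$ when covering $\eta\inv(e_i)$.

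\textbf{Soundness, completeness, and the main obstacle.} Soundness — every subset output by the saturation indeed lies in $\mathscr{I}(\alpha)$ — follows from the closure properties of $\bpol{\Cs}$ together with the $\alpha$-guardedness of the patterns supplied by Proposition~\ref{prop:core}: no $\bpol{\Cs}$-cover of $A^*$ can pry such a pattern apart at its idempotent positions. The hard part, and the heart of the proof, is \emph{completeness}: the saturated family is \emph{all} of $\mathscr{I}(\alpha)$, equivalently one can build a \emph{single} $\bpol{\Cs}$-cover of $A^*$ whose members all have $\alpha$-image in that family. This is carried out by the same well-founded induction: split $A^*$ along $\eta$ into the pieces $\eta\inv(J)$, cover each by Proposition~\ref{prop:core} (or recursively, on a smaller alphabet, when $J$ is not $\eta$-alphabetic), then re-glue with Lemmas~\ref{lem:bpconcat} and~\ref{lem:bpconcatm} — the delicate point being to choose all the local covers \emph{compatibly}, so that no gluing step enlarges the imprint beyond the saturation. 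Making this interplay between the \Jrel-structure, the $\eta$-patterns, and the two concatenation lemmas exactly tight is where the bulk of the technical work lies; it is also where the specific structure of $\bpolp{1}{\stzer}$ and $\bpolp{1}{\dotzer}$ — through the shape of their \Cs-morphisms — is genuinely used, which is why the argument does not simply bootstrap to higher levels. (For the dot-depth one may alternatively deduce $\bpolp{2}{\dotzer}$-covering from $\bpolp{2}{\stzer}$-covering through a successor-adding transfer, in the spirit of Straubing's reduction~\cite{StrauVD}, but the treatment above handles both hierarchies directly.)
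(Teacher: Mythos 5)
The paper does not prove Theorem~\ref{thm:stdot}: the statement is simply cited as a result of~\cite{pzbpolcj}, so there is nothing internal to compare your proposal against and it has to stand on its own.

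As written, it does not. First, your invocation of Lemma~\ref{lem:bpolm} is incorrect: that lemma factorizes a morphism \emph{that is already a \bool{\Cs}-morphism} through a \Cs-morphism, and the $\alpha$ in your argument — an arbitrary morphism recognizing the input languages $\Lb$ — need not be one. To attach a companion \Cs-morphism to $\alpha$ you would have to use Lemma~\ref{lem:pairmor}, Lemma~\ref{lem:setmor} or Lemma~\ref{lem:witness} and then work with a product morphism, which already changes the shape of what follows. Second, the claim that the $\pol{\Cs}$-imprint is computable ``via the generic algebraic description of \polo'' is not supported by anything in the paper: Theorem~\ref{thm:polcar} is a \emph{membership} characterization, not a covering algorithm. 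Third, and most importantly, every load-bearing step in the sketch — the precise saturation operation, its soundness, its completeness, and termination of the recursion on alphabet size and \Jrel-structure — is asserted rather than proved, and you concede yourself that completeness is ``where the bulk of the technical work lies.'' That is not a detail to fill in later; it \emph{is} the theorem. In this paper, Proposition~\ref{prop:core} together with Lemmas~\ref{lem:bpconcat} and~\ref{lem:bpconcatm} are deployed to reduce $\bpol{\Cs}$-\emph{membership} to \Cs-covering (Theorem~\ref{thm:cargen}), not to solve $\bpol{\Cs}$-\emph{covering}; upgrading that machinery to a covering algorithm at level two is precisely what~\cite{pzbpolcj} does, and it is a full paper in its own right. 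Your outline identifies the right flavour of argument, but it is a proof plan, not a proof.
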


We prove a generic characterization of \bpolo. It implies that if a \emph{non-zero} full level has decidable \emph{covering}, then the next full level has decidable \emph{membership}. By Theorem~\ref{thm:grpgen}, it follows that if \Gs is a group \vari with decidable separation, then \bpolp{2}{\Gs} and \bpolp{2}{\Gs^+} have decidable membership. Also, Theorem~\ref{thm:stdot} implies that  \bpolp{3}{\stzer} and \bpolp{3}{\dotzer} have decidable membership.

\section{\texorpdfstring{\Cs-pairs, \Cs-sets and \Cs-swaps}{C-pairs, C-sets and C-swaps}}
\label{sec:pairs}
We present tools used in our characterization. With a class~\Cs and a morphism $\alpha: A^* \to M$, we associate three notions: the \Cs-\emph{pairs} in $M^2$, the \Cs-\emph{sets} in $2^M$ and the \emph{\Cs-swaps} in $M^6$. 

\subsection{\texorpdfstring{The \Cs-pairs associated to a morphism}{The C-pairs associated to a morphism}}

Let \Cs be  a class and $\alpha: A^* \to M$ be  a morphism. We say that $(s,t) \in M^2$ is a \emph{\Cs-pair for $\alpha$} when $\alpha\inv(s)$ is \emph{not} \Cs-separable from $\alpha\inv(t)$. When $\alpha$ is understood, we simply say that $(s,t)$ is a \Cs-pair.  Since the set of \Cs-pairs is a subset of $M^2$, it is finite. Also, having a \Cs-separation algorithm in hand suffices to compute all \Cs-pairs associated with an input morphism.

\begin{restatable}{lemma}{septopairs} \label{lem:septopairs}
  Let \Cs a class with decidable separation. Given a morphism $\alpha: A^*\to M$, one can compute all \Cs-pairs for $\alpha$.
\end{restatable}

The \Cs-pairs for $\alpha$ define a \emph{binary relation} on $M$, which is reflexive when $\alpha$ is surjective (this is always the case in the paper). Also, it is symmetric if \Cs is closed under complement. On the other hand, the \Cs-pairs relation is \emph{not} transitive in~general.

We turn to a simple but crucial lemma. When \Cs is a \emph{\pvari}, it characterizes the \Cs-pairs in terms of \Cs-morphisms. Let $\eta: A^* \to (N,\leq)$ and $\alpha: A^* \to M$ be two morphisms. We say that $(s,t) \in M^2$ is an \emph{$\eta$-pair} for~$\alpha$ when there exist $u, v \in A^*$ such that \mbox{$\eta(u) \leq \eta(v)$}, $s = \alpha(u)$ and $t = \alpha(v)$. The statement is as~follows.

\begin{restatable}{lemma}{pairmor} \label{lem:pairmor}
	Let \Cs be a \pvari and let $\alpha: A^* \to M$ be a morphism. For every \Cs-morphism $\eta: A^* \to (N,\leq)$, each \Cs-pair for $\alpha$ is also an $\eta$-pair. Moreover, there exists a \Cs-morphism $\eta: A^* \to (N,\leq)$ such that the \Cs-pairs for $\alpha$ are exactly the $\eta$-pairs.
\end{restatable}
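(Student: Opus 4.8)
The plan is to prove the two assertions of Lemma~\ref{lem:pairmor} separately, with the first one being essentially a direct consequence of the definitions and the second one requiring an explicit construction of a suitable \Cs-morphism.

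For the \textbf{first assertion}, let $\eta: A^* \to (N,\leq)$ be an arbitrary \Cs-morphism and let $(s,t)$ be a \Cs-pair for $\alpha$; I want to show $(s,t)$ is an $\eta$-pair. By definition, $\alpha\inv(s)$ is not \Cs-separable from $\alpha\inv(t)$. Now consider the set $F = \uclos \eta(\alpha\inv(s))$, the upward closure in $(N,\leq)$ of the image of $\alpha\inv(s)$ under $\eta$; this is an upper set, so $K := \eta\inv(F)$ is recognized by the \Cs-morphism $\eta$ and hence $K \in \Cs$. Clearly $\alpha\inv(s) \subseteq K$. Since $\alpha\inv(s)$ is not \Cs-separable from $\alpha\inv(t)$, we must have $K \cap \alpha\inv(t) \neq \emptyset$, i.e.\ there is $v \in \alpha\inv(t)$ with $\eta(v) \in F$. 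By definition of $F$, this means there exists $u \in \alpha\inv(s)$ with $\eta(u) \leq \eta(v)$. Then $s = \alpha(u)$, $t = \alpha(v)$ and $\eta(u) \leq \eta(v)$, so $(s,t)$ is an $\eta$-pair, as required.

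For the \textbf{second assertion}, the idea is to build a single \Cs-morphism that is ``fine enough'' to detect exactly the \Cs-pairs. For each pair $(s,t) \in M^2$ that is \emph{not} a \Cs-pair, fix by definition a language $K_{s,t} \in \Cs$ with $\alpha\inv(s) \subseteq K_{s,t}$ and $K_{s,t} \cap \alpha\inv(t) = \emptyset$. Since \Cs is a \vari, Proposition~\ref{prop:genocm} yields a \Cs-morphism $\eta: A^* \to (N,\leq)$ recognizing all of the finitely many languages $K_{s,t}$. By the first assertion, every \Cs-pair for $\alpha$ is an $\eta$-pair, so it remains to check the converse: if $(s,t)$ is an $\eta$-pair then it is a \Cs-pair. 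Arguing by contraposition, suppose $(s,t)$ is not a \Cs-pair, so the language $K_{s,t}$ is defined and is recognized by $\eta$; write $K_{s,t} = \eta\inv(F)$ for an upper set $F \subseteq N$. Let $u,v \in A^*$ witness that $(s,t)$ is an $\eta$-pair, so $\eta(u) \leq \eta(v)$, $\alpha(u) = s$, $\alpha(v) = t$. Then $u \in \alpha\inv(s) \subseteq K_{s,t}$, so $\eta(u) \in F$; since $F$ is an upper set and $\eta(u) \leq \eta(v)$, we get $\eta(v) \in F$, i.e.\ $v \in K_{s,t}$. But $v \in \alpha\inv(t)$ and $K_{s,t} \cap \alpha\inv(t) = \emptyset$, a contradiction. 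Hence every $\eta$-pair is a \Cs-pair, and the $\eta$-pairs coincide exactly with the \Cs-pairs.

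The construction is routine; the only slightly delicate point is making sure the auxiliary languages used to separate the non-pairs are handled coherently, which is exactly what Proposition~\ref{prop:genocm} provides by producing one \Cs-morphism recognizing all of them simultaneously. I do not anticipate a genuine obstacle here: the essential content is the interplay between ``recognized by a \Cs-morphism'' and ``belongs to \Cs'', together with the fact that upper sets are preserved under the order $\leq$, which is precisely why working with \emph{ordered} monoids (rather than collapsing to equality) is the right setting for \pvaris.
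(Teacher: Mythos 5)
Your proof is correct and follows essentially the same route as the paper's: the first assertion is handled by taking the upward closure of $\eta(\alpha\inv(s))$ to manufacture a separating language in \Cs and concluding from its failure to separate, and the second assertion is proved by choosing separators $K_{s,t}$ for all non-\Cs-pairs, applying Proposition~\ref{prop:genocm} to obtain a single \Cs-morphism recognizing them all, and arguing by contraposition. The only (harmless) slip is that you invoke Proposition~\ref{prop:genocm} ``since \Cs is a \vari''; the hypothesis of the lemma is only that \Cs is a \pvari, which is exactly what Proposition~\ref{prop:genocm} requires, so the argument goes through unchanged.
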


\begin{proof}
  For the first assertion, let $\eta: A^* \to (N,\leq)$ be a \Cs-morphism and let $(s,t) \in M^2$ be a \Cs-pair for $\alpha$. Let $F \subseteq N$ be the set of all elements $r \in N$ such that $\eta(u) \leq r$ for some $u \in \alpha\inv(s)$. By definition, $F$ is an upper set for the ordering $\leq$ on $N$. Hence, we have $\eta\inv(F) \in \Cs(A)$ since $\eta$ is a \Cs-morphism. Moreover, it is immediate from the definition of $F$ that $\alpha\inv(s) \subseteq \eta\inv(F)$. Since $(s,t)$ is a \Cs-pair (which means that $\alpha\inv(s)$ cannot be separated from $\alpha\inv(t)$ by a language of \Cs), it follows that $\eta\inv(F) \cap \alpha\inv(t) \neq \emptyset$. This yields $v \in A^*$ such that $\eta(v) \in F$ and $\alpha(v) = t$. Finally, the definition of $F$ yields $u \in A^*$ such that $\eta(u) \leq \eta(v)$ and $\alpha(u) = s$, concluding the proof of the first assertion.

  We turn to the second assertion. Let $P \subseteq M^2$ be the set of all pairs $(s,t) \in M^2$ that are \emph{not} \Cs-pairs. For each $(s,t) \in P$, there exists $K_{s,t} \in \Cs(A)$ which separates $\alpha\inv(s)$ from $\alpha\inv(t)$. Proposition~\ref{prop:genocm} yields a \Cs-morphism $\eta: A^* \to (N,\leq)$ such that every language $K_{s,t}$ for $(s,t) \in P$ is recognized by $\eta$. It remains to prove that for every $u,v \in A^*$, if $\eta(u) \leq \eta(v)$, then $(\alpha(u),\alpha(v))$ is a \Cs-pair. We prove the contrapositive. Assuming that $(\alpha(u),\alpha(v))$ is a \emph{not} a \Cs-pair, we show that $\eta(u)\not\leq \eta(v)$. By hypothesis, $(\alpha(u),\alpha(v)) \not\in P$ which means that $K_{\alpha(u),\alpha(v)} \in \Cs(A)$ is defined and separates $\alpha\inv(\alpha(u))$ from $\alpha\inv(\alpha(v))$. Thus, $u \in K_{\alpha(u),\alpha(v)}$ and $v \not\in K_{\alpha(u),\alpha(v)}$. Since $K_{\alpha(u),\alpha(v)}$ is recognized by $\eta$, this implies $\eta(u) \not\leq \eta(v)$, concluding the proof.
\end{proof}

We also have the following key corollary of Lemma~\ref{lem:pairmor}:  the \Cs-pair relation is compatible with multiplication.

\begin{corollary} \label{cor:pairmult}
  Let \Cs be a \pvari, let $\alpha: A^* \to M$ be a morphism and let $(s_1,t_1), (s_2,t_2) \in M^2$ be two \Cs-pairs. Then $(s_1s_2,t_1t_2)$ is also a \Cs-pair.

\end{corollary}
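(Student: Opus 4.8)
The plan is to derive this corollary directly from Lemma~\ref{lem:pairmor}, which gives us a concrete representation of \Cs-pairs in terms of a \Cs-morphism. The key observation is that the $\eta$-pair relation associated with any morphism $\eta: A^* \to (N,\leq)$ is trivially compatible with multiplication, because the ordering on $N$ is compatible with multiplication and $\eta$ is a morphism.

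More precisely, I would proceed as follows. By the second assertion of Lemma~\ref{lem:pairmor}, fix a \Cs-morphism $\eta: A^* \to (N,\leq)$ such that the \Cs-pairs for $\alpha$ are \emph{exactly} the $\eta$-pairs. Now let $(s_1,t_1)$ and $(s_2,t_2)$ be \Cs-pairs for $\alpha$. Since these are $\eta$-pairs, there exist words $u_1,v_1,u_2,v_2 \in A^*$ with $\eta(u_1) \leq \eta(v_1)$, $\eta(u_2) \leq \eta(v_2)$, and $s_i = \alpha(u_i)$, $t_i = \alpha(v_i)$ for $i = 1,2$. Consider the words $u_1u_2$ and $v_1v_2$. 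Since $\leq$ is compatible with multiplication on $N$, from $\eta(u_1) \leq \eta(v_1)$ and $\eta(u_2) \leq \eta(v_2)$ we get $\eta(u_1u_2) = \eta(u_1)\eta(u_2) \leq \eta(v_1)\eta(v_2) = \eta(v_1v_2)$. Moreover $\alpha(u_1u_2) = \alpha(u_1)\alpha(u_2) = s_1s_2$ and $\alpha(v_1v_2) = t_1t_2$. Hence $(s_1s_2, t_1t_2)$ is an $\eta$-pair, and therefore — again by our choice of $\eta$ — it is a \Cs-pair for $\alpha$. This completes the argument.

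There is essentially no obstacle here: the corollary is a routine consequence once one has the ``exact'' characterization of \Cs-pairs as $\eta$-pairs for a well-chosen \Cs-morphism. The only point requiring the hypothesis that \Cs is a \vari\ (more precisely, a \pvari) is the existence of such an $\eta$, which is exactly what Lemma~\ref{lem:pairmor} provides, relying in turn on Proposition~\ref{prop:genocm}. Note that one could also attempt a direct proof from the definition of \Cs-pairs by contraposition — assuming $\alpha\inv(s_1s_2)$ is \Cs-separable from $\alpha\inv(t_1t_2)$ and trying to build separators for the factors — but this is considerably more delicate and unnecessary, so I would not pursue it.
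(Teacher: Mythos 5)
Your proof is correct and is precisely the intended argument: the paper explicitly labels this statement a ``key corollary of Lemma~\ref{lem:pairmor}'' without spelling out the details, and your derivation—taking the $\eta$ from the second assertion of that lemma, concatenating the witnessing words, and invoking compatibility of $\leq$ with multiplication—is exactly the routine step the authors elided.
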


The \Cs-pairs are used to characterize several operators. We shall use them to handle \bpolo. In the proof, we shall also need the characterization of \polo proved in~\cite[Theorem~54]{PZ:generic18}.

\begin{restatable}{theorem}{polcar}\label{thm:polcar}
  Consider a \pvari \Cs and a surjective morphism $\alpha: A^* \to (M,\leq)$. Then, $\alpha$ is a \pol{\Cs}-morphism if and only if the following property holds:
  \begin{equation}\label{eq:polcar}
    s^{\omega+1} \leq s^\omega t s^\omega \text{ for all \Cs-pairs $(s,t) \in M^2$ for $\alpha$}.
  \end{equation}
\end{restatable}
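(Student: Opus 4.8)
The first step is to fix, using the second assertion of Lemma~\ref{lem:pairmor}, a \Cs-morphism $\eta : A^* \to (N,\le_N)$ whose $\eta$-pairs for $\alpha$ are \emph{exactly} the \Cs-pairs for $\alpha$. Working with such an $\eta$ is convenient for two reasons: the languages $\eta\inv(F)$ ($F$ an upper set of $N$) are concrete members of $\Cs\subseteq\pol{\Cs}$ that serve as building blocks, and they are ``compatible'' with the equation---replacing a factor of a word by one with $\le_N$-larger $\eta$-image preserves membership in $\eta\inv(F)$, and, by the choice of $\eta$, such a replacement always corresponds to a \Cs-pair, so the equation controls how the $\alpha$-image moves. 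Both directions rely on this.

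\smallskip\noindent\emph{The equation implies that $\alpha$ is a $\pol{\Cs}$-morphism.} Every language recognized by $\alpha$ is a finite union of sets $\alpha\inv(\uclos m)$, and $\pol{\Cs}$ is a lattice (Theorem~\ref{thm:polc}); moreover $\alpha\inv(\uclos m)=\bigcup_{m'\geq m}\alpha\inv(m')$, so if for each $m$ we produce $K_m\in\pol{\Cs}$ with $\alpha\inv(m)\subseteq K_m\subseteq\alpha\inv(\uclos m)$, then $\alpha\inv(\uclos m)=\bigcup_{m'\geq m}K_{m'}\in\pol{\Cs}$ and we are done. I would build the $K_m$ by induction on the height of Simon's factorization forests for $\alpha$ (heights bounded by a constant $H$ depending only on $|M|$): produce $K_{m,h}\in\pol{\Cs}$ containing every word of forest-height $\leq h$ that maps to $m$ and contained in $\alpha\inv(\uclos m)$, and set $K_m=K_{m,H}$. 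For height $0$ (letters, and $\veps$) use unions of marked products $\eta\inv(\uclos 1_N)\,a\,\eta\inv(\uclos 1_N)$: if $\eta(z_i)\geq_N 1_N$ then $(1_M,\alpha(z_i))$ is an $\eta$-pair, hence a \Cs-pair, so the equation gives $1_M=1_M^{\omega+1}\leq 1_M^{\omega}\alpha(z_i)1_M^{\omega}=\alpha(z_i)$, whence $\alpha(z_1az_2)\geq\alpha(a)$. A binary node reduces to $\bigcup_{m_1m_2=m}K_{m_1,h}K_{m_2,h}$, which stays in $\pol{\Cs}$ because $\pol{\Cs}$ is closed under concatenation (Theorem~\ref{thm:polc}). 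The decisive case is an idempotent node: $v=v_1\cdots v_k$ with $k\geq 2$, all $\alpha(v_i)=e\in E(M)$, each $v_i$ of height $\leq h$. Here $\pol{\Cs}$ offers no iteration operator, so I over-approximate by $K_{e,h}\,\eta\inv(G_e)\,K_{e,h}$, where $G_e\subseteq N$ is the upper set generated by $\{\eta(u):\alpha(u)=e\}\cup\{1_N\}$. This is in $\pol{\Cs}$ (concatenation closure) and contains $v$ (keep $v_1$ and $v_k$, leave $v_2\cdots v_{k-1}$ for the middle \Cs-factor). It stays inside $\alpha\inv(\uclos e)$: for $z_1zz_2$ with $\alpha(z_i)\geq e$ and $\eta(z)\in G_e$, either $\eta(z)\geq_N 1_N$, giving $\alpha(z)\geq 1_M$ as above, or $\eta(z)\geq_N\eta(u)$ for some $u$ with $\alpha(u)=e$, so $(e,\alpha(z))$ is a \Cs-pair and the equation gives $e=e^{\omega+1}\leq e^{\omega}\alpha(z)e^{\omega}=e\,\alpha(z)\,e$; either way $\alpha(z_1zz_2)\geq e\,\alpha(z)\,e\geq e$. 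This last step is exactly where the equation is indispensable.

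\smallskip\noindent\emph{If $\alpha$ is a $\pol{\Cs}$-morphism, then the equation holds.} Fix a \Cs-pair $(s,t)$; we want $s^{\omega+1}\leq s^{\omega}ts^{\omega}$. Consider $L=\alpha\inv(\uclos s^{\omega+1})\in\pol{\Cs}$, write it as a finite union of marked products, and (by Proposition~\ref{prop:genocm}) choose one \Cs-morphism---call it $\eta$ again---recognizing every \Cs-factor that occurs, so that each such factor equals $\eta\inv(F)$ for some upper set $F$; Lemma~\ref{lem:pairmor} then yields $u,v\in A^*$ with $\eta(u)\leq_N\eta(v)$, $\alpha(u)=s$, $\alpha(v)=t$ (the degenerate case $s=1_M$, $u=\veps$ is handled directly, since then $\veps\in L$ lies in a length-$0$ marked product). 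Take $N$ large with $N\equiv 1\pmod{\omega}$ and set $w=u^{N}$; then $\alpha(w)=s^{\omega+1}$, so $w\in L$ and $w$ lies in some marked product, say with $n$ markers. At most $n$ of the $N$ copies of $u$ in $w$ are crossed by a marker, and the copies lying entirely inside a fixed factor of the marked product form a contiguous block; a pigeonhole argument therefore gives one factor $\eta\inv(F_p)$ that contains a long block of consecutive copies of $u$. Replace one inner copy of $u$ in that block (with index $\equiv 1\pmod{\omega}$, chosen well away from the block's ends) by $v$. The $\eta$-image of that factor can only grow, since $\eta(u)\leq_N\eta(v)$ and $\le_N$ is compatible with multiplication, so it stays in $\eta\inv(F_p)$ and the modified word $w'$ stays in $L$; but $\alpha(w')=s^{i}ts^{j}$ with $i,j$ positive multiples of $\omega$, i.e.\ $\alpha(w')=s^{\omega}ts^{\omega}$. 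Hence $s^{\omega}ts^{\omega}\in\uclos s^{\omega+1}$, which is the desired inequality.

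\smallskip\noindent\emph{Main obstacle.} The ``if'' direction is the harder one, and its crux is the idempotent-node case: an unbounded product of pieces already known to lie in $\pol{\Cs}$ must be captured by a single $\pol{\Cs}$-language with no iteration operation available, and the over-approximation by the \Cs-language $\eta\inv(G_e)$ succeeds only because the equation, evaluated at the idempotent $e$ through the $\eta$-pair/\Cs-pair correspondence, keeps it below $\uclos e$. The rest---the reductions, the binary node, the appeal to Simon's theorem, and the pumping argument of the other direction---is routine bookkeeping.
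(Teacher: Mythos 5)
The paper itself does not prove Theorem~\ref{thm:polcar}: it quotes it from~\cite[Theorem~54]{PZ:generic18}, so there is no in-paper argument to compare against. Judged on its own, your plan gives a correct proof of the stated result.

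For the ``equation implies $\pol{\Cs}$-morphism'' direction, the reduction to producing languages $K_m\in\pol{\Cs}$ with $\alpha\inv(m)\subseteq K_m\subseteq\alpha\inv(\uclos m)$ is the right normalization, and the factorization-forest induction is a standard substitute for the iteration operation that $\pol{\Cs}$ lacks. The idempotent-node step is indeed the crux, and your over-approximation $K_{e,h}\,\eta\inv(G_e)\,K_{e,h}$ stays inside $\alpha\inv(\uclos e)$ precisely because Equation~\eqref{eq:polcar}, evaluated at the $\Cs$-pair $(e,\alpha(z))$ furnished by the $\eta$ from the second half of Lemma~\ref{lem:pairmor}, gives $e\leq e\alpha(z)e$. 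Note that this direction only invokes instances of~\eqref{eq:polcar} where $s$ is idempotent (or $1_M$); that is legitimate---the converse direction recovers the full equation, and in any case the idempotent instances imply the general one by Corollary~\ref{cor:pairmult}, replacing $t$ with $s^{\omega-1}t$. The converse pumping argument inside a marked product is also correct; two points to spell out in a full write-up are that the genuine degenerate case is $u=\varepsilon$ (which forces $s=1_M$, but $s=1_M$ with $u\neq\varepsilon$ goes through the main argument), and the arithmetic of choosing the replaced index $j_0\equiv 1\pmod{\omega}$ so that both $j_0-1$ and $N-j_0$ are nonzero multiples of $\omega$, which requires the uncrossed block to have length at least $2\omega+1$ and $N$ large enough that such a block exists regardless of its position. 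As an aside, the present paper handles the analogous covering step for \bpolo via a well-quasi-order/Nash-Williams argument on $\eta$-patterns (Proposition~\ref{prop:core}) rather than via Simon's theorem; both devices play the same role, and your choice is a valid alternative.
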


Theorem~\ref{thm:polcar} implies that \pol{\Cs} has decidable membership when \Cs has decidable separation. We also state a useful corollary.

\begin{restatable}{corollary}{polcpairs} \label{cor:polcpairs}
  Let \Cs be a \pvari and $\alpha: A^* \to M$ a morphism. For every $e \in E(M)$ and $s \in M$, if $(e,s)$ is a \Cs-pair for $\alpha$, then $(e,ese)$ is a\/ \pol{\Cs}-pair for $\alpha$.
\end{restatable}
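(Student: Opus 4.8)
The plan is to argue by contradiction, with Theorem~\ref{thm:polcar} (the algebraic characterization of \polo) as the main engine, combined with one auxiliary observation that ``lifts'' a \Cs-pair for $\alpha$ to a \Cs-pair for any morphism. Throughout, note that $(e,s)$ being a \Cs-pair forces $\alpha\inv(e)\neq\emptyset$ and $\alpha\inv(s)\neq\emptyset$ (otherwise $\emptyset\in\Cs$ separates), and that $e$ being idempotent is only used at the very end.

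Assume for contradiction that $\alpha\inv(e)$ is \pol{\Cs}-separable from $\alpha\inv(ese)$ by a language $K\in\pol{\Cs}$, so $\alpha\inv(e)\subseteq K$ and $K\cap\alpha\inv(ese)=\emptyset$. Since \Cs is a \vari, $K$ is regular; let $\eta: A^*\to(N,\leq)$ be its syntactic morphism, so that $K=\eta\inv(F)$ for an upper set $F\subseteq N$. Because $K\in\pol{\Cs}$ and \pol{\Cs} is a \pvari (Theorem~\ref{thm:polc}), $\eta$ is a surjective \pol{\Cs}-morphism, so Theorem~\ref{thm:polcar} applies to $\eta$: writing $\omega=\omega(N)\geq1$, we have $r^{\omega+1}\leq r^\omega\, t\, r^\omega$ for every \Cs-pair $(r,t)$ for $\eta$.

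The crux is the following claim: \emph{since $(e,s)$ is a \Cs-pair for $\alpha$, there exist $x\in\alpha\inv(e)$ and $y\in\alpha\inv(s)$ such that $\bigl(\eta(x),\eta(y)\bigr)$ is a \Cs-pair for $\eta$.} I would prove this by contradiction. Let $R=\eta(\alpha\inv(e))$ and $R'=\eta(\alpha\inv(s))$, both finite and nonempty. If the claim fails, then for every $r\in R$ and $r'\in R'$ the language $\eta\inv(r)$ is \Cs-separable from $\eta\inv(r')$, say by $L_{r,r'}\in\Cs$ with $\eta\inv(r)\subseteq L_{r,r'}$ and $\eta\inv(r')\cap L_{r,r'}=\emptyset$. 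Set $L=\bigcup_{r\in R}\bigcap_{r'\in R'}L_{r,r'}$, which lies in \Cs because \Cs is a \pvari (closed under finite unions and intersections) and $R,R'$ are finite. One checks directly that $\alpha\inv(e)\subseteq\eta\inv(R)\subseteq L$ (each $w\in\alpha\inv(e)$ has $\eta(w)\in R$, hence lies in every $L_{\eta(w),r'}$) and that $L\cap\alpha\inv(s)=\emptyset$ (each $w\in\alpha\inv(s)$ has $\eta(w)=r'\in R'$, so for every $r\in R$ it avoids $L_{r,r'}$, hence avoids $\bigcap_{r''\in R'}L_{r,r''}$). Thus $L$ separates $\alpha\inv(e)$ from $\alpha\inv(s)$, contradicting that $(e,s)$ is a \Cs-pair for $\alpha$.

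Granted the claim, put $r=\eta(x)$, $t=\eta(y)$ and set $u=x^{\omega+1}$ and $v=x^\omega\, y\, x^\omega$. Since $e\in E(M)$ and $\omega\geq1$, we get $\alpha(u)=e^{\omega+1}=e$ and $\alpha(v)=e^\omega s e^\omega=ese$, while $\eta(u)=r^{\omega+1}$ and $\eta(v)=r^\omega\, t\, r^\omega$. As $u\in\alpha\inv(e)\subseteq K=\eta\inv(F)$, we have $r^{\omega+1}=\eta(u)\in F$; Theorem~\ref{thm:polcar} applied to the \Cs-pair $(r,t)$ for $\eta$ yields $r^{\omega+1}\leq r^\omega\, t\, r^\omega=\eta(v)$, and $F$ is an upper set, so $\eta(v)\in F$, i.e.\ $v\in K$. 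But $\alpha(v)=ese$ puts $v$ in $K\cap\alpha\inv(ese)$, a contradiction. Hence $\alpha\inv(e)$ is not \pol{\Cs}-separable from $\alpha\inv(ese)$, that is, $(e,ese)$ is a \pol{\Cs}-pair for $\alpha$. I expect the lifting claim to be the only real obstacle; the remainder is a routine idempotent-pumping computation once Theorem~\ref{thm:polcar} is available.
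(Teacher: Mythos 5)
Your proof is correct, and it takes a genuinely different route from the paper's. The paper argues directly: it invokes Lemma~\ref{lem:pairmor} twice to manufacture a \pol{\Cs}-morphism $\gamma$ whose $\gamma$-pairs are exactly the \pol{\Cs}-pairs for $\alpha$, and a \Cs-morphism $\eta$ whose $\eta$-pairs are exactly the \Cs-pairs for $\gamma$; it then lifts the \Cs-pair $(e,s)$ to a \Cs-pair $(\gamma(u),\gamma(v))$ for $\gamma$, applies Theorem~\ref{thm:polcar} to $\gamma$, and reads off that $(e,ese)$ is a $\gamma$-pair, hence a \pol{\Cs}-pair. You argue by contradiction: you take the syntactic morphism $\eta$ of a hypothetical separator $K\in\pol{\Cs}$ (which is automatically a surjective \pol{\Cs}-morphism), apply Theorem~\ref{thm:polcar} to that $\eta$, and derive the contradiction from the upper-set structure of the accepting set $F$. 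The crux of your argument is the lifting claim, which is \emph{not} the paper's Lemma~\ref{lem:pairmor}: that lemma produces an $\eta$-pair (a pair of words with $\eta$-images related by $\leq$) for a \Cs-morphism $\eta$, whereas you need a \Cs-\emph{pair} for an arbitrary morphism $\eta$ lying over $(e,s)$. Your inline proof of this claim — taking the would-be separators $L_{r,r'}\in\Cs$ and combining them into $L=\bigcup_{r\in R}\bigcap_{r'\in R'}L_{r,r'}\in\Cs$, which then separates $\alpha\inv(e)$ from $\alpha\inv(s)$ — is a correct and standard ``refinement'' argument that the paper does not state in this form. The trade-off: the paper's route is shorter once Lemma~\ref{lem:pairmor} is available as a black box, while yours is more self-contained (it never needs the existence of a morphism that detects \pol{\Cs}-pairs) at the cost of proving the lifting claim from scratch. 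Both are correct and of comparable length.
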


\begin{proof}
  It follows from Lemma~\ref{lem:pairmor} that there exists a \pol{\Cs}-morphism $\gamma: A^* \to (Q,\leq)$ such that the \pol{\Cs}-pairs for $\alpha$ are exactly the $\gamma$-pairs for $\alpha$. Furthermore, we obtain by Lemma~\ref{lem:pairmor} a \Cs-morphism $\eta: A^* \to (N,\leq)$ such that the \Cs-pairs for $\gamma$ are exactly the $\eta$-pairs for $\gamma$.

  Since $(e,s)$ is a \Cs-pair for $\alpha$ and since $\eta$ is a \Cs-morphism, we may apply Lemma~\ref{lem:pairmor} again to obtain $u,v \in A^*$ such that $\eta(u) \leq \eta(v)$, $\alpha(u) = e$ and $\alpha(v) = s$. Let $q = \gamma(u)$ and $r = \gamma(v)$. By definition, $(q,r) \in Q^2$ is an $\eta$-pair for $\gamma$, which means that it is a \Cs-pair by definition of $\eta$. Since $\gamma$ is a \pol{\Cs}-morphism, Theorem~\ref{thm:polcar} yields $q^{\omega+1} \leq q^\omega r q^\omega$, \emph{i.e.}, $\gamma(u^{k+1}) \leq \gamma(u^kvu^k)$ for $k = \omega(Q)$. By definition of $\gamma$, it follows that $(\alpha(u^{k+1}),\alpha(u^kvu ^k))$ is a \pol{\Cs}-pair. Since $e \in E(M)$, this exactly says that $(e,ese)$ is a \pol{\Cs}-pair for $\alpha$, as~desired.
\end{proof}

\subsection{\texorpdfstring{The \Cs-sets associated to a morphism}{The C-sets associated to a morphism}}

Roughly speaking, \Cs-sets are the generalization of \Cs-pairs adapted to covering. We only use \Cs-sets when \Cs is a \emph{Boolean algebra}. In this case, by Lemma~\ref{lem:covbopl}, we may consider the variant in which the language that needs to be covered is $A^*$.

Consider a class \Cs and a morphism $\alpha: A^* \to M$. We say that a subset $T \subseteq M$ is a \emph{\Cs-set} for $\alpha$ if and only if the set $\{\alpha\inv(t) \mid t \in T\}$ is \emph{not} \Cs-coverable. As for \Cs-pairs, the set of all \Cs-sets for $\alpha$ is finite: it is a subset of $2^M$. Furthermore, having a \Cs-covering algorithm in hand is clearly enough to compute all \Cs-sets for an input morphism~$\alpha$.

\begin{restatable}{lemma}{covtosets}\label{lem:covtosets}
  Let \Cs be a class with decidable covering. Given a morphism $\alpha: A^*\to M$, one can compute all \Cs-sets for~$\alpha$.
\end{restatable}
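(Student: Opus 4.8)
The plan is to enumerate all subsets $T$ of the finite monoid $M$ and, for each one, decide whether it is a \Cs-set by a single call to the assumed \Cs-covering algorithm. First I would observe that for each $t \in M$, the language $\alpha\inv(t)$ is regular: it is recognized by $\alpha$ with accepting set $\{t\}$, and a representation of it (as a finite automaton, or as the morphism $\alpha$ together with the distinguished element $t$) can be produced directly from the given description of $\alpha$ and $M$. Likewise $A^*$ is regular. Hence, for every $T \subseteq M$, the finite family $\Lb_T = \{\alpha\inv(t) \mid t \in T\}$ of regular languages is effectively computable.

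Next I would recall that, by definition, $T$ is a \Cs-set for $\alpha$ exactly when the family $\Lb_T$ is \emph{not} \Cs-coverable, i.e.\ when the pair $(A^*,\Lb_T)$ is not \Cs-coverable. Since \Cs-covering is decidable by hypothesis, running the covering algorithm on input $(A^*,\Lb_T)$ decides whether $T$ belongs to the set of \Cs-sets. As $M$ is finite, there are only $2^{|M|}$ subsets $T$ to examine; iterating over all of them and collecting those answered negatively by the covering algorithm therefore produces, in finite time, the complete (and, as noted in the text, finite) list of all \Cs-sets for $\alpha$.

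I do not anticipate any real obstacle: the argument is a direct unfolding of the definition of \Cs-set combined with the finiteness of $M$ (hence of $2^M$) and the decidability assumption. The only points needing a one-line justification are that $A^*$ and each $\alpha\inv(t)$ are regular and effectively obtainable from $\alpha$, and that the \Cs-set condition is precisely a negated instance of the \Cs-covering problem with first component $A^*$.
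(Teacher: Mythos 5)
Your proof is correct and is exactly the brute-force argument the paper has in mind (the paper treats this lemma as immediate and does not spell out a proof). You correctly unfold the definition: a subset $T \subseteq M$ is a \Cs-set iff $(A^*,\{\alpha\inv(t)\mid t\in T\})$ is not \Cs-coverable, each $\alpha\inv(t)$ is an effectively given regular language, and iterating the assumed covering decision procedure over the $2^{|M|}$ subsets of $M$ terminates and produces the list.
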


Let us present a few properties. First, we have the following fact which is immediate from the definition of covering.

\begin{restatable}{fact}{setinc} \label{fct:setinc}
  Let \Cs be a class, $\alpha: A^* \to M$ be a morphism and $T \subseteq M$ be a \Cs-set for $\alpha$. Every $S \subseteq T$ is also a \Cs-set for $\alpha$.
\end{restatable}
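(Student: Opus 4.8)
The statement is essentially immediate from the definition of coverability, so the plan is to argue the contrapositive: assuming $\{\alpha\inv(s) \mid s \in S\}$ is \Cs-coverable, I will show that $\{\alpha\inv(t) \mid t \in T\}$ is \Cs-coverable as well. Since $T$ is then \emph{not} a witness of the hypothesis, this gives the contrapositive of the claim once we observe that ``being a \Cs-set'' is exactly the negation of coverability of the associated family.

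First I would unfold the definitions. Recall that a finite family $\Lb$ of languages is \Cs-coverable (in the Boolean-algebra sense used here, where the language to cover is $A^*$) precisely when there is a finite set $\Kb$ of languages of \Cs with $A^* = \bigcup_{K \in \Kb} K$ such that every $K \in \Kb$ satisfies $K \cap L = \emptyset$ for some $L \in \Lb$. Thus $T \subseteq M$ is a \Cs-set for $\alpha$ iff no such $\Kb$ exists for the family $\Lb_T = \{\alpha\inv(t) \mid t \in T\}$, and similarly for $\Lb_S = \{\alpha\inv(s) \mid s \in S\}$ with $S \subseteq T$.

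The key (and only) step is the monotonicity observation: a \Cs-cover $\Kb$ witnessing coverability of $\Lb_S$ also witnesses coverability of $\Lb_T$. Indeed, take any $K \in \Kb$. By assumption there is $s \in S$ with $K \cap \alpha\inv(s) = \emptyset$; since $S \subseteq T$, this same $s$ lies in $T$, so $K \cap L = \emptyset$ for $L = \alpha\inv(s) \in \Lb_T$. The condition $A^* = \bigcup_{K \in \Kb} K$ is unchanged, and $\Kb$ is still a finite subset of \Cs. Hence $\Lb_T$ is \Cs-coverable, i.e.\ $T$ is not a \Cs-set for $\alpha$, which is the required contrapositive.

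There is no real obstacle here: the argument is purely set-theoretic bookkeeping, using only that enlarging the family $\Lb$ can only make the ``avoidance'' requirement on each $K$ easier to satisfy. One could equally phrase it directly rather than by contraposition, but the contrapositive makes the one-line monotonicity argument cleanest.
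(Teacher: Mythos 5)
Your proof is correct and is exactly the "immediate from the definition" argument the paper alludes to without writing out: if a \Cs-cover of $A^*$ is separating for $\{\alpha\inv(s)\mid s\in S\}$, then by $S\subseteq T$ it is separating for $\{\alpha\inv(t)\mid t\in T\}$ as well, which is precisely the contrapositive of the claim.
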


When \Cs is a \vari, we can characterize the \Cs-sets associated to a morphism with a single \Cs-morphism. Consider two morphisms $\eta: A^* \to N$ and $\alpha: A^* \to M$. We say that $T \subseteq M$ is an \emph{$\eta$-set} for $\alpha$ to indicate that there exists $F  \subseteq  A^*$ such that $\eta(w) = \eta(w')$ for every $w,w' \in F$ and $T = \alpha(F)$.

\begin{restatable}{lemma}{setmor} \label{lem:setmor}
	Let \Cs be a \vari  and $\alpha: A^* \to M$ be a morphism. For every \Cs-morphism $\eta: A^* \to N$, each \Cs-set for $\alpha$ is also an $\eta$-set. There exists a \Cs-morphism $\eta: A^* \to N$ such that the \Cs-sets for $\alpha$ are exactly the $\eta$-sets.
\end{restatable}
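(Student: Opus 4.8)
The statement to prove is Lemma~\ref{lem:setmor}, the \Cs-set analogue of Lemma~\ref{lem:pairmor}. Its structure mirrors that of the \Cs-pair lemma exactly, so I would follow the same two-part template. Part one: fix an arbitrary \Cs-morphism $\eta: A^* \to N$ and show that every \Cs-set for $\alpha$ is an $\eta$-set. Part two: construct a \emph{particular} \Cs-morphism $\eta: A^* \to N$ for which the converse also holds, so that the \Cs-sets are exactly the $\eta$-sets.

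For part one, let $T \subseteq M$ be a \Cs-set for $\alpha$, i.e.\ $\{\alpha\inv(t) \mid t \in T\}$ is not \Cs-coverable (in the Boolean-algebra sense, so $(A^*, \{\alpha\inv(t) \mid t\in T\})$ is not \Cs-coverable). I want to produce $F \subseteq A^*$ with $\eta$ constant on $F$ and $\alpha(F) = T$. The natural candidate is to look for a single $\eta$-class witnessing non-coverability. Suppose, for contradiction, that for \emph{every} $r \in N$ the set $\alpha(\eta\inv(r))$ fails to contain $T$, i.e.\ for each $r$ there is $t_r \in T$ with $\eta\inv(r) \cap \alpha\inv(t_r) = \emptyset$. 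Then the family $\{\eta\inv(r) \mid r \in N\}$ is itself a \Cs-cover of $A^*$ (each $\eta\inv(r) \in \Cs$ since $\eta$ is a \Cs-morphism), and each block $\eta\inv(r)$ is disjoint from $\alpha\inv(t_r)$ with $t_r \in T$; this exhibits a \Cs-cover witnessing that $\{\alpha\inv(t)\mid t\in T\}$ \emph{is} \Cs-coverable, contradiction. Hence there is some $r_0 \in N$ with $T \subseteq \alpha(\eta\inv(r_0))$. Now set $F = \eta\inv(r_0) \cap \alpha\inv(T)$: then $\eta$ is constant (equal to $r_0$) on $F$, and $\alpha(F) = T$ by the inclusion just obtained (for each $t \in T$ pick $w \in \eta\inv(r_0) \cap \alpha\inv(t)$). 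So $T$ is an $\eta$-set.

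For part two, I mimic the proof of Lemma~\ref{lem:pairmor}: let $P \subseteq 2^M$ be the collection of subsets $T \subseteq M$ that are \emph{not} \Cs-sets, i.e.\ $\{\alpha\inv(t)\mid t\in T\}$ \emph{is} \Cs-coverable. For each such $T$, fix a \Cs-cover $\Kb_T$ of $A^*$ (using Lemma~\ref{lem:covbopl} to pass to the $A^*$-version) such that every $K \in \Kb_T$ is disjoint from some $\alpha\inv(t)$ with $t \in T$; $\Kb_T$ is finite and all its members lie in \Cs. By Proposition~\ref{prop:genocm} there is a \Cs-morphism $\eta: A^* \to N$ recognizing every language in every $\Kb_T$, $T \in P$ (finitely many languages in total). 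I claim that for this $\eta$, every $\eta$-set for $\alpha$ is a \Cs-set. Equivalently, if $T$ is not a \Cs-set — so $T \in P$ — I must show $T$ is not an $\eta$-set, i.e.\ there is no $F$ on which $\eta$ is constant with $\alpha(F) = T$. Suppose such $F$ existed, with $\eta(F) = \{r\}$. Since $\Kb_T$ covers $A^*$, the word $r$-class $\eta\inv(r)$ meets some $K \in \Kb_T$; but $K$ is recognized by $\eta$, so $\eta\inv(r) \subseteq K$ (as $\eta\inv(r)$ is an $\eta$-class and $K$ is an $\eta$-recognized language containing one element of it). In particular $F \subseteq \eta\inv(r) \subseteq K$. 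But $K$ is disjoint from $\alpha\inv(t_0)$ for some $t_0 \in T$, so $t_0 \notin \alpha(F)$, contradicting $\alpha(F) = T$. Combined with part one (each \Cs-set is an $\eta$-set) this gives the desired equality.

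**Main obstacle.** The only delicate point is the direction in part two: I need the cover $\Kb_T$ to be a cover of \emph{$A^*$} (not merely of $\bigcup_{t\in T}\alpha\inv(t)$), so that the chosen $\eta$-class $\eta\inv(r)$ is guaranteed to be caught by some block $K$. This is exactly why the lemma restricts to \varis (Boolean algebras): Lemma~\ref{lem:covbopl} lets me replace $(A^*, \{\alpha\inv(t)\mid t\in T\})$-coverability by $\{A^*\}\cup\{\alpha\inv(t)\mid t\in T\}$-coverability, i.e.\ an honest cover of $A^*$. I would state this reduction explicitly at the start. Everything else is a routine transcription of the \Cs-pair argument, with "separating language" replaced by "cover" and "$\eta(u)\le\eta(v)$" replaced by "$\eta$ constant on $F$".
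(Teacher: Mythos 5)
Your proof is correct and follows essentially the same two-part strategy as the paper's: in the first part you observe that $\{\eta\inv(r) \mid r \in N\}$ is a \Cs-cover of $A^*$, so non-coverability forces some $\eta$-class to meet every $\alpha\inv(t)$ (you phrase this contrapositively, the paper directly, but the content is identical); in the second part you build $\eta$ via Proposition~\ref{prop:genocm} from the covers witnessing the non-\Cs-sets and derive a contradiction exactly as the paper does. The minor presentational differences (taking $F = \eta\inv(r_0)\cap\alpha\inv(T)$ rather than a set of representatives, routing the inclusion $F\subseteq K$ through $\eta\inv(r)$) do not change the argument.
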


\begin{proof}
  For the first assertion, we fix a \Cs-set $T \subseteq M$ and a \Cs-morphism $\eta: A^* \to N$. We prove that $T$ is an $\eta$-set. We define $\Kb=\{\eta\inv(s) \mid r \in N\}$. Since $\eta$ is a \Cs-morphism, \Kb is a \Cs-cover of $A^*$. Hence, since $T$ is a \Cs-set (which means that $\{\alpha\inv(t) \mid t \in T\}$ is \emph{not} \Cs-coverable), there exists $K\in\Kb$ such that $K \cap \alpha\inv(t) \neq \emptyset$ for every $t \in T$. Thus, for each $t \in T$, we get $w_t \in K \cap \alpha\inv(t)$. Let $F=\{w_t\mid t\in T\}$. By definition $T=\alpha(F)$. Finally, by definition of \Kb, we have $K= \eta\inv(r)$ for some $r\in N$. Hence, $\eta(w) = r$ for every $w \in F$, which completes the proof of the first assertion.

  We turn to the second assertion. We first introduce some terminology: given two sets of languages \Kb and \Lb, we say that \Kb is \emph{separating} for \Lb when for every $K\in\Kb$, there exists $L\in\Lb$ such that $K\cap L=\emptyset$. Note that by definition, \Lb is \Cs-coverable if and only if there exists a \Cs-cover of $A^*$ that is separating for~\Lb.

  Let now $P \subseteq 2^M$ be the set of all subsets $T$ of $M$ that are \emph{not} \Cs-sets. By definition, for every $T \in P$, there exists a \Cs-cover $\Kb_{T}$ of $A^*$ that is separating for $\{\alpha\inv(t) \mid t \in T\}$. Proposition~\ref{prop:genocm} yields a \Cs-morphism $\eta: A^* \to N$ recognizing all languages $K\in \Kb_{T}$, for $T \in P$. It remains to prove that if $T \subseteq M$ is an $\eta$-set for $\alpha$, then it is a \Cs-set for $\alpha$ (the converse implication is immediate from the first assertion). By contradiction, assume that $T$ is \emph{not} a \Cs-set, \emph{i.e.}, that $T \in P$. By definition of covering, it follows that $T \neq \emptyset$. Also, since $T \in P$, there exists a \Cs-cover $\Kb_{T}$ of $A^*$ that is separating for $\{\alpha\inv(t) \mid t \in T\}$. Moreover, since $T$ is an $\eta$-set, we get $F\subseteq A^*$ such that $\eta(w)=\eta(w')$ for every $w,w'\in F$ and $T=\alpha(F)$. Since $T\neq\emptyset$, we have $F \neq \emptyset$ as well. Since $\Kb_{T}$ is a cover of $A^*$, we get $K \in \Kb_{T}$ such that $K \cap F \neq \emptyset$. Thus, since $K$ is recognized by $\eta$ and $\eta(w) = \eta(w')$ for every $w,w' \in F$, it follows that $F \subseteq K$. Since $T = \alpha(F)$, we obtain $K \cap \alpha\inv(t) \neq \emptyset$ for every $t \in T$. It follows that $\Kb_{T}$ is \emph{not} separating for $\{\alpha\inv(t) \mid t \in T\}$, contradicting our hypothesis.
\end{proof}

\subsection{\texorpdfstring{The \Cs-swaps associated to a morphism}{The C-swaps associated to a morphism}}

This last relation is \emph{ad hoc}. It is not based on a ``Boolean problem'' such as separation or covering. We directly use \Cs-morphisms.

Given two morphisms $\alpha: A^* \to M$ and $\eta: A^* \to N$, we define tuples $(q,r,s,t,e,f) \in M^6$ called the $\eta$-swaps~for~$\alpha$. When using these objects later, the elements $e,f\in M$ will be idempotents (hence the notation). However, this is not required for the definition. We say that $(q,r,s,t,e,f) \in M^6$ is an \emph{$\eta$-swap for $\alpha$} when there exist words $u_q \in \alpha\inv(q)$, $u_r \in \alpha\inv(r)$, $u_s \in \alpha\inv(s)$, $u_t \in \alpha\inv(t)$, $u_e \in \alpha\inv(e)$ and $u_f \in \alpha\inv(f)$ satisfying the two following conditions:
\begin{enumerate}
  \item The elements $\eta(u_e)$ and $\eta(u_f)$ are idempotents of $N$.
  \item We have $\eta(u_ru_eu_s) = \eta(u_f)$ and $\eta(u_qu_fu_t) = \eta(u_e)$.
\end{enumerate}
Consider a \vari \Cs. We define the \Cs-swaps for $\alpha$ as the tuples $(q,r,s,t,e,f) \in M^6$ that are $\eta$-swaps for \emph{every \Cs-morphism $\eta: A^* \to N$}. We conclude this section with a technical result.

\begin{restatable}{lemma}{witness} \label{lem:witness}
  Let \Cs be a \vari and let $\alpha: A^* \to M$ be a morphism. There xs a \Cs-morphism $\eta: A^* \to N$ such that the \Cs-swaps for $\alpha$ are exactly the $\eta$-swaps and the \Cs-pairs for $\alpha$ are exactly the $\eta$-pairs.
\end{restatable}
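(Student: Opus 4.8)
The plan is to combine the two already-established "single witness morphism" lemmas for \Cs-pairs and \Cs-sets (Lemma~\ref{lem:pairmor} and Lemma~\ref{lem:setmor}) with a fresh argument handling \Cs-swaps, and then merge all three witnesses into one. The overall architecture mirrors the proofs of those two lemmas: for each of the three relations, the "negative" instances are each witnessed by finitely many languages of \Cs, so Proposition~\ref{prop:genocm} produces a \Cs-morphism recognizing all of them simultaneously; conversely, a coarse enough \Cs-morphism cannot separate/cover/swap any instance that the relation forbids. Since being a \Cs-morphism is preserved under taking a common refinement (a surjective morphism that factors through two \Cs-morphisms is again a \Cs-morphism, because it recognizes at least the languages recognized by each factor, all in \Cs), the three witnesses can be amalgamated into a single $\eta$ by taking $\eta = (\eta_{\mathrm{pair}}, \eta_{\mathrm{set}}, \eta_{\mathrm{swap}})$ into the product monoid, restricted to its image; one then checks this $\eta$ still satisfies all three "exactness" statements.

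**Key steps, in order.** First I would recall that the collection of \Cs-pairs, of \Cs-sets, and of \Cs-swaps for $\alpha$ are all finite sets (subsets of $M^2$, $2^M$, and $M^6$ respectively). Second, for the swaps, I would establish the analogue of the two existing lemmas: (i) for \emph{every} \Cs-morphism $\eta$, every \Cs-swap is an $\eta$-swap — this is immediate, since by definition a \Cs-swap is an $\eta$-swap for every \Cs-morphism; (ii) there exists a \Cs-morphism $\eta$ such that the $\eta$-swaps are \emph{exactly} the \Cs-swaps. For (ii), I would take the set $P$ of all tuples $(q,r,s,t,e,f) \in M^6$ that are \emph{not} \Cs-swaps; for each such tuple there is some \Cs-morphism $\eta_{(q,r,s,t,e,f)}$ that fails to be a swap for it; applying Proposition~\ref{prop:genocm} to the (finite) family of all languages recognized by these finitely many morphisms yields a single \Cs-morphism $\eta$ refining all of them, and one checks that any $\eta$-swap of $\eta$ must be a \Cs-swap (if $(q,r,s,t,e,f)$ were a non-swap, i.e.\ in $P$, it would fail to be an $\eta_{(q,r,s,t,e,f)}$-swap, hence fail to be an $\eta$-swap, since $\eta$ factors through $\eta_{(q,r,s,t,e,f)}$ and an $\eta$-swap unravels to an $\eta_{(q,r,s,t,e,f)}$-swap via the factorization). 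Third, I would invoke Lemma~\ref{lem:pairmor} and Lemma~\ref{lem:setmor} to obtain witness morphisms $\eta_1$ for the pairs and $\eta_2$ for the swaps constructed above (or simply apply the three constructions and take a common refinement of all of them). Fourth, I would form $\eta: A^* \to N$ as the corestriction to its image of the product of these witness morphisms, note that $\eta$ is a surjective \Cs-morphism (the languages it recognizes include those recognized by each factor, all in \Cs), and verify that for $\eta$: (a) every \Cs-pair is an $\eta$-pair and conversely every $\eta$-pair is a \Cs-pair — the forward direction is Lemma~\ref{lem:pairmor}(first part), the converse holds because $\eta$ refines the pair-witness morphism so $\eta(u) \le \eta(v)$ forces equality/inequality in that factor; (b) the same for swaps, using the swap-witness factor.

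**Main obstacle.** The routine directions — "for every \Cs-morphism, \Cs-swaps are $\eta$-swaps" and the amalgamation — are essentially bookkeeping. The delicate point is the converse for swaps, i.e.\ that a sufficiently coarse \Cs-morphism has \emph{no} spurious $\eta$-swaps. The subtlety is that the definition of $\eta$-swap quantifies existentially over witness words $u_q,\dots,u_f$, and the conditions involve idempotency of $\eta(u_e),\eta(u_f)$ and the equalities $\eta(u_r u_e u_s)=\eta(u_f)$, $\eta(u_q u_f u_t)=\eta(u_e)$ — so I must make sure that when I say "the non-swap $(q,r,s,t,e,f)$ fails to be an $\eta_{(q,r,s,t,e,f)}$-swap", this failure really does propagate to $\eta$: concretely, since $\eta = \theta \circ \eta_{(q,r,s,t,e,f)}$ for the appropriate projection-type morphism $\theta$, any choice of witness words giving an $\eta$-swap would, after applying $\eta_{(q,r,s,t,e,f)}$, give an $\eta_{(q,r,s,t,e,f)}$-swap, contradiction. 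One has to be careful that idempotency and the two product equations are preserved when passing from $\eta$ down to a coarser quotient, which they are, since these are statements about equalities of images that are only easier to satisfy after further quotienting — wait, that goes the wrong way; in fact the argument must go through the \emph{finer} morphism $\eta_{(q,r,s,t,e,f)}$, and the point is that $\eta$-equalities do \emph{not} in general lift to $\eta_{(q,r,s,t,e,f)}$-equalities. This is exactly why the correct construction uses Proposition~\ref{prop:genocm} to produce $\eta$ as a refinement that \emph{recognizes all the witnessing languages}, rather than as an arbitrary coarsening — and I would pattern the argument precisely on the proof of Lemma~\ref{lem:setmor}, where the same tension is resolved. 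So the real work is to set up the "ad hoc" non-swap witness languages correctly so that recognizing them forces $\eta$ to refute every non-swap; I expect this to require identifying, for each non-swap tuple, a finite family of languages in \Cs whose recognition by $\eta$ blocks all witness-word choices, and this is the step that needs genuine care rather than mechanical adaptation.
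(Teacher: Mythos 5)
Your proof takes essentially the same route as the paper's: collect the finite set $P\subseteq M^6$ of tuples that are not \Cs-swaps, pick for each $\widebar{p}\in P$ a \Cs-morphism $\eta_{\widebar{p}}$ refuting it (this exists by definition of \Cs-swap), adjoin the pair-witness morphism from Lemma~\ref{lem:pairmor}, and combine everything into one \Cs-morphism $\eta$ by passing to the surjective restriction of the product (your variant via Proposition~\ref{prop:genocm} applied to all languages recognized by these finitely many morphisms produces the same structure). The verification you gave \emph{before} the ``wait, that goes the wrong way'' aside is the correct one and is exactly the paper's one-line ``one may now verify''.

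The aside is a false alarm, and both of its claims are wrong. By construction each $\eta_{\widebar{p}}$ factors as $\eta_{\widebar{p}}=\pi_{\widebar{p}}\circ\eta$ for a monoid morphism $\pi_{\widebar{p}}$ (the projection onto the $\widebar{p}$-coordinate restricted to the image of the product), so $\eta$ is the \emph{finer} morphism and $\eta_{\widebar{p}}$ is the \emph{coarser} one, not the other way around. Consequently $\eta$-equalities \emph{do} descend to $\eta_{\widebar{p}}$-equalities, idempotency of $\eta(u_e)$ and $\eta(u_f)$ is preserved under the morphism $\pi_{\widebar{p}}$, and any witness words $u_q,\dots,u_f$ exhibiting $\widebar{p}$ as an $\eta$-swap already exhibit it as an $\eta_{\widebar{p}}$-swap, giving the contradiction. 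There is no need to engineer ad hoc ``non-swap witness languages'' beyond the preimages $\eta_{\widebar{p}}\inv(n)$ already recognized by the factors. Also, a small slip: the lemma is about \Cs-pairs and \Cs-swaps only; \Cs-sets and Lemma~\ref{lem:setmor} are not involved here, and invoking them is superfluous (though harmless if you chose to fold them in).
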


\begin{proof}
  Let $P \subseteq M^6$ be the set of all tuples that are \emph{not} a \Cs-swap and $\{\widebar{p}_1,\dots,\widebar{p}_k\} = P$. By definition, we know that for each $i \leq k$, there exists a \Cs-morphism $\eta_i: A^* \to N_i$ such that $\widebar{p}_i \in M^6$ is not an $\eta_i$-swap. Moreover, Lemma~\ref{lem:pairmor} yields a \Cs-morphism $\gamma: A^* \to Q$ such that the \Cs-pairs are exactly the $\gamma$-pairs. Consider the monoid $N_1 \times \cdots N_k \times Q$ equipped with the componentwise multiplication. Let now $\beta: A^* \to N_1 \times \cdots N_k \times Q$ be the morphism defined by $\beta(w)= (\eta_1(w),\dots,\eta_k(w),\gamma(w))$ for $w \in A^*$. Finally, let $\eta: A^* \to N$ be the surjective restriction of $\beta$. One may now verify that $\eta$ satisfies the desired properties.
\end{proof}

\section{General characterization}
\label{sec:bpolgen}
In this section, we characterize the \bpol{\Cs}-morphisms.

\begin{restatable}{theorem}{cargen} \label{thm:cargen}
  Let \Cs be a \vari and let $\alpha: A^* \to M$ be a surjective morphism. Then, $\alpha$ is a \bpol{\Cs}-morphism if and only if the two following equations hold:
  \begin{alignat}{1}
    \begin{array}{c}
      (eset)^{\omega+1} = (eset)^{\omega}et(eset)^{\omega} \text{ for every $t \in M$}\\
      \text{and every \Cs-pair $(e,s) \in E(M) \times M$}.
    \end{array}\label{eq:cacp} \\
    \begin{array}{c}
      (eqfre)^\omega (esfte)^\omega = (eqfre)^\omega qft (esfte)^\omega \text{ for every} \\
      \text{\Cs-swap $(q,r,s,t,e,f) \in M^6$ such that $e,f \in  E(M)$}.
    \end{array} \label{eq:swap}
  \end{alignat}
\end{restatable}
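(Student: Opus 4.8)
The statement is an equivalence, so I would prove \emph{soundness} (a \bpol{\Cs}-morphism satisfies \eqref{eq:cacp} and \eqref{eq:swap}) and \emph{completeness} (the two equations force $\alpha$ to be a \bpol{\Cs}-morphism) separately. For soundness, the plan is to reduce each equation to an \emph{indistinguishability} statement about two explicit words. Since $\bpol{\Cs}=\bool{\pol{\Cs}}$, two words lie in exactly the same languages of \bpol{\Cs} if and only if they lie in exactly the same languages of \pol{\Cs}; and when $\alpha$ is a \bpol{\Cs}-morphism, this forces them to have the same image under $\alpha$, for otherwise $\alpha\inv(\alpha(u))\in\bpol{\Cs}$ would separate them. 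So to prove \eqref{eq:cacp} for a \Cs-pair $(e,s)$ with $e\in E(M)$ and some $t\in M$, I would fix suitable representatives, form the words $u=(u_eu_su_eu_t)^{\omega+1}$ and $v=(u_eu_su_eu_t)^{\omega}u_eu_t(u_eu_su_eu_t)^{\omega}$ realising the two sides, and show that every marked product of \Cs-languages containing one of them contains the other. The key step is a re-factorisation argument: $u$ contains more than $|M|$ consecutive blocks $u_eu_su_eu_t$, so one of them sits inside a single factor $L_i\in\Cs$ of the marked product, and unseparability of $\alpha\inv(e)$ from $\alpha\inv(s)$ together with closure of \Cs under quotients lets one rewrite that block to pass between $u$ and $v$; it is convenient to pick the representatives through a \Cs-morphism witnessing the \Cs-pairs (Lemma~\ref{lem:pairmor}) and to use the factorisation $\alpha=\beta\circ\gamma$ with $\gamma$ a \pol{\Cs}-morphism (Lemma~\ref{lem:bpolm}) together with Theorem~\ref{thm:polcar}. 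Equation \eqref{eq:swap} is handled in the same way, extracting the six witnessing words from a \Cs-morphism witnessing the \Cs-swaps (Lemma~\ref{lem:witness}); the definition of an $\eta$-swap is exactly what makes the re-factorisation go through.

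For completeness, assume \eqref{eq:cacp} and \eqref{eq:swap} and show every $\alpha\inv(x)$ lies in \bpol{\Cs}. Since \bpol{\Cs} is a \vari, this is equivalent to the existence of a \bpol{\Cs}-cover of $A^*$ each of whose members is contained in a single $\alpha\inv(x)$ (call such a cover $\alpha$-fine), and equivalently to the statement that the only \bpol{\Cs}-pairs of $\alpha$ are the diagonal ones. I would build an $\alpha$-fine cover explicitly. First fix a single \Cs-morphism $\eta:A^*\to N$ that simultaneously witnesses the \Cs-pairs, the \Cs-sets and the \Cs-swaps of $\alpha$ (combine Lemmas~\ref{lem:pairmor}, \ref{lem:setmor} and \ref{lem:witness} by taking a product and restricting). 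Since $A^*$ is the disjoint union of the languages $\eta\inv(n)$, it suffices to produce an $\alpha$-fine \bpol{\Cs}-cover of each $\eta\inv(n)$, which I would do by an induction organised along the \Jrel-classes of $N$. After a routine reduction to a sub-alphabet one may assume the \Jrel-class of $n$ is $\eta$-alphabetic, and then Proposition~\ref{prop:core} with $H=\eta\inv(n)$ yields finitely many $\alpha$-guarded $\eta$-patterns $\widebar{p}=(w_1,e_1,\dots,w_m,e_m,w_{m+1})$ with $\pi(\widebar{p})\in\eta\inv(n)$ whose languages $\uclos_\eta\widebar{p}=w_1\eta\inv(e_1)\cdots w_m\eta\inv(e_m)w_{m+1}$ cover $\eta\inv(n)$; for each such pattern I would cover each factor $\eta\inv(e_i)$ in an $\alpha$-fine way and glue the pieces with the marked-concatenation principle of Lemma~\ref{lem:bpconcatm} (and Lemma~\ref{lem:bpconcat}), $\alpha$-guardedness ensuring that the assembled cover refines $\alpha$.

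The main obstacle is that the idempotents $e_i$ of these patterns satisfy $e_i\Jrel\eta(\pi(\widebar{p}))$, so $\eta\inv(e_i)$ lies in the \emph{same} \Jrel-class as $\eta\inv(n)$ and is not provided by the outer induction: the genuinely hard step is to cover $\eta\inv(e)$ for $e\in E(N)$ in an $\alpha$-fine way, given only covers for strictly \Jrel-smaller elements, and this is precisely where the two equations are used. The relevant facts are that every pair of elements of $\alpha(\eta\inv(e))$ is a \Cs-pair — it is an $\eta$-pair by the choice of $\eta$ — and that $\alpha$-guardedness supplies an idempotent $g\in E(M)\cap\alpha(\eta\inv(e))$. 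Equation \eqref{eq:cacp} controls the ``linked'' idempotent-anchored products $g\cdot\alpha(x)\cdot g$; this is the part paralleling the characterisation of \polo via Theorem~\ref{thm:polcar} and Corollary~\ref{cor:polcpairs}, and it is what the title's ``return of the \Jrel-class'' refers to. Equation \eqref{eq:swap} accounts for what genuinely separates \bpolo from ``\polo followed by a Boolean closure'': when two factors carry idempotents $e,f$ that are interchanged by the surrounding blocks — the configuration recorded by a \Cs-swap — \eqref{eq:swap} forces the $\alpha$-values on the two sides to be separated consistently. Once this local step is in place, assembling the local covers through Lemmas~\ref{lem:bpconcat} and~\ref{lem:bpconcatm}, over all patterns produced by Proposition~\ref{prop:core} and all \Jrel-classes of $N$, yields the desired $\alpha$-fine \bpol{\Cs}-cover of $A^*$ and with it the theorem.
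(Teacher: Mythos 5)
Your outline correctly identifies the two halves and the key obstacle, but the proposal leaves exactly the hard step unjustified, so there is a genuine gap.

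For the soundness direction you describe two approaches in parallel: a direct re-factorisation of marked products by a pigeonhole on the $\omega$-power, and the algebraic route through Lemma~\ref{lem:bpolm} and Theorem~\ref{thm:polcar}. The paper uses only the latter: factor $\alpha=\beta\circ\gamma$ with $\gamma$ a \pol{\Cs}-morphism, transfer the \Cs-pair (resp.\ swap) into the ordered monoid $Q$, derive two opposite inequalities from Theorem~\ref{thm:polcar}, and push the resulting equality through $\beta$. Your re-factorisation sketch would require additional care (the number of blocks in $(eset)^{\omega+1}$ is $\omega$, which need not exceed $|M|$, and a \pol{\Cs}-language is a finite \emph{union} of marked products, not one), but since you also name the algebraic route as the "convenient" one, this half is essentially on track.

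The real issue is in the completeness direction. You rightly observe that after applying Proposition~\ref{prop:core} to $\eta\inv(n)$ one must cover each $\eta\inv(e_i)$, and that $e_i\Jrel\eta(\pi(\widebar{p}))$ so those languages are \emph{not} in strictly smaller \Jrel-classes; you flag this as "the genuinely hard step" — but then your resolution ("this is precisely where the two equations are used") is a restatement of the problem, not a mechanism. Worse, your invariant ``$\alpha$-fine cover'' (each cell maps to a single element of $M$) is too strong to propagate through the induction: it is what one wants at the end, but it is not what the recursive call can deliver for $\eta\inv(e_i)$. The paper's actual device is a secondary induction on a \emph{rank} associated to triples $(q_1,d,q_2)\in Q\times J\times Q$, together with a weakened invariant of \emph{$(q_1,q_2)$-safeness} (words $u,u'$ in the same cell need only satisfy $q_1\tau(u)q_2=q_1\tau(u')q_2$, not $\tau(u)=\tau(u')$). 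The preorder $\leq$ on these triples is defined through a relation $\smarrow{q}$, and Lemma~\ref{lem:getstrict} shows the rank strictly decreases when descending into $\eta\inv(g_i)$ \emph{except} for elements $q$ that stabilise $(q_1,d,q_2)$; those stabilising elements are swept up separately in a single residual language $K_\bot\in\bpol{\Cs}$, whose $(q_1,q_2)$-safeness is exactly what Equation~\eqref{eq:swap} (via Lemma~\ref{lem:rswap}) guarantees, while Equation~\eqref{eq:cacp} (via Lemma~\ref{lem:rcapolcp}) drives the strictness of the rank decrease. None of this secondary-induction apparatus — the preorder, the rank, the relaxed safeness invariant, the stabilising-versus-nonstabilising split, the residual language — appears in your proposal, and without it the induction you describe does not close.
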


Theorem~\ref{thm:cargen} does \emph{not} have immediate general consequences for \bpol{\Cs}-membership. It implies that the problem is decidable when both \Cs-pairs and \Cs-swaps can be computed. On the one hand, the calculation of \Cs-pairs boils down to \Cs-separation. On the other hand, we do not know whether the computation of \Cs-swaps reduces to a natural problem. This is why we cannot deduce a general corollary from Theorem~\ref{thm:cargen}. Yet, we prove in Section~\ref{sec:uptwo} that if \Cs is itself of the form \bpol{\Ds} for a \vari \Ds, then~\eqref{eq:swap} is equivalent to an equation depending on \emph{\Cs-sets} only (which can be computed when \Cs-covering is decidable, by Lemma~\ref{lem:covtosets}).

\begin{remark}
  Equation~\eqref{eq:cacp} was previously used in~\cite{pzupol2}: alone, it characterizes the larger class \capolp{2}{\Cs} consisting of all languages $L$ such that $L \in \polp{2}{\Cs}$ and $A^* \setminus L \in \polp{2}{\Cs}$.
  The inclusion $\bpol{\Cs}\subseteq\capolp{2}{\Cs}$ implies that all \bpol{\Cs}-morphisms indeed satisfy~\eqref{eq:cacp}. We shall give a direct proof in the paper.
\end{remark}

\subsection{A first application: the class \bpol{\at}}

We may already get an effective characterization for a key class: level \emph{two} in the Straubing-Thérien hierarchy (\bpolp{2}{\stzer}). Obtaining such a characterization remained an important open problem for many years until it was finally solved in~\cite{pzqalt}. Theorem~\ref{thm:cargen} yields a new simpler characterization of \bpolp{2}{\stzer}. Let \at be the class of \emph{alphabet testable languages}: the Boolean combinations of languages $A^* a A^*$ where $a \in A$ is a letter. Pin and Straubing~\cite{pin-straubing:upper} proved that $\bpolp{2}{\stzer} = \bpol{\at}$ (see also~\cite{PZ:generic18} for a more recent proof).  Hence, we apply Theorem~\ref{thm:cargen} in the special case when $\Cs = \at$.

For $w \in A^*$, let $\cont{w} \subseteq A$ be the set of all letters occurring in~$w$ (\emph{i.e.}, the least $B\subseteq A$ such that $w \in B^*$). It is folklore and simple to verify that a morphism $\eta: A^* \to N$ is an \at-morphism if and only if $\cont{u} = \cont{v} \Rightarrow \eta(u) = \eta(v)$ for all $u,v \in A^*$.  With this result in hand, one can check that for \Cs= \at, Equation~\eqref{eq:swap} is equivalent to Equation~\eqref{eq:bpolat2} below, which uses \at-sets instead of \at-swaps. More precisely, Theorem~\ref{thm:cargen} yields the following corollary.

\begin{restatable}{corollary}{bpolat} \label{cor:bpolat}
	Let $\alpha: A^* \to M$ be a surjective morphism. Then, $\alpha$ is a \bpol{\at}-morphism if and only if the following equations hold:
	\begin{alignat}{1}
		\begin{array}{c}
			(eset)^{\omega+1} = (eset)^{\omega}et(eset)^{\omega} \text{ for every $t \in M$}\\
			\text{and every \at-pair $(e,s) \in E(M) \times M$}.
		\end{array}\label{eq:bpolat1} \\
		\begin{array}{c}
			(eqfre)^\omega (esfte)^\omega = (eqfre)^\omega qft (esfte)^\omega \text{ for every}\\
			\text{\at-set $\{q,r,s,t,e,f\} \subseteq M$ such that $e,f \in  E(M)$}.
		\end{array} \label{eq:bpolat2}
	\end{alignat}
\end{restatable}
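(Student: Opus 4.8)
The plan is to apply Theorem~\ref{thm:cargen} with $\Cs = \at$ and then rewrite its two equations as~\eqref{eq:bpolat1} and~\eqref{eq:bpolat2}. Since $\at$ is a \vari, Theorem~\ref{thm:cargen} says that $\alpha$ is a \bpol{\at}-morphism if and only if~\eqref{eq:cacp} and~\eqref{eq:swap} hold for $\Cs = \at$. The first of these is literally~\eqref{eq:bpolat1}, because an $\at$-pair is by definition a \Cs-pair for $\Cs = \at$. So it remains to show that, for an arbitrary surjective morphism $\alpha$, Equation~\eqref{eq:swap} (quantifying over $\at$-swaps) is equivalent to Equation~\eqref{eq:bpolat2} (quantifying over $\at$-sets); no further hypothesis is needed for this equivalence.

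The key is to work with the canonical $\at$-morphism $\eta_0 : A^* \to 2^A$, $w \mapsto \cont{w}$, where $2^A$ is the finite monoid whose product is union (as for every \vari, we may ignore the ordering). By the folklore description of $\at$-morphisms recalled above, every $\at$-morphism $\eta$ factors as $\eta = \psi \circ \eta_0$ for a monoid morphism $\psi$; unwinding the definitions of \Cs-swap and \Cs-set and using Lemma~\ref{lem:setmor}, this yields that the $\at$-swaps for $\alpha$ are exactly the $\eta_0$-swaps and the $\at$-sets for $\alpha$ are exactly the $\eta_0$-sets. Concretely: $\{q,r,s,t,e,f\}$ is an $\at$-set for $\alpha$ iff there are words $u_q \in \alpha\inv(q),\dots,u_f \in \alpha\inv(f)$ all having the \emph{same} content, whereas $(q,r,s,t,e,f)$ is an $\at$-swap for $\alpha$ iff there are such words with $\cont{u_e} = \cont{u_f}$ and $\cont{u_q},\cont{u_r},\cont{u_s},\cont{u_t} \subseteq \cont{u_e}$ (using that every element of $2^A$ is idempotent).

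Granting these descriptions, the implication~\eqref{eq:swap}~$\Rightarrow$~\eqref{eq:bpolat2} is immediate: the witnessing words of an $\at$-set all share one content, so the two conditions defining an $\at$-swap hold trivially, hence every $\at$-set $\{q,r,s,t,e,f\}$ with $e,f\in E(M)$ is also an $\at$-swap and~\eqref{eq:swap} applies verbatim. For the converse, given an $\at$-swap $(q,r,s,t,e,f)$ with $e,f\in E(M)$ and witnessing words $u_q,\dots,u_f$, set $q'=eqf$, $r'=fre$, $s'=esf$, $t'=fte$; the words $u_eu_qu_f$, $u_fu_ru_e$, $u_eu_su_f$, $u_fu_tu_e$, $u_e$, $u_f$ all have the common content $\cont{u_e}=\cont{u_f}$, so $\{q',r',s',t',e,f\}$ is an $\at$-set with $e,f\in E(M)$. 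Applying~\eqref{eq:bpolat2} to it and simplifying with $ee=e$ and $ff=f$ turns $eq'fr'e$ into $eqfre$, $es'ft'e$ into $esfte$ and $q'ft'$ into $eqfte$; then, since $(eqfre)^\omega e = (eqfre)^\omega$ and $e(esfte)^\omega = (esfte)^\omega$, the resulting identity becomes exactly~\eqref{eq:swap}. The one genuinely delicate point in this plan is the reduction in the middle paragraph: checking that the notion of $\at$-swap, defined by quantifying over \emph{all} $\at$-morphisms, collapses to a single condition relative to $\eta_0$ — which uses the factorization $\eta = \psi\circ\eta_0$ together with the fact that morphisms preserve idempotents and the defining equations of swaps. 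The remaining steps are routine bookkeeping on contents.
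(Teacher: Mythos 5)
Your proof is correct and follows the route the paper intends but leaves implicit (the paper only says ``one can check''): you use the folklore description of $\at$-morphisms to identify $\at$-swaps and $\at$-sets with $\eta_0$-swaps and $\eta_0$-sets for the content morphism $\eta_0\colon w\mapsto\cont{w}$, and your padding trick $q'=eqf$, $r'=fre$, $s'=esf$, $t'=fte$ for the converse mirrors the padding used in the paper's own proof of Proposition~\ref{prop:simpeq}. All the computations (the reduction of swaps and sets to content conditions, the absorption of the extra $e$'s and $f$'s into $(eqfre)^\omega$ and $(esfte)^\omega$) check out.
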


Thus, \bpol{\at}-membership boils down to \at-covering, which suffices to compute \at-pairs and \at-sets. Deciding \at-covering is simple: a finite set of languages \Lb is \emph{not} \at-coverable if and only if there is a sub-alphabet $B \subseteq A$ such that each $L \in \Lb$ contains a word $w$ satisfying $\cont{w} = B$. Hence, the class $\bpolp{2}{\stzer} = \bpol{\at}$ is decidable. This reproves the result of~\cite{pzqalt}.

\begin{remark}
	Interestingly,  the properties in Corollary~\ref{cor:bpolat} appeared in previous papers (though precisely establishing the link involves quite a bit of technical work). Equation~\eqref{eq:bpolat2} is equivalent to a property that Pin and Weil~\cite{pwdelta2,pwmalcev} conjectured to be a characterization of \bpolp{2}{\stzer}. Equation~\eqref{eq:bpolat1} is equivalent to an upper bound of \bpolp{2}{\stzer} proved by Almeida and Kl\'ima~\cite{AK2010} (the link is made with Lemma~\ref{lem:eqformu} below which reformulates~\eqref{eq:bpolat1}). Actually, it was conjectured in~\cite{AK2010} that the two put together characterize \bpolp{2}{\stzer} (a conjecture proved by Corollary~\ref{cor:bpolat}).
\end{remark}

\noindent
We now turn to the proof of Theorem~\ref{thm:cargen}.

\subsection{Proof of Theorem~\ref{thm:cargen}: ``only if'' direction}

Assume that $\alpha$ is a \bpol{\Cs}-morphism. We show that it satisfies~\eqref{eq:cacp} and~\eqref{eq:swap}. That it satisfies~\eqref{eq:cacp} follows from~\cite[Theorem~6.7]{pzupol}, which states that~\eqref{eq:cacp} characterizes the class $\polp{2}{\Cs}\cap \textup{co-}\polp{2}{\Cs}$ consisting of all languages $L$ such that both $L$ and $A^*\setminus L$ belong to $\polp{2}{\Cs}$. Now by definition, $\bpol{\Cs} \subseteq \polp{2}{\Cs}\cap \textup{co-}\polp{2}{\Cs}$. Let us provide a direct~proof.

Since \Cs is a \vari, \pol\Cs is a \pvari by Theorem~\ref{thm:polc}. Therefore, we may apply Lemma~\ref{lem:bpolm}. It yields a \pol{\Cs}-morphism $\gamma: A^* \to (Q,\leq)$ as well as a morphism $\beta: Q \to M$ such that $\alpha = \beta \circ \gamma$. Moreover, Lemma~\ref{lem:pairmor} yields a \Cs-morphism $\eta: A^*\to N$ such that the $\eta$-pairs for $\gamma$ are exactly the \Cs-pairs for $\gamma$. We let $n = \omega(M) \times \omega(N) \times \omega(Q)$.

\medskip
\noindent
{\bf Equation~\eqref{eq:cacp}.}  We fix a \Cs-pair $(e,s) \in E(M) \times M$ for $\alpha$ and $t \in M$. We have to show that $(eset)^{\omega+1} = (eset)^\omega et (eset)^\omega$. Since $\eta$ is a \Cs-morphism, Lemma~\ref{lem:pairmor} yields $u,v \in A^*$ such that $\eta(u) = \eta(v)$, $\alpha(u) = e$ and $\alpha(v) = s$. Let $x = u^n$ and $y = u^{n-1}v$. We have $\gamma(x) \in E(Q)$, $\alpha(x) = e$ and $\alpha(y) = es$. Let $f = \gamma(x) \in E(Q)$ and $q = \gamma(y)$. Note that $\beta(f) = e$ and $\beta(q) = es$. We also let $r \in Q$ such that $\beta(r) = t \in M$. Since $\eta(x) = \eta(y)$, we know that $(f,q)$ is a \Cs-pair for $\gamma$ by definition of $\eta$.  Since $\gamma$ is a \pol{\Cs}-morphism, this yields $f \leq fqf$ by Theorem~\ref{thm:polcar}. Thus,
\[
  (fqfr)^\omega fr (fqfr)^\omega \leq (fqfr)^{\omega+1}
\]
Conversely, since $(f,q)$ is a \Cs-pair, we may apply Corollary~\ref{cor:pairmult}. It implies that $(fr,fqfr)$ is a \Cs-pair for $\gamma$ as well. Thus, Theorem~\ref{thm:polcar} yields the following inequality:
\[
  (fqfr)^{\omega+1} \leq (fqfr)^\omega fr (fqfr)^\omega.
\]
Altogether, we get $(fqfr)^{\omega+1} = (fqfr)^\omega fr (fqfr)^\omega$. We now apply $\beta$  to get~\eqref{eq:cacp}, as desired.

\medskip
\noindent
{\bf Equation~\eqref{eq:swap}.} We fix a \Cs-swap $(q,r,s,t,e,f) \in M^6$ with $e,f \in E(M)$. We have to prove that $(eqfre)^\omega (esfte)^\omega = (eqfre)^\omega qft (esfte)^\omega$.

Since $\eta$ is a \Cs-morphism, the definition of a \Cs-swap yields $u_x \in \alpha\inv(x)$ for $x \in \{q,r,s,t,e,f\}$ with $\eta(u_e),\eta(u_f) \in E(N)$, $\eta(u_ru_eu_s) = \eta(u_f)$ and $\eta(u_qu_fu_t) = \eta(u_e)$. Let $y_e = u_e^n$ and $y_f=u_f^n$. By definition, we have $\gamma(y_e),\gamma(y_f)\in E(Q)$, $\alpha(y_e) = e$ and $\alpha(y_f) = f$. Finally, since $\eta(u_e), \eta(u_f) \in E(N)$, we also know that $\eta(y_e) = \eta(u_e)$ and $\eta(y_f) = \eta(u_f)$. In particular, we obtain that $\eta(u_ry_eu_s) = \eta(y_f)$ and $\eta(u_qy_fu_t) = \eta(y_e)$. We let,
\[
  \begin{array}{lll}
    x & = & (y_eu_qy_fu_ry_e)^n (y_eu_sy_fu_ty_e)^n, \\
    y & = & (y_eu_qy_fu_ry_e)^n u_qy_fu_t (y_eu_sy_fu_ty_e)^n.
  \end{array}
\]
It is enough to prove that $\gamma(x)=\gamma(y)$. Indeed, since $\alpha=\beta\circ\gamma$, this will yield $\alpha(x) = \alpha(y)$, which is exactly~\eqref{eq:swap}. Let $\hat{e} = \gamma(y_e)$, $\hat{f} = \gamma(y_f)$, $\hat{q} = \gamma(u_q)$, $\hat{r} = \gamma(u_r)$, $\hat{s} = \gamma(u_s)$ and $\hat{t} = \gamma(u_t)$. Since $\eta(y_e) = \eta(u_qy_fu_t)$, the definition of $\eta$ implies that $(\hat{e},\hat{q}\hat{f}\hat{t})$ is a \Cs-pair for $\gamma$. Hence, since $\gamma$ is a \pol{\Cs}-morphism, Theorem~\ref{thm:polcar} yields $\hat{e} \leq \hat{e}\hat{q}\hat{f}\hat{t}\hat{e}$. We may now multiply by $(\hat{e}\hat{q}\hat{f}\hat{r}\hat{e})^n$ on the left and $(\hat{e}\hat{s}\hat{f}\hat{t}\hat{e})^n$ on the right to obtain,
\[
  (\hat{e}\hat{q}\hat{f}\hat{r}\hat{e})^n(\hat{e}\hat{s}\hat{f}\hat{t}\hat{e})^n \leq (\hat{e}\hat{q}\hat{f}\hat{r}\hat{e})^n \hat{q}\hat{f}\hat{t}(\hat{e}\hat{s}\hat{f}\hat{t}\hat{e})^n.
\]
Hence, $\gamma(x) \leq \gamma(y)$. Moreover, the definitions yield,
\[
  \eta(y_f) = \eta(u_ry_e(y_e u_qy_fu_ry_e)^{n-1}(y_eu_sy_fu_ty_e)^{n-1}y_eu_s).
\]
Thus, $(\hat{f},\hat{r}\hat{e}(\hat{e}\hat{q}\hat{f}\hat{r}\hat{e})^{n-1}(\hat{e}\hat{s}\hat{f}\hat{t}\hat{e})^{n-1}\hat{e}\hat{s})$ is a \Cs-pair for $\gamma$.  Since $\gamma$ is a \pol{\Cs}-morphism and $\hat{f} \in E(Q)$, Theorem~\ref{thm:polcar} then yields the inequality $\hat{f} \leq \hat{f}\hat{r}\hat{e}(\hat{e}\hat{q}\hat{f}\hat{r}\hat{e})^{n-1}(\hat{e}\hat{s}\hat{f}\hat{t}\hat{e})^{n-1}\hat{e}\hat{s}\hat{f}$. We may now multiply it by $(\hat{e}\hat{q}\hat{f}\hat{r}\hat{e})^n\hat{e}\hat{q}$ on the left and $\hat{t}\hat{e}(\hat{e}\hat{s}\hat{f}\hat{t}\hat{e})^n$ on the right. Since $n$ is a multiple of $\omega(Q)$, this yields,
\[
  (\hat{e}\hat{q}\hat{f}\hat{r}\hat{e})^n \hat{q}\hat{f}\hat{t}(\hat{e}\hat{s}\hat{f}\hat{t}\hat{e})^n \leq (\hat{e}\hat{q}\hat{f}\hat{r}\hat{e})^n(\hat{e}\hat{s}\hat{f}\hat{t}\hat{e})^n.
\]
Hence, $\gamma(y) \leq \gamma(x)$. We have proved $\gamma(x) = \gamma(y)$, as desired.

\subsection{Proof of Theorem~\ref{thm:cargen}: ``if'' direction, preliminaries}

We now concentrate on the more difficult ``if'' implication. Assume that $\alpha$ satisfies~\eqref{eq:cacp} and~\eqref{eq:swap}. We~prove that it is a \bpol{\Cs}-morphism. We first reformulate~\eqref{eq:cacp} using \pol{\Cs}-pairs instead of \Cs-pairs. This follows from~\cite[Theorems 6.4 and~6.7]{pzupol2}. Here, we give a direct proof. 

\begin{restatable}{lemma}{eqformu} \label{lem:eqformu}
  Let \Cs be a \vari and let $\alpha: A^* \to M$ be a surjective morphism. Then, $\alpha$ satisfies~\eqref{eq:cacp} if and only if it satisfies the following:
  \begin{equation} \label{eq:capolcp}
    s^{\omega+1} = s^\omega t s^\omega \text{ for all \pol{\Cs}-pairs $(t,s) \in M^2$}.
  \end{equation}
\end{restatable}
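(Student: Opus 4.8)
The plan is to prove the two implications separately; the direction \eqref{eq:cacp}$\Rightarrow$\eqref{eq:capolcp} carries all the difficulty. For the easy direction \eqref{eq:capolcp}$\Rightarrow$\eqref{eq:cacp}, I would fix a \Cs-pair $(e,s)\in E(M)\times M$ for $\alpha$ and an element $t\in M$. Since \Cs is a \vari it is in particular a \pvari, so \pol{\Cs} is a \pvari by Theorem~\ref{thm:polc}, and Corollary~\ref{cor:polcpairs} applies: $(e,ese)$ is a \pol{\Cs}-pair for $\alpha$. The reflexive pair $(t,t)$ is trivially a \pol{\Cs}-pair (a nonempty language is not separable from itself), so Corollary~\ref{cor:pairmult}, applied to the \pvari \pol{\Cs}, shows that $(et,eset)$ is a \pol{\Cs}-pair for $\alpha$. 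Instantiating \eqref{eq:capolcp} on the pair $(et,eset)$ --- the role of ``$s$'' being played by $eset$ and that of ``$t$'' by $et$ --- gives precisely $(eset)^{\omega+1}=(eset)^{\omega}et(eset)^{\omega}$, which is \eqref{eq:cacp}.

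For the main direction \eqref{eq:cacp}$\Rightarrow$\eqref{eq:capolcp}, fix a \pol{\Cs}-pair $(t,s)$ for $\alpha$; the goal is $s^{\omega+1}=s^{\omega}ts^{\omega}$. The first move is to introduce witnessing morphisms. By Lemma~\ref{lem:pairmor} there is a \pol{\Cs}-morphism $\gamma\colon A^*\to(Q,\le)$ whose $\gamma$-pairs for $\alpha$ are exactly the \pol{\Cs}-pairs of $\alpha$; this yields words $u,v\in A^*$ with $\gamma(u)\le\gamma(v)$, $\alpha(u)=t$ and $\alpha(v)=s$. Using Lemma~\ref{lem:pairmor} for the \vari \Cs and taking a common refinement of the two \Cs-morphisms it produces, fix a single (unordered, since \Cs is a \vari) \Cs-morphism $\eta\colon A^*\to N$ such that the \Cs-pairs of $\alpha$ are exactly the $\eta$-pairs of $\alpha$ and the \Cs-pairs of $\gamma$ are exactly the $\eta$-pairs of $\gamma$. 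The two hypotheses then read as follows: since $\gamma$ is a \pol{\Cs}-morphism, Theorem~\ref{thm:polcar} gives $\gamma(w)^{\omega+1}\le\gamma(w)^{\omega}\gamma(w')\gamma(w)^{\omega}$ in $Q$ for all $w,w'$ with $\eta(w)=\eta(w')$; and \eqref{eq:cacp} gives, for all such $w,w'$ with moreover $\alpha(w)\in E(M)$ and all $z\in A^*$, the identity $\bigl(\alpha(w)\alpha(w')\alpha(w)\alpha(z)\bigr)^{\omega+1}=\bigl(\alpha(w)\alpha(w')\alpha(w)\alpha(z)\bigr)^{\omega}\alpha(w)\alpha(z)\bigl(\alpha(w)\alpha(w')\alpha(w)\alpha(z)\bigr)^{\omega}$ in $M$.

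It then remains to feed the relation $\gamma(u)\le\gamma(v)$ into this machinery. Working with a common idempotent power $n$ of $\omega(M),\omega(N),\omega(Q)$, I would surround $u$ by suitable powers of $v$ to produce words over $A$ whose $\gamma$-images are forced to agree --- here one combines the inequality above with Lemma~\ref{lem:jlr} to turn the resulting $\leqslant_{\mathscr{R}}$ and $\leqslant_{\mathscr{L}}$ relations into genuine $\mathscr{R}$- and $\mathscr{L}$-equivalences and thereby pin down idempotent powers, exactly as in the proof of Proposition~\ref{prop:core} --- while at the same time arranging their $\eta$-images to agree, so that the associated pair of $\alpha$-images is a \Cs-pair of $\alpha$ of the shape on which \eqref{eq:cacp} can be applied; the resulting instance of \eqref{eq:cacp} should then collapse to $s^{\omega+1}=s^{\omega}ts^{\omega}$. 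The main obstacle is precisely this last synchronization: $\alpha$ does \emph{not} factor through the \pol{\Cs}-morphism $\gamma$, so one cannot merely prove an identity inside $Q$ and transport it to $M$; the words have to be built so that $\eta$ certifies a \Cs-pair of $\alpha$, $\gamma$ exploits $\gamma(u)\le\gamma(v)$, and $\alpha$ makes the \eqref{eq:cacp}-instance degenerate to the desired equation, all simultaneously. As a shortcut one may instead quote \cite[Theorems~6.4 and~6.7]{pzupol2}, which identify \eqref{eq:cacp} and \eqref{eq:capolcp} as two characterizations of the same class of morphisms, but the direct computation sketched above is more in the spirit of the paper.
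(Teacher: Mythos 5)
Your ``if'' direction (\eqref{eq:capolcp}$\Rightarrow$\eqref{eq:cacp}) is correct and matches the paper essentially verbatim: Corollary~\ref{cor:polcpairs} gives that $(e,ese)$ is a \pol{\Cs}-pair, Corollary~\ref{cor:pairmult} turns it into the \pol{\Cs}-pair $(et,eset)$, and instantiating \eqref{eq:capolcp} yields \eqref{eq:cacp}.

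The ``only if'' direction, however, has a genuine gap, which you yourself flag but do not close. After fixing a \pol{\Cs}-pair $(t,s)$ and extracting words $u,v$ with $\gamma(u)\le\gamma(v)$, you try to manufacture, in one shot, words $w,w'$ with $\eta(w)=\eta(w')$ whose $\alpha$-images yield the target equation via a single application of \eqref{eq:cacp}. This cannot work as sketched: a \pol{\Cs}-pair for $\alpha$ is in general \emph{not} a \Cs-pair for $\alpha$, and no amount of surrounding $u$ by powers of $v$ and invoking Lemma~\ref{lem:jlr} will force $\eta(w)=\eta(w')$ from a mere inequality in $Q$ (the Nash-Williams-style idempotent-power argument you cite from Proposition~\ref{prop:core} is designed to build covers, not to convert a $\pol{\Cs}$-pair into a \Cs-pair). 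The paper's proof is structurally different and must be: it introduces the \emph{least} preorder $\preceq$ on $M$ containing the single-step rewrites $xey\preceq xesey$ for \Cs-pairs $(e,s)$, proves (Lemma~\ref{lem:trclos}, via a quotient-monoid construction and Theorem~\ref{thm:polcar}) that every \pol{\Cs}-pair $(q,r)$ satisfies $q\preceq r$, and then decomposes $t\preceq s$ into a finite chain $s=s_0,\dots,s_n=t$ of one-step rewrites, applying \eqref{eq:cacp} \emph{once per step} and pushing $s^{\omega+1}=s^{\omega}s_i s^{\omega}$ through by induction on $i$. The transitive-closure and iterated-rewrite argument is the heart of the lemma and is entirely missing from your sketch; you would need to supply it (or explicitly fall back on the cited external theorems, which is legitimate but is not a direct proof).
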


\begin{proof}
  We start with the ``if'' direction. Assume that $\alpha$ satisfies~\eqref{eq:capolcp}. We show that~\eqref{eq:cacp} holds as well. Let $(e,s) \in M^2$ be a \Cs-pair and $t \in M$. We have to show that,
  \[
    (eset)^{\omega+1} = (eset)^{\omega+1} et (eset)^{\omega}.
  \]
  Corollary~\ref{cor:polcpairs} implies that $(e,ese)$ is a \pol{\Cs}-pair. It then follows from Corollary~\ref{cor:pairmult} that $(et,eset)$ is a \pol{\Cs}-pair. The desired equality is now immediate from~\eqref{eq:capolcp}.

  \smallskip
  We turn to the ``only if'' direction of Lemma~\ref{lem:eqformu}, which requires more work. Assume that $\alpha$ satisfies~\eqref{eq:cacp}. We prove that~\eqref{eq:capolcp} holds. First, we overapproximate the \pol{\Cs}-pairs using a preorder. We let $\preceq$ be the least preorder on  $M$ such that for every $e \in E(M)$ and every $s,x,y \in M$ such that $(e,s)$ is a \Cs-pair, we have $xey \preceq xesey$. We have the following lemma.

  \begin{restatable}{lemma}{trclos} \label{lem:trclos}
    For all \pol{\Cs}-pairs $(q,r) \in M^2$, we have $q \preceq r$.
  \end{restatable}

  \begin{proof}
    One can verify that $\preceq$ is a precongruence on $M$, \emph{i.e.}, if $q \preceq r$ and $q' \preceq r'$, then $qq' \preceq rr'$. Let $\sim$ be the equivalence on $M$ generated by $\preceq$: we have $q \sim r$ if and only if $q \preceq r$ and $r \preceq q$. Since $\preceq$ is a precongruence, $\sim$ is a congruence. Therefore, the quotient $N = {M}/{\sim}$ is a monoid and the ordering induced by $\preceq$ on $N$ makes $(N,\leq)$ an ordered monoid. We let $\delta: M \to N$ be the morphism associating its equivalence class to every element in $M$. Finally, we let $\eta=\delta\circ\alpha: A^*\to(N,\leq)$. We use Theorem~\ref{thm:polcar} to prove that $\eta$ is a \pol{\Cs}-morphism. For a  \pol{\Cs}-pair $(q,r) \in M^2$ for $\alpha$, it will follow that from Lemma~\ref{lem:pairmor} that there exist $u,v \in A^*$ such that $\eta(u) \leq \eta(v)$, $\alpha(u) = q$ and $\alpha(v)= r$. By definition of $\eta$, this yields $\delta(q) \leq \delta(r)$. Hence, $q \preceq r$ as desired.

    It remains to prove that $\eta$ is a \pol{\Cs}-morphism. In view of Theorem~\ref{thm:polcar}, we have to show that for all \Cs-pairs $(p,t) \in N^2$ for $\eta$, we have $p^{\omega+1} = p^{\omega}tp^\omega$. Since $(p,t)$ is a \Cs-pair, one may verify that there exists a \Cs-pair $(q,r) \in M^2$ for $\alpha$ such that $\delta(q) = p$ and $\delta(r) = t$. Let $k = \omega(M)$. It follows from Corollary~\ref{cor:pairmult} that $(q^k,q^{k-1}r)$ is a \Cs-pair for $\alpha$. Since $q^k \in E(M)$, it follows that $q^k \preceq q^{2k-1}rq^k$ by definition. We can now multiply by $q$ on the left which yields $q^{\omega+1} \preceq q^{\omega} r q^{\omega}$. We apply $\delta$ on both sides to get $p^{\omega+1} = p^{\omega}tp^\omega$ as desired.
  \end{proof}

  We are now ready to conclude the proof of Lemma~\ref{lem:eqformu}. We have to prove that $\alpha$ satisfies~\eqref{eq:capolcp}. Let $(t,s) \in M^2$ be a \pol{\Cs}-pair for $\alpha$. We prove that $s^{\omega+1} = s^\omega t s^\omega$. Lemma~\ref{lem:trclos} yields $t \preceq s$. By definition of $\preceq$, this yields $s_0,\dots,s_n \in M$ such that $s_0=s$, $s_n=t$ and for $0 < i \leq n$, we have $s_{i-1} =xeqey$ and $s_i = xey$ for some $e \in E(M)$ and $q,x,y \in M$ such that $(e,q)$ is a \Cs-pair. We use induction on $i$ to prove that $s^{\omega+1} = s^\omega s_i s^\omega$ for every $i \leq n$. The case $i = n$ will then yield the desired result.

  When $i = 0$, the result is trivial since $s_0 = s$. Assume now that $i \geq 1$. By definition, we have $s_{i-1} =xeqey$ and $s_i = xey$ for some $e \in E(M)$ and $q,x,y \in M$ such that $(e,q)$ is a \Cs-pair. Moreover, $s^{\omega+1} = s^\omega s_{i-1} s^\omega$ by induction. Thus, $s^{\omega+1} = s^\omega xeqey s^\omega$. It follows that,
  \[
    \begin{array}{lll}
      s^{\omega+2} & = & (s^{\omega+1})^{\omega+2}, \\
              &= &(s^\omega xeqey s^\omega)^{\omega+2}, \\
              & = & s^\omega x (eqeys^\omega x)^{\omega+1}eqeys^\omega.
    \end{array}
  \]
  Since $(e,q)$ is a \Cs-pair, it follows from~\eqref{eq:cacp} that,
  \[
    (eqeys^\omega x)^{\omega+1} = (eqeys^\omega x)^{\omega} eys^\omega x(eqeys^\omega x)^{\omega}.
  \]
  Altogether, we obtain,
  \[
    \begin{array}{lll}
      s^{\omega+2} & = &s^\omega x (eqeys^\omega x)^{\omega} eys^\omega x(eqeys^\omega x)^{\omega}eqeys^\omega,\\
              & = &(s^\omega xeqeys^\omega)^\omega xey (s^\omega xeqey s^\omega)^{\omega+1}, \\
              & = & (s^\omega s_{i-1}s^\omega)^\omega s_i  (s^\omega s_{i-1} s^\omega)^{\omega+1}.
    \end{array}
  \]
  Since $s^{\omega+1} = s^\omega s_{i-1} s^\omega$, we get $s^{\omega+2} = s^\omega s_i s^{\omega+1}$. It now suffices to multiply by $s^{\omega-1}$ on the right to get $s^{\omega+1} = s^\omega s_i s^\omega$, which completes the proof.
\end{proof}

\newcommand{\mpr}[1]{\ensuremath{[#1]_M}\xspace}
\newcommand{\npr}[1]{\ensuremath{[#1]_N}\xspace}

We now introduce some terminology. Lemma~\ref{lem:witness} yields a \Cs-morphism $\eta: A^*\to N$ such that the $\eta$-swaps are exactly the \Cs-swaps and the $\eta$-pairs are exactly the \Cs-pairs. We let $Q = \{(\eta(w),\alpha(w)) \mid w \in A^*\} \subseteq N \times M$, which is a monoid for the componentwise multiplication. For $q=(t,s) \in Q$, we let $\npr{q}=t \in N$ and $\mpr{q} = s \in M$. Let $\tau: A^* \to Q$ be the morphism such that $\tau(w) = (\eta(w),\alpha(w))$ for all $w\in A^*$. Clearly, all languages recognized by $\alpha$ are also recognized by $\tau$. Thus, it suffices to show that $\tau$ is a \bpol{\Cs}-morphism. We first present a few simple properties of $\tau$. Using the fact that~$\alpha$ satisfies~\eqref{eq:swap}, one deduces the following lemma.

\begin{restatable}{lemma}{rswap} \label{lem:rswap}
  Let $q,r,s,t \in Q$ and $e,f \in E(Q)$ be elements such that $\npr{qft} = \npr{e}$ and $\npr{res} = \npr{f}$. Then, we have,
  \[
    (eqfre)^\omega (esfte)^\omega = (eqfre)^\omega qft (esfte)^\omega.
  \]
\end{restatable}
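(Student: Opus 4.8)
The plan is to exploit the fact that $Q$ is a submonoid of $N\times M$ (with componentwise multiplication), so that $\npr{\cdot}\colon Q\to N$ and $\mpr{\cdot}\colon Q\to M$ are monoid morphisms, being restrictions of the two projections. Consequently the claimed identity in $Q$ holds if and only if it holds after applying $\npr{\cdot}$ and after applying $\mpr{\cdot}$, and I would prove these two coordinate versions separately. Throughout, $\omega$ stands for $\omega(Q)$.

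For the $N$-coordinate the identity is essentially automatic and uses only the hypothesis $\npr{qft}=\npr{e}$. Put $d=\npr{e}\npr{q}\npr{f}\npr{r}\npr{e}$. Since $e\in E(Q)$ we have $\npr{e}\in E(N)$, hence $\npr{e}d=d\npr{e}=d$ and therefore $d^{\omega}\npr{e}=d^{\omega}$. Then I would simply compute
\[
  (\npr{eqfre})^{\omega}\,\npr{qft}\,(\npr{esfte})^{\omega}=d^{\omega}\,\npr{e}\,(\npr{esfte})^{\omega}=d^{\omega}\,(\npr{esfte})^{\omega}=(\npr{eqfre})^{\omega}\,(\npr{esfte})^{\omega}.
\]

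For the $M$-coordinate, the point is to feed Equation~\eqref{eq:swap} with the tuple $(\mpr{q},\mpr{r},\mpr{s},\mpr{t},\mpr{e},\mpr{f})\in M^{6}$, so I first need to check it is a \Cs-swap for $\alpha$. Here I would use Lemma~\ref{lem:witness}: the morphism $\eta$ was chosen so that the \Cs-swaps for $\alpha$ coincide with the $\eta$-swaps, hence it suffices to exhibit witness words. Since $\tau$ is surjective onto $Q$, pick $u_x\in\tau\inv(x)$ for each $x\in\{q,r,s,t,e,f\}$, so that $\eta(u_x)=\npr{x}$ and $\alpha(u_x)=\mpr{x}$; moreover $e,f\in E(Q)$ forces $\npr{e},\npr{f}\in E(N)$ and $\mpr{e},\mpr{f}\in E(M)$. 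The two swap conditions then read $\eta(u_ru_eu_s)=\npr{r}\npr{e}\npr{s}=\npr{res}=\npr{f}=\eta(u_f)$ and $\eta(u_qu_fu_t)=\npr{q}\npr{f}\npr{t}=\npr{qft}=\npr{e}=\eta(u_e)$, which hold by hypothesis. Thus $(\mpr{q},\mpr{r},\mpr{s},\mpr{t},\mpr{e},\mpr{f})$ is a \Cs-swap for $\alpha$ with $\mpr{e},\mpr{f}\in E(M)$, and \eqref{eq:swap} yields $(\mpr{eqfre})^{\omega(M)}(\mpr{esfte})^{\omega(M)}=(\mpr{eqfre})^{\omega(M)}\,\mpr{qft}\,(\mpr{esfte})^{\omega(M)}$.

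The last — and only mildly delicate — step is to upgrade this from the exponent $\omega(M)$ to $\omega=\omega(Q)$. I would argue that $\omega(Q)$ is also an idempotency exponent for $M$: the projection $\mpr{\cdot}\colon Q\to M$ is surjective and sends idempotents to idempotents, so $y^{\omega(Q)}$ is idempotent for every $y\in M$; and in a finite monoid, for any idempotency exponent $n$ the power $y^{n}$ is the unique idempotent of the cyclic subsemigroup generated by $y$, so $y^{\omega(Q)}=y^{\omega(M)}$ for all $y\in M$. Applying this to $\mpr{eqfre}=\mpr{e}\mpr{q}\mpr{f}\mpr{r}\mpr{e}$ and $\mpr{esfte}=\mpr{e}\mpr{s}\mpr{f}\mpr{t}\mpr{e}$ converts the $\omega(M)$-identity into the required $\omega$-identity in $M$; combining it with the $N$-coordinate finishes the proof. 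The main obstacle, such as it is, is just keeping this exponent bookkeeping honest, together with the (easy to miss) observation that the $N$-coordinate needs no input from \eqref{eq:swap} at all — it follows from idempotency of $\npr{e}$ alone.
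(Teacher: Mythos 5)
Your proof is correct and follows essentially the same route as the paper: verify the $N$-coordinate via idempotency of $\npr{e}$ alone, recognize $(\mpr{q},\mpr{r},\mpr{s},\mpr{t},\mpr{e},\mpr{f})$ as a \Cs-swap through the choice of $\eta$ in Lemma~\ref{lem:witness}, and invoke~\eqref{eq:swap} for the $M$-coordinate. You merely spell out two steps the paper leaves implicit — exhibiting witness words via surjectivity of $\tau$, and the harmless $\omega(Q)$ versus $\omega(M)$ bookkeeping — both of which are worth the extra care but do not change the argument.
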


\begin{proof}
  Since $\npr{qft} = \npr{e}$ and $e \in E(Q)$ (which implies that $\npr{e} \in E(N)$), it is immediate that,
  \[
    \npr{(eqfre)^\omega (esfte)^\omega} = \npr{(eqfre)^\omega qft (esfte)^\omega}.
  \]
  Moreover, since the \Cs-swaps for $\alpha$ are exactly the $\eta$-swaps, the hypothesis that $\npr{qft} = \npr{e}$ and $\npr{res} = \npr{f}$ implies that $(\mpr{q},\mpr{r},\mpr{s},\mpr{t},\mpr{e},\mpr{f})$ is a \Cs-swap for $\alpha$. Hence, we obtain from~\eqref{eq:swap} that,
  \[
    \mpr{(eqfre)^\omega (esfte)^\omega} = \mpr{(eqfre)^\omega qft (esfte)^\omega}.
  \]
  Hence, we get $(eqfre)^\omega (esfte)^\omega = (eqfre)^\omega qft (esfte)^\omega$, as desired.
\end{proof}

We now reformulate~\eqref{eq:cacp} (or rather~\eqref{eq:capolcp}, which is equivalent by Lemma~\ref{lem:eqformu}). We say that a pair $(q,r)\in Q^2$ is a \emph{good pair} if and only if $(\mpr{q},\mpr{r})\in M^2$ is a \pol{\Cs}-pair for $\alpha$ and $\npr{q} = \npr{r}$. One can verify the following lemma from~\eqref{eq:capolcp}.

\begin{restatable}{lemma}{rcapolcp} \label{lem:rcapolcp}
  If $(q,r)\in Q^2$ is a good pair then $r^{\omega+1} = r^\omega q r^\omega$.
\end{restatable}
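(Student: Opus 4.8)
The plan is to use that $Q$ is a submonoid of the direct product $N \times M$, so that an identity between two elements of $Q$ holds as soon as it holds after projecting to $N$ and to $M$. Concretely, I would split the goal $r^{\omega+1} = r^\omega q r^\omega$ into the two identities $\npr{r}^{\omega+1} = \npr{r}^\omega\npr{q}\npr{r}^\omega$ in $N$ and $\mpr{r}^{\omega+1} = \mpr{r}^\omega\mpr{q}\mpr{r}^\omega$ in $M$, and prove each separately. Before that I would record the routine bookkeeping point about the exponent: when $r \in Q$ the power $r^\omega$ is taken with $\omega = \omega(Q)$, but since the projections $Q \to N$ and $Q \to M$ are onto (as $\eta$ and $\alpha$ are surjective), $\npr{r}^\omega$ and $\mpr{r}^\omega$ are idempotents of $N$ and $M$; by the standard fact that a fixed element of a finite monoid has a single idempotent power, $\mpr{r}^\omega$ coincides with $\mpr{r}^{\omega(M)}$, so Equation~\eqref{eq:capolcp} may be applied with this very $\omega$.

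For the $N$-component, the first defining property of a good pair is $\npr{q} = \npr{r}$; hence $\npr{r}^\omega\npr{q}\npr{r}^\omega = \npr{r}^\omega\npr{r}\npr{r}^\omega$, and since $\npr{r}^\omega \in E(N)$ this is just $\npr{r}^{\omega+1}$. For the $M$-component, the second defining property of a good pair is that $(\mpr{q},\mpr{r})$ is a \pol{\Cs}-pair for $\alpha$. By hypothesis $\alpha$ satisfies~\eqref{eq:cacp}, hence it satisfies~\eqref{eq:capolcp} by Lemma~\ref{lem:eqformu}; applying~\eqref{eq:capolcp} to the \pol{\Cs}-pair $(\mpr{q},\mpr{r})$, read in the roles of $(t,s)$, gives exactly $\mpr{r}^{\omega+1} = \mpr{r}^\omega\mpr{q}\mpr{r}^\omega$. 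Combining the two components yields $r^{\omega+1} = r^\omega q r^\omega$ in $Q$.

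I do not expect a genuine obstacle: the lemma is little more than a componentwise restatement of Equation~\eqref{eq:capolcp} inside $Q$, with the trivial $N$-component coming from $\npr{q} = \npr{r}$. The only delicate point worth a sentence is the matching of the idempotent-power exponents across $Q$, $N$ and $M$, which is settled by the uniqueness of idempotent powers in a finite monoid.
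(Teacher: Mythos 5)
Your proof is correct and matches the paper's componentwise argument: project to $N$ using $\npr{q}=\npr{r}$ and to $M$ using that $(\mpr{q},\mpr{r})$ is a $\pol{\Cs}$-pair together with Equation~\eqref{eq:capolcp} (obtained from~\eqref{eq:cacp} via Lemma~\ref{lem:eqformu}). Your exponent bookkeeping is sound, though note that the idempotence of $\npr{r}^{\omega(Q)}$ and $\mpr{r}^{\omega(Q)}$ follows simply from the coordinate projections being monoid morphisms (which send the idempotent $r^{\omega(Q)}$ to idempotents), not from surjectivity of $\eta$ and $\alpha$.
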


\begin{proof}
  By hypothesis and definition of good pairs, $\npr{q} = \npr{r}$  and $(\mpr{q},\mpr{r})$ is a \pol{\Cs}-pair for $\alpha$. The first equality yields $\npr{r^{\omega+1}} = \npr{r^\omega q r^\omega}$. Also, since $\alpha$ satisfies~\eqref{eq:cacp}, it also satisfies~\eqref{eq:capolcp} by Lemma~\ref{lem:eqformu}. Therefore, $\mpr{r^{\omega+1}} = \mpr{r^\omega q r^\omega}$ since $(\mpr{q},\mpr{r})$ is a \pol{\Cs}-pair. Altogether, we get $r^{\omega+1} = r^\omega q r^\omega$ as desired.
\end{proof}

By Corollary~\ref{cor:pairmult}, good pairs are stable under multiplication: if $(q,r),(q,'r') \in Q^2$ are good pairs, then so is $(qq',rr')$. We use this fact implicitly. Also, we have the following lemma.

\begin{restatable}{lemma}{bupol} \label{lem:bupol}
  Let $q \in Q$. There exists $K \in \pol{\Cs}$ such that $(q,r)$ is a good pair for all $r \in \tau(K)$ and $\tau\inv(q) \subseteq K \subseteq \eta\inv(\npr{q})$.
\end{restatable}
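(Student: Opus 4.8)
The plan is to produce the language $K$ by taking a suitable union of preimages under a fixed \pol{\Cs}-morphism, and to verify the good-pair condition using Lemma~\ref{lem:pairmor} and Theorem~\ref{thm:polcar}. First I would invoke Lemma~\ref{lem:pairmor} applied to $\alpha : A^* \to M$: since \pol{\Cs} is a \pvari (Theorem~\ref{thm:polc}), there is a \pol{\Cs}-morphism $\gamma : A^* \to (P,\leq)$ such that the \pol{\Cs}-pairs for $\alpha$ are exactly the $\gamma$-pairs for $\alpha$. Fix $q = (t,s) \in Q$, so $t = \npr{q} \in N$ and $s = \mpr{q} \in M$, and pick any word $w_0 \in \tau\inv(q)$, i.e. $\eta(w_0) = t$ and $\alpha(w_0) = s$. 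Let $p_0 = \gamma(w_0) \in P$.

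The key idea is then to set
\[
  K = \eta\inv(t) \cap \gamma\inv\bigl(\uclos p_0\bigr),
\]
where $\uclos p_0 = \{ p \in P \mid p_0 \leq p \}$ is the up-set of $p_0$. This set belongs to \pol{\Cs}: the language $\gamma\inv(\uclos p_0)$ is recognized by the \pol{\Cs}-morphism $\gamma$ via the upper set $\uclos p_0$, so it lies in \pol{\Cs}; the language $\eta\inv(t)$ lies in $\Cs \subseteq \pol{\Cs}$; and \pol{\Cs} is closed under intersection since it is a \pvari. The inclusions $\tau\inv(q) \subseteq K$ and $K \subseteq \eta\inv(\npr{q})$ are then immediate: any $w \in \tau\inv(q)$ has $\eta(w) = t$ and $\gamma(w) = \gamma(w_0) = p_0 \in \uclos p_0$ (using $p_0 \leq p_0$), so $w \in K$; and every $w \in K$ has $\eta(w) = t = \npr{q}$ by construction.

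It remains to check that $(q,r)$ is a good pair for every $r \in \tau(K)$, which is the main content. Take $w \in K$ and set $r = \tau(w) = (\eta(w),\alpha(w)) = (t, \alpha(w))$. Then $\npr{r} = t = \npr{q}$, giving the second condition for a good pair. For the first condition, I must show $(\mpr{q},\mpr{r}) = (s,\alpha(w))$ is a \pol{\Cs}-pair for $\alpha$. Since $w \in \gamma\inv(\uclos p_0)$, we have $p_0 = \gamma(w_0) \leq \gamma(w)$, so $(\alpha(w_0),\alpha(w)) = (s,\alpha(w))$ is a $\gamma$-pair for $\alpha$ by definition, hence a \pol{\Cs}-pair by the choice of $\gamma$. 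This is exactly what we need. The one subtlety to get right is the direction of the ordering convention in the definitions of ``\pol{\Cs}-pair'' and ``good pair'' (the statement of Lemma~\ref{lem:rcapolcp} and Equation~\eqref{eq:capolcp} write the pair as $(t,s)$ with $s^{\omega+1} = s^\omega t s^\omega$, matching $\eta(u) \leq \eta(v)$ with $s = \alpha(u)$, $t = \alpha(v)$); I would simply align the up-set construction with whichever orientation makes $(\mpr{q},\mpr{r})$ rather than $(\mpr{r},\mpr{q})$ the \pol{\Cs}-pair, using a down-set instead of an up-set if needed. No deep difficulty is expected beyond this bookkeeping.
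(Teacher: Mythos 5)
There is a genuine gap. You pick a single $w_0 \in \tau\inv(q)$, set $p_0 = \gamma(w_0)$, take $K = \eta\inv(t) \cap \gamma\inv(\uclos p_0)$, and then assert that every $w \in \tau\inv(q)$ satisfies $\gamma(w) = \gamma(w_0)$. This is not justified: $\tau(w) = \tau(w_0)$ only records that $\eta(w) = \eta(w_0)$ and $\alpha(w) = \alpha(w_0)$, whereas $\gamma$ is an \emph{unrelated} \pol{\Cs}-morphism produced by Lemma~\ref{lem:pairmor}; nothing forces $\gamma$ to factor through $\tau$, so two words with the same $\tau$-image can have incomparable $\gamma$-images. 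Consequently the inclusion $\tau\inv(q) \subseteq K$ does not follow as written.

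The idea can be repaired, and then it becomes a legitimate alternative to the paper's proof: replace the single up-set $\uclos p_0$ by the up-set generated by all $\gamma$-images of $\tau\inv(q)$, i.e.\ set $U = \gamma(\tau\inv(q)) \subseteq P$ and $K = \eta\inv(\npr{q}) \cap \gamma\inv(\uclos U)$ where $\uclos U = \{p \in P \mid \exists p_0 \in U,\ p_0 \leq p\}$. Then $\tau\inv(q) \subseteq K$ is immediate, $K \in \pol{\Cs}$ for the reasons you give, and for $w \in K$ one gets some $w_0 \in \tau\inv(q)$ with $\gamma(w_0) \leq \gamma(w)$, hence $(\mpr{q},\alpha(w))$ is a $\gamma$-pair and therefore a \pol{\Cs}-pair, while $\npr{\tau(w)} = \npr{q}$ as needed. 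For comparison, the paper's own proof avoids $\gamma$ entirely: it lets $T$ be the set of all $t' \in M$ such that $(\mpr{q},t')$ is \emph{not} a \pol{\Cs}-pair, picks a separator $K_{t'} \in \pol{\Cs}$ for each such $t'$, and sets $K = \eta\inv(\npr{q}) \cap \bigcap_{t' \in T} K_{t'}$; this works directly from the definition of \pol{\Cs}-pairs via separation, whereas your (corrected) route goes through the canonical $\gamma$ of Lemma~\ref{lem:pairmor} and an up-set — the two are essentially dual encodings of the same information.
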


\begin{proof}
  Let $T \subseteq M$ be the set consisting of all $t \in M$ such that $(\mpr{q},t)$ is \emph{not} a \pol{\Cs}-pair for $\alpha$. By definition, for each $t \in T$, there exists a language $K_t \in \pol{\Cs}$ separating $\alpha\inv(\mpr{q})$ from $\alpha\inv(t)$. We define $K = \eta\inv(\npr{t}) \cap \bigcap_{t \in T} K_t$. We have $K \in \pol{\Cs}$ since $\eta$ is a \Cs-morphism and \pol{\Cs} is closed under intersection by Theorem~\ref{thm:polc}. One may now verify that $K$ satisfies the desired properties.
\end{proof}

Finally, the next lemma follows from Corollary~\ref{cor:polcpairs}.
\begin{restatable}{lemma}{rgpairs} \label{lem:rpairs}
  Let $e \in E(Q)$ and $r \in Q$. If $\npr{e} = \npr{r}$, then $(e,ere)$ is a good pair.
\end{restatable}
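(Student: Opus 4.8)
The plan is to unwind the definition of \emph{good pair} and reduce everything to Corollary~\ref{cor:polcpairs}. Recall that $\eta: A^* \to N$ is the \Cs-morphism supplied by Lemma~\ref{lem:witness}, so that the \Cs-pairs for $\alpha$ are exactly the $\eta$-pairs for $\alpha$, that $Q = \tau(A^*) \subseteq N \times M$ carries the componentwise multiplication, and that $\npr{\cdot}$, $\mpr{\cdot}$ are the two projections. Since $e \in E(Q)$ and multiplication in $Q$ is componentwise, both $\npr{e} \in E(N)$ and $\mpr{e} \in E(M)$ are idempotents; moreover $\npr{ere} = \npr{e}\,\npr{r}\,\npr{e}$ and $\mpr{ere} = \mpr{e}\,\mpr{r}\,\mpr{e}$.

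First I would check the ``$N$-component'' clause of a good pair: using $\npr{e} \in E(N)$ together with the hypothesis $\npr{r} = \npr{e}$, we get $\npr{ere} = \npr{e}\,\npr{e}\,\npr{e} = \npr{e}$, so $\npr{e} = \npr{ere}$.

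The core point is then to show that $(\mpr{e}, \mpr{ere}) = (\mpr{e}, \mpr{e}\,\mpr{r}\,\mpr{e})$ is a \pol{\Cs}-pair for $\alpha$. For this I would first argue that $(\mpr{e}, \mpr{r})$ is a \Cs-pair for $\alpha$: since $e, r \in Q = \tau(A^*)$, choose $w_e, w_r \in A^*$ with $\tau(w_e) = e$ and $\tau(w_r) = r$; then $\eta(w_e) = \npr{e} = \npr{r} = \eta(w_r)$ by hypothesis, so $(\alpha(w_e), \alpha(w_r)) = (\mpr{e}, \mpr{r})$ is an $\eta$-pair for $\alpha$, hence a \Cs-pair for $\alpha$ by the defining property of $\eta$. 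Now apply Corollary~\ref{cor:polcpairs} with the idempotent $\mpr{e} \in E(M)$ and the element $\mpr{r} \in M$: since $(\mpr{e}, \mpr{r})$ is a \Cs-pair for $\alpha$, the pair $(\mpr{e}, \mpr{e}\,\mpr{r}\,\mpr{e})$, i.e.\ $(\mpr{e}, \mpr{ere})$, is a \pol{\Cs}-pair for $\alpha$. Combined with $\npr{e} = \npr{ere}$, the pair $(e, ere)$ satisfies both clauses in the definition of a good pair, which finishes the proof.

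There is no real obstacle here; the argument is a short chain of definition-unwindings. The only point requiring a little care is the passage from the equality $\npr{e} = \npr{r}$ in $N$ to ``$(\mpr{e}, \mpr{r})$ is a \Cs-pair for $\alpha$'', which uses the specific $\eta$ produced by Lemma~\ref{lem:witness} (so that \Cs-pairs coincide with $\eta$-pairs) and the surjectivity of $\tau$ onto $Q$; after that, Corollary~\ref{cor:polcpairs} does all the work.
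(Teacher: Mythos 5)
Your proof is correct and follows essentially the same route as the paper's: both reduce the claim to Corollary~\ref{cor:polcpairs} after using the defining property of the $\eta$ from Lemma~\ref{lem:witness} to pass from $\npr{e}=\npr{r}$ to the fact that $(\mpr{e},\mpr{r})$ is a \Cs-pair. You merely spell out the choice of preimages $w_e,w_r$ under $\tau$, which the paper leaves implicit with the phrase ``by definition of $\eta$.''
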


\begin{proof}
  Clearly, $\npr{e} = \npr{ere}$. Moreover, by definition of $\eta$, the hypothesis that $\npr{e} = \npr{r}$  implies that $(\mpr{e},\mpr{r})$ is a \Cs-pair for $\alpha$. Thus, Corollary~\ref{cor:polcpairs} yields that $(\mpr{e},\mpr{ere})$ is a \pol{\Cs}-pair for $\alpha$. Altogether, we obtain that $(e,ere)$ is a good pair.
\end{proof}

\subsection{Proof: main argument}

We may now start the proof. Given $q_1,q_2 \in Q$ and  \Kb a finite set of languages, we say that \Kb is \emph{$(q_1,q_2)$-safe} to indicate that for all $K \in\Kb$ and $u,u' \in K$, we have $q_1\tau(u)q_2 = q_1\tau(u')q_2$. We say that \Kb is \emph{safe} to indicate that it is $(1_Q,1_Q)$-safe.

\begin{restatable}{proposition}{genmain} \label{prop:genmain}
  For every \Jrel-class $J \subseteq N$, there exists a safe \bpol{\Cs}-cover of $\eta\inv(J)$.
\end{restatable}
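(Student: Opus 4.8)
The plan is to prove the statement by induction on the $\Jrel$-class $J$, using the $\Jrel$-ordering on $N$ as the well-founded order: we build a safe $\bpol{\Cs}$-cover of $\eta\inv(J)$ assuming the result is already known for every $\Jrel$-class $J'$ with $J \Jords J'$. The base case is the maximal $\Jrel$-class, which contains $1_N$; here $\eta\inv(J) \subseteq \eta\inv(\npr{1_Q})$ has few nonidempotent obstructions and the argument simplifies. For the inductive step, the key geometric tool is Proposition~\ref{prop:core}: one first checks that the $\Jrel$-class $J$ can be assumed $\eta$-alphabetic (the non-$\eta$-alphabetic case is handled by restricting to a strict sub-alphabet $B \subsetneq A$, on which $J$ becomes a proper $\Jrel$-class, falling under a separate sub-induction on $|A|$; this is the standard move). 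Applying Proposition~\ref{prop:core} with $\eta$, with $\alpha$ itself, and with $H = \eta\inv(J)$ yields a finite set $P$ of $\alpha$-guarded $\eta$-patterns $\widebar{p} = (w_1,e_1,\dots,w_n,e_n,w_{n+1})$ such that $\{\uclos_\eta \widebar{p} \mid \widebar{p} \in P\}$ covers $\eta\inv(J)$ and $\pi(\widebar{p}) \in \eta\inv(J)$. Since each $\uclos_\eta \widebar{p} = w_1\,\eta\inv(e_1)\cdots w_n\,\eta\inv(e_n)\,w_{n+1}$, and since each idempotent $e_i$ satisfies $e_i \Jrel \eta(\widebar{p})$, the ``loops'' $\eta\inv(e_i)$ all live strictly $\Jrel$-below no element — rather, $\eta\inv(e_i) \subseteq \eta\inv(J)$ again, so one instead recurses structurally on the pattern using the $n$-splits.

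The heart of the argument is then to produce, for each $\widebar{p} \in P$, a safe $\bpol{\Cs}$-cover of $\uclos_\eta \widebar{p}$, and then take the union over $P$. To cover $\uclos_\eta \widebar{p}$, I would first use Lemma~\ref{lem:bupol} to replace each letter block $w_i$ by a language $K_{w_i} \in \pol{\Cs}$ with $\tau\inv(\tau(w_i)) \subseteq K_{w_i} \subseteq \eta\inv(\npr{\tau(w_i)})$ consisting only of words forming good pairs with $\tau(w_i)$; and to cover each $\eta\inv(e_i)$ I would combine the inductive hypothesis on the $\Jrel$-class of $e_i$ inside a suitably ``pumped'' structure. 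The crucial combinatorial step is handling a single loop $\eta\inv(e)$ sitting between two letter-blocks: here the idempotent $e \in E(N)$ lifts to $Q$, and one wants that traversing $\eta\inv(e)$ does not change the image in $Q$ up to the equations. This is exactly where Lemma~\ref{lem:rswap} (the reformulation of~\eqref{eq:swap}) and Lemma~\ref{lem:rcapolcp} (the reformulation of~\eqref{eq:capolcp}, equivalently~\eqref{eq:cacp}) enter: given two words $u,u' \in \eta\inv(e)$ surrounded by the pattern's guard idempotents, the hypotheses~\eqref{eq:cacp} and~\eqref{eq:swap} force $q_1\tau(u)q_2 = q_1\tau(u')q_2$ for the appropriate context elements $q_1,q_2 \in Q$ built from the pattern, which is precisely $(q_1,q_2)$-safety. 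One then assembles these piecewise-safe covers along the concatenation $\uclos_\eta \widebar{p} = w_1\,\eta\inv(e_1)\cdots w_{n+1}$ using Lemma~\ref{lem:bpconcatm}: since each factor is in $\pol{\Cs}$ (the blocks $w_i$, replaced by $K_{w_i}$) or covered by $\bpol{\Cs}$-covers ($\eta\inv(e_i)$), Lemma~\ref{lem:bpconcatm} produces a $\bpol{\Cs}$-cover $\Hb$ of the whole concatenation with each $H \in \Hb$ contained in a product of the chosen pieces; and the safety of each piece, together with $\alpha$-guardedness providing the idempotent contexts needed to apply Lemmas~\ref{lem:rswap} and~\ref{lem:rcapolcp}, propagates to global safety of $\Hb$.

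The main obstacle I anticipate is the bookkeeping of \emph{which} context elements $q_1, q_2$ must be used when invoking safety of the inner pieces: when we splice together the cover of $\eta\inv(e_i)$ into the larger product, the piece is no longer surrounded by $1_Q$ but by the $\tau$-images of the $K_{w_j}$'s and of the other loops to its left and right. So we need the inner covers to be not merely safe but $(q_1,q_2)$-safe for the \emph{right} $q_1,q_2$, and these $q_1,q_2$ are only controlled modulo good pairs (via Lemma~\ref{lem:bupol}) and modulo the idempotent structure of the guard. Getting the induction hypothesis strong enough — presumably the inductive statement should be about $(q_1,q_2)$-safe covers for arbitrary idempotent-flanked contexts, not just safe ones — is where the real care is needed, and is likely why the actual proof in the paper will strengthen Proposition~\ref{prop:genmain} or prove a more technical auxiliary statement first. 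A secondary subtlety is ensuring the $\eta$-alphabetic reduction interacts correctly with the sub-alphabet induction without circularity; this is routine but must be stated carefully. Modulo these points, the equations~\eqref{eq:cacp} and~\eqref{eq:swap} are used exactly once each, precisely to certify safety across a single loop, which is the conceptual payoff of the whole setup.
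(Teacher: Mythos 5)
Your high-level skeleton is close to the paper's: induction over $\Jrel$-classes of $N$, a finer nested induction strengthened to $(q_1,q_2)$-safe covers for context elements $q_1,q_2\in Q$ (you correctly anticipate this strengthening), Proposition~\ref{prop:core} as the main geometric input, the concatenation lemmas \ref{lem:bpconcat}/\ref{lem:bpconcatm} to assemble the pieces, and good pairs via Lemmas~\ref{lem:rcapolcp}, \ref{lem:rswap}. But there are two genuine gaps.

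First, your proposed handling of the non-$\eta$-alphabetic case is wrong. Restricting to a strict sub-alphabet $B\subsetneq A$ does not work: if $J$ is not $\eta$-alphabetic there is a letter $a$ with $J\Jords\eta(a)$, yet words $w$ with $\eta(w)\in J$ routinely contain such letters (a word can drop into $J$ precisely by concatenating letters whose individual images are strictly $\Jrel$-above $J$). Throwing such $a$ out of the alphabet loses almost all of $\eta\inv(J)$, and there is no meaningful induction on $|A|$ here. The paper's actual fix is to build an auxiliary alphabet $\frB\subseteq Q$ whose ``letters'' are the values $\tau(u_1 a u_2)$ for chunks $u_1 a u_2$ with $J\Jords\eta(u_1)$, $J\Jords\eta(u_2)$, and $\eta(u_1 a u_2)\in J$, together with a morphism $\eta_\frB:\frB^*\to N$ for which $J$ is $\eta_\frB$-alphabetic by construction; Proposition~\ref{prop:core} is then applied over $\frB^*$, with a third morphism $\gamma:\frB^*\to Q\times 2^Q$ (not $\alpha$) as the guarding morphism. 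The translation back to $A^*$ is where Lemma~\ref{lem:imageu} and the outer index-induction enter: each $\frB$-letter is replaced by a language $U_z\in\pol\Cs$ which itself needs a safe $\bpol\Cs$-cover, and that cover is obtained recursively from the $\Jrel$-classes of $\eta(u_1)$ and $\eta(u_2)$ which sit strictly above $J$. Your sketch misses this entirely.

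Second, you do not mention what turns out to be the structural heart of the argument: the set $\barS$ of elements that do not \emph{stabilize} the triple $(q_1,d,q_2)$, and the resulting two-step construction — cover $\tau\inv(\barS)$ by the sub-induction, then adjoin a single \emph{complementary} language $K_\bot=\eta\inv(d)\setminus\bigcup\Kb_{\barS}$ to absorb the stabilizing elements. This changes where the two equations actually work. Equation~\eqref{eq:cacp} (via Lemma~\ref{lem:rcapolcp}) is used to prove a \emph{strict} decrease of the secondary rank parameter (Lemmas~\ref{lem:indbase}, \ref{lem:getstrict}), which is what makes the sub-induction terminate; it does not directly certify safety across a loop. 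Equation~\eqref{eq:swap} (via Lemma~\ref{lem:rswap}) is not used along a loop of the pattern either; it is used precisely on the leftover language $K_\bot$, where any two words map to stabilizing elements and the swap identity then forces $q_1\tau(w)q_2 = q_1\tau(w')q_2$. So ``used exactly once each to certify safety across a single loop'' mischaracterizes the role of both equations; the stabilizer/complement split is the piece of the argument that your outline would still need to invent.
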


We first explain how to conclude the proof using Proposition~\ref{prop:genmain}.
For each \Jrel-class $J \subseteq N$, Proposition~\ref{prop:genmain} yields a safe \bpol{\Cs}-cover $\Kb_J$ of $\eta\inv(J)$. The union \Kb of all $\Kb_J$ is a safe \bpol{\Cs}-cover of $A^*$. By definition of safeness, each $K \in \Kb$ satisfies $K \subseteq \tau\inv(q)$ for some $q \in Q$. Since \Kb is a cover of $A^*$, it follows that for each subset $F \subseteq Q$, $\tau\inv(F)$ is the union of all languages $K \in \Kb$ such that $K\subseteq \tau\inv(F)$. By closure under union, this yields $\tau\inv(F) \in \bpol{\Cs}$. Hence, $\tau$ is a \bpol{\Cs}-morphism, which completes the main proof.

\smallskip

We now focus on the proof of Proposition~\ref{prop:genmain}. Let $J$ be a \Jrel-class of $N$. We define the \emph{index} of $J$ as the number of \Jrel-classes $J'$ of $N$ such that $J\Jords J'$. We use induction on the index of $J$ to build a safe \bpol{\Cs}-cover of $\eta\inv(J)$. The construction involves a sub-induction on a second parameter. We define an ordering on $Q \times J \times Q$ as follows: given $(q_1,d,q_2),(r_1,d',r_2) \in Q \times J \times Q$ and $q \in Q$, we let $(q_1,d,q_2) \smarrow{q} (r_1,d',r_2)$ when there exist $s_1,s_2,s'_1,s'_2 \in Q$ and $e \in E(Q)$ such that,
\begin{enumerate}
  \item $\npr{s_1es_2} = d$ and $\npr{e} = \npr{s'_1}d'\npr{s'_2}$.
  \item $q = s_1es_2$, $r_1 = q_1s_1es'_1$ and $r_2 = s'_2es_2q_2$.
\end{enumerate}
For $(q_1,d,q_2),(r_1,d',r_2)\in Q \times J \times Q$, we let $(q_1,d,q_2) \leq (r_1,d',r_2)$ if $(q_1,d,q_2) = (r_1,d',r_2)$ or $(q_1,d,q_2) \smarrow{q} (r_1,d',r_2)$ for some $q \in Q$. Finally, we let $(q_1,d,q_2) < (r_1,d',r_2)$ when $(q_1,d,q_2) \leq (r_1,d',r_2)$ and $(r_1,d',r_2) \neq (q_1,d,q_2)$.

\begin{restatable}{fact}{transclos} \label{fct:transclos}
  The relation $\leq$ on $Q \times J \times Q$ is a preorder.
\end{restatable}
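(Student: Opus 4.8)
The statement has two parts: reflexivity and transitivity. Reflexivity is immediate, since the definition of $\leq$ explicitly allows $(q_1,d,q_2) = (r_1,d',r_2)$. So the plan is to concentrate on transitivity. Suppose $(q_1,d,q_2) \leq (r_1,d',r_2) \leq (t_1,d'',t_2)$. If either of the two relations is an equality, the conclusion is trivial, so we may assume $(q_1,d,q_2) \smarrow{q} (r_1,d',r_2)$ and $(r_1,d',r_2) \smarrow{r} (t_1,d'',t_2)$. I would unfold the two definitions to obtain, for the first step, elements $s_1,s_2,s_1',s_2' \in Q$ and $e \in E(Q)$ with $\npr{s_1es_2} = d$, $\npr{e} = \npr{s_1'}d'\npr{s_2'}$, $q = s_1es_2$, $r_1 = q_1s_1es_1'$, $r_2 = s_2'es_2q_2$; and for the second step $\sigma_1,\sigma_2,\sigma_1',\sigma_2' \in Q$ and $f \in E(Q)$ with $\npr{\sigma_1f\sigma_2} = d'$, $\npr{f} = \npr{\sigma_1'}d''\npr{\sigma_2'}$, $r = \sigma_1f\sigma_2$, $t_1 = r_1\sigma_1f\sigma_1'$, $t_2 = \sigma_2'f\sigma_2r_2$. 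In particular $t_1 = q_1s_1es_1'\sigma_1f\sigma_1'$ and $t_2 = \sigma_2'f\sigma_2s_2'es_2q_2$.

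The key idea is that the composite step should keep the \emph{outer} data $(s_1,e,s_2)$ untouched, absorbing the entire second step into the two inserted pieces. Concretely, I would propose the witnesses $\bar s_1 = s_1$, $\bar s_2 = s_2$, $\bar e = e \in E(Q)$, $\bar s_1' = s_1'\sigma_1f\sigma_1'$, $\bar s_2' = \sigma_2'f\sigma_2s_2'$, and claim that $(q_1,d,q_2) \smarrow{q} (t_1,d'',t_2)$ is witnessed by them. With these choices, the equalities $q = \bar s_1\bar e\bar s_2$, $t_1 = q_1\bar s_1\bar e\bar s_1'$ and $t_2 = \bar s_2'\bar e\bar s_2q_2$ are pure associativity rearrangements of the formulas above, and $\npr{\bar s_1\bar e\bar s_2} = \npr{s_1es_2} = d$ is immediate.

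The one nontrivial point — and the place where a short computation is needed, though nothing deeper than idempotency is involved — is the remaining condition $\npr{\bar e} = \npr{\bar s_1'}\,d''\,\npr{\bar s_2'}$. Here I would expand the right-hand side to $\npr{s_1'}\npr{\sigma_1}\npr{f}\npr{\sigma_1'}\,d''\,\npr{\sigma_2'}\npr{f}\npr{\sigma_2}\npr{s_2'}$, observe that $\npr{f}\in E(N)$ because $f \in E(Q)$, and use $\npr{\sigma_1'}d''\npr{\sigma_2'} = \npr{f}$ to collapse the central block $\npr{f}\npr{\sigma_1'}d''\npr{\sigma_2'}\npr{f}$ to $\npr{f}\cdot\npr{f}\cdot\npr{f} = \npr{f}$; what remains is $\npr{s_1'}\npr{\sigma_1}\npr{f}\npr{\sigma_2}\npr{s_2'} = \npr{s_1'}\,\npr{\sigma_1f\sigma_2}\,\npr{s_2'} = \npr{s_1'}\,d'\,\npr{s_2'} = \npr{e} = \npr{\bar e}$. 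This establishes $(q_1,d,q_2) \smarrow{q} (t_1,d'',t_2)$, hence $(q_1,d,q_2) \leq (t_1,d'',t_2)$, and completes the proof. It is worth noting that transitivity is a purely formal consequence of the definition of $\leq$ and does not appeal to any of the equations satisfied by $\tau$ (good pairs, \Cs-swaps); those will enter only later, when $\leq$ is actually used to build covers.
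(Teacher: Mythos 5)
Your proof is correct and matches the paper's argument: you keep the original idempotent $e$ and the outer factors $s_1,s_2$ fixed, absorb the entire second step into the inserted elements $\bar s_1' = s_1'\sigma_1 f\sigma_1'$, $\bar s_2' = \sigma_2' f\sigma_2 s_2'$, and verify $\npr{\bar s_1'} d'' \npr{\bar s_2'} = \npr{e}$ using $\npr{f}\in E(N)$ and the two hypotheses, which is exactly the computation in the paper (just run right-to-left instead of left-to-right).
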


\begin{proof}
	By definition, $\leq$ is reflexive. Let us now prove that it is transitive. Let $(q_1,d,q_2)$, $(r_1,d',r_2)$, $(r'_1,d'',r'_2)$ be such that $(q_1,d,q_2) \leq (r_1,d',r_2)$ and  $(r_1,d',r_2) \leq (r'_1,d'',r'_2)$. We prove that $(q_1,d,q_2) \leq (r'_1,d'',r'_2)$. We have $q,r,s_1,s_1,s'_1,s'_2$, $t_1,t_2,t'_1,t'_1 \in Q$ and $e,f \in Q$ such that,
	\begin{enumerate}
		\item $\npr{s_1es_2} = d^{\ }$,\quad $\npr{e} = \npr{s'_1}d\npr{s'_2}$,\\ $\npr{t_1ft_2} = d'$,\quad $\npr{f} = \npr{t'_1}d''\npr{t'_2}$.
		\item  $q = s_1es_2$,\quad $r^{}_1 = q^{}_1s^{}_1es'_1$,\quad $r^{}_2 = s'_2es^{}_2q^{}_2$,\\ $r = t_1ft_2$,\quad $r'_1 = r^{}_1t^{}_1ft'_1$,\quad $r'_2 = t'_2ft^{}_2r^{}_2$.
	\end{enumerate}
  Let $s''_1 = s'_1r'_1t^{}_1ft'_1$ and $s''_2 = r'_2ft^{}_2s'_2$. We have $\npr{s_1es_2} = d$ and,
	\[
    \begin{array}{rlll}
      \npr{e}= & \npr{s'_1}d'\npr{s'_2} & \\
      = &\npr{s'_1t^{}_1ft^{}_2s'_2} & \text{as $\npr{t_1ft_2} = d'$,} \\
      = &\npr{s'_1t^{}_1ffft^{}_2s'_2} & \text{as $f \in E(M)$,}\\
      = & \npr{s'_1t^{}_1ft'_1}d''\npr{t'_2ft^{}_2s'_2} &\text{as $\npr{f}= \npr{t'_1}d''\npr{t'_2}$,}\\
      = & \npr{s''_1}d''\npr{s''_2}.
    \end{array}
	\]
	Furthermore, we have the equalities $q = s_1es_2$, $r'_1 = r^{}_1t^{}_1ft'_1 = q^{}_1s^{}_1es'_1t^{}_1ft'_1 = q^{}_1s^{}_1es''_1$ and $r'_2 = t'_2ft^{}_2r^{}_2 = t'_2ft^{}_2r^{}_2s'_2es^{}_2q^{}_2 = s''_2es^{}_2q^{}_2$. Altogether, we obtain $(q_1,d,q_2) \smarrow{q} (r'_1,d'',r'_2)$. Thus, we get $(q_1,d,q_2) \leq (r'_1,d'',r'_2)$.
\end{proof}

\newcommand{\barS}{\ensuremath{\smash{\widebar{S}}}\xspace}

We now start the construction. We define the \emph{rank} of a triple $(q_1,d,q_2) \in Q \times J \times Q$ as the number of triples $(r_1,d',r_2) \in Q \times J \times Q$ such that $(q_1,d,q_2)\leq (r_1,d',r_2)$. We use induction on the rank of $(q_1,d,q_2)\in Q \times J \times Q$ to construct a $(q_1,q_2)$-safe \bpol{\Cs}-cover of $\eta\inv(d)$. It will follow from the case $q_1=q_2=1_Q$ that there exists a safe \bpol{\Cs}-cover of $\Kb_d$ of~$\eta\inv(d)$ for every $d \in J$. Thus, $\Kb = \bigcup_{d \in J} \Kb_d$ is the desired safe \bpol{\Cs}-cover of $\eta\inv(J)$ described in Propostition~\ref{prop:genmain}.

We fix $(q_1,d,q_2) \in Q \times J \times Q$. We build our $(q_1,q_2)$-safe \bpol{\Cs}-cover of $\eta\inv(d)$ in two steps. We say that $q \in Q$ \emph{stabilizes $(q_1,d,q_2)$} if $(q_1,d,q_2) \smarrow{q}(q_1,d,q_2)$. We define:
\[
  \barS = \{q \in Q \mid \text{$\npr{q} = d$ and $q$ does \emph{not} stabilize $(q_1,d,q_2)$}\}.
\]
In the first step, we use induction to build a $(q_1,q_2)$-safe \bpol{\Cs}-cover $\Kb_{\barS}$ of $\tau\inv(\barS)\subseteq \eta\inv(d)$ (the base case of the sub-induction is when $\barS = \emptyset$: it suffices to choose $\Kb_{\barS} = \emptyset$). The key argument to construct this $(q_1,q_2)$-safe \bpol{\Cs}-cover is Proposition~\ref{prop:core}, which yields a \pol\Cs cover. In the second (and simpler) step, we complete $\Kb_{\barS}$ with a single language of \bpol\Cs to take into accounts the elements of $\eta\inv(d)$ stabilizing $(q_1,d,q_2)$, thus obtaining the desired safe \bpol{\Cs}-cover $\Kb_d$ of $\eta\inv(d)$.

The first step differs depending on whether $J \subseteq N$ is a \emph{maximal} \Jrel-class (\emph{i.e.}, $1_N \in J$) or not. Here, we treat the most involved case: we assume that $J$ is \emph{not} maximal. Thus, $1_N \not\in J$. 

\begin{restatable}{remark}{themax} \label{rem:themax}
  If $\{\veps\} \in \Cs$, we may assume without loss of generality that the \Cs-morphism $\eta: A^*\to N$ recognizes~$\{\veps\}$. In this case, the maximal \Jrel-class of $N$ is $\{1_N\}$ and $\eta\inv(1_N)=\{\veps\}$. This trivializes the case where $J$ is maximal: $\{\{\veps\}\}$ is a safe \bpol{\Cs}-cover of $\eta\inv(1_N)$. This is a key point as we are mainly interested in input classes \Cs which are themselves of the form $\Cs = \bpol{\Ds}$. In this case, $\{\veps\} \in \Cs$.
\end{restatable}

\subsection{\texorpdfstring{First case: $J$ maximal. Building \protect{$\Kb_{\protect\barS}$}}{First case: J maximal. Building KS}}

We prove Proposition~\ref{prop:genmain} in the special case where $J \subseteq N$ is the \emph{maximal} \Jrel-class of $N$. We first build a \pol{\Cs}-cover \Hb of $\tau\inv(\barS)$ with Proposition~\ref{prop:core}. Then, we look at each language $H \in \Hb$ and build a \bpol{\Cs}-cover of $H$ independently. The reunion of these covers will be the desired \bpol{\Cs}-cover $\Kb_{\barS}$ of $\tau\inv(\barS)$.

This case is simpler than the next one, when $J$ is not maximal. Indeed, the maximal \Jrel-class of $N$ is necessarily $\eta$-alphabetic. Let $w \in A^*$ and $a_1,\dots,a_n \in A$ be the letters such that $w = a_1 \cdots a_n$. We let,
\[
  \uclos w = \eta\inv(1_N)\ a_1\ \eta\inv(1_N)\ \cdots\ a_n\ \eta\inv(1_N) \subseteq A^*.
\]
In particular, we let $\uclos \veps = \eta\inv(1_N)$. The following lemma is a corollary of Proposition~\ref{prop:core}.

\begin{lemma} \label{lem:covbase}
  There is a finite set $F \subseteq \tau\inv(\barS)$ such that $\{\uclos w \mid w \in F\}$ is a cover of $\tau\inv(\barS)$.
\end{lemma}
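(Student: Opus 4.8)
Lemma~\ref{lem:covbase} is essentially a restatement of Proposition~\ref{prop:core} specialized to the maximal \Jrel-class $J$ of $N$ and to the language $H = \tau\inv(\barS)$. The first thing I would check is that the hypotheses of Proposition~\ref{prop:core} are met. We apply it with the morphisms $\eta: A^* \to N$ and $\alpha: A^* \to M$, and with the \Jrel-class $J$ taken to be the maximal one (so $1_N \in J$). The maximal \Jrel-class is automatically $\eta$-alphabetic: if $J \Jord \eta(a)$ for some $a \in A$, then since $1_N \in J$ and $1_N$ is \Jrel-maximal, every element is \Jord-below $1_N$, so in particular $\eta(a) \Jrel 1_N$, i.e.\ $\eta(a) \in J$. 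So the $\eta$-alphabetic hypothesis holds for free here, which is exactly why this case is simpler than the non-maximal one.

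Next I need a language to feed as the ``$H$'' of Proposition~\ref{prop:core}. Take $H = \tau\inv(\barS)$. By construction $\barS \subseteq \{q \in Q \mid \npr{q} = d\}$ where $d \in J$; but since $J$ is the maximal \Jrel-class there is really only one element $d$ that we care about here (or we may just note $\npr{q} \in J$ for every $q \in \barS$). Hence $\tau\inv(\barS) \subseteq \eta\inv(J)$, so the inclusion hypothesis $H \subseteq \eta\inv(J)$ of Proposition~\ref{prop:core} is satisfied. Applying the proposition gives a finite set $P$ of $\alpha$-guarded $\eta$-patterns with $\pi(\widebar{p}) \in \tau\inv(\barS)$ for all $\widebar{p} \in P$ and such that $\{\uclos_\eta \widebar{p} \mid \widebar{p} \in P\}$ covers $\tau\inv(\barS)$.

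The last step is a translation between the two notations for ``upward closure''. Proposition~\ref{prop:core} produces languages $\uclos_\eta \widebar{p}$ attached to arbitrary $\eta$-patterns $\widebar{p}$, whereas Lemma~\ref{lem:covbase} is phrased in terms of the simpler languages $\uclos w = \eta\inv(1_N)\,a_1\,\eta\inv(1_N)\cdots a_n\,\eta\inv(1_N)$ attached to words $w = a_1\cdots a_n$. These agree when the idempotents appearing in the pattern are all $1_N$ and the ``blocks'' $w_i$ are single letters — but in general an $\eta$-pattern $\widebar{p} = (w_1,e_1,\dots,w_n,e_n,w_{n+1})$ uses nonempty words $w_i$ and idempotents $e_i$ with $e_i \Jrel \eta(\pi(\widebar{p}))$. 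Since here $\eta(\pi(\widebar{p})) \in J$ is \Jrel-maximal, each $e_i$ is \Jrel-maximal, hence $e_i \Hrel 1_N$ (it is an idempotent \Jrel-equivalent to $1_N$, so by Lemma~\ref{lem:htogroup} applied with $e = e_i$ and the \Hrel-class of $1_N$, or more directly: a \Jrel-maximal idempotent in a finite monoid is the identity), so in fact $e_i = 1_N$ for all $i$. Thus $\uclos_\eta \widebar{p} = w_1\,\eta\inv(1_N)\,w_2\cdots \eta\inv(1_N)\,w_{n+1}$. Writing each $w_i$ as a product of letters and inserting $\eta\inv(1_N) \ni \veps$ between consecutive letters (note $\eta\inv(1_N)$ is a submonoid containing $\veps$, so inserting it only enlarges the language but $\uclos_\eta \widebar{p} \subseteq \uclos(\pi(\widebar{p}))$ and both contain $\pi(\widebar{p})$), one checks that $\uclos_\eta \widebar{p} \subseteq \uclos(\pi(\widebar{p}))$, while still $\pi(\widebar{p}) \in \uclos(\pi(\widebar{p}))$. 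Set $F = \{\pi(\widebar{p}) \mid \widebar{p} \in P\}$; this is a finite subset of $\tau\inv(\barS)$, and since each word of $\tau\inv(\barS)$ already lies in some $\uclos_\eta \widebar{p} \subseteq \uclos(\pi(\widebar{p}))$, the set $\{\uclos w \mid w \in F\}$ covers $\tau\inv(\barS)$.

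**Main obstacle.** There is no real obstacle — the content is entirely in Proposition~\ref{prop:core}, which is already proved. The only thing requiring care is the bookkeeping that collapses the general $\eta$-pattern languages $\uclos_\eta \widebar{p}$ down to the simpler $\uclos w$ form, which rests on the observation that in the maximal \Jrel-class every idempotent equals $1_N$; getting the inclusions and the membership $\pi(\widebar{p}) \in \uclos(\pi(\widebar{p}))$ right is the one place to be slightly careful.
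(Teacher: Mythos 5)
Your proof is correct and takes essentially the same route as the paper: apply Proposition~\ref{prop:core} (using that the maximal \Jrel-class is automatically $\eta$-alphabetic), note that the only idempotent in the maximal \Jrel-class is $1_N$ so the $\eta$-pattern languages simplify, and then use $\veps \in \eta\inv(1_N)$ to conclude $\uclos_\eta \widebar{p} \subseteq \uclos\,\pi(\widebar{p})$ and take $F = \{\pi(\widebar{p}) \mid \widebar{p} \in P\}$.
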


\begin{proof}
  Since $J$ is maximal, it is $\eta$-alphabetic. Therefore, since $\tau\inv(\barS) \subseteq \eta\inv(d)$ and $d \in J$, it follows from Proposition~\ref{prop:core} that there is a finite set $P$ of $\eta$-patterns such $\pi(\bar{p}) \in \tau\inv(\barS)$ for every $\bar{p} \in P$ and the set $\{\uclos_{\eta} \bar{p} \mid \bar{p} \in P\}$ is a cover of $\tau\inv(\barS)$. Consider a $\bar{p} \in P$ and let $\bar{p} = (w_1,g_1,\dots,w_n,g_n,w_{n+1})$. Since $\pi(\bar{p}) \in \tau\inv(\barS)$, we have $\eta(\pi(\bar{p})) = d \in J$ which means that $g_1,\dots,g_n \in J$ by definition of $\eta$-patterns. Moreover, since $J$ is the maximal \Jrel-class of $N$, the neutral element $1_N$ is the only idempotent in $J$. Hence, $g_1 = \cdots = g_n = 1_N$. Since $\veps \in \eta\inv(1_N)$, this yields $\uclos_{\eta} \bar{p} \subseteq \uclos \pi(\bar{p})$. Consequently, we have $\{\uclos \pi(\bar{p}) \mid \bar{p} \in P\}$ is also a cover of $\tau\inv(\barS)$. We may now choose $F = \{\pi(\bar{p}) \mid \bar{p} \in P\}$ which completes the proof.
\end{proof}

We fix the set $F \subseteq \tau\inv(\barS)$ described in Lemma~\ref{lem:covbase} for the remainder of the proof. For each $w \in F$, we construct a $(q_1,q_2)$-safe \bpol{\Cs}-cover $\Kb_w$ of $\uclos w$. In view of Lemma~\ref{lem:covbase}, it will then be immediate that $\Kb_{\barS} = \bigcup_{w \in F} \Kb_w$ is the desired $(q_1,q_2)$-safe \bpol{\Cs}-cover of $\tau\inv(\barS)$. We fix $w \in F$. Let $a_1,\dots,a_n \in A$ be the letters such that $w = a_1 \cdots a_n$. We have $\uclos w = \eta\inv(1_N) a_1 \eta\inv(1_N)\cdots a_n \eta\inv(1_N)$.  We first cover the language $\eta\inv(1_N)$ using induction on the rank of $(q_1,d,q_2)$ in the following lemma.

\begin{restatable}{lemma}{indrank} \label{lem:indrankb}
  There exists a tight \bpol{\Cs}-cover \Vb of $\eta\inv(1_N)$ that is $(r_1,r_2)$-safe for all $r_1,r_2 \in Q$ such that $(q_1,d,q_2) < (r_1,1_N,r_2)$.
\end{restatable}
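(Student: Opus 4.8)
The plan is to obtain Lemma~\ref{lem:indrankb} purely from the inductive hypothesis of the ongoing induction on the rank of triples in $Q \times J \times Q$. Since we are in the case where $J$ is the maximal \Jrel-class of $N$, we have $1_N \in J$; hence for all $r_1,r_2 \in Q$ the triple $(r_1,1_N,r_2)$ lies in $Q \times J \times Q$, so the construction under way is meaningful for it. Call a pair $(r_1,r_2) \in Q^2$ \emph{relevant} when $(q_1,d,q_2) < (r_1,1_N,r_2)$; there are finitely many, since $Q$ is finite.

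First I would record that every relevant pair has strictly smaller rank than $(q_1,d,q_2)$ -- the fact on which the present induction on rank rests. Concretely, by transitivity of $\leq$ (Fact~\ref{fct:transclos}) the set of triples above $(r_1,1_N,r_2)$ is contained in the set of triples above $(q_1,d,q_2)$, and it is properly contained because $(q_1,d,q_2)$ belongs to the latter but, by $(q_1,d,q_2) < (r_1,1_N,r_2)$, not to the former; so $\mathrm{rank}(r_1,1_N,r_2) < \mathrm{rank}(q_1,d,q_2)$. The inductive hypothesis therefore yields, for each relevant pair $(r_1,r_2)$, a $(r_1,r_2)$-safe \bpol{\Cs}-cover $\Vb_{r_1,r_2}$ of $\eta\inv(1_N)$.

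Next I would take \Vb to be the common refinement of these covers, clipped to $\eta\inv(1_N)$. If there is no relevant pair, set $\Vb = \{\eta\inv(1_N)\}$; otherwise let
\[
  \Vb = \{\, \eta\inv(1_N) \cap {\textstyle\bigcap_{(r_1,r_2)}} V_{r_1,r_2} \mid V_{r_1,r_2} \in \Vb_{r_1,r_2} \text{ for every relevant pair } (r_1,r_2) \,\}.
\]
Then \Vb is finite (finitely many relevant pairs, each $\Vb_{r_1,r_2}$ finite), and each member is a finite intersection of languages of \bpol{\Cs}: indeed $\eta\inv(1_N) \in \Cs \subseteq \bpol{\Cs}$ because $\eta$ is a \Cs-morphism, and \bpol{\Cs} is a \vari by Corollary~\ref{cor:bpolc}, hence closed under intersection. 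Moreover \Vb is a cover of $\eta\inv(1_N)$ (any word of $\eta\inv(1_N)$ belongs to some member of each $\Vb_{r_1,r_2}$, hence to the associated member of \Vb); it is tight, i.e.\ every member is contained in $\eta\inv(1_N)$; and for every relevant pair $(r_1,r_2)$ it is $(r_1,r_2)$-safe, since every $V \in \Vb$ is contained in a member of the $(r_1,r_2)$-safe set $\Vb_{r_1,r_2}$ and $(r_1,r_2)$-safety is inherited by refinements. This is exactly the cover claimed by the lemma.

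The argument is short, and I expect the only delicate point to be the strict decrease of the rank along $<$, where one uses the precise behaviour of $<$ (namely that a triple strictly below another is not $\leq$ it) together with transitivity of $\leq$ from Fact~\ref{fct:transclos}. Everything else is routine bookkeeping: finiteness of $Q$, closure of \bpol{\Cs} under finite intersection, the membership $\eta\inv(1_N) \in \Cs$ coming from $\eta$ being a \Cs-morphism, and the stability of $(r_1,r_2)$-safety under passing to a refinement.
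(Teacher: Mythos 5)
Your proof is correct and matches the paper's argument essentially line for line: collect the finitely many relevant pairs $(r_1,r_2)$ with $(q_1,d,q_2) < (r_1,1_N,r_2)$, invoke the outer induction on rank for each, and take the common refinement of the resulting covers intersected with $\eta\inv(1_N)$. The only difference is that you spell out the strict rank decrease, which the paper asserts without comment.
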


\begin{proof}
  Let $X \subseteq Q^2$ be the set of all pairs $(r_1,r_2) \in M^2$ such $(q_1,d,q_2) < (r_1,1_N,r_2)$. For all $(r_1,r_2)\in X$, the rank of $(r_1,1_N,r_2)$ is strictly smaller than the one of $(q_1,d,q_2)$. Hence, we may use  induction on the rank of $(q_1,d,q_2)$ to get a $(r_1,r_2)$-safe \bpol{\Cs}-cover $\Vb_{r_1,r_2}$ of $\eta\inv(1_N)$.  Let $\ell = |X|$ and $\{(r_{1,1},r_{2,1}),\dots,(r_{1,\ell},r_{2,\ell})\} = X$. We define,
  \[
    \Vb = \{V_1 \cap \cdots \cap V_\ell \cap \eta\inv(1_N) \mid \text{$V_j \in \Vb_{r_{1,j},r_{2,j}}$ for all $j \leq \ell$}\}.
  \]
  Since $\eta$ is a \Cs-morphism and \bpol{\Cs} is closed under intersection, it is immediate that $\Vb$ is tight \bpol{\Cs}-cover of $\eta\inv(1_N)$. One may verify that it satisfies the desired property.
\end{proof}

Lemma~\ref{lem:bpconcatm} yields a \bpol{\Cs}-cover $\Kb_w$ of $\uclos w$ such that for every language $K \in \Kb_w$, we have $K \subseteq V_0a_1V_1 \cdots a_nV_n$ where $V_i \in \Vb_{i}$ for all $i$. It remains to prove that $\Kb_w$ is $(q_1,q_2)$-safe.  Let $K \in \Kb_w$. Given $u,u' \in K$, we prove that $q_1\tau(u)q_2 = q_1\tau(u')q_2$.

By definition, we have $K \subseteq V_0a_1V_1 \cdots a_nV_n$ with $V_i \in \Vb_{i}$ for~all~$i$. Since $u,u' \in K$, we obtain two words $v_i,v'_i \in V_i$ for all $i$, such that $u = v_0a_1v_1 \cdots a_nv_{n}$ and $u' = v'_0a_1v'_1 \cdots a_nv'_{n}$.  For every $i$ such that $0 \leq i \leq n$, we define,
\[
  x_i = v_0a_1 \cdots v_{i-1}a_i \text{ and } y_i = a_{i+1}v_{i+1} \cdots a_nv_n.
\]
In particular,  $x_0 = y_n = \veps$. Observe that $u = x_nv_ny_n$, $u' = x_1v'_1y_1$ and $x_iv'_iy_i = x_{i-1}v_{i-1}y_{i-1}$ for all~$i$. We prove that $q_1\tau(x_iv_iy_i)q_2 = q_1\tau(x_iv'_iy_i)q_2$ for all $i$. Transitivity will then yield $q_1\tau(x_nv_ny_n)q_2 = q_1\tau(x_1v'_1y_1)q_2$. This exactly expresses the equality $q_1\tau(u)q_2 = q_1\tau(u')q_2$, as desired. We now fix $i$ and prove that $q_1\tau(x_iv_iy_i)q_2 = q_1\tau(x_iv'_iy_i)q_2$. We write $s_1 = \tau(x_i)$ and $s_2 = \tau(v_i)$. The proof is based on the following lemma.

\begin{lemma} \label{lem:indbase}
  We have $(q_1,d,q_2) < (q_1s_1,1_N,s_2q_2)$.
\end{lemma}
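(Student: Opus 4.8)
We are to show $(q_1,d,q_2)<(q_1s_1,1_N,s_2q_2)$; recall that $<$ unfolds to ``$(q_1,d,q_2)\leq(q_1s_1,1_N,s_2q_2)$ together with the two tuples being distinct''. The plan is to get $\leq$ by producing, rather cheaply, a single arrow $(q_1,d,q_2)\smarrow{q}(q_1s_1,1_N,s_2q_2)$, and to get distinctness from the middle coordinates $d$ and $1_N$, so that it suffices to note $d\neq 1_N$. I would dispose of the case $d=1_N$ first: it is degenerate, since whenever $\{\veps\}\in\Cs$ -- which is the case we really care about, $\Cs$ being of the form $\bpol\Ds$ -- Remark~\ref{rem:themax} lets us take $\eta$ to recognize $\{\veps\}$, making $\eta\inv(1_N)=\{\veps\}$ and the maximal-$J$ case trivial, so we never reach this point with $d=1_N$. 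From now on $d\neq 1_N$, and only the inequality remains.

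The first computation I would carry out is the identity $\npr{s_1}\,\npr{s_2}=d$. By Lemma~\ref{lem:indrankb} the cover $\Vb$ is \emph{tight}, hence each gap-factor $v_j$ occurring in the decomposition $u=v_0a_1v_1\cdots a_nv_n$ lies in $\eta\inv(1_N)$, so $\eta(v_j)=1_N$ for every $j$. Consequently the gap-factors wash out and only the letters of $w$ survive: $\npr{s_1}=\eta(x_i)=\eta(a_1\cdots a_i)$ and $\npr{s_2}=\eta(y_i)=\eta(a_{i+1}\cdots a_n)$. On the other hand $w=\pi(\widebar{p})$ belongs to $F\subseteq\tau\inv(\barS)$, so $\tau(w)\in\barS$ and in particular $\npr{\tau(w)}=d$, i.e.\ $\eta(a_1\cdots a_n)=d$. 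Multiplying the two equalities gives $\npr{s_1}\,\npr{s_2}=\eta(a_1\cdots a_i)\,\eta(a_{i+1}\cdots a_n)=\eta(a_1\cdots a_n)=d$.

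Now I would build the arrow. Since $J$ is the maximal \Jrel-class we have $1_N\in J$, so $(q_1s_1,1_N,s_2q_2)$ is a legitimate element of $Q\times J\times Q$. In the definition of $\smarrow{}$, take the witnessing elements to be $s_1$ and $s_2$ themselves, take $s'_1=s'_2=1_Q$ and the idempotent $e=1_Q\in E(Q)$, and set $q=s_1s_2$. Condition~(1) holds because $\npr{s_1\cdot 1_Q\cdot s_2}=\npr{s_1}\,\npr{s_2}=d$ and $\npr{1_Q}=1_N=\npr{1_Q}\cdot 1_N\cdot\npr{1_Q}$ -- this last line is exactly where the target's middle coordinate $1_N$ enters, and it is legitimate because $1_N$ is an idempotent lying in $J$. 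Condition~(2) holds because $s_1\cdot 1_Q\cdot s_2=s_1s_2=q$, $\;q_1s_1\cdot 1_Q\cdot 1_Q=q_1s_1\;$ and $\;1_Q\cdot 1_Q\cdot s_2q_2=s_2q_2$. Hence $(q_1,d,q_2)\smarrow{q}(q_1s_1,1_N,s_2q_2)$, so $(q_1,d,q_2)\leq(q_1s_1,1_N,s_2q_2)$; combined with $d\neq 1_N$ this gives $(q_1,d,q_2)<(q_1s_1,1_N,s_2q_2)$.

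The substantive point -- and, I expect, the only thing here that is not pure bookkeeping -- is the observation that in the maximal case the idempotent appearing in any arrow is forced: from $\npr{s_1es_2}=d$ one gets $\npr{e}\geq_{\mathscr J} d$, from $\npr{e}=\npr{s'_1}\,d'\,\npr{s'_2}$ one gets $\npr{e}\leq_{\mathscr J} d'$, and since $d$ and $d'$ are both \Jrel-equivalent to $1_N$ (which is $\mathscr J$-maximal), $\npr{e}$ must lie in the group of units of $N$, hence equal $1_N$; the trivial idempotent $e=1_Q$ realizes this while collapsing the middle coordinate of the target to $1_N$, which is precisely the shape of the statement. The two facts one must be careful to get right are that every gap-factor lies in $\eta\inv(1_N)$ (tightness of $\Vb$) and that $w$ lies in $\tau\inv(\barS)$ so that $\eta(w)=d$; everything else is routine. (The exclusion of $d=1_N$ is also what keeps the strict inequality honest in the later rank-induction of Lemma~\ref{lem:indrankb}, where one needs $(q_1s_1,1_N,s_2q_2)\not\leq(q_1,d,q_2)$; but that verification belongs to the ambient set-up rather than to this lemma.)
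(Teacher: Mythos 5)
Your argument for the non-strict inequality $(q_1,d,q_2)\leq(q_1s_1,1_N,s_2q_2)$ is correct and is the same as the paper's: take $e=s'_1=s'_2=1_Q$ in the definition of $\smarrow{}$, so that the only thing to verify is $\npr{s_1s_2}=d$, which follows from tightness of the covers $\Vb_j$ and $w\in F\subseteq\tau\inv(\barS)$. The strictness step, however, has a genuine gap, and strictness is where essentially all of the content of this lemma lies.

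What the rank induction in Lemma~\ref{lem:indrankb} actually requires is $(q_1s_1,1_N,s_2q_2)\not\leq(q_1,d,q_2)$: the relation $\leq$ on $Q\times J\times Q$ is only a preorder (Fact~\ref{fct:transclos}), so ``$\leq$ and tuples distinct'' does not make the rank strictly drop, and the paper's own proof of Lemma~\ref{lem:indbase} accordingly opens the strictness step with ``assume $(q_1s_1,1_N,s_2q_2)\leq(q_1,d,q_2)$'' and derives a contradiction. Your proposal instead proves $d\neq 1_N$, which is a weaker and different statement, and it obtains even that by dismissing $d=1_N$ via Remark~\ref{rem:themax} — a dismissal that is not available here, since Theorem~\ref{thm:cargen} is stated for an \emph{arbitrary} prevariety $\Cs$, so $\{\veps\}\in\Cs$ cannot be assumed, and inside the maximal case one certainly does meet $d=1_N$ as a value of $d\in J$. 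You explicitly concede at the end that one needs $(q_1s_1,1_N,s_2q_2)\not\leq(q_1,d,q_2)$ and defer it ``to the ambient set-up'', but that deferral is exactly the missing argument: establishing it is the purpose of this lemma, and it is not bookkeeping. The paper assumes $(q_1s_1,1_N,s_2q_2)\leq(q_1,d,q_2)$, extracts witnesses $p_1,p_2,p'_1,p'_2$ and $f\in E(Q)$ from the definition of $\leq$, and then — using Lemma~\ref{lem:rcapolcp}, hence Equation~\eqref{eq:cacp} of the theorem being proved — constructs elements $t'_1,t'_2$ showing that $q=\tau(w)$ stabilizes $(q_1,d,q_2)$, contradicting $\tau(w)\in\barS$. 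That application of~\eqref{eq:cacp} is the substantive step that your proposal omits.
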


By definition of $\Vb_i$ in Lemma~\ref{lem:indrankb}, it follows from Lemma~\ref{lem:indbase} that $\Vb_i$ is a $(q_1s_1,s_2q_2)$-safe \bpol{\Cs}-cover of $\eta\inv(1_N)$. Thus, as $V_i \in \Vb_i$ and $v_i,v'_i \in V_i$, we get $q_1s_1\tau(v_i)s_2q_2= q_1s_1\tau(v'_i)s_2q_2$, \emph{i.e.} $q_1\tau(x_iv_iy_i)q_2 = q_1\tau(x_iv'_iy_i)q_2$ as desired.

\begin{proof}[Proof of Lemma~\ref{lem:indbase}]
  We first show the (non-strict) inequality $(q_1,d,q_2) \leq (q_1s_1,1_N,s_2q_2)$. We may use $s'_1=s'_2=e=1_Q$ in the definition of $\leq$. Thus, it suffices to show that $\npr{s_1s_2} = d$. Recall that the sets $\Vb_j$ are \emph{tight} covers of $\eta\inv(1_N)$. Thus, $\eta(v_j) = \eta(v'_j) = 1_N$ for all $j$. It follows that $\eta(x_iy_i) = \eta(a_1 \cdots a_n) = \eta(w)$. Thus, since $w \in F \subseteq \tau\inv(\barS)$, the definition of \barS yields $\eta(x_iy_i) = d$. Altogether, we get $\npr{s_1s_2} = d$.

  It remains to prove that the inequality is strict. By contradiction, assume that this is not the case, \emph{i.e.}, that $(q_1s_1,1_N,s_2r_2) \leq (q_1,d,q_2)$. Let $q = \tau(w)$. We prove that $q$ stabilizes $(q_1,d,q_2)$. This is a contradiction since $w \in F$ which means that $q = \tau(w) \in \barS$. By hypothesis, we get $p_1,p_2,p'_1,p'_2$ and $f \in E(Q)$ such that,
  \begin{enumerate}
    \item $\npr{p_1fp_2} = 1_N$ and $\npr{f} = \npr{p'_1}d\npr{p'_2}$.
    \item $q = p_1fp_2$, $q_1 = q_1\alpha(x_i)p_1ep'_1$ and $q_2 = p'_2ep_2\alpha(y_i)q_2$.
  \end{enumerate}
  Let $t_1 = \tau(a_1 \cdots a_i)$, $t'_1 = p_1ep'_1(s_1p_1ep'_1)^\omega$, $t_2 = \tau(a_{i+1} \cdots a_n)$ and $t'_2 = (p'_2ep_2s_2)^\omega p'_2ep_2$. We prove that we have $q_1 = q_1t_1t'_1$ and $q_2 = t'_2t_2q_2$. Let us first explain why this implies that $q$ stabilizes $(q_1,d,q_2)$.

  We have $\npr{t_1t_2} =\eta(w) = d$ since $w \in F$. Furthermore, using the equalities $\npr{p_1fp_2} = 1_N$, $\npr{f} = \npr{p'_1}d\npr{p'_2}$ and $\npr{s_1s_2} = d$, we deduce,
  \[
    \begin{array}{lll}
      1_N & = & \npr{(p_1fp'_1s_1)^{\omega+1}}1_N\npr{(s_2p'_2fp_2)^{\omega+1}}, \\
          & = & \npr{p_1fp'_1(s_1p_1fp'_1)^\omega } \npr{s_1s_2} \npr{(p'_2fp_2s_2)^{\omega}p'_2fp_2}, \\
          & = &  \npr{t'_1}d\npr{t'_2}.
    \end{array}
  \]
  Furthermore, $q =  \tau(w) = \tau(x_iy_i) =s_1s_2$. Thus, if $q_1 = q_1t_1t'_1$ and $q_2 = t'_2t_2q_2$., we get $(q_1,d,q_2) \smarrow{q} (q_1,d,q_2)$ as desired.

  It remains to show that $q_1 = q_1t_1t'_1$ and $q_2 = t'_2t_2q_2$. By symmetry, we only detail the former. We already established that $\eta(v_j) = 1_N$ for all $j$. Hence, $\npr{\tau(v_j)} = 1_N = \npr{1_Q}$ and it follows from Lemma~\ref{lem:rpairs} that $(1_Q,\tau(v_j))$ is a good pair. Since $t_1 = \tau(a_1 \cdots a_i)$ and $s_1 = \tau(v_0a_1 \cdots v_{i-1}a_i)$, closure under multiplication yields that $(t_1p_1ep'_1,s_1p_1ep'_1)$ is a good pair. We now apply Lemma~\ref{lem:rcapolcp} to get,
  \[
    (s_1p_1ep'_1)^{\omega+1} = (s_1p_1ep'_1)^{\omega}t_1p_1ep'_1 (s_1p_1ep'_1)^{\omega}.
  \]
  We multiply by $q_1$ on the left. As $q_1 = q_1s_1p_1ep'_1$, this yields $q_1 = q_1t_1p_1ep'_1 (s_1p_1ep'_1)^{\omega}$ which exactly says that $q_1 = q_1t_1t'_1$, completing the proof.
\end{proof}

\subsection{\texorpdfstring{Second case: $J$ \emph{non}-maximal. Building \protect{$\Kb_{\protect\barS}$}}{Second case: J not maximal, building KS}}
\label{sec:first-step}

To build the $(q_1,q_2)$-safe \bpol{\Cs}-cover of $\tau\inv(\barS)$, the argument goes roughly as follows. In a first step, we build a \pol{\Cs}-cover \Hb of $\tau\inv(\barS)$ with Proposition~\ref{prop:core}. In a second step, we look independently at each language $H \in \Hb$ and build a $(q_1,q_2)$-safe \bpol{\Cs}-cover of~$H$. The union of all these covers is the desired \bpol{\Cs}-cover $\Kb_{\barS}$ of~$\tau\inv(\barS)$. The hard part is building the \bpol{\Cs}-cover of $H$. For this, the intuition is the following: by definition of \pol\Cs, the language $H\in\pol\Cs$ is a concatenation of basic languages in $\Cs$. We use induction to construct a \bpol\Cs-cover of each basic language. Then, we use the weak concatenation principles (Lemmas~\ref{lem:bpconcat} and ~\ref{lem:bpconcatm}) to combine these covers into the desired \bpol\Cs cover of~$H$.

We start with the first step, building a \pol{\Cs}-cover \Hb of $\tau\inv(\barS)$. Unfortunately, we cannot apply Proposition~\ref{prop:core} directly because~$J$ needs not be $\eta$-alphabetic. To fix this, we define a new alphabet \frB and an auxiliary morphism $\eta_\frB:\frB^*\to N$ such that $J$ is $\eta_\frB$-alphabetic. Let $T$ be the set of all triples $(u_1,a,u_2)\in A^*\times A\times A^*$ such that $J \Jords \eta(u_1)$, $J \Jords \eta(u_2)$ and $\eta(u_1au_2) \in J$. We now let $\frB = \{\tau(u_1au_2) \in Q\mid (u_1,a,u_2) \in T\}$ and define $\tau_\frB: \frB^* \to Q$ by $\tau_\frB(b) = b \in\frB \subseteq Q$ for each letter $b \in \frB$. We also define $\eta_\frB = \npr{\cdot}\circ\tau_\frB: \frB^*\to N$. By definition, $\eta_\frB(b)\in J$ for all $b \in \frB$: the \Jrel-class $J \subseteq N$ is $\eta_\frB$-alphabetic.

We define a morphism $\gamma: \frB^* \to Q \times 2^Q$ (which is a monoid for the natural componentwise multiplication on $Q \times 2^Q$). For every $b \in \frB$, we let $\gamma(b) = (b,R) \in Q \times 2^Q$ where $R$ is the set of all $r \in Q$ such that $(b,r)$ is a good pair.  Since good pairs are preserved under multiplication, we have the following fact.

\begin{restatable}{fact}{bprops} \label{fct:bprops}
  Let $z \in \frB^+$, and $(q,R) = \gamma(z)$. Then, $q = \tau_\frB(z)$ and $(q,r)$ is a good pair for every $r \in R$.
\end{restatable}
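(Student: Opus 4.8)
To prove Fact~\ref{fct:bprops}, the plan is to unwind the definition of the morphism $\gamma$ and then reduce everything to the multiplicative stability of good pairs. First I would note that composing $\gamma : \frB^* \to Q \times 2^Q$ with the first projection $Q \times 2^Q \to Q$ yields a morphism sending each letter $b \in \frB$ to $b \in Q$, exactly as $\tau_\frB$ does; since two morphisms agreeing on all letters coincide, this composite \emph{is} $\tau_\frB$. Hence $q = \tau_\frB(z)$, which settles the first assertion.

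For the second assertion, I would write $z = b_1 \cdots b_n$ with $n \geq 1$ and $b_i \in \frB$, and set $\gamma(b_i) = (b_i, R_i)$, where by definition $R_i = \{r \in Q \mid (b_i,r) \text{ is a good pair}\}$. Because $\gamma$ is a morphism and $Q \times 2^Q$ carries the componentwise multiplication, with the setwise product on $2^Q$, we obtain $\gamma(z) = (b_1 \cdots b_n,\ R_1 \cdots R_n)$, so that $R = R_1 \cdots R_n = \{\, r_1 \cdots r_n \mid r_i \in R_i \text{ for } 1 \leq i \leq n \,\}$. Consequently, every $r \in R$ admits a factorisation $r = r_1 \cdots r_n$ in which $(b_i, r_i)$ is a good pair for each $i$.

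To finish, I would appeal to the observation recorded just after Lemma~\ref{lem:rcapolcp} (itself a consequence of Corollary~\ref{cor:pairmult}) that good pairs are closed under multiplication. A straightforward induction on $n$ then shows that the product of the good pairs $(b_1, r_1), \dots, (b_n, r_n)$ is again a good pair, i.e.\ $(b_1 \cdots b_n,\ r_1 \cdots r_n) = (q, r)$ is a good pair, as required. I do not expect any genuine difficulty here; the only point deserving a line of care is making explicit that the second coordinate of $\gamma(z)$ is really the setwise product of the $R_i$, which is forced by the definition of the monoid $Q \times 2^Q$ together with the hypothesis that $z$ is nonempty.
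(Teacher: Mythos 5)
Your proof is correct and follows exactly the reasoning the paper intends: the paper omits a formal proof of this fact but flags the key ingredient in the sentence immediately preceding it (``Since good pairs are preserved under multiplication, we have the following fact''), which is precisely the closure under multiplication you invoke via Corollary~\ref{cor:pairmult}. Your first part (agreement of the first projection of~$\gamma$ with~$\tau_\frB$ on letters, hence everywhere) and second part (factoring $r = r_1 \cdots r_n$ with each $(b_i,r_i)$ good, then multiplying) are the intended argument.

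One small remark: the nonemptiness of $z$ is not what forces the second coordinate of $\gamma(z)$ to be the setwise product $R_1 \cdots R_n$ --- that is simply the componentwise multiplication on $Q \times 2^Q$ and holds for any length, including $n=0$ where the empty product gives $\{1_Q\}$. The hypothesis $z \in \frB^+$ just ensures there is at least one letter to factor through; in fact the statement would hold verbatim for $z = \veps$ as well, since $(1_Q,1_Q)$ is a good pair (by surjectivity of $\alpha$ and reflexivity of the $\pol{\Cs}$-pair relation). So the remark about nonemptiness being ``forced'' is slightly misplaced, but it does not affect the correctness of the proof.
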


We are ready to use Proposition~\ref{prop:core}. By definition, $\npr{q} = d$ for each $q \in \barS$ and $d \in J$. Hence, $\tau_\frB\inv(\barS) \subseteq \eta_\frB\inv(d) \subseteq \eta_\frB\inv(J)$ and since $J$ is $\eta_\frB$-alphabetic, Proposition~\ref{prop:core} yields a finite set $P$ of $\gamma$-guarded $\eta_\frB$-patterns such that $\pi(\widebar{p}) \in \tau_\frB\inv(\barS)$ for all $\widebar{p} \in P$ and $\{\uclos_{\eta_\frB} \widebar{p} \mid \widebar{p} \in P\}$ is a cover of $\tau_\frB\inv(\barS)$.

\smallskip
We now use $P$ to build the \pol{\Cs}-cover \Hb of $\tau\inv(\barS)$. First, we associate a language $U_z \in \pol{\Cs}$ to each word $z \in \frB^+$. This is where we use induction on the index of $J$ in Proposition~\ref{prop:genmain}. In the statement, we say that a cover $\Lb$ of some language $U$ is \emph{tight} to indicate that $L \subseteq U$ for all $L \in \Lb$.

\begin{restatable}{lemma}{imageu} \label{lem:imageu}
  Let $z \in \frB^+$ and $(q,R)= \gamma(z)$. Then, there exists $U_z \in\pol{\Cs}$ such that $\tau\inv(q) \subseteq U_z \subseteq \eta\inv(\npr{q}) \subseteq A^*$, $\tau(U_z) \subseteq R$ and there exists a tight safe \bpol{\Cs}-cover of\/ $U_z$.
\end{restatable}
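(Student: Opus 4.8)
I would prove Lemma~\ref{lem:imageu} by induction on $|z|$, carrying along the induction hypothesis of Proposition~\ref{prop:genmain} (available for every \Jrel-class of strictly smaller index than $J$) and using Lemma~\ref{lem:bupol} together with the weak concatenation principles (Lemmas~\ref{lem:bpconcat} and~\ref{lem:bpconcatm}).

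\emph{Base case $|z|=1$.} Here $z=b$ for a single letter $b\in\frB$; fix a witnessing triple $(u_1,a,u_2)\in T$ with $b=\tau(u_1au_2)$, so that $\eta(u_1),\eta(u_2)$ lie in \Jrel-classes strictly \Jrel-above $J$, hence of strictly smaller index. Applying Lemma~\ref{lem:bupol} to $\tau(u_1)$ and to $\tau(u_2)$ yields $K_1,K_2\in\pol{\Cs}$ with $\tau\inv(\tau(u_j))\subseteq K_j\subseteq\eta\inv(\eta(u_j))$ and such that $(\tau(u_j),r)$ is a good pair for every $r\in\tau(K_j)$. By the induction hypothesis of Proposition~\ref{prop:genmain} applied to the \Jrel-classes of $\eta(u_1),\eta(u_2)$, there are safe \bpol{\Cs}-covers of the corresponding $\eta\inv$ sets; intersecting their members with $K_j\in\pol{\Cs}\subseteq\bpol{\Cs}$ gives a tight safe \bpol{\Cs}-cover of $K_j$. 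Set $U_b=K_1aK_2\in\pol{\Cs}$. Then $U_b\subseteq\eta\inv(\eta(u_1))\,a\,\eta\inv(\eta(u_2))\subseteq\eta\inv(\npr b)$; moreover, since good pairs are closed under product and $(\tau(a),\tau(a))$ is one (the \pol{\Cs}-pair relation being reflexive because $\alpha$ is surjective), every $r\in\tau(U_b)$ satisfies that $(b,r)$ is a good pair, whence $\tau(U_b)\subseteq R$. Finally, Lemma~\ref{lem:bpconcatm} turns the tight safe covers of $K_1$ and $K_2$ into a \bpol{\Cs}-cover of $U_b$ in which each member sits inside some $K'_1aK'_2$ with $K'_j$ a member of the cover of $K_j$; such a language is tight and, both $K'_j$ being $\tau$-monochromatic, is safe as well.

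\emph{Inductive step.} Write $z=z'b$ with $z'\in\frB^+$ shorter and $b\in\frB$, put $(q',R')=\gamma(z')$, $(q_b,R_b)=\gamma(b)$, so $q=q'q_b$ and $R=R'R_b$. Induction and the base case give $U_{z'},U_b\in\pol{\Cs}$ with the stated properties and tight safe \bpol{\Cs}-covers; set $U_z=U_{z'}U_b\in\pol{\Cs}$. Then $U_z\subseteq\eta\inv(\npr{q'})\eta\inv(\npr{q_b})\subseteq\eta\inv(\npr q)$ and $\tau(U_z)\subseteq R'R_b=R$, and Lemma~\ref{lem:bpconcat} combines the covers of $U_{z'}$ and $U_b$ into a tight safe \bpol{\Cs}-cover of $U_z$ exactly as above.

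\emph{The main obstacle} is the inclusion $\tau\inv(q)\subseteq U_z$: the language built above is a concatenation of pieces aligned with the chosen decomposition of $z$ into letters of $\frB$, whereas an arbitrary word $w$ with $\tau(w)=q$ need not respect that decomposition. The remedy is to apply a greedy factorization to any such $w$ — iteratively peel off the longest prefix whose $\eta$-image is strictly \Jrel-above $J$, together with the following letter; since $\eta$ recognizes $\{\veps\}$ and $J$ is non-maximal, this realizes $w$ along some word $z''\in\frB^+$ with $\tau_\frB(z'')=q$, placing $w$ in the piece-wise concatenation attached to $z''$. One therefore takes $U_z$ to be the union of the finitely many such concatenations (finiteness obtained by restricting to $\preceq$-minimal $\frB$-realizations, in the spirit of Proposition~\ref{prop:core}), and checks that this enlargement preserves $U_z\in\pol{\Cs}$ (Theorem~\ref{thm:polc}), the containment $U_z\subseteq\eta\inv(\npr q)$, the image bound $\tau(U_z)\subseteq R$, and the existence of a tight safe \bpol{\Cs}-cover. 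Reconciling ``$U_z$ large enough to contain all of $\tau\inv(q)$'' with ``$U_z$ structured enough to keep $\tau(U_z)\subseteq R$ and to admit a tight safe \bpol{\Cs}-cover'' is the delicate part; the remainder is bookkeeping with the two inductions and the concatenation principles already established.
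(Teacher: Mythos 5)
Your base case and inductive step match the paper's argument: Lemma~\ref{lem:bupol} to produce the two \pol{\Cs}-languages $H_1,H_2$, the outer induction via Proposition~\ref{prop:genmain} (available since $J\Jords J_i$) to cover $\eta\inv(J_i)$, intersection with $H_i$ to obtain a tight safe \bpol{\Cs}-cover, the definitions $U_z=H_1aH_2$ (resp.\ $U_z=U_1\cdots U_n$), and Lemmas~\ref{lem:bpconcatm} and~\ref{lem:bpconcat} to assemble the cover of $U_z$. That part is correct and is exactly what the paper does.

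Your ``main obstacle'' paragraph, however, is both a departure from the paper and the place where your argument is incomplete. The paper performs no greedy re-factorization of an arbitrary $w\in\tau\inv(q)$ and takes no union over $\frB$-realizations; it simply reads off $\tau\inv(q)\subseteq U_z\subseteq\eta\inv(\npr{q})$ from $\tau\inv(q_i)\subseteq H_i\subseteq\eta\inv(\npr{q_i})$ and $q=q_1\tau(a)q_2$ and moves on. Your proposed remedy has a concrete defect that you flag but do not resolve: the set $R$ is the second coordinate of $\gamma(z)$ and genuinely depends on the word $z$, not just on $q=\tau_\frB(z)$ --- two words $z,z''\in\frB^+$ with $\tau_\frB(z)=\tau_\frB(z'')$ may have $\gamma(z)\neq\gamma(z'')$ --- so a $U_z$ that is a union over $\frB$-realizations of $q$ no longer satisfies $\tau(U_z)\subseteq R$. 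Nor does a greedy factorization of $w$ automatically ``place $w$ in the piece-wise concatenation attached to $z''$'': each $U_b$ is built from one \emph{fixed} witnessing triple $(u_1,a,u_2)\in T$ for $b\in\frB$, and a factor $w_ja_j$ of $w$ with $\tau(w_ja_j)=b$ need not contain that distinguished middle letter $a$, hence need not lie in $U_b=H_1aH_2$. In short, you replace a one-line assertion in the paper with extra machinery that does not close the gap and conflicts with the clause $\tau(U_z)\subseteq R$.
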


\begin{proof}
  We first consider the special case where $z \in \frB$: we have $z = \tau(u_1au_2) \in Q$ where $(u_1,a,u_2)\in T$. For $i \in \{1,2\}$, we let $q_i = \tau(u_i)$. Lemma~\ref{lem:bupol} yields languages $H_i \in \pol{\Cs}$ such that  $\tau\inv(q_i) \subseteq H_i \subseteq \eta\inv(\npr{q_i})$ and $(q_i,r)$ is a good pair for all $r \in \tau(H_i)$. Let $J_i\subseteq N$ be the \Jrel class of \npr{q_i}. Since $\npr{q_i} = \eta(u_i)$, we have $J \Jords J_i$ by definition of $T$. Hence, induction on the index of $J$ in Proposition~\ref{prop:genmain} yields a safe \bpol{\Cs}-cover $\Kb'_i$ of $\eta\inv(J_i)$. Let $\Kb_i = \{K \cap H_i \mid K \in \Kb'_i\}$. Since $H_i \subseteq \eta\inv(\npr{q_i})$ and $\npr{q_i} \in J_i$, it follows that $\Kb_i$ is a tight safe \bpol{\Cs}-cover of $H_i$. Let $U_z = H_1aH_2 \in \pol{\Cs}$. Since $q = q_1\tau(a)q_2$, we get $\tau\inv(q) \subseteq U_z \subseteq \eta\inv(\npr{q})$. Also, since $\Kb_i$ is a \bpol{\Cs}-cover of $H_i \in \pol{\Cs}$ for $i \in \{1,2\}$, Lemma~\ref{lem:bpconcatm} yields a \bpol{\Cs}-cover $\Kb$ of $U_z$ such that each $K \in \Kb$ satisfies $K \subseteq K_1aK_2$ for some $K_1 \in \Kb_1$ and $K_2 \in \Kb_2$. Since $\Kb_1$ and $\Kb_2$ are tight and safe, one can verify that \Kb is itself a tight safe \bpol{\Cs}-cover of $U_z = H_1aH_2$.

  We turn to the general case. Let $z_1,\dots,z_n \in \frB$ such that $z = z_1 \cdots z_n$ and $(q_i,R_i) = \gamma(z_i)$ for all $i$. The previous case yields $U_i \in \pol{\Cs}$ such that $\tau\inv(q_i) \subseteq U_i \subseteq \eta\inv(\npr{q_i})$ and $\tau(U_i) \subseteq R_i$, and a tight safe \bpol{\Cs}-cover $\Kb_i$ of $U_i$ for all $i$. We define $U_z = U_1 \cdots U_n$. It belongs to \pol{\Cs} by closure under concatenation.  Clearly, $\tau\inv(q) \subseteq U_z \subseteq \eta\inv(\npr{q})$ and $\alpha(U_z) \subseteq R$ since $q = q_1 \cdots q_n$ and $R = R_1 \cdots R_n$. Since $U_1,\dots,U_n \in \pol{\Cs}$, Lemma~\ref{lem:bpconcat} yields a \bpol{\Cs}-cover \Kb of $U_1 \cdots U_n$ such that for all $K \in \Kb$, there is $K_i \in \Kb_i$ for each $i \leq n$ such that $K \subseteq K_1 \cdots K_n$. As the covers $\Kb_i$ are tight and safe, the cover \Kb is also tight and~safe.
\end{proof}

Now, let $\widebar{p} = (z_1,g_1,\dots,z_n,g_n,z_{n+1})$ be an $\eta_\frB$-pattern such that $\widebar{p} \neq (\veps)$. In this case $z_i \in \frB^+$ for all $i \leq n$ by definition and  Lemma~\ref{lem:imageu} yields a language $U_{z_i} \in \pol{\Cs}$. We let,
\[
  H_{\widebar{p}} = U_{z_1}\ \eta\inv(g_1)\ \cdots\ U_{z_n}\ \eta\inv(g_n)\ U_{z_{n+1}} \subseteq A^*.
\]
Note that $H_{\widebar{p}}$ is defined for all $\widebar{p} \in P$.  Indeed, as $\tau_\frB(\pi(\widebar{p})) \in \barS$, we have $\eta_\frB(\pi(\widebar{p})) = d$. Thus $\widebar{p} \neq (\veps)$ since $d \Jords 1_N$. Finally, we let $\Hb = \{H_{\widebar{p}}\mid\widebar{p} \in P\}$. This is our \pol{\Cs}-cover of $\tau\inv(\barS)$.

\begin{restatable}{lemma}{iscover} \label{lem:iscover}
  The set \Hb is a cover of $\tau\inv(\barS)$.
\end{restatable}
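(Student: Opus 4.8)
The plan is, given $w \in \tau\inv(\barS)$, to produce some $\widebar{p} \in P$ with $w \in H_{\widebar{p}}$. The crux is a \emph{decomposition claim}: if $w \in A^*$ satisfies $\eta(w) \in J$, then $w = v_1 \cdots v_k$ for some $k \geq 1$ and $v_1,\dots,v_k \in A^+$ with $\tau(v_j) \in \frB$ for every $j$. I would prove this by induction on $|w|$. Since $J$ is non-maximal we have $1_N \notin J$, so $\eta(\veps) = 1_N$ lies strictly $\Jrel$-above $J$ and, in particular, $w \neq \veps$. Let $u_2$ be the longest suffix of $w$ with $\eta(u_2)$ strictly $\Jrel$-above $J$ (it exists since $\veps$ qualifies, and $u_2 \neq w$ because $\eta(w) \in J$); write $w = w' b u_2$ with $b \in A$. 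By maximality of $u_2$, $\eta(bu_2)$ is \emph{not} strictly $\Jrel$-above $J$; since $\eta(w) = \eta(w')\eta(bu_2) \in J$ gives $\eta(w) \Jord \eta(bu_2)$, we deduce $\eta(bu_2) \in J$. There are two cases. If $\eta(w')$ is strictly $\Jrel$-above $J$, then $(w',b,u_2) \in T$, so $\tau(w) = \tau(w'bu_2) \in \frB$ and $w$ is a single-letter $\frB$-decomposition. Otherwise $\eta(w') \in J$, and since $|w'| < |w|$ the induction hypothesis gives $w' = v_1 \cdots v_{k-1}$ with each $\tau(v_j) \in \frB$; taking $v_k = bu_2$ works since $(\veps,b,u_2) \in T$, so $\tau(v_k) = \tau(bu_2) \in \frB$. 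This proves the claim.

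Now fix $w \in \tau\inv(\barS)$, so $\eta(w) = d \in J$. The claim gives $w = v_1 \cdots v_k$ with $b_j := \tau(v_j) \in \frB$; let $z = b_1 \cdots b_k \in \frB^+$. Then $\tau_\frB(z) = b_1 \cdots b_k = \tau(v_1 \cdots v_k) = \tau(w) \in \barS$, so $z \in \tau_\frB\inv(\barS)$. Since $\{\uclos_{\eta_\frB} \widebar{p} \mid \widebar{p} \in P\}$ is a cover of $\tau_\frB\inv(\barS)$, there is $\widebar{p} = (z_1,g_1,\dots,z_n,g_n,z_{n+1}) \in P$ with $z \in \uclos_{\eta_\frB} \widebar{p}$. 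If $n \geq 1$, this means $z = z_1\zeta_1z_2 \cdots z_n\zeta_nz_{n+1}$ with $\zeta_i \in \eta_\frB\inv(g_i)$ for all $i$. This is a factorization of the $\frB$-word $z = b_1\cdots b_k$ into consecutive blocks of letters, so it lifts to a factorization $w = y_1 t_1 y_2 \cdots y_n t_n y_{n+1}$ of $w$, where $y_i$ (resp.\ $t_i$) is the product of the $v_j$'s indexed by the letters forming $z_i$ (resp.\ $\zeta_i$).

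It then remains to verify $y_i \in U_{z_i}$ and $t_i \in \eta\inv(g_i)$. For the former, $\tau(y_i)$ equals the product in $Q$ of the letters of $z_i$, which is $\tau_\frB(z_i)$; as $z_i \in \frB^+$, Fact~\ref{fct:bprops} gives $\tau_\frB(z_i) = q$ where $(q,R) = \gamma(z_i)$, and Lemma~\ref{lem:imageu} gives $\tau\inv(q) \subseteq U_{z_i}$, hence $y_i \in U_{z_i}$. For the latter, $\eta(t_i) = \npr{\tau(t_i)} = \npr{\tau_\frB(\zeta_i)} = \eta_\frB(\zeta_i) = g_i$, so $t_i \in \eta\inv(g_i)$. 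Therefore $w \in U_{z_1}\,\eta\inv(g_1) \cdots U_{z_n}\,\eta\inv(g_n)\,U_{z_{n+1}} = H_{\widebar{p}}$; in the remaining case $n = 0$ we have $z = z_1$ and $w = y_1 \in U_{z_1} = H_{\widebar{p}}$. Since $\widebar{p} \in P$, we conclude $w \in \bigcup_{\widebar{p} \in P} H_{\widebar{p}}$, and as $w$ was arbitrary, \Hb is a cover of $\tau\inv(\barS)$.

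I expect the decomposition claim to be the main obstacle. The delicate point is that one must peel $\frB$-letters off the \emph{right} end of $w$, using the maximality of the suffix lying strictly $\Jrel$-above $J$ to guarantee simultaneously that the peeled block $bu_2$ is a legitimate $\frB$-letter and that the remaining prefix $w'$ either lands back in $J$ (so the induction applies) or lies entirely above $J$ (so $w$ is already a single $\frB$-letter). Everything after the claim is bookkeeping with Fact~\ref{fct:bprops} and Lemma~\ref{lem:imageu}, together with the observation that a factorization of the $\frB$-word $z$ transfers to the corresponding factorization of $w$ over $A$.
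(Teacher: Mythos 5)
Your proposal is correct and takes essentially the same route as the paper: decompose $w$ into consecutive blocks whose $\tau$-images lie in $\frB$, pass to the induced $\frB$-word $z$, invoke the pattern cover of $\tau_\frB\inv(\barS)$ to find a suitable $\widebar{p}\in P$, and lift the factorization of $z$ back to a factorization of $w$ witnessing $w\in H_{\widebar{p}}$. The one substantive difference is that you supply a full inductive proof (peeling $\frB$-blocks off the right end by taking the longest suffix $u_2$ with $\eta(u_2)$ strictly $\Jrel$-above $J$) of the decomposition claim, whereas the paper simply asserts the existence of the decomposition $w=w_1a_1\cdots w_na_nw_{n+1}$ with the required $\Jrel$-conditions; your argument is a valid justification of that step.
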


\begin{proof}
  Let $w \in \tau\inv(\barS)$. We exhibit $H \in \Hb$ containing $w$. By definition of $\barS$, we have $\eta(w) = d \in J$. Since $J \Jords 1_N$, there exists a decomposition $w=w_1a_1 \cdots w_na_n w_{n+1}$ with $n \geq 1$, such that $J \Jords \eta(w_i)$ for all $i\leq n+1$ and $J \Jrel \eta(w_ia_i)$ for all $i\leq n$. We let $b_i = (\eta(w_ia_i),\tau(w_ia_i)) \in \frB$ for all $i < n$.  Let now $b_{n} = (\eta(w_na_nw_{n+1}),\tau(w_na_nw_{n+1})) \in \frB$. We consider the word $z = b_1 \cdots b_n \in \frB^+$. By construction, $\tau_\frB(z) = \tau(w) \in \barS$. Since the set $\{\uclos_{\eta_\frB} \widebar{p} \mid \widebar{p} \in P\}$ is a cover of $\tau_\frB\inv(\barS)$, we get $\widebar{p} \in P$ such that $z \in \uclos_{\eta_\frB} \widebar{p}$. We consider $H_{\widebar{p}} \in \Hb$ and prove that $w \in H_{\widebar{p}}$. Let $\widebar{p} = (z_1,g_1,\dots,z_\ell,g_{\ell},z_{\ell+1}) \in (\frB^+ \times E(N)) \times \frB^+$. Since $z \in \uclos_{\eta_\frB} \widebar{p}$, we know that $z= z_1y_1 \cdots z_ny_\ell z_{\ell+1}$ with $y_i \in \eta_\frB\inv(g_i)$ for all $i$. By definition of $z$ from $w$, we have a decomposition $w = u_1v_1 \cdots u_\ell v_{\ell}u_{\ell+1}$ with $\tau(u_i) = \tau_\frB(z_i)$ and $\tau(v_i)= \tau_\frB(y_i)$ for all $i$. By Lemma~\ref{lem:imageu}, we obtain $\tau\inv(\tau_\frB(z_i)) \subseteq U_{z_i}$ for all $i$, which yields $u_i \in U_{z_i}$. From the equality $\tau(v_i) = \tau_\frB(y_i)$, we deduce that $\eta(v_i) = \eta_\frB(y_i) = g_i$ for every $i$. Therefore, $v_i \in \eta\inv(g_i)$. Altogether, we obtain finally that $w \in U_{z_1}\eta\inv(g_1) \cdots U_{z_\ell}\eta\inv(g_\ell)U_{z_{\ell+1}}$. This exactly says that $w \in H_{\widebar{p}}$, completing the proof.
\end{proof}

This concludes the first step in the outline of the argument sketched at the beginning of Section~\ref{sec:first-step}. Let us proceed to the second: for each $H\in\Hb$, we build a $(q_1,q_2)$-safe \bpol{\Cs}-cover $\Kb_H$ of $H$. Lemma~\ref{lem:iscover} will then imply that $\Kb_{\barS} = \bigcup_{H \in \Hb} \Kb_H$ is the desired $(q_1,q_2)$-safe \bpol{\Cs}-cover of $\tau\inv(\barS)$. We fix $H \in \Hb$ for the proof. This yields $\widebar{p} = (z_1,g_1,\dots,z_n,g_n,z_{n+1}) \in P$ such that $\pi(\widebar{p})\in\tau_\frB\inv(\barS)$ and $H = U_{z_1} \eta\inv(g_1) \cdots U_{z_n} \eta\inv(g_n) U_{z_{n+1}}$. Therefore, $H$ is a concatenation of languages in \pol{\Cs} (recall that $\eta$ is a \Cs-morphism). We build a \bpol{\Cs}-cover of each language in the concatenation and then use Lemma~\ref{lem:bpconcat} to build a cover of the whole language $H$. First, Lemma~\ref{lem:imageu} provides a tight safe \bpol{\Cs}-cover $\Lb_i$ of $U_{z_i}$ for all $i$.

We cover the languages $\eta\inv(g_i)$ by induction on the rank of $(q_1,d,q_2)$. As $\pi(\widebar{p})\in\tau_\frB\inv(\barS)$, we have $\eta_\frB(\pi(\widebar{p})) = d \in J$. By definition of $\eta_\frB$-patterns, it follows that $g_1,\dots,g_n \in J$. Thus, we~may consider the triples $(r_1,g_i,r_2) \in Q \times J \times Q$.

\begin{restatable}{lemma}{indrank} \label{lem:indrank}
  For $1 \!\leq\! i \!\leq\! n$, there is a tight\/ \bpol{\Cs}-cover~$\Vb_i$ of~$\eta\inv(g_i)$ that is $(r_1,r_2)$-safe for all $r_1,r_2 \in Q$ satisfying $(q_1,d,q_2) < (r_1,g_i,r_2)$.
\end{restatable}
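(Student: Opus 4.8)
The plan is to run, separately for each fixed $i \in \{1,\dots,n\}$, the very same construction that proved Lemma~\ref{lem:indrankb}, with the idempotent $g_i \in J$ taking over the role that $1_N$ played there. Recall that $g_1,\dots,g_n \in J$ (by definition of the $\eta_\frB$-patterns, as observed just before the statement), so that $(r_1,g_i,r_2)$ is a legitimate element of $Q \times J \times Q$ for every $r_1,r_2 \in Q$; recall also that the $(q_1,q_2)$-safe cover of $\eta\inv(d)$ under construction is being built by induction on the rank of $(q_1,d,q_2)$. Since nothing in the proof of Lemma~\ref{lem:indrankb} uses the specific value $1_N$ beyond its membership in $J$, this is a routine transcription.

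Concretely, I would fix $i$ and let $X_i \subseteq Q^2$ be the finite set of pairs $(r_1,r_2)$ with $(q_1,d,q_2) < (r_1,g_i,r_2)$. The first key point is that for each such pair the rank of $(r_1,g_i,r_2)$ is strictly smaller than that of $(q_1,d,q_2)$: since $\leq$ is a preorder (Fact~\ref{fct:transclos}), transitivity puts the up-set of $(r_1,g_i,r_2)$ inside that of $(q_1,d,q_2)$, and strictness of the inequality makes the containment proper --- this is verified exactly as in Lemma~\ref{lem:indrankb}. Hence the induction on rank applies and yields, for every $(r_1,r_2) \in X_i$, an $(r_1,r_2)$-safe \bpol{\Cs}-cover $\Vb_{r_1,r_2}$ of $\eta\inv(g_i)$.

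Writing $X_i = \{(r_{1,1},r_{2,1}),\dots,(r_{1,\ell},r_{2,\ell})\}$, I would then put
\[
  \Vb_i \;=\; \bigl\{\, V_1 \cap \cdots \cap V_\ell \cap \eta\inv(g_i) \;\bigm|\; V_j \in \Vb_{r_{1,j},r_{2,j}} \text{ for all } j \leq \ell \,\bigr\}.
\]
That $\Vb_i$ is a tight \bpol{\Cs}-cover of $\eta\inv(g_i)$ is immediate: $\eta\inv(g_i)$ lies in \Cs (hence in \bpol{\Cs}) because $\eta$ is a \Cs-morphism, \bpol{\Cs} is closed under intersection by Corollary~\ref{cor:bpolc}, each $\Vb_{r_{1,j},r_{2,j}}$ already covers $\eta\inv(g_i)$, and every member of $\Vb_i$ is contained in $\eta\inv(g_i)$. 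Finally, for a pair $(r_1,r_2) = (r_{1,j},r_{2,j}) \in X_i$, every $V \in \Vb_i$ is contained in some $V_j \in \Vb_{r_{1,j},r_{2,j}}$, which is $(r_1,r_2)$-safe, whence $\Vb_i$ is $(r_1,r_2)$-safe as well --- the required property (and when $X_i = \emptyset$, the family $\{\eta\inv(g_i)\}$ trivially works).

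The single delicate ingredient is the strict decrease of rank invoked in the second step, since $\leq$ is only asserted to be a preorder; but this is precisely the step already discharged in the maximal case (Lemma~\ref{lem:indrankb}), and $g_i$ plays no role in it, so no new obstacle arises. Everything else is bookkeeping with the closure properties of \bpol{\Cs}.
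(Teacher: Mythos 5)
Your proposal is correct and matches the paper's proof essentially verbatim: the paper also defines $X$ as the set of pairs with $(q_1,d,q_2) < (r_1,g_i,r_2)$, obtains the covers $\Vb_{r_1,r_2}$ by induction on rank, and takes $\Vb_i$ to be the family of $\ell$-fold intersections restricted to $\eta\inv(g_i)$, exactly as in Lemma~\ref{lem:indrankb} with $1_N$ replaced by $g_i$. The added remarks (the $X_i=\emptyset$ base case, the explanation of the rank decrease) are harmless elaborations of steps the paper leaves implicit.
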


\begin{proof}
  Let $X \subseteq Q^2$ be the set of all pairs $(r_1,r_2) \in M^2$ such $(q_1,d,q_2) < (r_1,g_i,r_2)$. For each $(r_1,r_2)\in X$, the rank of $(r_1,g_i,r_2)$ is strictly smaller than the one of $(q_1,d,q_2)$. Hence, we may use  induction on the rank of $(q_1,d,q_2)$ to get a $(r_1,r_2)$-safe \bpol{\Cs}-cover $\Vb_{r_1,r_2}$ of $\eta\inv(g_i)$.  Let $\ell = |X|$ and $\{(r_{1,1},r_{2,1}),\dots,(r_{1,\ell},r_{2,\ell})\} = X$. We define,
  \[
    \Vb_i\! =\! \{V_1 \cap \cdots \cap V_\ell \cap \eta\inv(g_i) \mid \text{$V_j\! \in\! \Vb_{r_{1,j},r_{2,j}}$ for all $j\! \leq\! \ell$}\}.
  \]
  Since $\eta$ is a \Cs-morphism and \bpol{\Cs} is closed under intersection, it is immediate that $\Vb_i$ is tight \bpol{\Cs}-cover of $\eta\inv(g_i)$. One can verify that it satisfies the desired property.
\end{proof}

Lemma~\ref{lem:bpconcat} yields a \bpol{\Cs}-cover $\Kb_H$ of $H$ such that for every $K \in \Kb_H$, there exist languages $L_i \in \Lb_i$ and $V_i \in \Vb_{i}$ for all $i$ such that $K \subseteq L_1V_1 \cdots L_nV_nL_{n+1}$. We prove that $\Kb_H$ is $(q_1,q_2)$-safe.  For $K \in \Kb_H$ and $u,u' \in K$, we show that $q_1\tau(u)q_2 = q_1\tau(u')q_2$.

By definition, we have $K \subseteq L_1V_1 \cdots L_nV_nL_{n+1}$ where $L_i\! \in\! \Lb_i$ and $V_i \in \Vb_{i}$ for all $i$. Since $u,u' \in K$, we get $u^{}_i,u'_i\in L_i$ and $v^{}_i,v'_i \in V_i$ for all $i$ such that $u = u_1v_1 \cdots u_nv_nu_{n+1}$ and $u' = u'_1v'_1 \cdots u'_n v'_n u'_{n+1}$.  Let $u'' = u_1v'_1\cdots u_nv'_nu_{n+1}$. Since the covers $\Lb_i$ are safe, we have $\tau(u_i) = \tau(u'_i)$ for all $i$. Thus, $\tau(u') = \tau(u'')$. It remains to prove that $q_1\tau(u)q_2 = q_1\tau(u'')q_2$. For each $i$ such that $1 \leq i \leq n$, we define,
\[
  x_i = u_1v_1u_2 \cdots v_{i-1}u_i \text{ and } y_i = u^{}_{i+1}v'_{i+1} \cdots u_nv'_nu^{}_{n+1}.
\]
In particular, we let $x_1 = u_1$ and $y_n = u_{n+1}$. Observe that $u = x_nv_ny_n$, $u'' = x^{}_1v'_1y^{}_1$ and $x^{}_iv'_iy^{}_i = x^{}_{i-1}v^{}_{i-1}y^{}_{i-1}$ for all~$i$. We prove that $q^{}_1\tau(x^{}_iv^{}_iy^{}_i)q^{}_2 = q^{}_1\tau(x^{}_iv'_iy^{}_i)q^{}_2$ for all~$i$. By transitivity, this will yield $q^{}_1\tau(x^{}_nv^{}_ny^{}_n)q^{}_2 = q^{}_1\tau(x^{}_1v'_1y^{}_1)q^{}_2$, \emph{i.e.} $q_1\tau(u)q_2 = q_1\tau(u'')q_2$, as desired. We now fix $i$ and prove that $q_1\tau(x_iv_iy_i)q_2 = q_1\tau(x^{}_iv'_iy^{}_i)q_2$. We write $s_1 = \tau(x_i)$ and $s_2 = \tau(v_i)$. The proof is based on the following lemma, which is where we need Equation~\eqref{eq:capolcp} (as we apply Lemma~\ref{lem:rcapolcp}, whose proof relies on this equation).

\begin{restatable}{lemma}{getstrict} \label{lem:getstrict}
  We have $(q_1,d,q_2) < (q_1s_1,g_i,s_2q_2)$.
\end{restatable}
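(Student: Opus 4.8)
The plan is to split the strict inequality into its two ingredients — the relation $(q_1,d,q_2) \leq (q_1 s_1, g_i, s_2 q_2)$ and the fact that the two triples differ — and to treat them in turn. Both halves rely on the same preparatory data. Since $\widebar{p}$ is a $\gamma$-guarded $\eta_\frB$-pattern, projecting onto $Q$ the guard at position $i$ (an idempotent of $Q\times 2^Q$) yields an idempotent $f \in E(Q)$ with $\npr{f} = g_i$, satisfying $\tau_\frB(z_i)f = \tau_\frB(z_i)$ and $f\tau_\frB(z_{i+1}) = \tau_\frB(z_{i+1})$, and of the form $f = \tau(W)$ for some $W \in \eta\inv(g_i)$. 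From the pattern one reads $\npr{s_1}g_i = \npr{s_1}$, $g_i\npr{s_2} = \npr{s_2}$, $\npr{s_1}\npr{s_2} = d$ and $g_i \Jrel d$; combined with $d \Rord \npr{s_1}$ and $d \Lord \npr{s_2}$, Lemma~\ref{lem:jlr} then upgrades the one-sided relations to $\npr{s_1} \Lrel g_i$ and $\npr{s_2} \Rrel g_i$. Hence $\npr{s_1}$, $\npr{s_2}$, $g_i$ and $d$ all live in one \Jrel-class and act as mutual one-sided units, which is what legitimises the algebra that follows.

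For $\leq$, I would instantiate the definition of $\smarrow{q}$ taking $f$ for its idempotent, $s_1$ and $s_2$ (as in the statement) for its two outer blocks — so that the constraint $\npr{e} = \npr{s'_1}g_i\npr{s'_2}$ becomes automatic and the remaining numeric condition is $\npr{s_1}g_i\npr{s_2} = \npr{s_1}\npr{s_2} = d$ — and leaving its two ``return'' blocks $s'_1,s'_2$ to be determined. Choosing these so that $q_1 s_1 f s'_1 = q_1 s_1$ and $s'_2 f s_2 q_2 = s_2 q_2$ hold in $Q$ is the crux: here I would use the good pair $(f, f\tau(v_i)f)$ (Lemma~\ref{lem:rpairs}, since $\npr{f} = g_i = \npr{\tau(v_i)}$) together with the good pairs $(\tau_\frB(z_j),\tau(u_j))$ of Lemma~\ref{lem:imageu}, multiply good pairs, and then invoke the absorption identity of Lemma~\ref{lem:rcapolcp} to swallow the idempotent $f$ inside a suitable power on each side.

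For distinctness, I would argue by contradiction, assuming $(q_1 s_1, g_i, s_2 q_2) = (q_1, d, q_2)$, so in particular $g_i = d$, $q_1 s_1 = q_1$ and $s_2 q_2 = q_2$; the aim is to show that $\hat q := \tau_\frB(\pi(\widebar{p})) = \tau_\frB(z_1)\cdots\tau_\frB(z_{n+1})$ stabilises $(q_1,d,q_2)$, which contradicts $\pi(\widebar{p}) \in \tau_\frB\inv(\barS)$. The witness would split $\hat q$ at position $i$: middle idempotent $f$ (legitimate because $\npr{f} = g_i = d$), left block $\tau_\frB(z_1)\cdots\tau_\frB(z_i)$, right block $\tau_\frB(z_{i+1})\cdots\tau_\frB(z_{n+1})$ — this is a genuine factorisation of $\hat q$ by the guard equation $\tau_\frB(z_i)f = \tau_\frB(z_i)$, and $\npr{\hat q} = d$ since $\tau_\frB(\pi(\widebar{p})) \in \barS$. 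The two $Q$-side conditions needed for $(q_1,d,q_2) \smarrow{\hat q} (q_1,d,q_2)$ would be deduced from $q_1 s_1 = q_1$ and $s_2 q_2 = q_2$ by feeding the good pairs $(\tau_\frB(z_j),\tau(u_j))$ (Lemma~\ref{lem:imageu}) and $(f_j, f_j\tau(v_j)f_j)$ (Lemma~\ref{lem:rpairs}) into Lemma~\ref{lem:rcapolcp}, just as in the maximal case (Lemma~\ref{lem:indbase}), but with the idempotents $f_j$ in the role played there by $1_Q$.

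The main obstacle, in both halves, is precisely this $Q$-side bookkeeping. Unlike the maximal case, the $\eta\inv(g_j)$-slots no longer contribute trivial factors (there $\eta(v_j) = 1_N$, giving good pairs $(1_Q,\tau(v_j))$), so the nontrivial idempotents $f_j$ from the $\gamma$-guard must be threaded through every product while the $\npr{\cdot}$-projections are kept compatible with the pattern and guard equations; this is exactly the point at which Lemma~\ref{lem:rcapolcp}, and hence Equation~\eqref{eq:capolcp}, is indispensable.
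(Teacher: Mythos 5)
Your sketch departs from the paper in two places, and both create genuine gaps.

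\textbf{The non-strict inequality.} You instantiate $\smarrow{q}$ by taking the outer blocks to be $s_1$ and $s_2$ themselves and the idempotent to be the guard $f$ at position $i$, which forces you to establish the one-sided absorptions $q_1 s_1 f s'_1 = q_1 s_1$ and $s'_2 f s_2 q_2 = s_2 q_2$. This cannot be done: the element $q_1$ is arbitrary, and the good-pair machinery (Lemma~\ref{lem:rcapolcp}) only yields two-sided identities of the form $r^{\omega+1} = r^\omega t r^\omega$, which become useful only when one already has an absorption $q_1 = q_1 r^\omega$ to kill the outer $\omega$-powers. In the strictness part such an absorption is supplied by the contradiction hypothesis, but for $\leq$ there is nothing to get it from. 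The paper avoids the problem by \emph{decomposing} $s_1$ and $s_2$ around $e_i$, not by keeping them intact: it shows $F_i \subseteq F_i e_i F_i$ (a short good-pair argument inside $F_i$), uses $\tau(u_i) \in R_i = R_i F_i$ to write $\tau(u_i) = p_i e_i s''_1$ with $p_i \in R_i$, $s''_1 \in F_i$, and then $s_1 = s'_1 e_i s''_1$ with $s'_1 = \tau(u_1 v_1 \cdots v_{i-1}) p_i$. With that, $q_1 s_1 = q_1 s'_1 e_i s''_1$ is a rewriting, not a claim to prove. The reason your ``keep $s_1,s_2$ intact'' plan works in the maximal case (Lemma~\ref{lem:indbase}) is that there $e=1_Q$ and the return blocks are trivial; that luxury is absent here.

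\textbf{Strictness.} You take the contradiction hypothesis to be the equality $(q_1 s_1, g_i, s_2 q_2) = (q_1, d, q_2)$, but the paper assumes $(q_1 s_1, g_i, s_2 q_2) \leq (q_1, d, q_2)$. The stronger hypothesis is what actually matters: the rank of a triple counts the $\leq$-successors, and since $\leq$ is only a preorder, refuting equality does not show that the rank of $(q_1 s_1, g_i, s_2 q_2)$ is strictly smaller than that of $(q_1,d,q_2)$ — one must refute the reverse inequality. Moreover, the extra witnesses $t_1,t_2,t'_1,t'_2,f$ delivered by the reverse inequality are precisely what the paper feeds into Lemma~\ref{lem:rcapolcp} to manufacture $r'_1 = t_1 f t'_1(s_1 t_1 f t'_1)^\omega$ (and symmetrically $r'_2$) with the exact $\npr{\cdot}$-values that make the stabilization condition $\npr{e_i} = \npr{r'_1}d\npr{r'_2}$ check out; under the bare equality hypothesis these witnesses are unavailable, and your sketch does not show how to certify that constraint.
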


By definition of $\Vb_i$ in Lemma~\ref{lem:indrank}, it follows from Lemma~\ref{lem:getstrict} that $\Vb_i$ is a $(q_1s_1,s_2q_2)$-safe \bpol{\Cs}-cover of $\eta\inv(g_i)$. Thus, as $V_i \in \Vb_i$ and $v^{}_i,v'_i \in V_i$, we get $q^{}_1s^{}_1\tau(v^{}_i)s^{}_2q^{}_2\!=\! q^{}_1s^{}_1\tau(v'_i)s^{}_2q^{}_2$, \emph{i.e.}, $q^{}_1\tau(x^{}_iv^{}_iy^{}_i)q^{}_2 = q^{}_1\tau(x^{}_iv'_iy^{}_i)q^{}_2$ as desired.

\begin{proof}[Proof of Lemma~\ref{lem:getstrict}]
  Recall that $\widebar{p} = (z_1,g_1,\dots,z_n,g_n,z_{n+1})$. Let $(r_j,R_j)=\gamma(z_j)$ for all~$j$. By Fact~\ref{fct:bprops}, $(r_j,r)$ is a good pair for all $r \in R_j$. Since $\widebar{p}$ is $\gamma$-guarded, we get $(e_j,F_j) \in E(Q \times 2^Q)$ for every $1\leq j \leq n$ that  satisfy $\eta_\frB\inv(g_j) \cap \gamma\inv((e_j,F_j)) \neq \emptyset$, $(r_je_j,R_jF_j) = (r_j,R_j)$ and $(e_jr_{j+1},F_{j}R_{j+1}) = (r_{j+1},R_{j+1})$.  In particular, $g_j = \npr{e_j}$. By Fact~\ref{fct:bprops}, $(e_j,r)$ is a good pair for all $r \in F_j$. Let  $\wh{r_1} = r_1 \cdots r_i$ and $\wh{r_2} = r_{i+1} \cdots r_n$.

  First, we show that $F_i \subseteq F_ie_iF_i$. Let $m \in F_i$. Since $F_iF_i = F_i$, we obtain $m_1,\dots,m_{|Q|} \in F_i$ such that $m = m_1 \cdots m_{|Q|}$. The pigeonhole principle yields $0 \leq k < \ell \leq |Q|$ such that $m_1 \cdots m_k = m_1 \cdots m_\ell$. Define $f = (m_{k+1} \cdots m_\ell)^\omega$, which is an idempostent of $Q$. We can now write $m = m_1 \cdots m_kf m_{\ell+1} \cdots m_{|M|}$.  Since  $F_i \in E(2^Q)$, we get $f \in F_i$ and $m \in F_ifF_i$. Thus, $(e_i,f)$ is a good pair and Lemma~\ref{lem:rcapolcp} yields $f = fe_if$. Hence,  $m \in F_ife_ifF_i \subseteq F_ie_iF_i$.

  We now prove that $(q_1,d,q_2) \leq (q_1s_1,g_i,s_2q_2)$. We show that $s_1 = s'_1e_is''_1$ and $s_2 = s''_2e_is'_2$ for two elements $s'_1,s'_2 \in Q$ such that $\npr{s'_1} = \npr{\wh{r_1}}$ and $\npr{s'_2} = \npr{\wh{r_2}}$, and $s''_1,s''_2 \in F_i$. This will give $\npr{s'_1e_is'_2} = \npr{\wh{r_1}e_i\wh{r_2}} = \npr{\wh{r_1}\wh{r_2}}$. Since $\wh{r_1}\wh{r_2} = \tau_\frB(\pi(\widebar{p})) \in \barS$, we obtain $\npr{s'_1e_is'_2} = d$. Also, $\npr{s''_1} = \npr{s''_2} = \npr{e_i} = g_i$ since $s''_1,s''_2 \in F_i$. Thus, $\npr{e_i} = \npr{s''_1}g_i\npr{s''_2}$. Altogether, we get $(q_1,d,q_2) \leq (q_1s_1,g_i,s_2q_2)$. By symmetry, we only decompose~$s_1$. By definition, $s_1 = \tau(x_i)$ where $x_i = u_1v_1u_2 \cdots v_{i-1}u_i$. We have $u_j \in U_{z_j}$ for all $j$ since $u_j \in L_j$ and $\Lb_j$ is a tight cover of $U_{z_j}$. Thus, $\tau(u_j) \in R_j$ for all $j$ by Lemma~\ref{lem:imageu}. Moreover, $R_i = R_iF_i$ and since $F_i \subseteq F_ie_iF_i$, this yields $\tau(u_i) = p_ie_is''_1$ with $p_i \in R_i$ and $s''_1 \in F_i$. Therefore, we have $s_1 = s'_1e^{}_is''_1$ with $s'_1 = \tau(u_1v_1u_2 \cdots v_{i-1})p_i$. We also have $\eta(v_j) = g_j = \npr{e_j}$ for all $j$ since $v_j \in V_j$ and $\Vb_j$ is a tight cover of $\eta\inv(g_j)$. Finally, Fact~\ref{fct:bprops} yields that for all $j$, if $r \in R_j$, then $(r_j,r)$ is a good pair. This implies that $\npr{r} = \npr{r_j}$. Altogether, we get $\npr{s'_1} = \npr{r_1e_1 \cdots r_{i-1}e_{i-1} r_i} = \npr{r_1 \cdots r_i}= \npr{\wh{r_1}}$.

  It remains to prove that the inequality is strict. By contradiction, assume that we have $(q_1s_1,g_i,s_2q_2) \leq (q_1,d,q_2)$. Let $q= \tau_\frB(\pi(\widebar{p}))$. We show that $q$ stabilizes $(q_1,d,q_2)$. Since  $q \in \barS$ (as $\widebar{p} \in P$), this is a contradiction. By hypothesis, we get $t_1,t_2,t'_1,t'_2 \in Q$ and $f \in E(Q)$ such that $\npr{t_1ft_2} = g_i$, $\npr{f} = \npr{t'_1}d\npr{t'_2}$, $q_1 = q_1s_1t_1ft'_1$ and $q_2 = t'_2ft_2s_2q_2$. We will now construct $r'_1,r'_2 \in Q$ such that $q_1\wh{r_1}e_ir'_1 = q_1$, $r'_2e_i\wh{r_2}q_2 = q_2$, $\npr{r'_1} = \npr{t_1ft'_1(\wh{r_1}t_1ft'_1)^\omega}$ and $\npr{r'_2} = \npr{(t'_2ft_2\wh{r_2})^{\omega}t'_2ft_2}$. Let us first explain why this implies that $q$ stabilizes $(q_1,d,q_2)$. As seen above, $\npr{\wh{r_1}e_i\wh{r_2}} = d$. Moreover, as $\npr{t_1ft_2} = g_i$, $\npr{f} = \npr{t'_1}d\npr{t'_2}$ and $g_i = \npr{e_i}$, we~have,
  \[
    \begin{array}{lll}
      g_i & = & \npr{(t_1ft'_1\wh{r_1})^{\omega+1}}g_i\npr{(\wh{r_2}t'_2ft_2)^{\omega+1}}, \\
          & = & \npr{t_1ft'_1(\wh{r_1}t_1ft'_1)^\omega } \npr{\wh{r_1}e_i\wh{r_2}} \npr{(t'_2ft_2\wh{r_2})^{\omega}t'_2ft_2}, \\
          & = &  \npr{r'_1}d\npr{r'_2}.
    \end{array}
  \]
  Moreover, $q =  \tau_\frB(\pi(\widebar{p})) = \tau(\wh{r_1}e_i\wh{r_2})$. Thus,  if $q_1\wh{r_1}e_ir'_1\! =\! q_1$ and $r'_2e_i\wh{r_2}q_2 = q_2$, we get $(q_1,d,q_2) \smarrow{q} (q_1,d,q_2)$.

  It remains to build $r'_1$ and $r'_2$. By symmetry, we focus on $r'_1$. As seen above, we have $s_1 = \tau(u_1v_1u_2 \cdots v_{i-1}u_i)$ with $(r_j,\tau(u_j))$ a good pair  and $\eta(v_j) = g_j$ for all $j$. We define $m_1 = r_1\tau(v_1) \cdots r_{i-1}\tau(v_{i-1})r_i$.  By closure under multiplication, we get that $(m_1t_1ft'_1,s_1t_1ft'_1)$ is a good pair. We get $(s_1t_1ft'_1)^{\omega+1} = (s_1t_1ft'_1)^{\omega}m_1t_1ft'_1 (s_1t_1ft'_1)^{\omega}$ from Lemma~\ref{lem:rcapolcp}. Since $q_1 = q_1s_1t_1ft'_1$, multiplying by $q_1$ on the left yields the equality $q_1 = q_1m_1t_1ft'_1 (s_1t_1ft'_1)^{\omega}$. We let $m'_1 = t_1ft'_1 (s_1t_1ft'_1)^{\omega}$. Thus, $q_1 = q_1m_1m'_1$. For all $j$, we have $\npr{\tau(v_j)} = g_j = \npr{e_j}$ and $(e_j,e_j\tau(v_j)e_j)$ is a good pair by Lemma~\ref{lem:rpairs}. Since $r_je_j = r_j$ and $e_jr_{j+1}=r_{j+1}$ for all $j$, closure under multiplication yields that $(\wh{r_1}m'_1,m_1m'_1)$ is a good pair as well. We use Lemma~\ref{lem:rcapolcp} to get $(m_1m'_1)^{\omega+1} = (m_1m'_1)^{\omega} \wh{r_1}m'_1 (m_1m'_1)^{\omega}$. Since $q_1 = q_1m_1m'_1$, multiplying by $q_1$ on the left yields $q_1 = q_1\wh{r_1}m'_1 (m_1p'_1)^{\omega}$. If we define $r'_1 = m'_1 (m_1m'_1)^{\omega}$, we get indeed $\npr{r'_1} = \npr{t_1ft'_1(\wh{r_1}t_1ft'_1)^\omega}$.
\end{proof}

\subsection{\texorpdfstring{Second step: full cover of $\eta\inv(d)$}{Second step: full cover of eta-1(d)}}

This step is the same whether $J$ is maximal or not. Indeed, in both cases, we have a $(q_1,q_2)$-safe \bpol{\Cs}-cover $\Kb_{\barS}$ of $\tau\inv(\barS)$. We let,
\[
  K' = \bigcup_{K \in \Kb_{\barS}} K,\quad K_{\bot} = \eta\inv(d) \setminus K' \quad \text{and} \quad \Kb = \{K_{\bot}\} \cup \Kb_{\barS}.
\]
Since $\eta$ is a \Cs-morphism and $K \in \bpol{\Cs}$ for all $K \in \Kb_{\barS}$, we have $K_{\bot} \in \bpol{\Cs}$. Therefore, \Kb is a \bpol{\Cs}-cover of $\eta\inv(d)$ by definition. We prove that it is $(q_1,q_2)$-safe.  Let $K \in \Kb$, $w,w' \in K$, $q = \tau(w)$ and $r = \tau(w')$. We show that $q_1 q q_2= q_1 r q_2$. If $K \in \Kb_{\barS}$, this is because $\Kb_{\barS}$ is $(q_1,q_2)$-safe.

Assume now that $K = K_{\bot}$. By definition of $K_{\bot}$, we know that $w,w' \in \eta\inv(d)$ and $w,w' \not\in \bigcup_{K \in \Kb_{\barS}} K$. Thus, $q,r \not\in \barS$ since $\Kb_{\barS}$ is a cover of $\tau\inv(\barS)$. In other words, we have $(q_1,d,q_2) \smarrow{q} (q_1,d,q_2)$ and $(q_1,d,q_2) \smarrow{r} (q_1,d,q_2)$. We get $s^{}_1,s^{}_2,s'_1,s'_2,t^{}_1,t^{}_2,t'_1,t'_1 \in Q$ and $e,f \in E(Q)$ such that,
\begin{enumerate}
  \item $\npr{s_1es_2} = d$,\quad $\npr{e} = \npr{s'_1}d\npr{s'_2}$,\\ $\npr{t_1ft_2} = d$,\quad $\npr{f} = \npr{t'_1}d\npr{t'_2}$.
  \item  $q = s_1es_2$,\quad $q^{}_1 = q^{}_1s^{}_1es'_1$,\quad $q^{}_2 = s'_2es^{}_2q^{}_2$,\\ $r = t_1ft_2$,\quad $q^{}_1 = q^{}_1t^{}_1ft'_1$,\quad $q^{}_2 = t'_2ft^{}_2q^{}_2$.
\end{enumerate}
This yields $\npr{e} = \npr{s'_1t^{}_1ft^{}_2s'_2}$ and  $\npr{f} = \npr{t'_1s^{}_1es^{}_2t'_2}$. Hence, we apply Lemma~\ref{lem:rswap} (which follows from~\eqref{eq:swap}) to get,
\[
  \begin{array}{c}
    (es'_1t^{}_1ft'_1s^{}_1e)^\omega (es^{}_2t'_2ft^{}_2s'_2e)^\omega = \\
    (es'_1t^{}_1ft'_1s^{}_1e)^\omega s'_1t^{}_1ft^{}_2s'_2(es^{}_2t'_2ft^{}_2s'_2e)^\omega
  \end{array}
\]
We multiply by $q_1s_1$ on the left and $s_2q_2$ on the right. Recall that $q^{}_1 = q^{}_1s^{}_1es'_1$, $q^{}_2 = s'_2es^{}_2q^{}_2$, $q^{}_1 = q^{}_1t^{}_1ft'_1$ and $q^{}_2 = t'_2ft^{}_2q^{}_2$. We deduce that $q_1s_1es_2q_2 = q_1t_1ft_2q_2$. Since we have $q = s_1es_2$ and $r = t_1ft_2$, we finally obtain $q_1qq_2 = q_1rq_2$, as desired.

\section{Handling full levels above two}
\label{sec:uptwo}
We specialize Theorem~\ref{thm:cargen} to characterize the classes \bpol{\Cs} when \Cs is itself of the form $\Cs = \bpol{\Ds}$. More precisely, we prove that in this case, Equation~\eqref{eq:swap} (which is based on \Cs-swaps) is equivalent to a property based on \Cs-\emph{sets}.

\begin{restatable}{proposition}{simpeq} \label{prop:simpeq}
  Let \Ds be a \vari and $\Cs = \bpol{\Ds}$. Let $\alpha: A^* \to M$ be a surjective morphism. Then, $\alpha$ satisfies~\eqref{eq:swap} if and only if it satisfies the following condition:
  \vspace{-0.05cm}
  \begin{equation}
    \begin{array}{c}
      (eqfre)^\omega (esfte)^\omega = (eqfre)^\omega qft (esfte)^\omega \text{ for every}\\
      \text{\Cs-set $\{q,r,s,t,e,f\} \subseteq M$ such that $e,f \in  E(M)$}.
    \end{array} \label{eq:csides}
  \end{equation}
\end{restatable}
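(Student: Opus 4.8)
The plan is to prove the two implications of Proposition~\ref{prop:simpeq} separately, after a common normalisation step.

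\smallskip
\noindent\emph{Normalisation.} Combining Lemma~\ref{lem:witness}, Lemma~\ref{lem:setmor} and Proposition~\ref{prop:genocm} (the last to also recognise $\{\veps\}\in\Cs=\bpol{\Ds}$) via the product-and-surjective-restriction construction used in the proof of Lemma~\ref{lem:witness}, I would fix a single \Cs-morphism $\eta:A^*\to N$ for which the \Cs-pairs, \Cs-swaps and \Cs-sets for $\alpha$ are \emph{exactly} the $\eta$-pairs, $\eta$-swaps and $\eta$-sets, and with $\eta\inv(1_N)=\{\veps\}$. Since $\bpol{\Ds}=\bool{\pol{\Ds}}$, Lemma~\ref{lem:bpolm} factors $\eta=\beta\circ\gamma$ with $\gamma:A^*\to P$ a $\pol{\Ds}$-morphism; one also records the elementary fact that a tuple of $M^6$ is a \Cs-swap for $\alpha$ if and only if it is a $\gamma'$-swap for every $\pol{\Ds}$-morphism $\gamma'$ (forward, $\pol{\Ds}$-morphisms are \Cs-morphisms as $\pol{\Ds}\subseteq\Cs$; backward, factor any \Cs-morphism through a $\pol{\Ds}$-morphism and push the witnesses along the factoring map).

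\smallskip
\noindent\emph{From~\eqref{eq:swap} to~\eqref{eq:csides}.} Let $\{q,r,s,t,e,f\}$ be a \Cs-set with $e,f\in E(M)$; being an $\eta$-set, it equals $\alpha(F)$ with $\eta$ constant on $F$ of value, say, $m$, so we may pick $w_x\in F$ with $\alpha(w_x)=x$. Using that $e,f\in E(M)$ and taking idempotent powers of $w_e,w_f$ and suitable products (such as $w_qw_f^{\omega(N)-1}$, $w_rw_e^{\omega(N)-1}$, $w_e^{\omega(N)-1}w_s$, $w_f^{\omega(N)-1}w_t$ and $w_e^{\omega(N)},w_f^{\omega(N)}$), one obtains an $\eta$-swap — hence, by the choice of $\eta$, a \Cs-swap — on the tuple $(qf,\,re,\,es,\,ft,\,e,\,f)$: all six chosen witnesses have $\eta$-image $m^\omega$, which is idempotent, so both swap conditions reduce to $m^\omega=m^\omega$. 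Plugging this \Cs-swap into~\eqref{eq:swap} and simplifying with $e^2=e$ and $f^2=f$ (e.g. $e\cdot qf\cdot f\cdot re\cdot e=eqfre$ and $qf\cdot f\cdot ft=qft$) yields exactly $(eqfre)^\omega(esfte)^\omega=(eqfre)^\omega qft(esfte)^\omega$, which is~\eqref{eq:csides} for that set. (In the degenerate case $\omega(N)=1$, $m$ is already idempotent and one argues directly with the $w_x$.)

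\smallskip
\noindent\emph{From~\eqref{eq:csides} to~\eqref{eq:swap}.} Conversely, let $(q,r,s,t,e,f)$ be a \Cs-swap with $e,f\in E(M)$, and fix $\eta$-swap witnesses $u_q,\dots,u_f$, with $g=\eta(u_e),h=\eta(u_f)\in E(N)$, $\eta(u_ru_eu_s)=h$ and $\eta(u_qu_fu_t)=g$. A short Green-relations argument (Lemma~\ref{lem:jlr}) gives $g\Jrel h$. One then has to exhibit a \Cs-set returning the desired identity: the natural candidate is $\{e,f,qf,re,es,ft\}$, for which~\eqref{eq:csides} (with $e,f$ as the idempotent slots and $qf,re,es,ft$ as the other four) simplifies, exactly as above, to~\eqref{eq:swap} for $(q,r,s,t,e,f)$. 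Establishing that this set is an $\eta$-set — i.e. that its six elements are realisable at a \emph{common} value of $\eta$ — is where the hypothesis $\Cs=\bpol{\Ds}$ is genuinely used: through the factorisation $\eta=\beta\circ\gamma$, the concatenation-closure of $\pol{\Ds}$ and its equational description (Theorem~\ref{thm:polcar} with basis $\Ds$), together with $\eta\inv(1_N)=\{\veps\}$, one collapses the idempotent-indexed ``loop'' $g\leftrightarrow h$ of the swap onto a single $\eta$-value. I expect this collapse to be the main obstacle: a \Cs-swap carries a loop between two $\Jrel$-equivalent idempotents of a witness \Cs-morphism, whereas a \Cs-set only records several elements at one fixed image; inflating a \Cs-set into a swap is cheap (raise idempotent witnesses to powers), but deflating a swap back to a flat common-image \Cs-set has no analogue for an arbitrary \vari \Cs, so the argument must exploit that \Cs is a concatenation-hierarchy level of the form $\bpol{\Ds}$.
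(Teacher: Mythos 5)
Your ``only if'' direction (deriving~\eqref{eq:csides} from~\eqref{eq:swap}) is correct and is essentially the paper's argument: fix $\eta$ via Lemma~\ref{lem:witness} so that $\Cs$-swaps are $\eta$-swaps, use Lemma~\ref{lem:setmor} to realise a $\Cs$-set $\{q,r,s,t,e,f\}$ as $\alpha(F)$ with $\eta$ constant on $F$, then pass to idempotent powers to exhibit $(qf,re,es,ft,e,f)$ as an $\eta$-swap and simplify~\eqref{eq:swap} using $e^2=e$, $f^2=f$. That half checks out.

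The ``if'' direction has a genuine gap. You propose $\{e,f,qf,re,es,ft\}$ as the candidate $\Cs$-set, correctly observe that~\eqref{eq:csides} applied to it with idempotent slots $e,f$ would simplify to the wanted identity, and then defer the hard part (``establishing that this set is an $\eta$-set'') to the hypothesis $\Cs=\bpol{\Ds}$. But that candidate is almost certainly \emph{not} a $\Cs$-set in general, and the paper does not prove it is. A $\Cs$-swap $(q,r,s,t,e,f)$ gives witnesses whose $\eta$-images $g_e=\eta(u_e)$ and $g_f=\eta(u_f)$ are distinct idempotents of the same $\Jrel$-class joined by the cyclic relations $g_e=p_qg_fp_t$, $g_f=p_rg_ep_s$; there is no reason the six elements $e,f,qf,re,es,ft$ should be realisable at one common $\eta$-value. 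The ``collapse'' you expect does not happen at that level.

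What the paper actually does is quite different, and each step carries real content that your sketch omits. First, Lemma~\ref{lem:morbp} builds a $\pol{\Ds}$-morphism $\eta:A^*\to(N,\leq)$ with two tailored properties: every $\alpha(\eta\inv(p))$ is a $\Cs$-set, and the ordered-monoid inequality $g\leq g'\Rightarrow g'=g'gg'$ for idempotents. Second, a further $\Cs$-morphism $\beta$ recognising the languages $\eupp{v_1}\cdots\eupp{v_\ell}$ (bounded length) is used; since $(q,r,s,t,e,f)$ is a $\beta$-swap, one gets swap witnesses and then, via a pumping argument inside the concatenations $\eupp{\cdot}\cdots\eupp{\cdot}$, Lemma~\ref{lem:thedec} decomposes $u_e,u_f$ into $x_e y_e z_e$, $x_f y_f z_f$ with matching constraints on the $\eta$-images and a common idempotent $g'$. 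Third, the actual $\Cs$-set is $\{\hat q,\hat r,\hat s,\hat t,\hat e,\hat f\}$ where $\hat x=\alpha(y_x)$ for elaborate words $y_q=y_ez_eu_qx_fy_f$, etc.; showing all six $y_x$ land in $\eta\inv(g')$ uses Green's relations (Lemmas~\ref{lem:jlr},~\ref{lem:htogroup}), the identities of Fact~\ref{fct:etaprop}, and the ordered-monoid property of Lemma~\ref{lem:morbp}. Finally, applying~\eqref{eq:csides} to this set does \emph{not} directly give~\eqref{eq:swap} for $(q,r,s,t,e,f)$: it gives an identity in the $\hat{\phantom{x}}$-elements which only becomes the wanted one after multiplying on the left by $\alpha(x_e)$ and on the right by $\alpha(z_e)$ and using $\alpha(x_ey_ez_e)=e$. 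None of this is present in your sketch; the ``if'' direction as you describe it would not go through.
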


Before we prove Proposition~\ref{prop:simpeq}, let us combine it with Theorem~\ref{thm:cargen}. This yields the following theorem.

\begin{restatable}{theorem}{carbbpol} \label{thm:carbbpol}
  Let \Ds be a \vari, let $\Cs = \bpol{\Ds}$ and let $\alpha: A^* \to M$ be a surjective morphism. Then, $\alpha$ is a $\bpol{\Cs}$-morphism if and only if it satisfies the two following equations:
  \begin{alignat*}{1}
    \begin{array}{c}
      (eset)^{\omega+1} = (eset)^{\omega}et(eset)^{\omega} \text{ for every $t \in M$}\\
      \text{and every \Cs-pair $(e,s) \in E(M) \times M$}.
    \end{array} \tag{\ref{eq:cacp}} \\
    \begin{array}{c}
      (eqfre)^\omega (esfte)^\omega = (eqfre)^\omega qft (esfte)^\omega \text{ for every}\\
      \text{\Cs-set $\{q,r,s,t,e,f\} \subseteq M$ such that $e,f \in  E(M)$}.
    \end{array}\tag{\ref{eq:csides}}
  \end{alignat*}
\end{restatable}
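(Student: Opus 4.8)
The plan is simply to chain together the two results just stated, so that the proof reduces to a one-line deduction. First I would record that the hypotheses of Theorem~\ref{thm:cargen} are met for the class \Cs: since \Ds is a \vari it is in particular a \pvari, so Corollary~\ref{cor:bpolc} guarantees that $\Cs = \bpol{\Ds}$ is itself a \vari, and $\alpha$ is surjective by assumption. Theorem~\ref{thm:cargen} therefore applies to \Cs and tells us that $\alpha$ is a $\bpol{\Cs}$-morphism if and only if it satisfies both~\eqref{eq:cacp} and~\eqref{eq:swap}.

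Next I would invoke Proposition~\ref{prop:simpeq} with this same \Ds, \Cs and $\alpha$: under exactly these hypotheses it states that $\alpha$ satisfies~\eqref{eq:swap} if and only if it satisfies~\eqref{eq:csides}. Substituting this equivalence into the characterization obtained from Theorem~\ref{thm:cargen} yields that $\alpha$ is a $\bpol{\Cs}$-morphism if and only if it satisfies~\eqref{eq:cacp} and~\eqref{eq:csides}, which is precisely the claimed statement.

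Consequently there is no real obstacle internal to this theorem; the work has been displaced into its two ingredients. The genuinely hard part is Theorem~\ref{thm:cargen}, whose ``if'' direction occupies the bulk of Section~\ref{sec:bpolgen}, together with Proposition~\ref{prop:simpeq}, and the latter is where I would expect the main difficulty for the step specific to the present statement: one must show that replacing the \emph{ad hoc} \Cs-swap relation by the \Cs-set relation does not change the content of the equation when $\Cs = \bpol{\Ds}$. The natural approach there is to use Lemma~\ref{lem:witness} and Lemma~\ref{lem:setmor} to realize \Cs-swaps and \Cs-sets through a single \Cs-morphism, and then to exploit the extra structure available because \Cs is already a ``\bpolo-level'' — notably the weak concatenation principles (Lemmas~\ref{lem:bpconcat} and~\ref{lem:bpconcatm}) and the fact that $\{\veps\} \in \Cs$ (cf.\ Remark~\ref{rem:themax}) — in order to pass between a condition on six-tuples closed under the relevant idempotent products and a condition on non-coverable subsets of the image monoid.
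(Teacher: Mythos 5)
Your proof is correct and matches the paper's own argument exactly: Theorem~\ref{thm:carbbpol} is obtained by combining Theorem~\ref{thm:cargen} (applicable since Corollary~\ref{cor:bpolc} gives that $\Cs=\bpol{\Ds}$ is a \vari) with Proposition~\ref{prop:simpeq}, which swaps Equation~\eqref{eq:swap} for Equation~\eqref{eq:csides}. (Your speculation about how Proposition~\ref{prop:simpeq} is proved differs from the paper — it relies on Lemma~\ref{lem:morbp}, Fact~\ref{fct:etaprop} and Lemma~\ref{lem:thedec} rather than the weak concatenation principles — but that is a separate result and does not affect the correctness of the present argument.)
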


Given a \vari \Cs with decidable \emph{covering}, Lemmas~\ref{lem:septopairs} and~\ref{lem:covtosets} imply that the \Cs-pairs and the \Cs-sets associated to an input morphism can be computed. Hence, one may decide~\eqref{eq:cacp} and~\eqref{eq:csides} in this case.  We get the following corollary.

\begin{corollary} \label{cor:carbbpol}
  Let \Ds be a \vari and let $\Cs = \bpol{\Ds}$. If \Cs-covering is decidable, then so is \bpol{\Cs}-membership.
\end{corollary}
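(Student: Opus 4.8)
The plan is to read off the statement directly from the algebraic characterization in Theorem~\ref{thm:carbbpol}, combined with the effectivity results on \Cs-pairs and \Cs-sets. First I would invoke Proposition~\ref{prop:synmemb}: since $\Ds$ is a \vari, hence a \pvari, Corollary~\ref{cor:bpolc} shows that $\Cs = \bpol{\Ds}$ is a \vari, and applying \bpolo once more shows that $\bpol{\Cs}$ is again a \vari. Thus Proposition~\ref{prop:synmemb} reduces \bpol{\Cs}-membership to the problem of deciding whether a given morphism $\alpha: A^* \to (M,\leq)$ into a finite ordered monoid is a \bpol{\Cs}-morphism; by Lemma~\ref{lem:cmorphbool} we may moreover drop the ordering and work with $\alpha: A^* \to M$.

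By Theorem~\ref{thm:carbbpol}, $\alpha$ is a \bpol{\Cs}-morphism if and only if it satisfies Equations~\eqref{eq:cacp} and~\eqref{eq:csides}. So it suffices to show that, assuming \Cs-covering is decidable, both equations can be checked effectively from $\alpha$. For~\eqref{eq:cacp}: the equation quantifies over the \Cs-pairs $(e,s) \in E(M) \times M$ for $\alpha$. Since \Cs is a \vari, it is in particular a lattice, so Remark~\ref{rem:covgensep} expresses \Cs-separation as an instance of \Cs-covering; hence \Cs-separation is decidable, and Lemma~\ref{lem:septopairs} lets us compute the finite set of all \Cs-pairs for $\alpha$. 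Once this set is in hand, verifying~\eqref{eq:cacp} amounts to checking a fixed identity in $M$ for each of the finitely many triples $(e,s,t)$ with $(e,s)$ a \Cs-pair, $e \in E(M)$ and $t \in M$, which is effective since $M$ is finite and $\omega(M)$ is computable. For~\eqref{eq:csides}: the equation quantifies over the \Cs-sets $\{q,r,s,t,e,f\}\subseteq M$ with $e,f\in E(M)$. By Lemma~\ref{lem:covtosets}, decidability of \Cs-covering lets us compute the finite collection of all \Cs-sets for $\alpha$; using Fact~\ref{fct:setinc} to pass to subsets, checking~\eqref{eq:csides} again reduces to verifying a fixed identity over the finitely many relevant tuples in $M^6$. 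Putting these two effective checks together with Theorem~\ref{thm:carbbpol} yields an algorithm deciding whether $\alpha$ is a \bpol{\Cs}-morphism, and therefore, via Proposition~\ref{prop:synmemb}, an algorithm for \bpol{\Cs}-membership.

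There is no real obstacle here: the substance of the result lives entirely in Theorem~\ref{thm:carbbpol} (equivalently, Proposition~\ref{prop:simpeq}), and this corollary is a bookkeeping step assembling previously established pieces. The only point that deserves a line of justification is that decidability of \Cs-covering genuinely entails decidability of \Cs-separation, so that Lemma~\ref{lem:septopairs} is applicable to produce the \Cs-pairs feeding Equation~\eqref{eq:cacp}; this is precisely Remark~\ref{rem:covgensep} together with the fact that $\bpol{\Ds}$ is a lattice.
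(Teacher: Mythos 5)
Your proof is correct and follows essentially the same route as the paper: reduce \bpol{\Cs}-membership to deciding the \bpol{\Cs}-morphism property via Proposition~\ref{prop:synmemb}, then use Theorem~\ref{thm:carbbpol} together with Lemmas~\ref{lem:septopairs} and~\ref{lem:covtosets} (and the fact that covering generalizes separation, Remark~\ref{rem:covgensep}) to decide Equations~\eqref{eq:cacp} and~\eqref{eq:csides}. The paper states this more tersely, leaving the appeal to Proposition~\ref{prop:synmemb} and Remark~\ref{rem:covgensep} implicit, and your invocation of Fact~\ref{fct:setinc} is unnecessary (one simply enumerates tuples in $M^6$ and tests whether the associated set is a \Cs-set), but neither point is a gap.
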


Corollary~\ref{cor:carbbpol} implies  that given a concatenation hierarchy, if a \emph{non-zero full} level has decidable \emph{covering}, then the next full level has decidable \emph{membership}. Together with Theorem~\ref{thm:grpgen} (which is proved in~\cite{pzconcagroup,PlaceZ22}), this yields the following.

\begin{corollary} \label{cor:cargrp}
  If\/ \Gs is a group \vari with decidable separation, the classes \bpolp{2}{\Gs} and \bpolp{2}{\Gs^+} have decidable membership.
\end{corollary}

Also, applying Theorem~\ref{thm:stdot} (which is proved in~\cite{pzbpolcj}) yields the decidability of membership for the levels \emph{three} in both the Straubing-Thérien hierarchy and the dot-depth.

\begin{corollary} \label{cor:cardd3}
  The classes \bpolp{3}{\stzer} and \bpolp{3}{\dotzer} have decidable membership.
\end{corollary}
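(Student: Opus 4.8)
The plan is to derive this corollary directly from Corollary~\ref{cor:carbbpol} together with Theorem~\ref{thm:stdot}; the only genuine work is to instantiate the hypotheses of Corollary~\ref{cor:carbbpol} correctly. I would treat the two hierarchies in parallel. For the Straubing-Th\'erien hierarchy, take $\Ds = \bpolp{1}{\stzer} = \bpol{\stzer}$. Since \stzer is a \vari, Corollary~\ref{cor:bpolc} guarantees that \Ds is again a \vari, which is precisely the input shape demanded by Corollary~\ref{cor:carbbpol}. Setting $\Cs = \bpol{\Ds}$, the inductive definition of the concatenation hierarchy gives $\Cs = \bpolp{2}{\stzer}$, and therefore $\bpol{\Cs} = \bpolp{3}{\stzer}$. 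By Theorem~\ref{thm:stdot}, \Cs-covering (that is, \bpolp{2}{\stzer}-covering) is decidable, so Corollary~\ref{cor:carbbpol} applies and yields that $\bpol{\Cs} = \bpolp{3}{\stzer}$ has decidable membership.

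The dot-depth case is handled identically: take $\Ds = \bpolp{1}{\dotzer} = \bpol{\dotzer}$, which is a \vari by Corollary~\ref{cor:bpolc} since \dotzer is a \vari (recall $\dotzer = \stzer^+$), and set $\Cs = \bpol{\Ds} = \bpolp{2}{\dotzer}$, so that $\bpol{\Cs} = \bpolp{3}{\dotzer}$. Theorem~\ref{thm:stdot} gives decidability of \bpolp{2}{\dotzer}-covering, and Corollary~\ref{cor:carbbpol} then delivers decidability of \bpolp{3}{\dotzer}-membership.

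I do not expect any real obstacle at this final step: all the difficulty has already been absorbed into Theorem~\ref{thm:stdot} (the covering result imported from~\cite{pzbpolcj}) and into Corollary~\ref{cor:carbbpol}, which itself rests on Theorem~\ref{thm:carbbpol}, hence on the generic characterization Theorem~\ref{thm:cargen} and on Proposition~\ref{prop:simpeq}. The one point deserving a moment's care is bookkeeping with the hierarchy indices: one must observe that $\bpolp{3}{\stzer} = \bpol{\bpolp{2}{\stzer}}$ and that $\bpolp{2}{\stzer}$ is of the form $\bpol{\Ds}$ with \Ds a \vari, so that the reduction ``non-zero full level with decidable covering $\Rightarrow$ next full level with decidable membership'' (Corollary~\ref{cor:carbbpol}) applies to the pair (level two, level three) in each of these two hierarchies.
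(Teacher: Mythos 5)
Your proposal is correct and follows exactly the paper's route: instantiate Corollary~\ref{cor:carbbpol} with $\Cs=\bpolp{2}{\stzer}$ (resp.\ $\bpolp{2}{\dotzer}$), observing that $\Cs=\bpol{\Ds}$ for the \vari $\Ds=\bpolp{1}{\stzer}$ (resp.\ $\bpolp{1}{\dotzer}$), and feed in Theorem~\ref{thm:stdot} for the decidability of $\Cs$-covering. The index bookkeeping and the check that $\Ds$ is a \vari are both handled correctly.
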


\begin{restatable}{remark}{thmfail} \label{rem:thmfail}
  For an \emph{arbitrary} \vari \Cs, it remains true that~\eqref{eq:cacp} and~\eqref{eq:csides} form a necessary condition for being a \bpol{\Cs}-morphism. On the other hand, they need not be sufficient. For instance, consider the class \md of modulo languages. One can reformulate the equations with a notion tailored to \md: \emph{stable monoids}~\cite{bookstraub}. Specifically, one can prove that a morphism satisfies~\eqref{eq:cacp} and~\eqref{eq:csides} if and only if its stable monoid is \Jrel-trivial. It is well-known that this property is not sufficient for~membership in \bpol{\md} (see \emph{e.g.} \cite{ChaubardPS06}).

  On the other hand, Theorem~\ref{thm:carbbpol} does hold for some input classes \Cs which are \emph{not} of the form \bpol{\Ds}. For instance, this is the case for $\Cs = \at$ as seen in Corollary~\ref{cor:bpolat}. This is also the case when $\Cs = \stzer$ (for $\stzer=\{\emptyset,A^*\}$) or $\Cs = \grp$ (for the class \grp of all group languages).
\end{restatable}

\subsection{Proposition~\ref{prop:simpeq}: ``only if'' direction}
Let $\Cs= \bpol{\Ds}$ for a \vari~\Ds and let $\alpha: A^* \to M$ be a morphism. We start with the simpler ``only if'' direction, which does not require the hypothesis $\Cs=\bpol{\Ds}$. Assume that $\alpha$ satisfies~\eqref{eq:swap}. We show that~\eqref{eq:csides} holds as well. Let  $\{q,r,s,t,e,f\}\subseteq M$ be a \Cs-set such that $e,f  \in E(M)$. We show that,
\begin{equation} \label{eq:simpeqeasy}
  (eqfre)^\omega(esfte)^\omega = (eqfre)^\omega qft(esfte)^\omega.
\end{equation}
Lemma~\ref{lem:witness} yields a \Cs-morphism $\eta: A^* \to N$ such that the \Cs-swaps are exactly the $\eta$-swaps. Since $\{q,r,s,t,e,f\}\subseteq M$ is a \Cs-set, Lemma~\ref{lem:setmor} implies that it is an $\eta$-set. We obtain a set $F \subseteq A^*$ such that $\alpha(F)=\{q,r,s,t,e,f\}$ and $\eta(u)=\eta(u')$ for every $u,u' \in F$. Therefore, there exist $x \in N$ and, for each $p \in \{q,r,s,t,e,f\}$, a word $u_p \in F$ such that $\alpha(u_p) = p$ and $\eta(u_p) = x$. Let $k = \omega(N)$. We~define the following words:
\[
  \begin{array}{c}
  	u'_q = u_qu_f^{k-1},\ u'_r = u_ru_e^{k-1},\ u'_s = u_e^{k-1}u_s,\ u'_t = u_f^{k-1}u_t,\\ u'_e=u_e^k \text{ and } u'_f = u_f^k.
  \end{array}
\]
Since $e,f \in E(M)$, we have $\alpha(u'_q) = qf$, $\alpha(u'_r) = re$, $\alpha(u'_s) = es$, $\alpha(u'_t) = ft$, $\alpha(u'_e) = e$ and $\alpha(u'_f) = f$. Also, $\eta(u'_p) = x^{\smash{k}} \in E(N)$ for all $p \in \{q,r,s,t,e,f\}$ by definition of $k$. Thus, $\eta(u'_qu'_fu'_t) = \eta(u'_e)$ and $\eta(u'_ru'_eu'_s) = \eta(u'_f)$. Hence, $(qf,re,es,ft,e,f)$ is an $\eta$-swap and a \Cs-swap by definition of $\eta$.  We now apply~\eqref{eq:swap} to get~\eqref{eq:simpeqeasy} as desired.

\subsection{Proposition~\ref{prop:simpeq}: ``if'' direction, preliminaries}

We now turn to the more difficult ``if'' direction, where we need the hypothesis that \Cs is of the form \bpol{\Ds}. Assume now that $\alpha$ satisfies~\eqref{eq:csides}. We prove that~\eqref{eq:swap} holds as well. Therefore, we fix a \Cs-swap $(q,r,s,t,e,f) \in M^6$ such that $e,f \in E(M)$ and show that,
\begin{equation} \label{eq:simpeqhard}
	(eqfre)^\omega(esfte)^\omega = (eqfre)^\omega qft(esfte)^\omega.
\end{equation}
We adopt the same strategy as for~the ``only if'' direction: we build a \Cs-set from $(q,r,s,t,e,f)$ and then apply~\eqref{eq:csides} to it to get~\eqref{eq:simpeqhard}. First, we build a \Cs-morphism $\eta$ such that the $\eta$-sets are exactly the \Cs-sets and satisfying properties tied to the hypothesis $\Cs = \bpol{\Ds}$.

\begin{restatable}{lemma}{morbp} \label{lem:morbp}
  There exists a \pol{\Ds}-morphism $\eta: A^*\to (N,\leq)$ satisfying the two following properties:
  \begin{itemize}
    \item For all $p \in N$, the set $\alpha(\eta\inv(p)) \subseteq M$ is a \Cs-set for $\alpha$.
    \item For all $g,g' \in E(N)$, if $g \leq g'$, then $g' = g'gg'$.
  \end{itemize}
\end{restatable}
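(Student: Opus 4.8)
The plan is to decouple the two requirements: property~1 will come from refining a fixed \Cs-morphism, and property~2 from choosing the order on the recognizing monoid as coarse as possible.

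First I would fix, using Lemma~\ref{lem:setmor}, a \Cs-morphism $\eta_0\colon A^*\to N_0$ whose $\eta_0$-sets are exactly the \Cs-sets for $\alpha$. The role of $\eta_0$ is that \emph{any} morphism $\eta$ through which $\eta_0$ factors automatically has property~1: if $\eta_0=\beta\circ\eta$, then $\eta\inv(p)\subseteq\eta_0\inv(\beta(p))$ for every $p$, so $\alpha(\eta\inv(p))\subseteq\alpha(\eta_0\inv(\beta(p)))$, the right-hand side is an $\eta_0$-set hence a \Cs-set, and Fact~\ref{fct:setinc} carries this to $\alpha(\eta\inv(p))$. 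Now, since $\Cs=\bpol{\Ds}=\bool{\pol{\Ds}}$ and \pol{\Ds} is a \pvari (Theorem~\ref{thm:polc}), I would apply Lemma~\ref{lem:bpolm} to $\eta_0$ (viewed as a \bool{\pol{\Ds}}-morphism) to obtain a \pol{\Ds}-morphism $\gamma\colon A^*\to(Q,\leq_Q)$ and a morphism $\beta\colon Q\to N_0$ with $\eta_0=\beta\circ\gamma$. This $\gamma$ already satisfies property~1, but in general not property~2.

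To enforce property~2 without touching property~1, I would keep the underlying monoid map $\gamma$ unchanged (property~1 depends only on the sets $\gamma\inv(p)$) and replace $\leq_Q$ by the \emph{finest} order $\leq_*$ for which $\gamma$ is still a \pol{\Ds}-morphism: declare the up-sets of $\leq_*$ to be exactly the sets $S\subseteq Q$ with $\gamma\inv(S)\in\pol{\Ds}$. This family is a sublattice of $2^Q$ closed under $S\mapsto q\inv S$ and $S\mapsto Sq\inv$ (because \pol{\Ds} is closed under finite unions, finite intersections and quotients), so it is the up-set lattice of a compatible order; it contains every up-set of $\leq_Q$, so it separates points, hence $\leq_*$ is a genuine partial order and $\gamma\colon A^*\to(Q,\leq_*)$ is a \pol{\Ds}-morphism. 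Take $N=Q$ with the order $\leq_*$ and $\eta=\gamma$. For property~2, let $g,g'\in E(Q)$ with $g\leq_* g'$; then $g'gg'\leq_* g'g'g'=g'$ by compatibility, and for the reverse inequality I would invoke Theorem~\ref{thm:polcar}: it is enough to show that $(g',g)$ is a \Ds-pair for $\eta$, since then $g'=(g')^{\omega+1}\leq_*(g')^{\omega}g(g')^{\omega}=g'gg'$.

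The main obstacle is precisely this last claim: \emph{if $g\leq_* g'$ are idempotents, then $\eta\inv(g')$ is not \Ds-separable from $\eta\inv(g)$}. Care is genuinely needed, since an arbitrary \pol{\Ds}-morphism need \emph{not} satisfy property~2 — for instance the \at-morphism $\{x,y\}^*\to\{1,a,b,0\}$ with $x\mapsto a$, $y\mapsto b$, $ab=ba=0$ and order $0<a<1$, $0<b<1$ violates it — so both the passage to the finest order and the idempotency of $g,g'$ must be used. The route I would take: assume for contradiction that some $D\in\Ds$ separates $\eta\inv(g')$ from $\eta\inv(g)$; pass to a \Ds-morphism $\theta\colon A^*\to P$ recognizing $D$; replacing witnesses by idempotent powers, extract idempotents of $P$ attached to the two patterns; and then exploit $g\leq_* g'$ — which, by definition of $\leq_*$, says that no \pol{\Ds}-language recognized by $\eta$ distinguishes the $g$-pattern from the $g'$-pattern — to manufacture, from a witness for $g'$, a word still mapping to $g'$ under $\eta$ but with $\theta$-image outside the accepting part of $D$, contradicting $\eta\inv(g')\subseteq D$. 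Making this "pattern substitution" rigorous via the marked-product description of \pol{\Ds}-languages is the delicate core of the lemma; everything else is bookkeeping.
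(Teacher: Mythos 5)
Your overall strategy diverges from the paper's, and it has a genuine gap at exactly the point you flag as the ``delicate core.''

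Your first half is fine and parallels the paper: factoring the $\Cs$-morphism from Lemma~\ref{lem:setmor} through a \pol{\Ds}-morphism via Lemma~\ref{lem:bpolm} does secure property~1 for any further refinement. The novelty is your attempt to get property~2 by passing to the finest compatible order $\leq_*$ on $Q$. The construction of $\leq_*$ itself is sound (the family $\{S : \gamma\inv(S)\in\pol{\Ds}\}$ is a quotient-closed sublattice of $2^Q$, hence the up-set lattice of a stable partial order refining $\leq_Q$), but the key reduction then rests on the unproved claim that if $g\leq_* g'$ are idempotents then $(g',g)$ is a $\Ds$-pair for $\gamma$. That claim does not follow from the definitions: $g\leq_* g'$ only says that no \pol{\Ds}-language \emph{recognized by $\gamma$} contains $\gamma\inv(g)$ while missing $\gamma\inv(g')$, whereas being a $\Ds$-pair quantifies over \emph{all} languages in $\Ds$, most of which are not recognized by $\gamma$. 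A separator $D\in\Ds$ with $\gamma\inv(g')\subseteq D$ and $D\cap\gamma\inv(g)=\emptyset$ gives a language $D^c\in\Ds\subseteq\pol{\Ds}$ of the right shape, but $D^c$ need not be $\gamma$-saturated, so you cannot feed it into the defining property of $\leq_*$. Your sketch of ``pattern substitution'' against a $\Ds$-morphism $\theta$ recognizing $D$ does not close this: you would need a single word whose $\gamma$-image is $g'$ and whose $\theta$-image falls outside the accepting set, but $\gamma$ and $\theta$ are independent morphisms and $\leq_*$ gives no leverage over $\theta$. This is precisely the obstruction the paper's proof is engineered to avoid: it introduces, via Lemma~\ref{lem:pairmor}, a $\Ds$-morphism $\gamma$ whose $\gamma$-pairs are exactly the $\Ds$-pairs for the \pol{\Ds}-morphism $\beta_2$, and then takes $N$ to be (the surjective image of) $T_2\times Q$ with the order $(t,q)\leq(t',q')$ iff $t\leq t'$ \emph{and} $q=q'$. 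Hard-wiring equality in the $Q$-coordinate guarantees that $\eta(u)\leq\eta(u')$ forces $\gamma(u)=\gamma(u')$, which immediately makes $(\beta_2(u'),\beta_2(u))$ a $\Ds$-pair, so Theorem~\ref{thm:polcar} applies directly. If you want to salvage your route, you would need to enlarge the ambient monoid to carry the $\Ds$-pair information before taking the finest order — at which point you are essentially reproducing the paper's product construction.
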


\begin{proof}
  By Corollary~\ref{cor:bpolc}, the class $\Cs=\bpol{\Ds}$ is a \vari. Hence, it follows from Lemma~\ref{lem:setmor} that there exists a \Cs-morphism $\beta_1: A^*\to T_1$ such that for every $S \subseteq M$, $S$ is a \Cs-set for $\alpha$ if and only if $S$ is a $\beta_1$-set. Additionally, since \pol\Ds is a \pvari by Theorem~\ref{thm:polc} and $\Cs = \bpol{\Ds}$, Lemma~\ref{lem:bpolm} yields a \pol{\Ds}-morphism $\beta_2: A^* \to (T_2,\leq)$ and a morphism $\delta: T_2 \to T_1$ such that $\beta_1 = \delta \circ \beta_2$. Finally, it follows from Lemma~\ref{lem:pairmor} that there exists a \Ds-morphism $\gamma: A^* \to Q$ such that for every pair $(x,y) \in N^2$, $(x,y)$ is a \Ds-pair for $\beta_2$ if and only if $(x,y)$ is a $\gamma$-pair. Since $\Ds\subseteq\pol{\Ds}$, $\gamma$ is also a \pol{\Ds}-morphism (recall that we view $Q$ as ordered monoid $(Q,=)$). We now consider the ordered monoid $(T_2 \times Q,\leq)$ equipped with componentwise multiplication and whose ordering is defined by $(t,q) \leq (t',q')$ if and only if $t \leq t'$ and $q = q'$ for all $(t,q),(t',q') \in T_2 \times Q$. Moreover, we let $\beta_3: A^* \to (T_2 \times Q,\leq)$ as the morphism which is defined by $\beta_3(w) = (\beta_2(w),\gamma(w))$ for every $w \in A^*$. Finally, we let $\eta: A^* \to (N,\leq)$ as the surjective restriction of $\beta_3$. Since \pol{\Ds} is a \pvari by Theorem~\ref{thm:polc}, it is clear that $\eta$ is a \pol{\Ds}-morphism. Indeed, the definition implies that every language recognized by $\eta$ is built by making unions and intersections of languages recognized by the \pol{\Ds}-morphisms $\beta_2$ and $\gamma$. It remains to verify that $\eta$ satisfies the two assertions.

  For the first assertion, let $p \in N$. By definition of $\eta$, all words in $\eta\inv(p)$ have the same image under $\beta_2$ and therefore under $\beta_1$ as well since $\beta_1 = \delta \circ \beta_2$. Hence, $\alpha(\eta\inv(p))$ is a $\beta_1$-set for $\alpha$. It follows from the definition of $\beta_1$ that $\alpha(\eta\inv(p)) \subseteq M$ is a \Cs-set for $\alpha$.

  We turn to the second assertion. Let $g,g' \in E(N)$ such that $g \leq g'$. We prove that $g'=g'gg'$. Since $g' \in E(N)$ and $g \leq g'$, it is immediate that $g'gg'\leq g'$. Hence, it suffices to show that $g'\leq g'gg'$. Since $\eta$ is surjective by construction, there exist $u,u'\in A^*$ such that $\eta(u) = g$ and $\eta(u') = g'$. It remains to prove that $\eta(u')\leq \eta(u'uu')$. Since for all $w\in A^*$, we have $\eta(w)=(\beta_2(w),\gamma(w))$, this amounts to proving that $\beta_2(u')\leq \beta_2(u'uu')$ and $\gamma(u')= \gamma(u'uu')$. Since $\eta(u')\in E(N)$, we have $\gamma(u') \in E(Q)$ and $\beta_2(u')\in E(T_2)$. From $\eta(u) \leq \eta(u')$, we deduce $\gamma(u) = \gamma(u')$, yielding $\gamma(u') = \gamma(u'uu')$. Moreover, $\gamma(u) = \gamma(u')$ also implies that $(\beta_2(u'),\beta_2(u))$ is a $\gamma$-pair for $\beta_2$. Therefore, it is a \Ds-pair for $\beta_2$ by definition of $\gamma$. Since $\beta_2(u')\in E(T_2)$ and $\beta_2$ is a \pol{\Ds}-morphism, we deduce that $\beta_2(u') \leq \beta_2(u'uu')$ from Theorem~\ref{thm:polcar}, completing the proof.
\end{proof}

We use the \pol{\Ds}-morphism $\eta: A^* \to (N,\leq)$ to construct a \Cs-morphism $\beta: A^* \to R$. Since $(q,r,s,t,e,f)$ is a \Cs-swap, this will imply that it is also a $\beta$-swap. We shall then exploit the definition of $\beta$ to build our \Cs-set from $\eta$ using the first assertion in Lemma~\ref{lem:morbp}.

For every $v \in A^*$, we let $\eupp{v} = \{v' \in A^* \mid \eta(v) \leq \eta(v')\}$ which a language recognized by $\eta$. Hence, $\eupp{v} \in \pol{\Ds}$ since $\eta$ is a \pol{\Ds}-morphism. Let $k = |M|^2 \times |N|^2 + 2$. We define $\Hb = \{\eupp{v_1} \cdots \eupp{v_\ell}\mid \ell\leq k \text{ and } v_1,\dots,v_\ell\in A^*\}$. Since \pol{\Ds} is closed under concatenation by Theorem~\ref{thm:polc}, it follows that \Hb is a \emph{finite} set of languages in $\pol{\Ds} \subseteq \Cs$. Thus, Proposition~\ref{prop:genocm} yields a \Cs-morphism $\beta: A^* \to R$ recognizing every $H \in \Hb$. Since $v\in\eupp{v}$ for all $v\in A^*$, one may verify that for all $w,w' \in A^*$, we have $\beta(w)=\beta(w') \Rightarrow \eta(w)=\eta(w')$. We shall use this fact repeatedly.

\begin{remark}
	The construction of $\eta$ and $\beta$ is where we need the hypothesis that $\Cs = \bpol{\Ds}$.
\end{remark}

\subsection{Proposition~\ref{prop:simpeq}: ``if'' direction, main proof}

We are ready to start the construction of a \Cs-set from our \Cs-swap $(q,r,s,t,e,f)$. By definition of the \Cs-swaps, the hypothesis that $\beta: A^* \to R$ is a \Cs-morphism yields $u_q,u_r,u_s,u_t,u_e,u_f \in A^*$ such that $\beta(u_e),\beta(u_f) \in E(R)$, $\beta(u_e) = \beta(u_qu_fu_t)$, $\beta(u_f)=\beta(u_ru_eu_s)$ and $\alpha(u_x) = x$ for every $x \in \{q,r,s,t,e,f\}$. We define $p_q = \eta(u_q)$, $p_r= \eta(u_r)$, $p_s=\eta(u_s)$, $p_t= \eta(u_t)$, $g_e= \eta(u_e)$ and $g_f = \eta(u_f)$.

Recall that $\beta(w)=\beta(w') \Rightarrow \eta(w)=\eta(w')$ for all $w,w' \in A^*$. Hence, by hypothesis on $u_q,u_r,u_s,u_t,u_e,u_f \in A^*$, the definitions imply that $g_e,g_f \in E(N)$, $g_e=p_qg_fp_t$ and  $g_f=p_rg_ep_s$.

\begin{restatable}{fact}{etaprop} \label{fct:etaprop}
  We have the following equalities:
  \begin{itemize}
    \item $g_e  = g_ep_sg_fp_rg_e = (g_ep_sg_fp_tg_e)^\omega$ and
    \item  $g_f =  g_fp_tg_ep_qg_f = (g_fp_tg_ep_sg_f)^\omega$.
  \end{itemize}
\end{restatable}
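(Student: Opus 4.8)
The plan is to read Fact~\ref{fct:etaprop} as a statement purely about Green relations in the finite monoid $N$, using only that $g_e,g_f\in E(N)$, $g_e=p_qg_fp_t$, $g_f=p_rg_ep_s$, together with Lemmas~\ref{lem:jlr} and~\ref{lem:htogroup}; the ordering on $N$ and the second clause of Lemma~\ref{lem:morbp} will not be needed here. First I would observe that the substitution $(e,f,q,r,s,t)\mapsto(f,e,r,q,t,s)$ leaves the two hypothesis equalities invariant while exchanging the two assertions about $g_e$ with the two assertions about $g_f$; hence it suffices to prove $g_e=g_ep_sg_fp_rg_e$ and $g_e=(g_ep_sg_fp_tg_e)^\omega$. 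I would also record at once that $g_e\Jrel g_f$, which is immediate from the hypotheses.

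For the first assertion, set $z=g_ep_sg_fp_rg_e$. The key point is that substituting $g_f=p_rg_ep_s$ in the middle of $z^2$ produces a block $g_fg_fg_f$ that collapses to $g_f$, so a one-line computation gives $z^2=z$; thus $z\in E(N)$. Now $z\Lord g_e$ and $z\Rord g_e$ hold trivially, since $z$ both begins and ends with $g_e$, while $p_rzp_s=(p_rg_ep_s)g_f(p_rg_ep_s)=g_f$ shows $g_f\Jord z$, hence $g_e\Jord z$ because $g_e\Jrel g_f$. Applying Lemma~\ref{lem:jlr} with $s=z$ and $t=g_e$, once for its $\Rrel$ clause and once for its $\Lrel$ clause, yields $z\Rrel g_e$ and $z\Lrel g_e$, so $z\Hrel g_e$. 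Since $z$ is idempotent, Lemma~\ref{lem:htogroup} forces $z=z^\omega=g_e$, which is the first assertion.

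For the second assertion, set $z=g_ep_sg_fp_tg_e$. This element need not be idempotent, so I would only aim for $z\Hrel g_e$ and then invoke Lemma~\ref{lem:htogroup} to get $z^\omega=g_e$. Again $z\Lord g_e$ and $z\Rord g_e$ are immediate. The crucial computation is the telescoping $(p_qp_r)z=p_q(p_rg_ep_s)g_fp_tg_e=p_qg_fp_tg_e=g_e$, using $g_f=p_rg_ep_s$, then $g_f^2=g_f$, then $g_e=p_qg_fp_t$ and $g_e^2=g_e$; it shows $g_e\Jord z$. Lemma~\ref{lem:jlr} with $s=z$, $t=g_e$ then gives $z\Rrel g_e$ and $z\Lrel g_e$, hence $z\Hrel g_e$, and Lemma~\ref{lem:htogroup} concludes $z^\omega=g_e$.

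The only real content of the argument is spotting the two ``collapsing'' identities, namely that $g_ep_sg_fp_rg_e$ is idempotent and that $(p_qp_r)\cdot g_ep_sg_fp_tg_e=g_e$; once these are found, the rest is a mechanical application of Lemmas~\ref{lem:jlr} and~\ref{lem:htogroup}, together with the symmetry reduction. I therefore expect no genuine obstacle here: the ``hard part'' is really just picking the correct products to multiply by.
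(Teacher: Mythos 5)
Your proof is correct and follows essentially the same route as the paper's: both reduce to Green's‑relation computations in $N$ and invoke Lemmas~\ref{lem:jlr} and~\ref{lem:htogroup}, with symmetry handling two of the four equalities. The only noticeable difference is organisational — you derive $g_e=g_ep_sg_fp_rg_e$ directly from the computation $p_r(g_ep_sg_fp_rg_e)p_s=g_f$ and the fact $g_e\Jrel g_f$, whereas the paper's write‑up routes this through the previously established identity $g_f=(g_fp_tg_ep_sg_f)^\omega$; your version removes that dependency but is otherwise the same argument.
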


\begin{proof}
	First, observe that since $g_e,g_f \in E(N)$, $g_e=p_qg_fp_t$ and  $g_f=p_rg_ep_s$, we have the equalities $g_e=g_ep_qg_fp_tg_e$ and $g_f=g_fp_rg_ep_sg_f$.

  We now prove that $g_e=(g_ep_sg_fp_tg_e)^\omega$. By Lemma~\ref{lem:htogroup}, it suffices to prove that $g_e \Hrel g_ep_sg_fp_tg_e$. It is clear that $g_ep_sg_fp_tg_e \Hord g_e$. Furthermore, since $g_e=g_ep_qg_fp_tg_e$ and $g_f = g_fp_rg_ep_sg_f$, we get $g_e=g_ep_q(g_fp_rg_ep_sg_f)p_tg_e=g_ep_qg_fp_r(g_ep_sg_fp_tg_e)$. Therefore, $g_e \Jord g_ep_sg_fp_tg_e$ and Lemma~\ref{lem:jlr} yields $g_e \Hrel g_ep_sg_fp_tg_e$ as desired. The proof of the equality $g_f = (g_fp_tg_ep_sg_f)^\omega$ is symmetrical.

  We turn to proving that $g_e = g_ep_sg_fp_rg_e$ and $g_f = g_fp_tg_ep_qg_f$. By symmetry, we only prove the first equality.  Observe that since $g_f = p_rg_ep_s$, we have $g_ep_sg_fp_rg_eg_ep_sg_fp_rg_e =  g_ep_sg_fp_rg_e$. Thus, $g_ep_sg_fp_rg_e \in E(N)$ and by Lemma~\ref{lem:htogroup} it suffices to prove that $g_ep_sg_fp_rg_e \Hrel g_e$. Clearly, $g_ep_sg_fp_rg_e \Hord g_e$. Furthermore, since we have $g_f = p_rg_ep_s$ and $g_f = (g_fp_tg_ep_sg_f)^\omega$, we know that $g_f\Jord g_ep_sg_fp_rg_e$. Finally, since we have $g_e = p_qg_fp_t$, we get $g_e \Jord g_f \Jord g_ep_sg_fp_rg_e$. It then follows from Lemma~\ref{lem:jlr} that $g_e \Hrel g_ep_sg_fp_rg_e$, which completes the proof.
\end{proof}

We now use the morphism $\beta$ to  decompose $u_e$ and $u_f$. We shall then use the factors to construct our \Cs-set.

\begin{restatable}{lemma}{thedec} \label{lem:thedec}
  There exist six words $x_e,y_e,z_e,x_f,y_f,z_f \in A^*$ and $g' \in E(N)$ satisfying the following conditions:
  \begin{itemize}
    \item $\alpha(y_e) \in E(M)$, $e = \alpha(x_ey_ez_e)$ and $g_e = \eta(x_ey_ez_e)$
    \item $\alpha(y_f) \in E(M)$, $f = \alpha(x_fy_fz_f)$ and $g_f = \eta(x_fy_fz_f)$.
    \item $g' = \eta(y_e) = \eta(y_f)$ and $g' = g'g_fg'$.
    \item $g_f \leq  \eta(x_f)$, $g_f \leq \eta(z_f)$.
    \item $g_ep_qg_f \leq \eta(x_e)$ and $g_fp_tg_e \leq \eta(z_e)$.
  \end{itemize}
\end{restatable}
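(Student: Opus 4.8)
The plan is to produce the six words and the idempotent $g'$ explicitly from the data $u_q,u_r,u_s,u_t,u_e,u_f$ of the \Cs-swap, combining four ingredients: raising words to a fixed large exponent $K$ (a common multiple of $\omega(M)$ and $\omega(N)$, at least the pigeonhole bound $k=|M|^2|N|^2+2$ built into $\beta$) so that $\alpha$- and $\eta$-images become idempotent simultaneously; the identities $\beta(u_e)=\beta(u_qu_fu_t)$ and $\beta(u_f)=\beta(u_ru_eu_s)$, which together with ``$\beta(w)=\beta(w')\Rightarrow\eta(w)=\eta(w')$'' allow substituting factors without changing $\eta$-images; Theorem~\ref{thm:polcar} applied to the \pol{\Ds}-morphism $\eta$, to upgrade equalities of $\eta$-images to the required order inequalities by inserting a \Ds-pair inside an idempotent block; and the algebraic identities of Fact~\ref{fct:etaprop} together with the second assertion of Lemma~\ref{lem:morbp}.

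I would treat the two sides in parallel, the $f$-side being the tamer one since the lower bound demanded there is just $g_f$. Taking $x_f,y_f,z_f$ to be powers of $u_f$ (inflating $x_f,z_f$, when one wants the inequality $g_f\le\eta(x_f)$ strict, by inserting the $\beta$-equivalent factor $u_ru_eu_s$ inside an idempotent block of $u_f^{K}$ and invoking Theorem~\ref{thm:polcar} to see the insertion can only raise the $\eta$-image) yields $\alpha(x_fy_fz_f)=f$, $\eta(x_fy_fz_f)=g_f$, $\alpha(y_f)\in E(M)$, $g_f\le\eta(x_f),\eta(z_f)$, and an idempotent $g':=\eta(y_f)$ with $g_f\le g'$; then $g'=g'g_fg'$ is exactly Lemma~\ref{lem:morbp}(2).

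For the $e$-side I would build $x_e$ around $u_eu_qu_f$ (so that $\eta(x_e)$ starts from the value $g_ep_qg_f$, inflated if necessary) and $z_e$ around $u_fu_tu_e$ (so $\eta(z_e)$ starts from $g_fp_tg_e$), and take $y_e$ to be a large power of a word with $\eta$-image $g'$ and idempotent $\alpha$-image. Fact~\ref{fct:etaprop} is what makes the product $\eta(x_e)\,\eta(y_e)\,\eta(z_e)$ collapse to $g_e$ (for instance $g_ep_qg_f\cdot g_f\cdot g_fp_tg_e=g_ep_qg_fp_tg_e=g_e$). The subtle point is to arrange the three $\alpha$-images so that their product equals $\alpha(u_e)=e$ despite $x_e$ and $z_e$ being forced to carry the non-idempotent letters $u_q,u_f,u_t$. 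I would do this by working inside the finite monoid $\Gamma=\{(\alpha(w),\eta(w))\mid w\in A^*\}\subseteq M\times N$, enumerating the realizable $(\alpha,\eta)$-profiles of the candidate factors (powers and products of $u_e,u_q,u_f,u_t,u_r,u_s$) and running a pigeonhole argument along a long factorization to locate a repeated profile that can be absorbed without disturbing the global $\alpha$-image. This is precisely where the bound $k=|M|^2|N|^2+2$ in the definition of $\beta$ is used: $\beta$ recognizes every product of at most $k$ of the languages $\eupp{v}$, which lets one transfer such a repeated-profile decomposition faithfully to the $\eta$-side.

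The main obstacle is exactly this reconciliation on the $e$-side: the constraints $g_ep_qg_f\le\eta(x_e)$ and $g_fp_tg_e\le\eta(z_e)$ pull $x_e$ and $z_e$ toward factors containing $u_q,u_f,u_t$, whose $\alpha$-images are not $e$, while $\alpha(x_ey_ez_e)$ must equal $e$ on the nose. Making these coexist — via the pigeonhole choice of the middle factor $y_e$ and the $N$-side bookkeeping closed up by Fact~\ref{fct:etaprop} and Lemma~\ref{lem:morbp}(2) — is the heart of the proof; the $f$-side and all the $\eta$-side collapses are routine once this is in place.
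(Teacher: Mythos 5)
There is a genuine gap. Your sketch correctly flags the reconciliation on the $e$-side as ``the heart of the proof,'' but the mechanism you gesture at does not work, and the concrete suggestions you do make conflict with the constraints. First, taking $x_f,y_f,z_f$ to be powers of $u_f$ (so $g'=\eta(y_f)=g_f$) decouples the two sides, but the constraint $\eta(y_e)=g'$ then forces $\eta(y_e)=g_f$, and there is no reason a factor of $u_e$ (or anything whose $\alpha$-product recovers $e$) has $\eta$-image exactly $g_f$. In fact the paper's $g'$ is genuinely a different idempotent, produced by the construction and merely satisfying $g_f\leq g'$; pinning $g'=g_f$ in advance is too rigid. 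Second, ``inserting the $\beta$-equivalent factor $u_ru_eu_s$'' breaks the $\alpha$-side: $\alpha(u_ru_eu_s)=res\ne f$, so such an insertion destroys $\alpha(x_fy_fz_f)=f$. Third, ``building $x_e$ around $u_eu_qu_f$'' gives $\alpha(x_e)=eqf$, and no choice of $\alpha(y_e),\alpha(z_e)$ then recovers $\alpha(x_ey_ez_e)=e$ in general --- the swap $(q,r,s,t,e,f)$ does not supply such an identity.

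What the paper actually does is structurally different and resolves all three issues at once. It never changes the underlying words: it factorizes $u_e$ and $u_f$ \emph{themselves}, each into exactly $k$ contiguous blocks, and it does so \emph{iteratively and jointly}. At stage $h$, the factorization of $u_e$ is refined by observing $\beta(u_e)=\beta(u_eu_q\,w_{1,h-1}\cdots w_{k,h-1}\,u_tu_e)$ and using that $\beta$ recognizes the $k$-fold product $\eupp{u_eu_qw_{1,h-1}}\eupp{w_{2,h-1}}\cdots\eupp{w_{k,h-1}u_tu_e}$, so $u_e$ itself lies in that language; symmetrically for $u_f$. Iterating this pumps the $\eta$-images of the blocks upward until they stabilize (at an index $dm$), which is what yields both the inequalities $g_f\le\eta(x_f),\eta(z_f)$, $g_ep_qg_f\le\eta(x_e)$, $g_fp_tg_e\le\eta(z_e)$ \emph{and}, crucially, the matching $\eta(v_{\ell,dm+1})=\eta(w_{\ell,dm})$ of block images on the two sides, which is exactly what makes $\eta(y_e)=\eta(y_f)=g'$. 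The six words are then contiguous block-products (the middle one raised to a power $d$ which a pigeonhole choice renders absorbable), so $\alpha(x_ey_ez_e)=\alpha(u_e)=e$ and $\alpha(x_fy_fz_f)=\alpha(u_f)=f$ come for free. Your sketch misses this iterated joint refactorization of $u_e$ and $u_f$ via the $k$-fold $\eupp{\cdot}$-products recognized by $\beta$; that, not a free-standing pigeonhole in $M\times N$, is what makes the ``impossible'' $e$-side coexist.
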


\begin{proof}
  We let $w_{1,0} = \cdots = w_{k,0} = u_f$ and use induction on an integer $h\geq 1$ to build $2k$ words $v_{1,h},\dots,v_{k,h},w_{1,h},\dots,w_{k,h} \in A^*$ satisfying the following conditions:
  \begin{itemize}
    \item $u_e = v_{1,h} \cdots v_{k,h}$ and $u_f=w_{1,h} \cdots w_{k,h}$.
    \item $\eta(w_{i,h-1}) \leq \eta(v_{i,h}) \leq \eta(w_{i,h})$ for $2 \leq i < k$.
    \item $g_ep_q \eta(w_{1,h-1}) \leq \eta(v_{1,h})$ and $g_fp_r\eta(v_{1,h}) \leq \eta(w_{1,h})$.
    \item $\eta(w_{k,h-1})p_tg_e \leq \eta(v_{k,h})$ and $\eta(v_{k,h})p_sg_f \leq \eta(w_{k,h})$.
  \end{itemize}
  Let $h\geq 1$. Assume that $w_{1,h-1},\dots,w_{k,h-1}$ are defined. We have $\beta(w_{1,h-1} \cdots w_{k,h-1}) = \beta(u_f)$. Since $\beta(u_e) \in E(R)$ and $\beta(u_e)= \beta(u_qu_fu_t)$, we get $\beta(u_e)= \beta(u_eu_qw_{1,h-1} \cdots w_{k,h-1}u_tu_e)$. Furthermore, $\beta$ recognizes the following language by definition,
  \[
    H = \eupp{u_eu_qw_{1,h-1}}\eupp{w_{2,h-1}} \cdots \eupp{w_{k-1,h-1}} \eupp{w_{k,h-1}u_tu_e}.\]
  Since $u_eu_qw_{1,h-1} \cdots w_{k,h-1}u_tu_e \in H$, we get $u_e \in H$. Therefore, there exist words $v_{1,h},\dots,v_{k,h}$ such that $u_e=v_{1,h} \cdots v_{k,h}$, $v_{1,h} \in \eupp{u_eu_qw_{1,h-1}}$, $v_{i,h} \in \eupp{w_{i,h-1}}$ for every $2 \leq i < k$, and $v_{k,h} \in \eupp{w_{k,h-1}u_tu_e}$. We obtain $g_ep_q\eta(w_{1,h-1})\leq \eta(v_{1,h})$,  $\eta(w_{i,h-1}) \leq \eta(v_{i,h})$ for $2 \leq i < k$ and $\eta(w_{k,h-1})p_tg_e \leq \eta(v_{k,h})$. Since $\beta(u_f) \in E(R)$ and $\beta(u_f)= \beta(u_ru_eu_s)$ by hypothesis. A symmetrical argument yields words $w_{1,h},\dots,w_{k,h}$ such that $u_f = w_{1,h} \cdots w_{k,h}$, $g_fp_r\eta(v_{1,h}) \leq \eta(w_{1,h})$, $\eta(v_{i,h}) \leq \eta(w_{i,h})$ for $2 \leq i < k$ and $\eta(v_{k,h})p_sg_f  \leq \eta(w_{k,h})$. This concludes the construction of the $2k$ words satisfying the desired construction.

  Let $d = \omega(M) \times \omega(N)$. Since $N$ is finite, there are two indices $m < m'$ and $\eta(w_{i,dm}) = \eta(w_{i,dm'})$ for $2 \leq i < k$. We fix these two numbers $m,m' \geq 1$. Since $k - 2  = |M|^2 \times |N|^2$, the pigeonhole principle yields $1 \leq i < j < k$ such that,
  \[
    \begin{array}{lll}
      \alpha(w_{2,dm} \cdots w_{i,dm}) & = & \alpha(w_{2,dm} \cdots w_{j,dm}), \\
      \alpha(v_{2,dm+1} \cdots v_{i,dm+1}) & = & \alpha(v_{2,dm+1} \cdots v_{j,dm+1}), \\
      \eta(w_{2,dm} \cdots w_{i,dm}) & = & \eta(w_{2,dm} \cdots w_{j,dm}), \\
      \eta(v_{2,dm+1} \cdots v_{i,dm+1}) & = & \eta(v_{2,dm+1} \cdots v_{j,dm+1}).
    \end{array}
  \]
  We now define,
  \[
    \begin{array}{ll}
      x_e =  v_{1,dm+1} \cdots v_{i,dm+1} &  x_f = w_{1,dm} \cdots w_{i,dm}, \\
      y_e =  (v_{i+1,dm+1} \cdots v_{j,dm+1})^d &  y_f =  (w_{i+1,dm} \cdots w_{j,dm})^d, \\
      z_e = v_{j+1,dm+1} \cdots v_{k,dm+1}  &  z_f = w_{j+1,dm} \cdots w_{k,dm}.
    \end{array}
  \]
  It remains to prove that the conditions in the lemma are satisfied. For the first two, we have $\alpha(y_e),\alpha(y_f) \in E(M)$ by definition of $d$. In addition, by definition of $i$ and $j$, we know that we have
  \[
    \alpha(x_ey_ez_e) = \alpha(v_{1,dm+1} \cdots v_{k,dm+1}) = \alpha(u_e) = e,
  \]
  as well as
  \[
    \alpha(x_fy_fz_f) =\alpha(w_{1,dm} \cdots w_{k,dm}) = \alpha(u_f) = f.
  \]
  Furthermore, $\eta(x_ey_ez_e) = \eta(u_e) = g_e$ and $\eta(x_fy_fz_f) = \eta(u_f) = g_f$.

  We turn to the third assertion. Define $g' = \eta(y_e)$. By choice of $d$,  we have $g' \in E(N)$. Moreover, for $2 \leq \ell < k$, we have $\eta(w_{\ell,0}) \leq \eta(w_{\ell,dm}) \leq \eta(v_{\ell,dm+1}) \leq \eta(w_{\ell,dm'})$. Hence, since $w_{i,0} = u_f$, $g_f = \eta(u_f)$ and $\eta(w_{\ell,dm}) = \eta(w_{\ell,dm'})$ for $2 \leq \ell < k$, we obtain that $g_f \leq \eta(w_{\ell,dm}) = \eta(v_{\ell,dm+1})$. One can multiply these inequalities for $i+1\leq\ell\leq j$. Since $g_f\in E(N)$, this yields $g_f \leq \eta(y_e) = \eta(y_f)$ by definition. Hence, $g' = \eta(y_e) = \eta(y_f)$ and $g_f \leq g'$. We get $g' = g'g_fg'$ from Lemma~\ref{lem:morbp}.

  For the remaining assertions, we only show~that  $g_f \leq  \eta(x_f)$ and $g_ep_qg_f \leq \eta(x_e)$. The proof of the other inequalities is symmetrical. By definition, we have $w_{1,0} = u_f$, $g_ep_q \eta(w_{1,h-1}) \leq \eta(v_{1,h})$ and $g_fp_r\eta(v_{1,h}) \leq \eta(w_{1,h})$ for all $h \geq 1$. Since $g_{\smash{f}} \in E(N)$, we get $g_ep_q(g_fp_rg_ep_qg_f)^{h-1} \leq \eta(v_{1,h})$ and $(g_fp_rg_ep_qg_f)^{h} \leq \eta(w_{1,h})$ by a simple induction. Also,~$g_f = (g_fp_rg_ep_qg_f)^\omega$ by Fact~\ref{fct:etaprop}. Thus,  we have $g_f \leq \eta(w_{1,dm})$ and $g_ep_qg_f \leq \eta(v_{1,dm+1})$ by definition of $d$. Finally, we have $g_f \leq \eta(w_{\ell,dm}) = \eta(v_{\ell,dm+1})$ for $2 \leq i < k$. Hence, $g_f \leq  \eta(x_f)$ and $g_ep_qg_f \leq \eta(x_e)$.
\end{proof}

We fix the words $x_e,y_e,z_e,x_f,y_f,z_f \in A^*$  and the element $g' \in E(N)$ given by Lemma~\ref{lem:thedec}. Let $h = \omega(M) \times \omega(N)$. We let,
\[
  \begin{array}{ll}
    y_q  =  y_ez_e u_q x_fy_f, & y_r  = y_fz_f (u_ru_eu_qu_f)^{h-1} u_rx_ey_e, \\
    y_t = y_fz_f u_t  x_ey_e, & y_s  =  y_ez_e u_s (u_fu_tu_eu_s)^{h-1} x_fy_f.
  \end{array}
\]
Let $\hat{x} = \alpha(y_x)$ for $x \in \{q,r,s,t,e,f\}$. Recall that in order to  apply Equation~\eqref{eq:csides}, we need to construct a \Cs-set. We will show below that  $\{\hat{q},\hat{r},\hat{s},\hat{t},\hat{e},\hat{f}\}$ is such a \Cs-set.
Note that $\hat{e},\hat{f} \in E(M)$ by Lemma~\ref{lem:thedec}. Since $\alpha(y_f),\alpha(y_e) \in E(M)$ and $\alpha(x_fy_fz_f) = \alpha(u_f) = f$ by Lemma~\ref{lem:thedec}, one can verify that,
\[
  \begin{array}{lll}
    \hat{e}\hat{q}\hat{f}\hat{r}\hat{e} & = & \alpha(y_ez_e)qf (reqf)^{h-1} r \alpha(x_ey_e), \\
                                        & = & \alpha(y_ez_e) (qfre)^{h-1} qfr\alpha(x_ey_e).
  \end{array}
\]
Symmetrically, one can verify that,
\[
  \begin{array}{lll}
    \hat{e}\hat{s}\hat{f}\hat{t}\hat{e} & = & \alpha(y_ez_e) s(ftes)^{h-1} ft\alpha(x_ey_e), \\
                                        &= & \alpha(y_ez_e) sft(esft)^{h-1}\alpha(x_ey_e).
  \end{array}
\]
Since $\alpha(x_ey_ez_e) = e$ by Lemma~\ref{lem:thedec}, it follows that,
\[
  \alpha(x_e) (\hat{e}\hat{q}\hat{f}\hat{r}\hat{e})^\omega (\hat{e}\hat{s}\hat{f}\hat{t}\hat{e})^\omega \alpha(z_e) =  (eqfre)^\omega (esfte)^\omega.
\]
Moreover, the definitions yield $\hat{q}\hat{f}\hat{t} = \alpha(y_ez_e)qft\alpha(x_ey_e)$. Hence, using again the equality  $\alpha(x_ey_ez_e) = e$ of Lemma~\ref{lem:thedec}, we get,
\[
  \alpha(x_e) (\hat{e}\hat{q}\hat{f}\hat{r}\hat{e})^\omega \hat{q}\hat{f}\hat{t} (\hat{e}\hat{s}\hat{f}\hat{t}\hat{e})^\omega \alpha(z_e)  =  (eqfre)^\omega qft (esfte)^\omega.
\]
We prove that $(\hat{e}\hat{q}\hat{f}\hat{r}\hat{e})^\omega (\hat{e}\hat{s}\hat{f}\hat{t}\hat{e})^\omega = (\hat{e}\hat{q}\hat{f}\hat{r}\hat{e})^\omega \hat{q}\hat{f}\hat{t} (\hat{e}\hat{s}\hat{f}\hat{t}\hat{e})^\omega$. In view of the above, this will imply that~\eqref{eq:simpeqhard} holds as desired (namely that $(eqfre)^\omega (esfte)^\omega = (eqfre)^\omega qft (esfte)^\omega$). Since $\alpha$ satisfies~\eqref{eq:csides}, it suffices to prove that $\{\hat{q},\hat{r},\hat{s},\hat{t},\hat{e},\hat{f}\} \subseteq M$ is a \Cs-set. We show that $\{\hat{q},\hat{r},\hat{s},\hat{t},\hat{e},\hat{f}\}$ is a subset of $\alpha(\eta\inv(g'))$, which is a \Cs-set by Lemma~\ref{lem:morbp}. Thus, $\{\hat{q},\hat{r},\hat{s},\hat{t},\hat{e},\hat{f}\}$ is itself a \Cs-set by Fact~\ref{fct:setinc}.

By definition, we have to prove that for $y \in \{y_q,y_r,y_s,y_t,y_e,y_f\}$, we have $\eta(y) = g'$. By Lemma~\ref{lem:thedec}, we already know this for $y=y_e$ and $y=y_f$. By symmetry, we only treat the cases $y = y_q$ and $y = y_r$. It suffices to prove that $\eta(y) \Hrel g'$ and $g' \leq \eta(y)$. Indeed, since $g' \in E(N)$, Lemma~\ref{lem:htogroup} and $\eta(y) \Hrel g'$ imply $(\eta(y))^\omega = g'$. Moreover, $g' \leq \eta(y)$ yields $g' \leq (\eta(y))^{\omega-1}$. Multiplying by $\eta(y)$ yields $g'\eta(y) \leq (\eta(y))^{\omega} = g'$. Since $g'\eta(y) = \eta(y)$ (by definition of $y_q,y_r$ as $\eta(y_e) = \eta(y_f) = g'$), we get $\eta(y) \leq g'$. Altogeter, we obtain $g' = \eta(y)$, as desired.

We first show that $\eta(y_q) \Hrel g'$ and $g' \leq \eta(y_q)$. We know that $y_q =  y_ez_e u_q x_fy_f$ and Lemma~\ref{lem:thedec} yields $g_f \leq  \eta(x_f)$ and $g_fp_tg_e \leq \eta(z_e)$. We get $g'g_fp_tg_ep_qg_fg' \leq \eta(y_q)$. Thus, since $g_fp_tg_ep_qg_f = g_f$ by Fact~\ref{fct:etaprop} and $g'g_fg' = g'$ by Lemma~\ref{lem:thedec}, we get $g' \leq \eta(y_q)$. Moreover, since $g' = \eta(y_e) = \eta(y_f)$, it is clear that $\eta(y_q) \Hord g'$. Thus, we show that $g' \Jord \eta(y_q)$. This will imply $\eta(y_q) \Hrel g'$ by Lemma~\ref{lem:jlr}. By definition, $y_q = y_ez_e u_q x_fy_f$. Hence, Lemma~\ref{lem:thedec} yields,
\[
  \eta(x_ey_qz_f)=\eta(x_e(y_ez_e u_q x_fy_f)z_f)=\eta(u_eu_qu_f) = g_ep_qg_f.
\]
Since $g_fp_tg_ep_qg_f =g_f$ by  Fact~\ref{fct:etaprop}, we get $g_f \Jord \eta(y_q)$. Also, $g' \Jord g_f$ since $g'=g'g_f g'$. We get $g' \Jord \eta(y_q)$ as desired.

We now prove that $\eta(y_r) \Hrel g'$ and that $g' \leq \eta(y_r)$. We have $y_r = y_fz_f (u_ru_eu_qu_f)^{h-1} u_rx_ey_e$. Lemma~\ref{lem:thedec} yields $g_f \leq  \eta(z_f)$ and $g_ep_qg_f \leq \eta(x_e)$. Thus, $g'(g_fp_rg_ep_qg_f)^{h}g' \leq \eta(y_r)$. Since $h$ is a multiple of $\omega(N)$, this yields $g'g_fg' \leq \eta(y_r)$ by Fact~\ref{fct:etaprop}. Thus, $g' \leq \eta(y_r)$ by Lemma~\ref{lem:thedec}. Also, the definition of $y_r$ and the fact that $g' = \eta(y_e) = \eta(y_f)$ yield $\eta(y_r) \Hord g'$. Thus, it suffices to show that $g' \Jord \eta(y_r)$. This implies $\eta(y_r) \Hrel g'$ by Lemma~\ref{lem:jlr}. By definition, $y_r = y_fz_f (u_ru_eu_qu_f)^{h-1} u_rx_ey_e$. Thus, Lemma~\ref{lem:thedec} yields,
\[
  \eta(x_fy_rz_e)= (g_fp_rg_ep_qg_f)^{h} p_rg_e.
\]
Moreover, $(g_fp_rg_ep_qg_f)^h = g_f$ by Fact~\ref{fct:etaprop}. Thus,  $g_f\Jord\eta(y_r)$. Finally, since  $g'=g'g_fg'$, we also have $g' \Jord g_f$. We get $g' \Jord \eta(y_r)$, which completes the proof.

\section{Conclusion}
\label{sec:conc}
In this paper, we studied the \bpolo operator used in the construction of concatenation hierarchies. We proved a generic characterization of the classes \bpol{\Cs} that depends on an \emph{ad hoc} property of the class~\Cs. While it is not clear when this property on \Cs is decidable in general, a first consequence is a new effective characterization of levels two in the dot-depth and Straubing-Thérien hierarchies, which is simpler than the previously known ones.

When \Cs is itself of the form \bpol{\Ds}, our characterization simplifies and yields a reduction from \bpol{\Cs}-membership to \Cs-covering. The latter problem was known to be decidable for level one in several hierarchies whose bases consist of group languages such as the group and modulo hierarchies. Combined with our results, this implies that level two is decidable in these hierarchies. Finally, it was known that level \emph{two} has decidable covering in the Straubing-Thérien hierarchy and the dot-depth. Hence, we obtain that level \emph{three} is decidable in these hierarchies.

\bibliographystyle{plain}

\bibliography{main}

\begin{thebibliography}{10}

\bibitem{almeidabc1proof}
Jorge Almeida.
\newblock Implicit operations on finite {\bf {j}}-trivial semigroups and a
  conjecture of {I.} {S}imon.
\newblock {\em Journal of Pure and Applied Algebra}, 69:205--218, 1990.

\bibitem{MR1647225}
Jorge Almeida.
\newblock Implicit operations and {K}nast's theorem.
\newblock In {\em Semigroups ({L}uino, 1992)}, pages 1--16. World Sci. Publ.,
  River Edge, NJ, 1993.

\bibitem{AK2009}
Jorge Almeida and Ondrej Klíma.
\newblock A counterexample to a conjecture concerning concatenation
  hierarchies.
\newblock {\em Information Processing Letters}, 110(1):4--7, 2009.

\bibitem{AK2010}
Jorge Almeida and Ondrej Klíma.
\newblock New decidable upper bound of the 2nd level in the
  {S}traubing-{T}hérien concatenation hierarchy of star-free languages.
\newblock {\em Discrete Mathematics {\&} Theoretical Computer Science},
  12(4):41--58, 2010.

\bibitem{arfi87}
Mustapha Arfi.
\newblock Polynomial operations on rational languages.
\newblock In {\em Proceedings of the 4th Annual Symposium on Theoretical
  Aspects of Computer Science}, STACS'87, pages 198--206. Springer, 1987.

\bibitem{arfi91}
Mustapha Arfi.
\newblock Op\'erations polynomiales et hi\'erarchies de concat\'enation.
\newblock {\em Theoretical Computer Science}, 91(1):71 -- 84, 1991.

\bibitem{Ash91}
Christopher~J. Ash.
\newblock Inevitable graphs: a proof of the type {II} conjecture and some
  related decision procedures.
\newblock {\em International Journal of Algebra and Computation},
  1(1):127--146, 1991.

\bibitem{conjdd2-blanchet1}
F.~Blanchet-Sadri.
\newblock On dot-depth two.
\newblock {\em RAIRO - Theoretical Informatics and Applications - Informatique
  Th\'eorique et Applications}, 24(6):521--529, 1990.

\bibitem{conjdd2-blanchet2}
F.~Blanchet-Sadri.
\newblock On a complete set of generators for dot-depth two.
\newblock {\em Discrete Applied Mathematics}, 50(1):1--25, 1994.

\bibitem{BrzoDot}
Janusz~A. Brzozowski and Rina~S. Cohen.
\newblock Dot-depth of star-free events.
\newblock {\em Journal of Computer and System Sciences}, 5(1):1--16, 1971.

\bibitem{BroKnaStrict}
Janusz~A. Brzozowski and Robert Knast.
\newblock The dot-depth hierarchy of star-free languages is infinite.
\newblock {\em Journal of Computer and System Sciences}, 16(1):37--55, 1978.

\bibitem{ChaubardPS06}
Laura Chaubard, Jean{-}{\'E}ric Pin, and Howard Straubing.
\newblock First order formulas with modular predicates.
\newblock In {\em Proceedings of the 21th {IEEE} Symposium on Logic in Computer
  Science ({LICS}'06)}, pages 211--220, 2006.

\bibitem{Cowan_1993}
David Cowan.
\newblock Inverse monoids of dot-depth two.
\newblock {\em International Journal of Algebra and Computation},
  03(04):411--424, 1993.

\bibitem{cmmptsep}
Wojciech Czerwi{\'n}ski, Wim Martens, and Tom{\'a}{\v{s}} Masopust.
\newblock Efficient separability of regular languages by subsequences and
  suffixes.
\newblock In {\em Proceedings of the 40th International Colloquium on Automata,
  Languages, and Programming}, ICALP'13, pages 150--161. Springer, 2013.

\bibitem{green}
James~Alexander Green.
\newblock On the structure of semigroups.
\newblock {\em Annals of Mathematics}, 54(1):163--172, 1951.

\bibitem{henckell:hal-00019815}
Karsten Henckell, Stuart Margolis, Jean-{\'E}ric Pin, and John Rhodes.
\newblock {Ash's type II theorem, profinite topology and Malcev products}.
\newblock {\em {International Journal of Algebra and Computation}}, 1:411--436,
  1991.

\bibitem{krpgbg}
Karsten Henckell and John Rhodes.
\newblock The theorem of knast, the pg=bg and type {II} conjectures.
\newblock {\em Monoids and semigroups with applications, Word Scientific},
  pages 453--463, 1991.

\bibitem{higginsbc1proof}
Peter Higgins.
\newblock A proof of simon's theorem on piecewise testable languages.
\newblock {\em Theoretical computer science}, 178(1):257--264, 1997.

\bibitem{klimabc1proof}
Ondrej Klima.
\newblock Piecewise testable languages via combinatorics on words.
\newblock {\em Discrete Mathematics}, 311(20):2124--2127, 2011.

\bibitem{knast83}
Robert Knast.
\newblock A semigroup characterization of dot-depth one languages.
\newblock {\em RAIRO - Theoretical Informatics and Applications},
  17(4):321--330, 1983.

\bibitem{KufleitnerL12}
Manfred Kufleitner and Alexander Lauser.
\newblock Around dot-depth one.
\newblock {\em International Journal of Foundations of Computer Science},
  23(6):1323--1340, 2012.

\bibitem{KufleitnerW15}
Manfred Kufleitner and Tobias Walter.
\newblock One quantifier alternation in first-order logic with modular
  predicates.
\newblock {\em {RAIRO} Theor. Informatics Appl.}, 49(1):1--22, 2015.

\bibitem{MargolisP85}
Stuart~W. Margolis and Jean{-}{\'E}ric Pin.
\newblock Products of group languages.
\newblock In {\em {FCT}'85}. Springer, 1985.

\bibitem{nash-williams63}
Crispin St. John~Alvah Nash-Williams.
\newblock On well-quasi-ordering finite trees.
\newblock {\em Mathematical Proceedings of the Cambridge Philosophical
  Society}, 59(4):833–835, 1963.

\bibitem{PPOrder}
Dominique Perrin and Jean-\'Eric Pin.
\newblock First-order logic and star-free sets.
\newblock {\em Journal of Computer and System Sciences}, 32(3):393--406, 1986.

\bibitem{pinvarbook}
Jean-{\'E}ric Pin.
\newblock {\em Varieties Of Formal Languages}.
\newblock Plenum Publishing Co., 1986.

\bibitem{jeppgbg}
Jean-{\'E}ric Pin.
\newblock {PG = BG}, a success story.
\newblock {\em NATO Advanced Study Institute, Semigroups, Formal Languages and
  Groups}, pages 33--47, 1995.

\bibitem{jep-intersectPOL}
Jean-{\'E}ric Pin.
\newblock An explicit formula for the intersection of two polynomials of
  regular languages.
\newblock In {\em {DLT 2013}}, volume 7907 of {\em Lect. Notes Comp. Sci.},
  pages 31--45. Springer, 2013.

\bibitem{jep-dd45}
Jean-{\'E}ric Pin.
\newblock {\em The dot-depth hierarchy, 45 years later}, chapter~8, pages
  177--202.
\newblock World Scientific, 2017.

\bibitem{pingoodref}
Jean-{\'E}ric Pin.
\newblock Mathematical foundations of automata theory.
\newblock In preparation \url{https://www.irif.fr/~jep/PDF/MPRI/MPRI.pdf},
  2019.

\bibitem{pin-straubing:upper}
Jean-{\'E}ric Pin and Howard Straubing.
\newblock Monoids of upper triangular boolean matrices.
\newblock In {\em Semigroups. Structure and Universal Algebraic Problems},
  volume~39, pages 259--272. North-Holland, 1985.

\bibitem{pwmalcev}
Jean-\'Eric Pin and Pascal Weil.
\newblock Profinite semigroups, {Mal'cev} products and identities.
\newblock {\em Journal of Algebra}, 182(3):604--626, 1996.

\bibitem{pwdelta2}
Jean-{\'E}ric Pin and Pascal Weil.
\newblock Polynomial closure and unambiguous product.
\newblock {\em Theory of Computing Systems}, 30(4):383--422, 1997.

\bibitem{Pin_2001}
Jean-\'Eric Pin and Pascal Weil.
\newblock A conjecture on the concatenation product.
\newblock {\em {RAIRO Informatique Théorique}}, 35(6):597--618, 2001.

\bibitem{PlaceZ22}
Thomas Place and Marc Zeitoun.
\newblock A generic polynomial time approach to separation by first-order logic
  without quantifier alternation.
\newblock In {\em 42nd {IARCS} Annual Conference on Foundations of Software
  Technology and Theoretical Computer Science, {FSTTCS} 2022, December 18-20,
  2022, {IIT} Madras, Chennai, India}, volume 250 of {\em LIPIcs}, pages
  43:1--43:22. Schloss Dagstuhl - Leibniz-Zentrum f{\"{u}}r Informatik.

\bibitem{pzqalt}
Thomas Place and Marc Zeitoun.
\newblock Going higher in the first-order quantifier alternation hierarchy on
  words.
\newblock In {\em Proceedings of the 41st International Colloquium on Automata,
  Languages, and Programming}, ICALP'14, pages 342--353. Springer, 2014.

\bibitem{pzcovering2}
Thomas Place and Marc Zeitoun.
\newblock The covering problem.
\newblock {\em Logical Methods in Computer Science}, 14(3), 2018.

\bibitem{pzupol}
Thomas Place and Marc Zeitoun.
\newblock Separating without any ambiguity.
\newblock In {\em Proceedings of the 45th International Colloquium on Automata,
  Languages, and Programming}, ICALP'18, pages 137:1--137:14, 2018.

\bibitem{PZ:generic18}
Thomas Place and Marc Zeitoun.
\newblock Generic results for concatenation hierarchies.
\newblock {\em Theory of Computing Systems (ToCS)}, 63(4):849--901, 2019.
\newblock Selected papers from CSR'17.

\bibitem{pzjacm19}
Thomas Place and Marc Zeitoun.
\newblock Going higher in first-order quantifier alternation hierarchies on
  words.
\newblock {\em Journal of the {ACM}}, 66(2):12:1--12:65, 2019.

\bibitem{pzconcagroup}
Thomas Place and Marc Zeitoun.
\newblock Separation and covering for group based concatenation hierarchies.
\newblock In {\em Proceedings of the 34th Annual {ACM/IEEE} Symposium on Logic
  in Computer Science}, {LICS}'19, pages 1--13, 2019.

\bibitem{pzbpolcj}
Thomas Place and Marc Zeitoun.
\newblock Separation for dot-depth two.
\newblock {\em Logical Methods in Computer Science}, 17(3), 2021.

\bibitem{pzupol2}
Thomas Place and Marc Zeitoun.
\newblock All about unambiguous polynomial closure.
\newblock {\em {TheoretiCS}}, 2(11):1--74, 2023.

\bibitem{pzgr}
Thomas Place and Marc Zeitoun.
\newblock Group separation strikes back.
\newblock In {\em {38th Annual ACM/IEEE Symposium on Logic in Computer Science,
  LICS'23}}, pages 1--13. IEEE Computer Society, 2023.

\bibitem{schutzsf}
Marcel~Paul Sch{\"u}tzenberger.
\newblock On finite monoids having only trivial subgroups.
\newblock {\em Information and Control}, 8(2):190--194, 1965.

\bibitem{simonphd}
Imre Simon.
\newblock {\em Hierarchies of Events of Dot-Depth One}.
\newblock PhD thesis, University of Waterloo, Department of Applied Analysis
  and Computer Science, Waterloo, Ontario, Canada, 1972.

\bibitem{simonthm}
Imre Simon.
\newblock Piecewise testable events.
\newblock In {\em Proceedings of the 2nd GI Conference on Automata Theory and
  Formal Languages}, pages 214--222. Springer, 1975.

\bibitem{stockphd}
Larry~J. Stockmeyer.
\newblock {\em The complexity of decision problems in automata theory and
  logic}.
\newblock PhD thesis, 1974.
\newblock PHD.

\bibitem{StrauConcat}
Howard Straubing.
\newblock A generalization of the sch{\"u}tzenberger product of finite monoids.
\newblock {\em Theoretical Computer Science}, 13(2):137--150, 1981.

\bibitem{StrauVD}
Howard Straubing.
\newblock Finite semigroup varieties of the form {V {\textasteriskcentered} D}.
\newblock {\em Journal of Pure and Applied Algebra}, 36:53--94, 1985.

\bibitem{StrauDD2Conf}
Howard Straubing.
\newblock Semigroups and languages of dot-depth 2.
\newblock In {\em Proceedings of the 13th International Colloquium on Automata,
  Languages, and Programming, {ICALP'86}}, volume 226 of {\em Lecture Notes in
  Computer Science}, pages 416--423. Springer, 1986.

\bibitem{StraubingDD2Journal}
Howard Straubing.
\newblock Semigroups and languages of dot-depth two.
\newblock {\em Theoretical Computer Science}, 58(1):361--378, 1988.

\bibitem{bookstraub}
Howard Straubing.
\newblock {\em Finite Automata, Formal Logic and Circuit Complexity}.
\newblock Birkhauser, Basel, Switzerland, 1994.

\bibitem{stbc1proof}
Howard Straubing and Denis Th\'erien.
\newblock Partially ordered finite monoids and a theorem of {I}. {S}imon.
\newblock {\em Journal of Algebra}, 119(2):393--399, 1988.

\bibitem{Straubing_1992}
Howard Straubing and Pascal Weil.
\newblock On a conjecture concerning dot-depth two languages.
\newblock {\em Theoretical Computer Science}, 104(2):161--183, 1992.

\bibitem{TheConcat}
Denis Th{\'e}rien.
\newblock Classification of finite monoids: The language approach.
\newblock {\em Theoretical Computer Science}, 14(2):195--208, 1981.

\bibitem{Therien88}
Denis Th{\'{e}}rien.
\newblock Categories et langages de dot-depth un.
\newblock {\em {RAIRO} Theor. Informatics Appl.}, 22(4):437--445, 1988.

\bibitem{permauto}
Gabriel Thierrin.
\newblock Permutation automata.
\newblock {\em Theory of Computing Systems}, 2(1):83--90, 1968.

\bibitem{ThomEqu}
Wolfgang Thomas.
\newblock Classifying regular events in symbolic logic.
\newblock {\em Journal of Computer and System Sciences}, 25(3):360--376, 1982.

\bibitem{phdpw}
Pascal Weil.
\newblock {\em Inverse monoids and the dot-depth hierarchy}.
\newblock PhD thesis, University of Nebraska, 1988.

\bibitem{Weil_1989}
Pascal Weil.
\newblock Inverse monoids of dot-depth two.
\newblock {\em Theoretical Computer Science}, 66(3):233--245, 1989.

\bibitem{Weil_1993}
Pascal Weil.
\newblock Some results on the dot-depth hierarchy.
\newblock {\em Semigroup Forum}, 46(1):352--370, 1993.

\bibitem{Zetzsche18}
Georg Zetzsche.
\newblock Separability by piecewise testable languages and downward closures
  beyond subwords.
\newblock In {\em Proceedings of the 33rd Annual {ACM/IEEE} Symposium on Logic
  in Computer Science}, LICS'18, pages 929--938, 2018.

\end{thebibliography}

\end{document}